\documentclass[twocolumn,superscriptaddress,aps,showpacs,amsmath,amstex,amssymb,citeautoscript,floatfix,pra,preprintnumbers]{revtex4-2}
\pdfoutput=1
\usepackage[english]{babel}
 
\usepackage{letltxmacro}
\usepackage{latexsym}
\usepackage{booktabs}
\usepackage{siunitx}
\usepackage{multirow}
\usepackage{algpseudocode}
\usepackage{bbm}
 
\LetLtxMacro{\ORIGselectlanguage}{\selectlanguage}
\makeatletter
\DeclareRobustCommand{\selectlanguage}[1]{%
  \@ifundefined{alias@\string#1}
    {\ORIGselectlanguage{#1}}
    {\begingroup\edef\x{\endgroup
       \noexpand\ORIGselectlanguage{\@nameuse{alias@#1}}}\x}%
}
\newcommand{\definelanguagealias}[2]{%
  \@namedef{alias@#1}{#2}%
}
\makeatother
 
\definelanguagealias{en}{english}
\definelanguagealias{English}{english}
\usepackage{graphicx}
\usepackage{amsmath}
\usepackage{amsfonts}
\usepackage{amssymb}
\usepackage{amsthm}
\usepackage{cancel}
\usepackage{mathtools}
\usepackage{bm}
\usepackage{color}
\usepackage[percent]{overpic}
\usepackage{soul}
\usepackage{wasysym}
\usepackage{dsfont}
\usepackage{float}
\usepackage{physics}
\usepackage{hyperref}
\usepackage{comment}
\usepackage{enumitem}
\usepackage{textgreek}
\usepackage{tcolorbox}
\usepackage[dvipsnames]{xcolor}
 
\newcounter{alg}
 
\hypersetup{
  colorlinks   = true,
  urlcolor     = blue,
  linkcolor    = blue,
  citecolor    = red
}
\usepackage[mathscr]{euscript}
 
\usepackage{varwidth}
\usepackage[lined,boxed,ruled,norelsize,linesnumbered]{algorithm2e}
\usepackage{verbatim}
\usepackage[normalem]{ulem}
\usepackage{cleveref}
\usepackage{times}

\newtheorem{theorem}{Theorem}

\newtheorem{lemma}{Lemma}

\newtheorem{proposition}{Proposition}
\newtheorem{remark}{Remark}
\newtheorem{assumption}{Assumption}
\newtheorem{definition}{Definition}
\newtheorem{fact}{Fact}
\newtheorem*{lemma*}{Lemma}
\newtheorem{corollary}{Corollary}
\newtheorem*{theorem*}{Theorem}

\theoremstyle{plain}

\theoremstyle{plain}

\theoremstyle{plain}
\newtheorem*{lem*}{\protect\lemmaname}
\theoremstyle{plain}
\newtheorem*{thm*}{\protect\theoremname}
\theoremstyle{plain}

\theoremstyle{plain}

\renewcommand{\thealg}{\arabic{alg}}
\newtcolorbox[use counter=alg,
              crefname={algorithm}{algorithms},
              Crefname={Algorithm}{Algorithms}]
{alg}[2][]{%
  floatplacement=#1,
  float,
  colback=cyan!5!white,
  colframe=cyan!50!black,
  colbacktitle=cyan!85!black,
  fonttitle=\bfseries,
  title=Algorithm~\thealg: #2
}
 
\newcommand{\eqs}[1]{\begin{equation}\begin{split}#1\end{split}\end{equation}}

\newcommand{\appref}[1]{Appendix\,\ref{#1}}

\begin{document}
 
\title{
Provably Efficient Self-Calibrating Quantum Fault Tolerance
}
 
\author{Weiyuan Gong}
\affiliation{School of Engineering and Applied Sciences, Harvard University, Allston, MA 02134, USA}

\author{Hong-Ye Hu}
\email[]{hongyehu.physics@gmail.com}
\affiliation{Department of Physics, Harvard University, Cambridge, MA 02138, USA}

\begin{abstract}
Quantum error correction protects logical information only when every physical operation remains below the fault-tolerance threshold, a condition that must be maintained continuously rather than only at the initial calibration. In practice, however, analog control parameters inevitably drift because of environmental fluctuations. As future fault-tolerant quantum computations are expected to run for days or even months, interrupting computation for repeated recalibration becomes fundamentally impractical. A promising alternative is to integrate calibration directly into computation by repurposing syndrome measurements as a calibration signal \cite{sivak2025reinforcement}, but whether such self-calibration can be achieved with provable efficiency remains an open question. Here we establish a theoretical framework for such self-calibrating quantum fault tolerance. We prove that, for a broad class of control-induced errors, the detection rate defines a locally strongly convex surrogate objective for analog calibration with high probability. This geometric property enables a simple and efficient online optimization algorithm using only syndrome measurements collected during normal error correction. We prove convergence to an $\varepsilon$ detection rate within $O(1/\varepsilon^2)$ epochs for time-independent drifts and also establish guarantees for time-dependent drifts. We further show that the convergence rate is independent of the code distance for quantum low-density parity-check (LDPC) codes. Pulse-level simulations of neutral-atom arrays and large-scale circuit-level Clifford simulations confirm these theoretical predictions. Our results establish self-calibrating fault tolerance as a provably efficient paradigm in which the same syndrome measurements simultaneously protect logical information and stabilize the underlying hardware.
\end{abstract}
\maketitle
\raggedbottom

\let\realaddcontentsline\addcontentsline
\renewcommand{\addcontentsline}[3]{}
 
 
\section{Introduction}
\label{sec:introduction}

Quantum error correction (QEC) enables reliable quantum computation by encoding logical information into many physical qubits, allowing errors to be detected and corrected faster than they accumulate \cite{resilient_qc,fault-tolerance,fowler2009high,google2025quantum,bravyi2024high}. Recent experiments have demonstrated logical error suppression below threshold, establishing fault tolerance as a realistic route toward large-scale quantum computation \cite{harvard2024,harvard2026,google2021,google2025quantum,2024arXiv240402280P}. These demonstrations, however, implicitly assume that the underlying physical operations remain continuously within the calibrated low physical error regime. In practice, every quantum processor is controlled through analog parameters that inevitably drift because of environmental fluctuations \cite{PhysRevA.82.040305,PhysRevLett.116.020501,1qhb-r4fb,2025arXiv250304702C,995}. Therefore, fault tolerance requires both an initially successful calibration and the continuous stabilization of the analog control throughout the quantum computation.

The separation between calibration and computation becomes a fundamental limitation for long-duration fault-tolerant quantum computation. Useful logical algorithms and quantum memories may execute continuously for days or even months, far exceeding the stability timescale of existing quantum hardware. Interrupting the computation for dedicated calibration is, therefore, incompatible with fault-tolerant operation. A promising alternative is to integrate calibration directly into computation by repurposing the syndrome measurements already generated during quantum error correction as a calibration signal, an idea recently proposed and experimentally demonstrated in Ref.~\cite{sivak2025reinforcement}. 
However, a fundamental theoretical question remains unanswered: do syndrome measurements define an optimization landscape that admits provably efficient online calibration, or are they merely a heuristic proxy for logical performance?

Here we answer this question affirmatively. We show that, for a broad class of control-induced errors, the average detector-event rate, or detection rate, constitutes a good surrogate objective for analog calibration with a rigorous geometric structure. After randomized compiling or Pauli twirling \cite{emerson2007symmetrized,wallman2016noise,van2023probabilistic,Hu2025}, the detector-event landscape becomes locally strongly convex with high probability around the calibrated operating point under physically natural conditions. This result establishes a direct connection between syndrome statistics and analog control, providing the theoretical foundation for continuously steering hardware parameters without interrupting logical computation. Building on this surrogate structure, we formulate self-calibration as a zeroth-order online optimization problem, implement a simple and effective online convex optimization framework~\cite{hazan2016introduction}, and prove convergence guarantees for both one-time jumps and time-dependent drifting control parameters. For the surface code and a wide family of commonly considered quantum low-density parity-check (LDPC) codes~\cite{fowler2009high,bravyi2024high,panteleev2021degenerate,bravyi2024high,panteleev2022almostlinear,leverrier2022quantum}, we further propose an improved online optimization algorithm that utilizes locality information inspired by the masked gradient in Ref.\cite{sivak2025reinforcement} and rigorously prove that the convergence rate of our algorithm is independent of the code distance, establishing scalability toward large fault-tolerant processors.

We validate these theoretical predictions using pulse-level simulations of neutral-atom Rydberg gates, together with large-scale circuit-level Clifford simulations of the rotated surface code and quantum LDPC syndrome extraction \cite{gidney2021stim}. 
Our results establish self-calibrating fault tolerance as a \emph{provably efficient} paradigm in which the same syndrome measurements that protect logical information simultaneously stabilize the underlying hardware, enabling autonomous quantum computation over timescales far exceeding the intrinsic stability of today's quantum devices.

\section{Syndrome-guided calibration and online optimization guarantees}
\label{sec:theory}

A fault-tolerant processor is controlled by a high-dimensional vector of analog parameters, such as pulse amplitudes, detunings, and phases. We denote the control vector applied by the classical controller by $\vec\theta$, and the corresponding calibrated operating point by $\vec\theta^*$. Environmental fluctuations displace the hardware from this operating point, producing a control drift $\delta\vec\theta=\vec\theta-\vec\theta^*$. During a long computation, $\vec\theta^*$ may remain fixed, evolve gradually or change abruptly. This drift is generally not measured directly. Instead, the controller observes the syndrome stream generated by repeated QEC cycles.

The central question is whether this syndrome stream contains sufficient information to restore the controls with rigorous efficiency guarantees. Recent work \cite{sivak2025reinforcement} recognized that the logical error rate, although the ultimate measure of fault-tolerant performance, is poorly suited to real-time calibration: logical failures are rare below threshold, require many repetitions to estimate reliably, and generally cannot be identified during an unknown logical computation. To overcome this limitation, Ref.~\cite{sivak2025reinforcement} proposed using the detector-event rate as an online calibration signal and demonstrated its effectiveness experimentally. Here we ask the complementary theoretical question: does the detector signal define a locally restoring optimization landscape for the underlying analog controls that admits provably efficient online calibration?

\begin{figure}[t]
    \centering
    \includegraphics[width=1\linewidth]{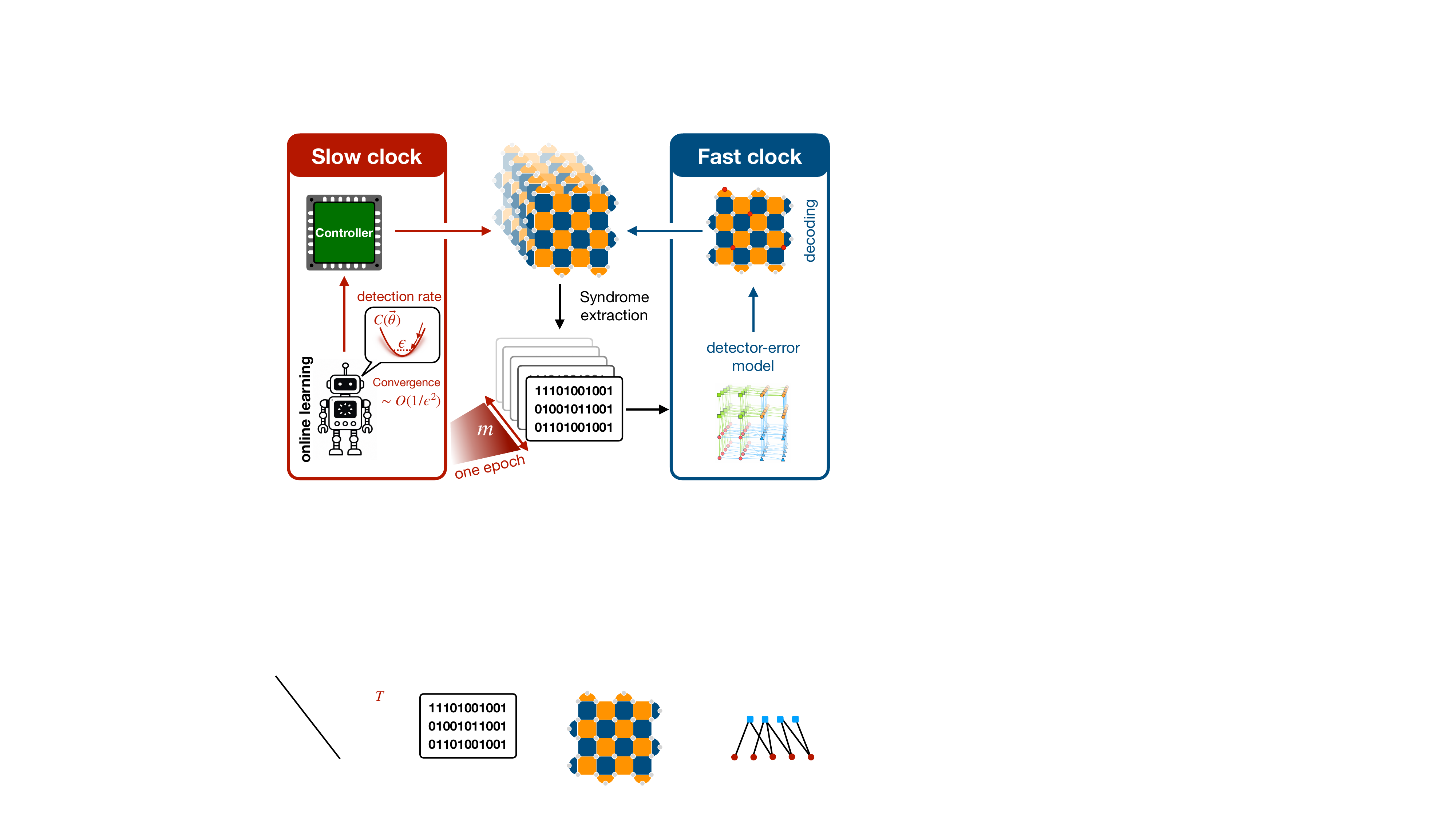}
    \caption{\textbf{Self-calibration during quantum error correction.} Each round of syndrome extraction feeds two processes at once: the decoder consumes the detection events to correct the logical state, while the same events drive online updates of the drifting control parameters. 
    Calibration thus runs as a closed loop that never halts the fault-tolerant computation.}
    \label{fig:theme_figure}
\end{figure}

We quantify this signal by the average detector-event rate, or detection rate for short. For $N_D$ detectors, let $DR_k(\vec\theta)$ denote the expected event rate of detector $D_k$. The syndrome-derived objective available to the controller~\cite{sivak2025reinforcement} is

\begin{align}\label{eq:loss_function}
C(\vec\theta)=\frac{1}{N_D}\sum_{k=1}^{N_D}DR_k(\vec\theta).
\end{align}
The role of $C$ is not assumed a priori to be equivalent to that of the logical error rate. Rather, the first part of this section establishes that, near a calibrated operating point and for broad classes of control-induced errors, $C$ possesses positive local curvature in every detector-visible control direction. This geometric property is what makes the syndrome stream suitable for calibration.

In practice, the controller has access only to finite-sample estimates of $C$. We divide the syndrome record into epochs, each containing $m$ consecutive QEC cycles, and hold the applied control vector fixed within each epoch as shown in \Cref{fig:theme_figure}. Averaging the observed detector events produces a noisy function-value estimate $\widehat C(\vec\theta)$. The window length $m$ sets a natural temporal trade-off: increasing $m$ suppresses statistical fluctuations, but reduces the rate at which the controller can respond to time-dependent drift. The controller therefore receives neither direct knowledge of $\delta\vec\theta$ nor gradients of $C$, but only noisy zeroth-order feedback from the ongoing QEC process.

We show that this restricted feedback is nevertheless sufficient for efficient self-calibration. We first establish the local convex structure of the detector-event objective. We then use this structure to derive convergence guarantees for a one-time displacement of the calibrated controls and dynamic-regret guarantees when the optimum moves in a time-dependent manner. Finally, we exploit the locality of detector regions in surface codes and general quantum low-density parity-check codes to obtain calibration convergence rates that do not deteriorate with increasing code distance with rigorous guarantees.

\subsection{Local convexity of the detection rate landscape}
\label{sec:surrogate_model_convexity}

We first establish the local geometry of the syndrome-derived objective in Eq.~\eqref{eq:loss_function}. The key observation is that, after randomized compiling or Pauli twirling, a small control displacement produces a positive quadratic response in every detector-visible error component. We begin with the contribution of a single gate to a single detector and then assemble these local responses into the global objective $C$.

For detector $D_k$, let $\mathcal R_k$ denote its detecting region, namely the set of gates whose errors can flip the detector. For a gate $i\in\mathcal R_k$ with local control displacement $\delta\vec\theta_i$, let $q_{ki}(\delta\vec\theta_i)$ be the probability that the induced error flips $D_k$. Under coherent unitary drift, randomized compiling or Pauli twirling maps the local error to a Pauli channel \cite{emerson2007symmetrized,wallman2016noise,van2023probabilistic}. To leading order,
\begin{align}
q_{ki}(\delta\vec\theta_i)
=
\sum_{P\in\mathcal F_{ki}}
\left(
\sum_j \delta\theta_{ij} h_{P,j}^{(i)}
\right)^2
+
O\!\left(\|\delta\vec\theta_i\|^3\right),
\label{eq:local_detector_response}
\end{align}
where $\mathcal F_{ki}$ is the set of Pauli errors on gate $i$ that flip $D_k$, and $h_{P,j}^{(i)}$ is the overlap between the Pauli operator $P$ and the $j$th control error generator. Hence,
\begin{align}
\nabla^2 q_{ki}(\vec 0)
=
2\sum_{P\in\mathcal F_{ki}}
\vec h_{P}^{(i)}
\vec h_{P}^{(i)\top}
\succeq 0.
\end{align}
Each detector-visible Pauli component therefore contributes a rank-one positive-semidefinite curvature matrix.

This structure extends beyond coherent errors. For a general completely positive trace-preserving (CPTP) channel, Pauli twirling converts the detector-relevant response into a sum of nonnegative contributions from coherent and dissipative error components. Leakage and atom loss have the same local structure when the readout is block diagonal between the computational and leaked subspaces, so that coherences between the two sectors do not directly enter the detector probability \cite{wu2022erasure,scholl2023erasure,chow2024circuit}. In each case, the detector-relevant error probabilities vanish to first order at the calibrated point. The leading Hessian of each detector rate is therefore obtained by summing positive-semidefinite local contributions, and averaging over detectors preserves this property.

We summarize the main result as follows, and the detailed proofs for coherent, CPTP, and leakage errors are given in Appendices~\ref{sec:unitary_convexity}, \ref{sec:cptp_convexity}, and \ref{sec:leakage_convexity}, respectively.

\begin{theorem}[Local convexity of the detector-event objective]
\label{thm:convexity_informal}
Consider control-induced coherent unitary errors, general CPTP errors, or leakage and atom loss with readout block diagonal between the computational and leaked subspaces. Then the Hessian of the average detector-event rate at the calibrated point satisfies
\begin{align}
\nabla^2 C(\vec 0)\succeq 0.
\end{align}
Any zero-curvature direction is either a gauge direction, along which $C$ is invariant, or an accidental degeneracy of the calibrated point. After quotienting out gauge directions, an arbitrarily small generic perturbation removes accidental degeneracies with probability one. Consequently, $C$ is generically strictly convex, i.e. $\nabla^2 C(\vec 0)\succ 0$, on the detector-visible parameter space within a neighborhood
\begin{align}
\|\delta\vec\theta\|\leq \theta_C^{(\mathrm{th})}.
\end{align}
\end{theorem}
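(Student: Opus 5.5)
The approach is to build the Hessian of $C$ at $\vec 0$ from per-gate, per-detector contributions, show that each one is positive semidefinite, and then handle degenerate directions with a genericity argument. We work to leading order in $\delta\vec\theta$ and assume that distinct gates fail independently and that, after twirling, each gate's error is a Pauli channel $\{p^{(i)}_P(\delta\vec\theta_i)\}$ with $p^{(i)}_P(\vec 0)=0$ at the calibrated point.

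\textbf{Step 1: reduce the detector rate to single-gate terms.} A detector $D_k$ fires when an odd number of faults in $\mathcal R_k$ flip it, so
\begin{align}
DR_k = \tfrac12\Big[1-\prod_{i\in\mathcal R_k}\big(1-2q_{ki}\big)\Big].
\end{align}
At $\vec 0$ every $q_{ki}$ and its gradient vanish. Hence the cross terms $q_{ki}q_{ki'}$ are of fourth order, and $\nabla^2 DR_k(\vec 0)=\sum_{i\in\mathcal R_k}\nabla^2 q_{ki}(\vec 0)$. This identity also needs to hold if the residual noise at the operating point is nonzero. In that case we expand around the residual channel, and the product formula shows that each $q_{ki}$ enters $DR_k$ through a positive prefactor $\prod_{i'\ne i}(1-2q_{ki'})>0$. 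Cross Hessian terms $\nabla q_{ki}\nabla q_{ki'}^\top$ could then appear, and to control them I would argue that first-order drift of a twirled error probability vanishes whenever the residual error is itself minimal at $\vec\theta^*$.

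\textbf{Step 2: show each single-gate term is positive semidefinite, case by case.}
\begin{itemize}
\item Coherent errors. Write $U_i=e^{-i\sum_j\delta\theta_{ij}G_j}$. The Pauli-twirled weight of $P$ is $|\mathrm{tr}(PU_i)/d|^2$, which equals $(\sum_j\delta\theta_{ij}h^{(i)}_{P,j})^2+O(\|\delta\vec\theta_i\|^3)$. This gives Eq.~\eqref{eq:local_detector_response} and rank-one PSD curvature.
\item General CPTP errors. Use the Kraus or $\chi$-matrix representation. The twirled weights $p_P=\chi_{PP}\ge 0$ are diagonal entries of a PSD matrix and attain the minimum $0$ at $\vec 0$. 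A nonnegative $C^2$ function that vanishes at a point has zero gradient there and a PSD Hessian.
\item Leakage and loss. Block-diagonal readout means the detection probability depends only on the diagonal blocks of the output state. Leaked population and intra-block Pauli weights are nonnegative and vanish at $\vec 0$, so the same argument applies.
\end{itemize}
Summing over $i$ and $k$ and dividing by $N_D$ preserves positive semidefiniteness, which gives $\nabla^2C(\vec 0)\succeq 0$.

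\textbf{Step 3: characterize the kernel.} Since $\nabla^2 C(\vec 0)=\frac{2}{N_D}\sum_{k,i,P} \vec h_P^{(i)}\vec h_P^{(i)\top}$, a direction $v$ lies in the kernel exactly when $v\perp \vec h^{(i)}_P$ for all detector-flipping $P$. There are two ways this can happen.
\begin{itemize}
\item Exact invariance: $v$ generates an error that commutes with, or acts trivially on, every detector. Examples are stabilizer-equivalent errors, phases absorbed into frame updates, and global gauge transformations. Along such directions $C$ is invariant to all orders, and these directions form the gauge subspace $\mathcal G$ that we quotient out.
\item Accidental degeneracy: the condition holds only at this particular calibrated point.
\end{itemize}

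\textbf{Step 4: the genericity claim, which is the main obstacle.} On the quotient space the Hessian's entries are real-analytic functions of the calibration and hardware parameters. Degeneracy is the zero set of $\det(\nabla^2 C|_{\mathcal G^\perp})$. If this analytic function is not identically zero, its zero set has measure zero, so a small random perturbation removes the degeneracy with probability one. The difficulty is exhibiting one configuration where the determinant is nonzero. My plan is to show that the vectors $\{\vec h_P^{(i)}\}$ span $\mathcal G^\perp$ for some parameter choice, using the fact that detector-visible directions are, by definition, those with some nonzero $\vec h_P^{(i)}$ overlap.

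Finally, strict convexity in a neighborhood follows by continuity of the Hessian. The third-order remainder is bounded by $L\|\delta\vec\theta\|$, so $\nabla^2C\succeq(\lambda_{\min}-L\|\delta\vec\theta\|)I$. Setting $\theta_C^{(\mathrm{th})}<\lambda_{\min}/L$ then gives $\nabla^2 C\succ 0$ throughout $\|\delta\vec\theta\|\le\theta_C^{(\mathrm{th})}$.
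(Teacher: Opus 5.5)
Your Steps 1--2 and the closing radius argument are sound and follow the paper's structure: the parity product formula, vanishing first derivatives that kill the cross-gate terms, and Weyl's inequality plus a Hessian-Lipschitz bound. Your route to $\nabla^2 C(\vec 0)\succeq 0$ is leaner than the paper's. Instead of expanding $c^{(i)}_Q$ to second order and Fourier-transforming, you note that $p^{(i)}_P$, $\tilde p^{(i)}_P$ and $q^{\mathrm{leak}}_{ki}$ are nonnegative $C^2$ functions vanishing at $\vec 0$. The same holds for $C$ itself, since $C(\vec 0)=0$ in this model. This handles all three error classes at once, and via $|\Tr(PU_i)/d_i|^2$ it needs no tracelessness assumption.

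What this route does not give you is the explicit rank-one decomposition, and Step 3 silently uses only the unitary formula. For CPTP and leakage errors the kernel is also cut out by the vectors $\vec\ell^{(i)}_{\alpha,P}$ and by the leakage blocks $\mathrm{Re}\,\Tr\bigl(\mathcal O^{LL}_k (G^{CL}_{i,j})^\dagger G^{CL}_{i,l}\bigr)$, so you still need the paper's expansions there. The residual-noise digression in Step 1 can be dropped, because the model has $\mathcal E_{i,\vec 0}=\mathrm{id}$. As stated it is also unjustified: minimality of the total error at $\vec\theta^*$ does not make each $q_{ki}$ stationary there.

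The genuine gap is in Steps 3--4, where you try to \emph{derive} the gauge/accidental dichotomy.

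First, orthogonality to all $\vec h^{(i)}_P$ does not make $C$ invariant to all orders. Suppose a control $a$ has as first-order generator an undetectable Pauli $L$ that anticommutes with a detector-flipping $P$ driven by another control $b$. An example is a detuning and a Rabi-amplitude error on an ancilla rotation just before $Z$-basis readout. Then $e^{-i(aL+bP)}$ gives $p_P=b^2\sin^2 r/r^2$ with $r=\sqrt{a^2+b^2}$.
\begin{itemize}
\item The $a$-direction lies in $\ker\nabla^2 C(\vec 0)$ for every calibration.
\item Yet $C$ depends on $a$, so it is not a gauge direction.
\item Moreover $\partial_a^2 p_P(0,b)<0$ for small $b\neq 0$, so $C$ is not even convex in any neighborhood of $\vec 0$.
\end{itemize}
No perturbation of $\vec\theta^*$ lifts this degeneracy, and no quotient is available. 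Similar structural but non-gauge degeneracies arise when a gate has more controls than distinct detector-visible Pauli/Kraus components, or when a control enters $\mathcal H_i$ only at second order.

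Second, your plan to show $\det(\nabla^2 C|_{\mathcal G^\perp})\not\equiv 0$ from ``each detector-visible direction has some nonzero overlap'' cannot close.
\begin{itemize}
\item If visibility is meant at the given calibrated point, the argument is circular: accidental degeneracies are exactly its failure.
\item If visibility is meant at some calibration that may depend on the direction, the witness need not be uniform. For instance, $M(\phi)=u(\phi)u(\phi)^\top$ with $u(\phi)=(\cos\phi,\sin\phi)$ sees every direction for some $\phi$ yet is never invertible.
\end{itemize}

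The paper does not prove this step either. It takes the invariance $C(\delta\vec\theta+\vec z)=C(\delta\vec\theta)$ as the defining hypothesis of a gauge subspace. It then imposes $\det M^{(i)}(\vec\theta^*)\not\equiv 0$ near the calibrated point as an explicit assumption (Assumption~\ref{assumption:non_structural_cptp} and its leakage analogue). Once you state that hypothesis, your real-analytic measure-zero argument is exactly the paper's.
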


\Cref{thm:convexity_informal} provides the structural basis for the optimization guarantees below. The detection rate landscape need not be globally convex.
Nevertheless, within the small-drift regime relevant to continuous calibration, every detector-visible control direction generically produces positive restoring curvature. Directions invisible to $C$ are either physically redundant and should be removed by working on the quotient parameter space, or arise from fine-tuned degeneracies eliminated by the smoothed-analysis argument in Appendix~\ref{sec:strong_convexity} \cite{carbery2001distributional,spielman2004smoothed,arthur2009k,chen2025quantum}.

\subsection{Provably efficient self-calibration after a one-time control drift}
\label{sec:time_independent_drift}

We first consider the simplest calibration scenario, in which the hardware experiences a one-time control drift and then remains stationary throughout the subsequent calibration procedure. Physically, this corresponds to a processor that has drifted away from its calibrated operating point because of environmental fluctuations, after which the controller continuously updates the analog controls using the syndrome stream generated during QEC.

By Theorem~\ref{thm:convexity_informal}, the detector-event objective is locally strictly convex in the detector-visible parameter space. Consequently, within the trusted neighborhood established by Theorem~\ref{thm:convexity_informal}, it admits the quadratic expansion
\begin{align}
C(\delta\vec\theta)
=
C_0+
\delta\vec\theta^\top
\Omega_C
\delta\vec\theta
+
O\!\left(\|\delta\vec\theta\|^3\right),
\label{eq:quadratic}
\end{align}
where $C_0$ is independent of the control drift, and $\Omega_C$ is the local Hessian of the detector-event landscape.

The controller, however, does not observe either the drift $\delta\vec\theta$ or the gradient $\nabla C$. Instead, it only receives noisy estimates of $C$ from finite batches of syndrome measurements. We therefore employ a projected simultaneous perturbation stochastic approximation (SPSA) algorithm \cite{spall1992spsa,duchi2015optimal}, which estimates the gradient using only two noisy function evaluations per epoch while projecting every iterate back into the trusted convex region. The complete procedure is summarized in Algorithm~\ref{alg:spsa_offline_main}.

Our main result shows that this limited feedback is nevertheless sufficient for efficient self-calibration.

\begin{theorem}[Provably efficient self-calibration after a one-time control drift]
\label{thm:offline_informal}
Assume that the calibrated operating point remains fixed during the calibration procedure and that the control drift stays inside the locally convex region established in Theorem~\ref{thm:convexity_informal}. Then there exists a choice of step size $\eta$ such that the projected SPSA iterates satisfy
\begin{align*}
\mathbb E[C(\overline{\delta\vec\theta_T})]-C(\vec0)
\lesssim
\frac{\sqrt d\,
\mathrm{poly}
\!\left(
\|\Omega_C\|,
\theta_C^{(\mathrm{th})},
\frac1{mN_D},
\frac1\lambda
\right)}
{\sqrt T},
\end{align*}
where $T$ is the number of syndrome-measurement epochs, $d$ is the total number of control parameters, $m$ is the number of QEC cycles per epoch, $N_D$ is the number of detectors, and $\lambda$ is the SPSA perturbation radius. Consequently, achieving an expected excess detector-event rate of at most $\varepsilon$ requires only
\[
T=O(\varepsilon^{-2})
\]
syndrome-measurement epochs.
\end{theorem}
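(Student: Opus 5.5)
The proof follows the standard analysis of projected two-point zeroth-order stochastic gradient descent, run on the convex quadratic model \eqref{eq:quadratic} over the trusted ball $\mathcal K=\{\delta\vec\theta:\|\delta\vec\theta\|\le\theta_C^{(\mathrm{th})}\}$. By Theorem~\ref{thm:convexity_informal}, $C$ is convex on $\mathcal K$ after quotienting gauge directions, with minimizer $\vec 0$. Its gradient is Lipschitz with constant $L\lesssim\|\Omega_C\|$, up to a cubic remainder controlled on $\mathcal K$. We work with the smoothed surrogate $C_\lambda(\vec x)=\mathbb E_{\vec u\sim\mathrm{Unif}(B^d)}C(\vec x+\lambda\vec u)$. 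Two facts about $C_\lambda$ are used. First, it is convex on $\mathcal K$; to guarantee this we either shrink the projection set to radius $\theta_C^{(\mathrm{th})}-\lambda$, or assume $\lambda$ is small enough that $\mathcal K+\lambda B^d$ stays in the convex region. Second, $|C_\lambda-C|\le L\lambda^2/2$ uniformly. For the exact quadratic this is precisely $\lambda^2\mathrm{tr}(\Omega_C)/(d+2)$, so the bias is of order $\|\Omega_C\|\lambda^2$.

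\textbf{Gradient estimator.} In epoch $t$ we draw $\vec u_t$ uniformly on the sphere and set
\begin{align*}
\vec g_t=\frac{d}{2\lambda}\big(\widehat C(\vec x_t+\lambda\vec u_t)-\widehat C(\vec x_t-\lambda\vec u_t)\big)\vec u_t .
\end{align*}
\begin{enumerate}[label=(\roman*)]
\item \emph{Unbiasedness.} Conditioned on $\vec x_t$, the estimate satisfies $\mathbb E[\vec g_t]=\nabla C_\lambda(\vec x_t)$. This uses the spherical Stokes identity together with the fact that $\widehat C$ is unbiased for $C$, since it averages $mN_D$ Bernoulli detector outcomes.
\item \emph{Second moment, noiseless part.} The difference of the true values of $C$ at $\vec x_t\pm\lambda\vec u_t$ is bounded by $2\lambda\,\|\nabla C\|\lesssim\lambda\|\Omega_C\|\theta_C^{(\mathrm{th})}$ on $\mathcal K$.
\item \emph{Second moment, noise part.} The sampling noise has variance at most $\sigma^2/(mN_D)$ per evaluation, where $\sigma^2\le 1/4$.
\end{enumerate}
One subtlety in the noise part is that detector events are correlated within a cycle, so the effective sample size may be smaller than $mN_D$. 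I would handle this by noting that each detector's outcome has bounded covariance with only $O(1)$ neighbours, and absorb the resulting constant. Together these give
\begin{align*}
\mathbb E\|\vec g_t\|^2\le G^2:=d^2\Big(\|\Omega_C\|^2(\theta_C^{(\mathrm{th})})^2+\frac{1}{mN_D\lambda^2}\Big).
\end{align*}

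\textbf{Convergence.} The standard projected-SGD argument uses nonexpansiveness of the projection: $\|\vec x_{t+1}\|^2\le\|\vec x_t-\eta\vec g_t\|^2$. Combining this with convexity of $C_\lambda$ and telescoping, then averaging with Jensen, gives
\begin{align*}
\mathbb E[C_\lambda(\bar{\vec x}_T)]-C_\lambda(\vec 0)\le\frac{(\theta_C^{(\mathrm{th})})^2}{2\eta T}+\frac{\eta G^2}{2}.
\end{align*}
Choosing $\eta=\theta_C^{(\mathrm{th})}/(G\sqrt T)$ yields the bound $\theta_C^{(\mathrm{th})}G/\sqrt T$. Converting back from $C_\lambda$ to $C$ costs an additive $O(\|\Omega_C\|\lambda^2)$, which is negligible once $\lambda$ is chosen as a fixed small radius. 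The scaling $G\propto d$ comes from the crude moment bound above. The tighter $\sqrt d$ stated in Theorem~\ref{thm:offline_informal} would follow from the sharper two-point analysis of \cite{duchi2015optimal}, which exploits concentration of $\langle\nabla C,\vec u\rangle$ on the sphere to get $\mathbb E\|\vec g_t\|^2\lesssim d\|\nabla C\|^2$ for the noiseless part. In that analysis the noise term keeps a factor of $d$, but this factor can be folded into the poly dependence on $1/(mN_D\lambda^2)$. Setting the resulting bound $\lesssim\sqrt d\,\mathrm{poly}/\sqrt T$ equal to $\varepsilon$ gives $T=O(\varepsilon^{-2})$.

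\textbf{Main obstacle.} The delicate step is making sure that the perturbed query points $\vec x_t\pm\lambda\vec u_t$ and the smoothing neighbourhood stay inside the region where Theorem~\ref{thm:convexity_informal} guarantees convexity, while the cubic remainder in \eqref{eq:quadratic} does not break the convexity or smoothness constants. I would first try projecting onto a ball of radius $\theta_C^{(\mathrm{th})}/2$ and taking $\lambda\le\theta_C^{(\mathrm{th})}/2$. On the doubled ball, the Hessian equals $2\Omega_C+O(\theta_C^{(\mathrm{th})})$, which stays positive semidefinite and is bounded by $3\|\Omega_C\|$ when the threshold is small enough. The rest of the argument then goes through with constants depending only on $\|\Omega_C\|$ and $\theta_C^{(\mathrm{th})}$.
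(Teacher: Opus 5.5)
Your skeleton (nonexpansive projection, the convexity inequality $C(\vec x_t)-C(\vec 0)\le\langle\nabla C(\vec x_t),\vec x_t\rangle$, telescoping, Jensen on the averaged iterate, tuning $\eta$) is exactly the paper's. However, two steps do not go through as written.

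\textbf{First gap: passing from $C_\lambda$ back to $C$.} You treat $|C_\lambda-C|=O(\|\Omega_C\|\lambda^2)$ as an additive price that is negligible for a fixed small $\lambda$. A fixed additive term does not decay with $T$, so your bound stalls at a floor of order $\|\Omega_C\|\lambda^2$ and does not reach every $\varepsilon$. Pushing that floor below $\varepsilon$ forces $\lambda^2\lesssim\varepsilon/\|\Omega_C\|$. Then the $1/(mN_D\lambda^2)$ term in $G^2$ inflates the epoch count to order $\varepsilon^{-3}$.

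The idea you are missing is the one the paper's proof rests on. On the trusted region, $C$ is replaced by its exact quadratic truncation. For a quadratic the symmetric difference is exactly linear, $C(\vec x+\lambda\vec u)-C(\vec x-\lambda\vec u)=2\lambda\nabla C(\vec x)^\top\vec u$. So the noiseless estimator equals $(\vec u\vec u^\top)\nabla C(\vec x)$ and is unbiased for $\nabla C$ itself; there is no smoothing bias at all. Your own computation already shows this: $C_\lambda-C=\lambda^2\mathrm{tr}(\Omega_C)/(d+2)$ is independent of $\vec x$, so it cancels in $C_\lambda(\bar{\vec x}_T)-C_\lambda(\vec 0)$.

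For the same reason your ``main obstacle'' is misplaced. If you keep the cubic remainder, the central difference picks up an $O(\lambda^2)$ bias that does not cancel, and the $O(\varepsilon^{-2})$ rate at fixed $\lambda$ is lost. The real difficulty is exact unbiasedness, not preserving convexity.

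\textbf{Second gap: the $\sqrt d$ prefactor.} With $\vec u_t$ uniform on the unit sphere and the prefactor $d/(2\lambda)$, the noise contributes $\frac{d^2}{4\lambda^2}\mathbb E\zeta_t^2\sim d^2/(mN_D\lambda^2)$ to $\mathbb E\|\vec g_t\|^2$. So $G$ grows like $d$ even after the sharper treatment of the noiseless part. That extra factor of $d$ cannot be folded into $\mathrm{poly}(1/(mN_D),1/\lambda)$, because $d$ is an independent parameter.

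Moreover, the theorem concerns the SPSA iterates of Algorithm~\ref{alg:spsa_offline_main}, which draw $\vec u_t\in\{\pm1\}^d$ and use $\widehat g_t=\frac{\widehat C_t^+-\widehat C_t^-}{2\lambda}\vec u_t$ with no factor $d$. There $\mathbb E[\vec u\vec u^\top]=I$ and $\|\vec u\|^2=d$ give directly $\mathbb E[\|\widehat g_t\|^2\mid\vec x_t]\le 2d\|\nabla C(\vec x_t)\|^2+d/(4mN_D\lambda^2)$. The signal part is $d(\vec u^\top\nabla C)^2$ and the noise part is $d\zeta_t^2/(4\lambda^2)$. Both terms carry a single factor of $d$, and no concentration argument is needed.

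The discrepancy is one of normalization. A Rademacher step $\lambda\vec u$ has Euclidean length $\lambda\sqrt d$, so your spherical scheme at radius $\lambda$ corresponds to the paper's at radius $\lambda/\sqrt d$.

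Finally, your $O(1)$-neighbour covariance argument for correlated detectors invokes a locality assumption that is not available in this theorem. The paper instead assumes independent detector bits, and otherwise falls back on $\mathrm{Var}(\widehat C)\le 1/(4m)$.
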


Theorem~\ref{thm:offline_informal} establishes that self-calibration is provably efficient: the detector events already generated during normal QEC operation provide sufficient information to restore the analog controls without dedicated calibration experiments or direct access to individual physical error channels. The convergence rate depends only polynomially on the local curvature, the detector statistics and the control dimension, while retaining the optimal zeroth-order scaling $O(\varepsilon^{-2})$.

The local convexity established above does not imply global convexity. Far from the calibrated operating point, the detector-event landscape may contain multiple local minima, and finding the global optimum is generally NP-hard \cite{bittel2021training}. Fortunately, practical calibration is inherently local: fault-tolerant processors are initialized near a calibrated operating point, and continuous calibration only needs to compensate relatively small drifts. For completeness, Appendix~\ref{sec:nonconvex_offline} extends the analysis beyond the local convex regime and shows that a perturbed accelerated gradient-descent procedure can efficiently converge to an approximate local minimum from noisy detector-event queries.

\begin{alg}[htbp,label={alg:spsa_offline_main}]{Projected SPSA for one-time control drift}
\textbf{Input:} epochs $T$, QEC cycles per epoch $m$, perturbation radius $\lambda<\theta_C^{(th)}$, step size $\eta$; the current control $\vec\theta_1$ inside the trusted convex region.\\[2pt]
\textbf{For} each epoch $t=1,\dots,T$\textbf{:}
\begin{enumerate}[leftmargin=*,itemsep=1pt,topsep=2pt]
\item Sample a random vector $\vec u_t\in\{\pm 1\}^d$ uniformly.
\item Collect two independent batches of $m$ QEC cycles at the perturbed controls $\vec\theta_t\pm\lambda\vec u_t$ and record the empirical detector-event rates $\hat C(\delta\vec\theta_t\pm\lambda\vec u_t)$.
\item Estimate gradient $\hat g_t$ from $\hat C(\delta\vec\theta_t\pm\lambda\vec u_t)$.
\item Update the control to $\vec\theta_{t}-\eta\hat g_t$ and project the drift back into the convex region to get $\vec\theta_{t+1}$.
\end{enumerate}
\textbf{Output:} averaged control drift $\overline{\delta\vec\theta_T}=\frac{1}{T}\sum_{t=1}^{T}\delta\vec\theta_t$.
\end{alg}

\subsection{Provably efficient tracking of time-dependent control drift}
\label{sec:time_dependent_drift}

We now consider the more realistic situation in which the calibrated operating point itself evolves during the computation owing to continuous environmental fluctuations. In contrast to the one-time drift considered above, the controller must now simultaneously estimate and track a moving optimum while relying only on the syndrome stream generated during ongoing QEC.

Let $\vec\theta_r^*$ denote the optimal control vector at the QEC cycle $r$. During each calibration epoch, the controller applies a single control vector $\vec\theta_t$ over a window of $m$ consecutive QEC cycles, while the optimal operating point may vary within this window. The corresponding epoch loss is therefore the average detection rate over the window:
\begin{align}
C_t(\vec\theta_t)
=
\frac1m
\sum_{r\in W_t}
C_r(\vec\theta_t;\vec\theta_r^*).
\end{align}

Unlike the stationary case, convergence to a fixed optimum is no longer possible for time-dependent drifts. Instead, the relevant question is whether the controller can continuously track the moving calibration target. We quantify this capability using the dynamic regret
\begin{align}
\mathrm{DynReg}(T)
=
\sum_{t=1}^{T}
\left[
C_t(\vec\theta_t)
-
C_t(\vec v_t)
\right],
\end{align}
where $\{\vec v_t\}$ is an arbitrary comparator sequence inside the trusted convex region, which could also be the optimal control vectors. Its total motion is characterized by the path length
\begin{align}
P_T=\sum_{t=1}^{T-1}\|\vec v_{t+1}-\vec v_t\|.
\end{align}
Small $P_T$ corresponds to slowly drifting hardware, whereas large $P_T$ describes rapidly varying operating conditions. We can apply the same projected SPSA feedback loop introduced in the previous subsection, now using the epoch-dependent detector-event objective and outputting $\delta\vec\theta_t$ in each epoch. And we can prove the following guarantee for time-dependent drifts.

\begin{theorem}[Provably efficient tracking of time-dependent control drift]
\label{thm:online_informal}

Assume that the optimal operating point remains inside the locally convex region established in Theorem~\ref{thm:convexity_informal}. Then there exists a choice of step size $\eta$ such that the projected SPSA iterates satisfy

\begin{align*}
\frac1T
\mathbb E
\!\left[
\mathrm{DynReg}(T)
\right]
\lesssim
\frac{\sqrt d}{\sqrt T}
\sqrt{
\theta_C^{(\mathrm{th})2}
+
\theta_C^{(\mathrm{th})}
P_T
}.
\end{align*}

In particular, the average dynamic regret vanishes whenever the path length satisfies
$
P_T=o(T).
$

\end{theorem}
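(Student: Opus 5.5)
We follow the standard dynamic-regret analysis of projected online gradient descent (Zinkevich), adapted to two-point zeroth-order feedback. The plan is to replace each epoch loss $C_t$ by its SPSA-smoothed surrogate $\widetilde C_t(\vec\theta)=\mathbb E_{\vec u}\!\left[C_t(\vec\theta+\lambda\vec u)\right]$, which is still convex on the trusted region, and to work with the feasible set $\mathcal K$: the ball of radius $\theta_C^{(\mathrm{th})}$ on the detector-visible quotient space, shrunk by $\lambda\sqrt d$ so that every perturbed query stays inside the convex neighborhood of Theorem~\ref{thm:convexity_informal}. Convexity of each $C_t$ in $\mathcal K$ follows from Theorem~\ref{thm:convexity_informal}. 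The reason is that each $C_r(\cdot;\vec\theta^*_r)$ is convex near its own optimum, and the average over the window $W_t$ of convex functions is convex, provided all $\vec\theta^*_r$ lie in the region, which is the stated assumption. The quadratic expansion \eqref{eq:quadratic} then gives the Lipschitz bound $\|\nabla C_t\|\le 2\|\Omega_C\|\theta_C^{(\mathrm{th})}+O(\theta_C^{(\mathrm{th})2})=:G$ on $\mathcal K$.

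First I will establish the properties of the gradient estimator. The estimator is
\begin{align*}
\hat g_t=\frac{\hat C(\vec\theta_t+\lambda\vec u_t)-\hat C(\vec\theta_t-\lambda\vec u_t)}{2\lambda}\,\vec u_t .
\end{align*}
Its conditional mean equals $\nabla\widetilde C_t(\vec\theta_t)$, up to an $O(\lambda^2)$ third-order bias coming from the cubic remainder. For the second moment, each detector rate is a Bernoulli average over $mN_D$ samples, so $\hat C$ has variance at most $1/(4mN_D)$. Using $\|\vec u_t\|^2=d$, this gives
\begin{align*}
\mathbb E\|\hat g_t\|^2\lesssim d\left(G^2+\frac{1}{\lambda^2 mN_D}\right)=:d\,\sigma^2 .
\end{align*}
The factor $\|\vec u\|^2=d$ is the source of the $\sqrt d$ in the statement. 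I also note that $|\widetilde C_t-C_t|\le \|\Omega_C\|\lambda^2 d$, by the quadratic model.

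Next is the one-step inequality. Nonexpansiveness of the projection gives
\begin{align*}
\|\vec\theta_{t+1}-\vec v_t\|^2\le\|\vec\theta_t-\vec v_t\|^2-2\eta\langle\hat g_t,\vec\theta_t-\vec v_t\rangle+\eta^2\|\hat g_t\|^2 .
\end{align*}
I take conditional expectations and apply convexity of $\widetilde C_t$. Then I handle the moving comparator by the usual telescoping correction,
\begin{align*}
\|\vec\theta_{t+1}-\vec v_{t+1}\|^2-\|\vec\theta_{t+1}-\vec v_t\|^2\le 4\theta_C^{(\mathrm{th})}\|\vec v_{t+1}-\vec v_t\|,
\end{align*}
using the diameter $2\theta_C^{(\mathrm{th})}$ of $\mathcal K$. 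Summing over $t$ yields
\begin{align*}
\mathbb E[\mathrm{DynReg}(T)]\lesssim\frac{\theta_C^{(\mathrm{th})2}+\theta_C^{(\mathrm{th})}P_T}{\eta}+\eta T d\sigma^2+T\|\Omega_C\|\lambda^2 d .
\end{align*}
Optimizing over $\eta$ gives $\eta=\sqrt{(\theta_C^{(\mathrm{th})2}+\theta_C^{(\mathrm{th})}P_T)/(Td\sigma^2)}$. Dividing by $T$ then produces the claimed $\sqrt{d/T}\sqrt{\theta_C^{(\mathrm{th})2}+\theta_C^{(\mathrm{th})}P_T}$, with $\sigma$ absorbed into $\lesssim$. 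The smoothing bias is made lower order by choosing $\lambda$ small, or it is absorbed as a fixed constant as in Theorem~\ref{thm:offline_informal}. If $P_T=o(T)$, the bound is $o(1)$.

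The main obstacle is that this choice of $\eta$ depends on $P_T$, which is unknown to the controller. I would first accept this, since the theorem only asserts existence of a step size; the fallback is to set $\eta$ using the worst case $P_T\le 2\theta_C^{(\mathrm{th})}T$, or to use a doubling/meta-expert scheme over a grid of step sizes. A second subtlety is ensuring that $\vec v_t$ and the perturbed queries remain in the region where the convexity of Theorem~\ref{thm:convexity_informal} and the variance bound hold. I handle this through the shrunken set $\mathcal K$ and by charging the mismatch between $\mathcal K$ and the full region to an additive $O(G\lambda\sqrt d)$ per-epoch term.
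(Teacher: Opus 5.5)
Your skeleton matches the paper's proof of \Cref{thm:online}: projected online gradient descent, the moving-comparator correction $4\theta_C^{(\mathrm{th})}\|\vec v_{t+1}-\vec v_t\|$, the second-moment bound $d\bigl(G^2+1/(\lambda^2 mN_D)\bigr)$, and tuning $\eta$ with knowledge of $P_T$. The gap is in how you handle bias. Your summed inequality carries the term $T\|\Omega_C\|\lambda^2 d$. Charging the mismatch between $\mathcal K$ and the full region adds a further $O(TG\lambda\sqrt d)$. Both terms are linear in $T$, so after dividing by $T$ they leave a constant floor. As a result, the average regret does \emph{not} vanish when $P_T=o(T)$, and your bound is not of the form $\sqrt{d/T}\cdot(\cdots)$. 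Neither of your remedies repairs this. A fixed additive constant cannot be hidden inside $\lesssim$. Shrinking $\lambda$ inflates the sampling term $d/(\lambda^2 mN_D)$: at fixed $m$, balancing $\sqrt{d/T}/\lambda$ against $\lambda^2 d$ gives only a $T^{-1/3}$ rate.

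The missing observation is that, in the quadratic regime you already invoke for the Lipschitz bound, no smoothed surrogate is needed. For each cycle $r$, the identity $C_r(\vec\theta+\lambda\vec u)-C_r(\vec\theta-\lambda\vec u)=2\lambda\nabla C_r(\vec\theta)^\top\vec u$ holds exactly. Averaging over the window and using $\mathbb E[\vec u\vec u^\top]=I$ then gives $\mathbb E[\hat g_t\mid\vec\theta_t]=\nabla C_t(\vec\theta_t)$ with zero bias; this is \Cref{lem:gt_unbiased_online}. Equivalently, $\widetilde C_t(\vec\theta)-C_t(\vec\theta)=\lambda^2\operatorname{tr}(\bar\Omega_t)$ with $\bar\Omega_t=\frac1m\sum_{r\in\mathcal W_t}\Omega_{C,r}$. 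This difference does not depend on $\vec\theta$, so it cancels exactly in $\widetilde C_t(\vec\theta_t)-\widetilde C_t(\vec v_t)$. Your spurious linear term comes from applying the bound $|\widetilde C_t-C_t|\le\|\Omega_C\|\lambda^2 d$ separately at the two points. Similarly, take the comparators inside the set you project onto, rather than paying a per-epoch mismatch; the paper's proof does this implicitly when it uses $\Pi_{\mathsf Q_\lambda}(\vec v_t)=\vec v_t$. With these two changes your argument reduces to the paper's and yields the stated bound. If you instead genuinely keep the cubic remainder, the $O(\lambda^2)$ bias does not cancel, and $O(\sqrt T)$ dynamic regret is not obtainable at fixed $\lambda$ and $m$.
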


Theorem~\ref{thm:online_informal} establishes that self-calibration can continuously track slowly drifting hardware using only detector events collected during normal QEC operation. The first term in the regret bound is the unavoidable cost of learning from noisy zeroth-order feedback, while the second quantifies the additional difficulty introduced by the motion of the calibration target. Consequently, self-calibration remains effective whenever the operating point evolves more slowly than the information accumulated from the syndrome stream.

The tracking guarantee is fundamentally limited by the temporal resolution of the detector data. Increasing the epoch size reduces statistical fluctuations but slows the controller's response, whereas shorter epochs provide faster feedback at the expense of larger measurement noise. Thus, no online calibration protocol can accurately follow arbitrarily rapid analog drift using finite syndrome statistics.

Beyond the locally convex regime, continuously tracking the global optimum becomes a nonconvex online optimization problem. For piecewise stationary drifts with finitely many abrupt changes, restarting the offline calibration procedure after each change recovers approximate local minima on every stationary segment. The corresponding analysis is presented in Appendix~\ref{sec:nonconvex_online}.












\subsection{Scalable self-calibration for local quantum codes}
\label{sec:imp_local}

The results above establish that self-calibration is provably efficient, but they treat the detector-event objective as a generic $d$-dimensional optimization problem. Consequently, the convergence bounds contain an explicit dependence on the number of control parameters. This naturally raises the question of scalability: does calibration become progressively more difficult as the fault-tolerant processor grows? 

Recent circuit-level Clifford simulations of rotated surface codes in Ref.~\cite{sivak2025reinforcement} suggested that, when combined with masked gradients, the reinforcement learning exhibits convergence that is largely independent of the code distance. Here, we establish this observation on rigorous theoretical grounds. Specifically, we prove that locality-aware SPSA achieves convergence guarantees that are independent of the code distance for local quantum error-correcting codes. The key additional ingredient is the locality of syndrome information. Each detector is sensitive only to a constant-size neighborhood of physical operations, while each analog control parameter influences only a bounded number of nearby detectors. Therefore, although the total number of detectors and control parameters increases with the code distance, the amount of detector information relevant to each individual control parameter remains constant.

More precisely, we assume that each detector-event rate depends on at most $s$ scalar control coordinates, and each scalar control coordinate affects at most $c$ detector-event rates, where both $s$ and $c$ are independent of the code distance. This condition is naturally satisfied whenever the stabilizer check weight and qubit degree are bounded and each gate is described by only a constant number of analog control parameters. It therefore applies to the surface code \cite{kitaev2003fault,dennis2002topological,fowler2012surface}, bivariate-bicycle (BB) codes \cite{mackay2004sparse,kovalev2013quantum,bravyi2024high}, lifted-product codes \cite{panteleev2021degenerate,panteleev2022almostlinear,panteleev2022asymptotically}, and more generally to quantum LDPC codes \cite{tillich2014quantum,breuckmann2021quantum,panteleev2022asymptotically,leverrier2022quantum}.

This locality allows substantially more efficient gradient estimation. Rather than estimating every gradient component from the globally averaged detector-event rate, we construct locality-aware gradient estimates using only the detector events influenced by the corresponding control parameter. Graph coloring is then used to partition compatible control coordinates so that multiple perturbations can be performed simultaneously without introducing interference. The resulting locality-aware SPSA estimator remains unbiased within the local quadratic regime, while its variance depends only on the local connectivity parameters $s$ and $c$ instead of the total control dimension.

The resulting scalability guarantee is summarized below, and the details of the locality-aware SPSA can be found in Algorithm~\ref{alg:spsa_local_main}, which makes use of the locality parameters $s$ and $c$ and reduces the optimization to optimizing at most $c(s-1)+1$ subsets of parameters with a convergence rate independent of the code distance.

\begin{alg}[t!,label={alg:spsa_local_main}]{Locality-aware SPSA for local codes}
\textbf{Preprocessing:} Connect two control coordinates whenever some detector depends on both. 
Color the graph with $\chi\leq c(s-1)+1$ colors $G_1,\dots,G_\chi$ so that no detector sees two coordinates of the same color.\\[2pt]
\textbf{For} each epoch $t$ and each color class $G_a$\textbf{:}
\begin{enumerate}[leftmargin=*,itemsep=1pt,topsep=2pt]
\item Sample random vector $\vec u_t^{(a)}$ for coordinates in $G_a$.
\item Query the two perturbed controls $\vec\theta_t+\lambda\vec u_t^{(a)}$ and $\vec\theta_t-\lambda\vec u_t^{(a)}$ and record the empirical rates $\widehat{DR}_k$.
\item Estimate the gradient entry of each coordinate $j\in G_a$ from only the at most $c$ detectors that coordinate $j$ can affect, masking out all other detectors.
\item Use the masked gradient estimator $\hat g_t^{\mathrm{loc}}$ to update the controls.
\end{enumerate}
\end{alg}

\begin{theorem}[Code-distance-independent self-calibration]
\label{theorem:LSPSA}

Assume the detector-locality condition described above. Then the locality-aware SPSA algorithm achieves the same $O(\varepsilon^{-2})$ convergence guarantees as Theorems~\ref{thm:offline_informal} and \ref{thm:online_informal}, while replacing the explicit dependence on the control dimension by constants determined only by the detector-locality parameters $s$ and $c$.

Consequently, for families of local quantum error-correcting codes with bounded $s$ and $c$, the calibration complexity is independent of the code distance.

\end{theorem}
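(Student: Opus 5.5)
The plan is to reuse the projected-SPSA analysis behind Theorems~\ref{thm:offline_informal} and \ref{thm:online_informal} as a black box. That analysis depends on the gradient estimator only through two quantities: its bias relative to $\nabla C$ in the trusted region, and its second moment $\mathbb E\|\hat g_t\|^2$. In the generic bound the second moment scales like $d$, and this is the origin of the $\sqrt d$ factor. So it suffices to show that the locality-aware estimator $\hat g_t^{\mathrm{loc}}$ built from the color classes has bias of the same order as before, and a second moment controlled by $s$, $c$, $\chi\le c(s-1)+1$ and the trusted radius, uniformly in the code distance. The per-epoch cost grows only from $2$ to $2\chi$ queries, which is again a distance-independent constant. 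The rest of the convergence proof (the projected descent inequality and telescoping, or the path-length argument for dynamic regret) then goes through with $d$ replaced by these constants.

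\textbf{Step 1 (coloring).} Build the conflict graph on control coordinates. Each coordinate $j$ affects at most $c$ detectors, and each such detector depends on at most $s-1$ other coordinates. Hence the maximum degree is at most $c(s-1)$, and greedy coloring gives $\chi\le c(s-1)+1$ classes. By construction, every detector sees at most one coordinate of each color class $G_a$.

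\textbf{Step 2 (unbiasedness).} Fix a class $G_a$ and a coordinate $j\in G_a$. Write $N(j)$ for the set of at most $c$ detectors affected by $j$, and define $C^{(j)}=\frac{1}{N_D}\sum_{k\in N(j)}DR_k$. Detectors outside $N(j)$ do not depend on $\theta_j$, so $\partial_j C=\partial_j C^{(j)}$. The estimator for coordinate $j$ is
\begin{align*}
\hat g_j=\frac{u_j}{2\lambda}\left[\widehat C^{(j)}(\vec\theta+\lambda\vec u^{(a)})-\widehat C^{(j)}(\vec\theta-\lambda\vec u^{(a)})\right].
\end{align*}
Since each $k\in N(j)$ sees only coordinate $j$ among those in $G_a$, the perturbation $\lambda\vec u^{(a)}$ enters each $DR_k$ with $k\in N(j)$ only through $\lambda u_j$. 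The symmetric difference is therefore an exact central difference in $\theta_j$ alone, with no cross-coordinate interference. Using the quadratic expansion \eqref{eq:quadratic} together with Theorem~\ref{thm:convexity_informal}, this gives $\mathbb E\hat g_j=\partial_j C+O(\lambda^2)$. The $O(\lambda^2)$ term comes from third-order remainders whose constants are local, so the bias is at least as good as in standard SPSA, where cross terms survive only in expectation.

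\textbf{Step 3 (variance), the main obstacle.} The squared norm $\|\hat g^{\mathrm{loc}}\|^2$ sums over all $d$ coordinates, and $d$ grows with distance. So I must be careful about which quantity has to be distance-independent. The natural normalization is per-detector: $C$ is an average over $N_D$ detectors, and both $d$ and $N_D$ scale like the number of gates. I would first bound the deterministic part. Since $|\partial_j C^{(j)}|\lesssim c\|\Omega_C\|_{\mathrm{loc}}\theta_C^{(\mathrm{th})}/N_D$, summing over $d=O(N_D)$ coordinates gives $O(c^2\theta^2/N_D)$, which does not grow with distance. I would then bound the shot noise. Each $\widehat{DR}_k$ is an average of $m$ Bernoulli variables, so
\begin{align*}
\mathrm{Var}\,\hat g_j\lesssim\frac{c}{\lambda^2 m N_D^2}.
\end{align*}
Summing over $j$ gives $O(c/(\lambda^2 m N_D))$. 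The hard part is ensuring that correlations between detectors, which are shared by up to $s$ coordinates, do not inflate this bound. I would handle it by bounding the covariance matrix of the $\hat g_j$ with a Gershgorin argument, using that each detector contributes to at most $s$ entries. This yields factors of $s$ and $c$ only.

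\textbf{Step 4 (conclusion).} Substituting these bounds into the descent inequalities of the earlier theorems gives the rates $O(\mathrm{poly}(s,c,\chi,\ldots)/\sqrt T)$ for the stationary case and $\sqrt{\theta^2+\theta P_T}$ times local constants for the time-dependent case. Hence $T=O(\varepsilon^{-2})$ epochs suffice, with constants independent of the code distance.
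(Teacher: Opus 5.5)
Your proposal follows the paper's proof (Corollaries~\ref{coro:imp_local_offline} and~\ref{coro:imp_local_online}). The ingredients are the same:
\begin{itemize}
\item the conflict-graph coloring with $\chi\le c(s-1)+1$;
\item the masked estimator $\hat g_j=\frac{u_j}{2\lambda N_D}\sum_{k\in\mathcal K_j}\bigl(\widehat{DR}_k(\vec\theta+\lambda\vec u^{(a)})-\widehat{DR}_k(\vec\theta-\lambda\vec u^{(a)})\bigr)$;
\item the observation that each masked detector sees only the one perturbed coordinate $j$;
\item a per-coordinate noise variance of order $c/(m\lambda^2N_D^2)$, summed over $d$ coordinates using $d/N_D\le s$;
\item the unchanged projected-SGD and path-length arguments.
\end{itemize}
Two things in your write-up should be fixed, though neither changes the final bound.

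First, the ``main obstacle'' of Step~3 is misplaced. The descent inequality only uses $\mathbb E\|\hat g^{\mathrm{loc}}\|^2=\sum_j\mathbb E\,\hat g_j^2$, which is the \emph{trace} of the covariance. Covariances between different coordinates $\hat g_j,\hat g_{j'}$ therefore never enter. A Gershgorin bound on the operator norm is not the relevant quantity, and converting it back to a trace would cost exactly the factor $d$ you are trying to remove. Also, two coordinates that share a detector have different colors and are estimated from different query batches. Your bound is already complete before that step.

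The correlations that do matter are among the at most $c$ detectors inside a single mask $\mathcal K_j$. Your estimate $\mathrm{Var}\,\hat g_j\lesssim c/(\lambda^2mN_D^2)$ silently assumes these are independent, which the paper states as an explicit hypothesis. Without it, Cauchy--Schwarz gives $\mathrm{Var}\bigl(\sum_{k\in\mathcal K_j}(\xi_k^+-\xi_k^-)\bigr)\le c^2/(2m)$, so the only change is $c\to c^2$ (Remark~\ref{rem:non_independent_K}).

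Second, in the paper's setting each $DR_k$ is exactly quadratic on the trusted region. The masked central difference therefore equals $2\lambda u_j\partial_jDR_k$ exactly, and since $u_j^2=1$ the noiseless part of $\hat g_j$ is \emph{deterministically} $\partial_jC$. The bias is zero, not $O(\lambda^2)$. This is also why the signal term contributes $\|\nabla C\|^2$ rather than the $d\|\nabla C\|^2$ of ordinary SPSA, a fact you use implicitly.

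You should state it as exact unbiasedness. A genuine $O(\lambda^2)$ bias would add an $O(\lambda^2\theta_C^{(\mathrm{th})})$ floor to the descent bound, and then $T=O(\varepsilon^{-2})$ at fixed $\lambda$ would no longer follow.
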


Theorem~\ref{theorem:LSPSA} identifies detector locality as the mechanism that enables scalable self-calibration. As the code size grows, the amount of syndrome information relevant to each control parameter remains constant, so the difficulty of calibration does not increase with the size of the code. In particular, for quantum LDPC codes, continuous calibration can be maintained without sacrificing the asymptotic scaling established in the previous subsections, which we also verify in the next section with numerical simulation on qLDPC codes. The same locality argument also extends to the nonconvex setting, where it accelerates convergence toward approximate local minima (Appendix~\ref{sec:nonconvex_imp}).

\section{Numerical validation}

The previous section establishes a theoretical framework for self-calibration together with rigorous convergence guarantees, online tracking and scalability. Here we validate these predictions in two complementary numerical settings. We first perform pulse-level simulations of neutral-atom quantum processors, where realistic analog control drifts naturally arise from imperfect laser control and leakage outside the computational subspace. We then perform large-scale circuit-level Clifford simulations of fault-tolerant quantum error-correction circuits to investigate the scaling of self-calibration with increasing code size. 

\subsection{Pulse-level validation of self-calibration}

To validate the theoretical framework under realistic analog control errors, we perform exact pulse-level simulations of a neutral-atom quantum processor implementing the $[\![4,1,2]\!]$ quantum detection code. This platform provides a stringent test of self-calibration because the native entangling gates are implemented through laser pulses, whose amplitudes and phases are intrinsically susceptible to experimental drift. The numerical model explicitly resolves the underlying pulse dynamics and naturally captures coherent over-rotations, control imperfections and leakage outside the computational subspace.

The simulated process is illustrated in \Cref{fig:rydberg}. Logical syndrome extraction is implemented using native Rydberg controlled-$Z$ gates optimized following Ref.~\cite{Jandura2022timeoptimaltwothree}. We model independent control drifts for every entangling gate. Specifically, laser-amplitude drifts are introduced as offsets in the pulse amplitude, while phase drifts are modeled by perturbing the leading Fourier components of the optimized phase profile. Typical drifted control pulses are visualized in \Cref{fig:rydberg} (b). Since each atom is simulated as a three-level system, leakage into the Rydberg state is naturally incorporated without introducing additional phenomenological error channels.

We first consider the one-time drift scenario analyzed in Theorem~\ref{thm:offline_informal}. At the beginning of the simulation, the control parameters are perturbed from their calibrated values and subsequently remain fixed. Figure~\ref{fig:rydberg}(c) shows that the detection rate decreases rapidly as the SPSA controller continuously updates the analog controls using only syndrome measurements. Simultaneously, the control parameters converge toward their calibrated values, confirming that detection rate feedback alone is sufficient to restore the operating point. The observed convergence is consistent with the predicted $O(\varepsilon^{-2})$ complexity and is even faster in practice.
Although we choose a constant learning rate $\eta=0.005$ for this simulation, one could also use more advanced learning rate schedulers, such as Adam \cite{kingma2015adam}, to speed up the convergence in practice while keeping the $\sqrt{T}$ dynamic regret scaling as in \Cref{thm:online_informal} in theory.

\begin{figure}[t]
    \centering
    \includegraphics[width=1\linewidth]{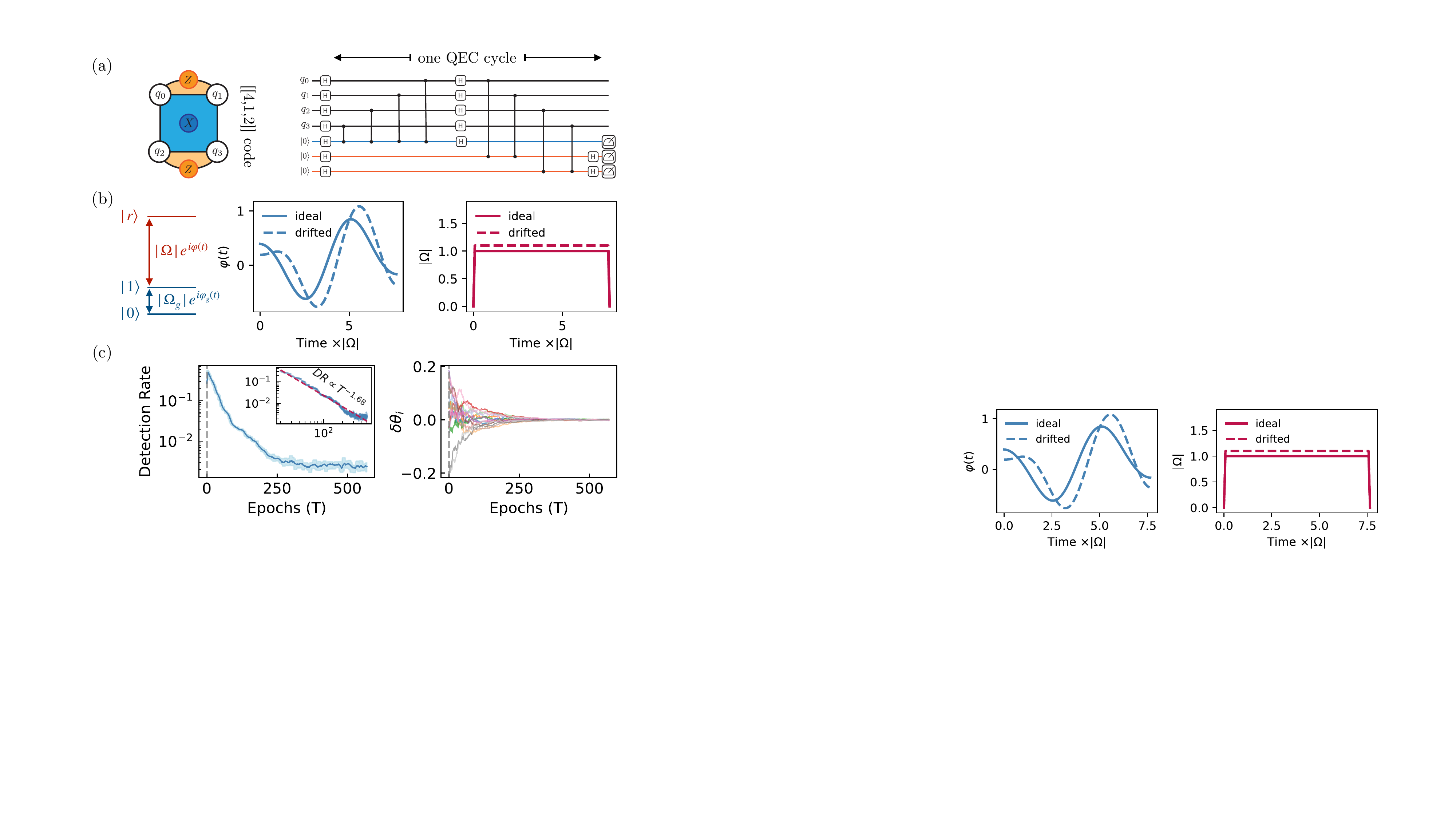}
    \caption{
\textbf{Pulse-level validation of provably efficient self-calibration.}
(a) The $[\![4,1,2]\!]$ quantum detection code and one cycle of syndrome extraction.
(b) Pulse-level neutral-atom control model. Native controlled-$Z$ gates are implemented by optimized laser amplitude $|\Omega|$ and phase $\varphi(t)$; dashed curves illustrate representative control drifts.
(c) Following an initial displacement of the analog controls, syndrome-guided SPSA continuously restores the calibrated operating point. The detector-event rate decreases throughout the calibration procedure (left), while all control parameters converge toward their calibrated values (right), validating Theorem~\ref{thm:offline_informal}. In this simulation, we choose a constant learning rate $\eta=0.005$.
}
    \label{fig:rydberg}
\end{figure}

We next consider time-dependent drifting controls, corresponding to the online calibration problem studied in Theorem~\ref{thm:online_informal}. We investigate both slowly varying linear drifts and periodic sinusoidal drifts, representing two representative classes of long-timescale hardware fluctuations. As shown in \Cref{fig:time-dependent}, the learned control parameters closely track the drifting optimum throughout the computation. Without online calibration, the detector-event rate rapidly increases as calibration errors accumulate. By contrast, syndrome-guided SPSA maintains the detection rate at a consistently low level over thousands of QEC cycles, demonstrating that the controller can continuously compensate for realistic time-dependent hardware drift.

These pulse-level simulations provide direct numerical support for the theoretical framework developed above. At the microscopic hardware level, detector events generated during normal QEC operation contain sufficient information not only to recover from a sudden miscalibration but also to continuously track slowly varying analog control drifts without interrupting the computation.

 \begin{figure}[t]
    \centering
    \includegraphics[width=1\linewidth]{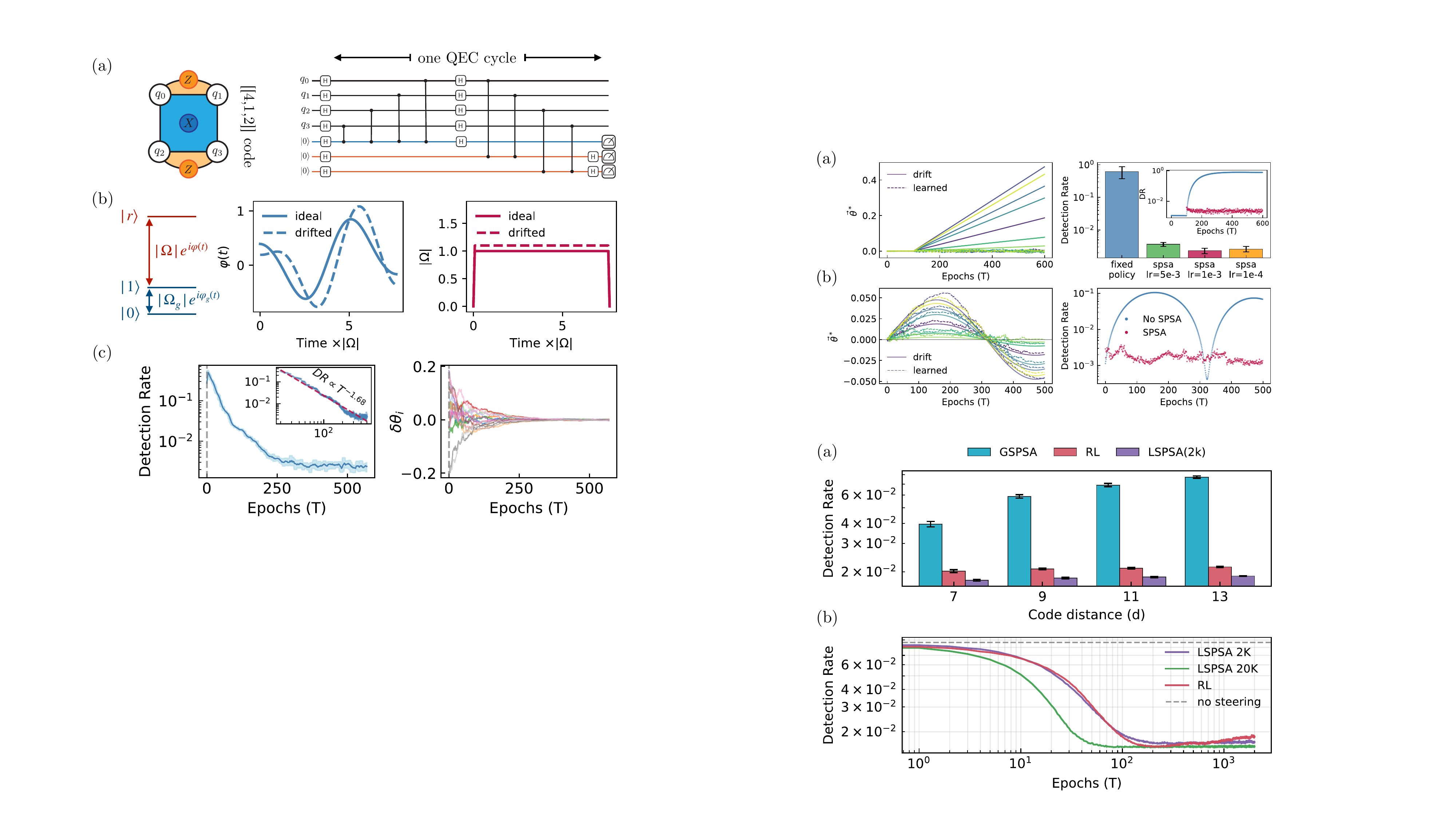}
    \caption{
\textbf{Tracking continuously drifting analog controls.}
(a) Slowly varying linear drifts. Solid curves denote the drifting optimal controls, while dashed curves show the controls learned from syndrome measurements. The detector-event rate remains low throughout the computation under online calibration.
(b) Sinusoidal control drifts. Syndrome-guided SPSA continuously tracks the time-dependent optimum and suppresses the accumulation of calibration errors, validating the online tracking guarantee of Theorem~\ref{thm:online_informal}. In both simulations, we choose a constant learning rate $\eta=0.005$.
}
    \label{fig:time-dependent}
\end{figure}

\subsection{Scalable self-calibration for quantum LDPC codes}

The previous pulse-level simulations verify that syndrome-guided calibration efficiently restores realistic analog control drifts. We now turn to the scalability of the framework on large fault-tolerant quantum processors. Theorem~\ref{theorem:LSPSA} predicts that, by exploiting the locality of syndrome information, the convergence rate of self-calibration becomes independent of the code distance for local quantum codes. Here we verify this prediction using large-scale circuit-level Clifford simulations.

We consider both rotated surface codes and the BB codes \cite{bravyi2024high} under a standard circuit-level noise model. Every single-qubit and two-qubit gate is followed by depolarizing noise with baseline error probability $\varepsilon_i^{(0)}=10^{-3}$. To model analog control imperfections, each gate is additionally assigned an independent control-induced error channel whose strength follows the quadratic detector-event landscape established in Eq.~\eqref{eq:quadratic},
\begin{align}
\varepsilon_i
=
\varepsilon_i^{(0)}
+
\delta\vec\theta_i^{\top}
\Omega_i
\delta\vec\theta_i,
\end{align}
where $\Omega_i\succeq0$ is a random positive-semidefinite sensitivity matrix. Each gate is assumed to possess five independent analog control parameters, representing realistic multi-parameter pulse calibration.

 \begin{figure}[t]
    \centering
    \includegraphics[width=1\linewidth]{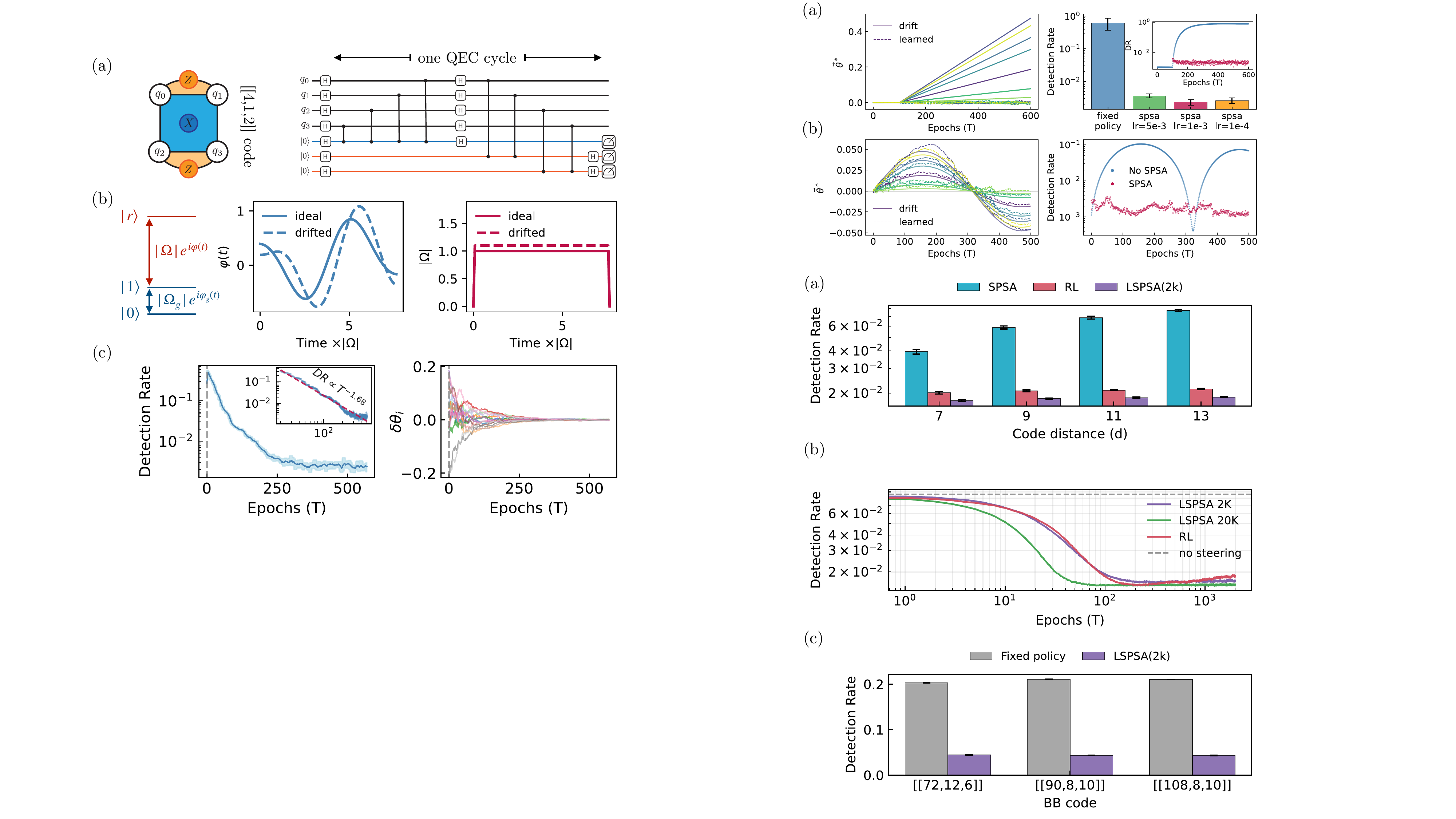}
    \caption{
\textbf{Scalable self-calibration for quantum LDPC codes.}
(a) Detector-event rate after calibration for rotated surface codes of increasing code distance. Conventional SPSA exhibits the predicted degradation with code distance, whereas locality-aware SPSA (LSPSA) remains essentially independent of system size and achieves performance comparable to or better than the reinforcement-learning (RL) controller.
(b) Representative convergence curves. Increasing the number of syndrome measurements accelerates SPSA by reducing statistical noise. LSPSA rapidly converges to a stable low detector-event rate, while the RL controller exhibits a small increase after convergence because of continued exploration.
(c) Application to BB codes. The calibrated detector-event rate remains nearly unchanged as the code size increases, confirming the code-distance-independent scaling predicted by Theorem~\ref{theorem:LSPSA}.
}
    \label{fig:qldpc}
\end{figure}

We first study rotated surface codes with increasing code distance. \Cref{fig:qldpc} (a) compares three calibration strategies: conventional SPSA, locality-aware SPSA (LSPSA), and the reinforcement-learning (RL) controller introduced in Ref.~\cite{sivak2025reinforcement}. We note that the implementation of the reinforcement-learning controller in Ref.~\cite{sivak2025reinforcement} is not publicly available. The RL results reported here are therefore obtained using our own implementation based on the algorithm described in Ref.~\cite{sivak2025reinforcement} (details in Appendix~\ref{app:num_rl}). As predicted by Theorem~\ref{theorem:LSPSA}, conventional SPSA converges more slowly as the code distance increases because the number of control parameters grows approximately as $O(d^2)$, leading to the expected $O(d)$ scaling. In contrast, the performance of LSPSA remains essentially unchanged across all simulated code distances, demonstrating that detector locality removes the explicit dependence on the system size. Using fewer syndrome measurements per estimation of detection rate, LSPSA achieves performance comparable to or better than the reinforcement learning controller.

Figure~\ref{fig:qldpc} (b) further compares the convergence dynamics of different calibration strategies. Increasing the number of syndrome measurements improves the convergence speed of SPSA by reducing the statistical uncertainty of the detector-event objective. More importantly, LSPSA converges to a stable low detection rate and remains there throughout the calibration procedure. By comparison, the reinforcement-learning controller exhibits a slight increase in detection rate after reaching the optimum. This behavior is consistent with the continued exploration required by policy optimization, whereas gradient-based self-calibration naturally stabilizes once the detector-event landscape has been minimized.

Finally, we verify the scalability of the approach beyond surface codes by applying LSPSA to BB codes \cite{mackay2004sparse,kovalev2013quantum,bravyi2024high}. As shown in \Cref{fig:qldpc} (c), the calibrated detection rate remains essentially unchanged as the code size increases, in excellent agreement with Theorem~\ref{theorem:LSPSA}. Therefore, these circuit-level simulations confirm that detector locality enables scalable self-calibration whose complexity remains independent of the code distance, making continuous self-calibration compatible with large-scale fault-tolerant quantum computation.

\section{Discussion}
\label{sec:discussion}

Fault-tolerant quantum computation has traditionally been viewed as consisting of two largely separate tasks: maintaining logical information through quantum error correction and maintaining hardware performance through repeated calibration. 
Our results establish a theoretical framework that unifies these two processes. 
Instead of pausing the computation for dedicated calibration runs, our framework lets a single syndrome stream serve both ends: the detection events the decoder uses to protect the logical state also supply the signal that keeps the analog controls in tune.
In this picture, calibration is no longer an auxiliary procedure performed between computations, but an integrated part of fault-tolerant operation itself.

While detector events have recently been used experimentally as calibration signals \cite{sivak2025reinforcement}, it has remained unclear whether they possess sufficient geometric structure to support provably efficient online optimization. We show that, after randomized compiling, the detection rate landscape is generically locally convex for a broad class of control-induced errors. This geometric property directly leads to rigorous guarantees for efficient self-calibration, online tracking of slowly drifting hardware and scalability to quantum LDPC codes.

Our results also identify several directions for extending the present framework. The current analysis focuses on local optimization around a calibrated operating point, which is the regime most relevant to the continuous correction of small hardware drifts but does not address global calibration far from a well-functioning operating point. Extending the theory to more strongly nonconvex landscapes and correlated drift errors will therefore be important for a more complete theory of autonomous calibration.

A first practical direction is to develop more expressive controllers beyond the simple zeroth-order algorithms considered here. Our results provide a rigorous objective on which deep-neural-network-based or model-based controllers can be trained and evaluated. Such controllers could learn platform-specific structures and exploit temporal and spatial correlations in the syndrome stream to reduce the number of QEC measurements required for each update. This could substantially accelerate control optimization while preserving the direct connection between the learning signal and fault-tolerant performance established in this work.

A second direction is the joint design of calibration and decoding. In present architectures, the decoder operates on a fast timescale to infer logical corrections, whereas the calibration controller updates analog parameters on a slower timescale. These two feedback loops are usually designed separately, even though they rely on the same syndrome stream and respond to the same underlying noise processes. Integrating the fast decoding loop with the slow calibration loop, as illustrated in \Cref{fig:theme_figure}, could enable a unified controller that simultaneously addresses baseline errors and control drift errors. More broadly, advances in large language models (LLMs) and AI agents suggest the possibility of agent-based control systems that coordinate calibration, decoding, and hardware diagnostics across the full fault-tolerant stack \cite{2026arXiv260328052L, 2026arXiv260622376X,2026arXiv260725145I,bhardwaj2026high}. In this setting, the detection rate provides an objective and experimentally accessible benchmark for evaluating whether an agent can maintain the processor within its fault-tolerant operating regime. Self-calibration therefore becomes a cornerstone of quantum computers that not only correct baseline errors but also autonomously diagnose and control the physical processes that generate them.

 
 
\section*{Acknowledgements}
 
We thank Wenlan Chen, Hsin-Yuan Huang, Tongyang Li, John Preskill, Muqing Xu, Yi-Zhuang You, Tao Zhang, Wenjun Zhang, Chen Zhao, and Harry Zhou for the insightful discussions and feedback during the preparation of this work. We thank Volodymyr Sivak for his comments on the draft. WG and HYH conceived the project and developed the theory of this work while HYH was a research fellow at Harvard University.
HYH investigated the numerical performance of the algorithm after joining IOP-CAS. HYH thanks Lei Wang for the computational support.
A part of this work was carried out while WG was visiting the California Institute of Technology.
We acknowledge ChatGPT and Claude for assistance in discussing and refining proof ideas, as well as improving the presentation. 
The authors are solely responsible for the proofs and the results.

\let\oldaddcontentsline\addcontentsline
\renewcommand{\addcontentsline}[3]{}
\bibliography{main.bib}

\begin{thebibliography}{75}%
\makeatletter
\providecommand \@ifxundefined [1]{%
 \@ifx{#1\undefined}
}%
\providecommand \@ifnum [1]{%
 \ifnum #1\expandafter \@firstoftwo
 \else \expandafter \@secondoftwo
 \fi
}%
\providecommand \@ifx [1]{%
 \ifx #1\expandafter \@firstoftwo
 \else \expandafter \@secondoftwo
 \fi
}%
\providecommand \natexlab [1]{#1}%
\providecommand \enquote  [1]{``#1''}%
\providecommand \bibnamefont  [1]{#1}%
\providecommand \bibfnamefont [1]{#1}%
\providecommand \citenamefont [1]{#1}%
\providecommand \href@noop [0]{\@secondoftwo}%
\providecommand \href [0]{\begingroup \@sanitize@url \@href}%
\providecommand \@href[1]{\@@startlink{#1}\@@href}%
\providecommand \@@href[1]{\endgroup#1\@@endlink}%
\providecommand \@sanitize@url [0]{\catcode `\\12\catcode `\$12\catcode `\&12\catcode `\#12\catcode `\^12\catcode `\_12\catcode `\%12\relax}%
\providecommand \@@startlink[1]{}%
\providecommand \@@endlink[0]{}%
\providecommand \url  [0]{\begingroup\@sanitize@url \@url }%
\providecommand \@url [1]{\endgroup\@href {#1}{\urlprefix }}%
\providecommand \urlprefix  [0]{URL }%
\providecommand \Eprint [0]{\href }%
\providecommand \doibase [0]{https://doi.org/}%
\providecommand \selectlanguage [0]{\@gobble}%
\providecommand \bibinfo  [0]{\@secondoftwo}%
\providecommand \bibfield  [0]{\@secondoftwo}%
\providecommand \translation [1]{[#1]}%
\providecommand \BibitemOpen [0]{}%
\providecommand \bibitemStop [0]{}%
\providecommand \bibitemNoStop [0]{.\EOS\space}%
\providecommand \EOS [0]{\spacefactor3000\relax}%
\providecommand \BibitemShut  [1]{\csname bibitem#1\endcsname}%
\let\auto@bib@innerbib\@empty
\bibitem [{\citenamefont {Sivak}\ \emph {et~al.}(2026)\citenamefont {Sivak}, \citenamefont {Morvan}, \citenamefont {Broughton}, \citenamefont {Cortiñas}, \citenamefont {Bausch}, \citenamefont {Senior}, \citenamefont {Neeley}, \citenamefont {Eickbusch}, \citenamefont {Shutty}, \citenamefont {Beni}, \citenamefont {Spencer}, \citenamefont {Heras}, \citenamefont {Edlich}, \citenamefont {Abanin}, \citenamefont {Abbas}, \citenamefont {Acharya}, \citenamefont {Aigeldinger}, \citenamefont {Alcaraz}, \citenamefont {Alcaraz}, \citenamefont {Andersen}, \citenamefont {Ansmann}, \citenamefont {Arute}, \citenamefont {Arya}, \citenamefont {Askew}, \citenamefont {Astrakhantsev}, \citenamefont {Atalaya}, \citenamefont {Ballard}, \citenamefont {Bardin}, \citenamefont {Bates}, \citenamefont {Bengtsson}, \citenamefont {Karimi}, \citenamefont {Bilmes}, \citenamefont {Bilodeau}, \citenamefont {Borjans}, \citenamefont {Bourassa}, \citenamefont {Bovaird}, \citenamefont {Bowers}, \citenamefont {Brill}, \citenamefont {Brooks},
  \citenamefont {Browne}, \citenamefont {Buchea}, \citenamefont {Buckley}, \citenamefont {Burger}, \citenamefont {Burkett}, \citenamefont {Bushnell}, \citenamefont {Busnaina}, \citenamefont {Cabrera}, \citenamefont {Campero}, \citenamefont {Chang}, \citenamefont {Chen}, \citenamefont {Chiaro}, \citenamefont {Chih}, \citenamefont {Cleland}, \citenamefont {Cochrane}, \citenamefont {Cockrell}, \citenamefont {Cogan}, \citenamefont {Collins}, \citenamefont {Conner}, \citenamefont {Cook}, \citenamefont {Courtney}, \citenamefont {Crook}, \citenamefont {Curtin}, \citenamefont {Damyanov}, \citenamefont {Das}, \citenamefont {Debroy}, \citenamefont {Demura}, \citenamefont {Donohoe}, \citenamefont {Drozdov}, \citenamefont {Dunsworth}, \citenamefont {Ehimhen}, \citenamefont {Elbag}, \citenamefont {Ella}, \citenamefont {Elzouka}, \citenamefont {Enriquez}, \citenamefont {Erickson}, \citenamefont {Ferreira}, \citenamefont {Flores}, \citenamefont {Burgos}, \citenamefont {Forati}, \citenamefont {Ford}, \citenamefont {Fowler},
  \citenamefont {Foxen}, \citenamefont {Fukami}, \citenamefont {Fung}, \citenamefont {Fuste}, \citenamefont {Ganjam}, \citenamefont {Garcia}, \citenamefont {Garrick}, \citenamefont {Gasca}, \citenamefont {Gehring}, \citenamefont {Geiger}, \citenamefont {Élie Genois}, \citenamefont {Giang}, \citenamefont {Gilboa}, \citenamefont {Goeders}, \citenamefont {Gonzales}, \citenamefont {Gosula}, \citenamefont {de~Graaf}, \citenamefont {Dau}, \citenamefont {Graumann}, \citenamefont {Grebel}, \citenamefont {Greene}, \citenamefont {Gross}, \citenamefont {Guerrero}, \citenamefont {Guevel}, \citenamefont {Ha}, \citenamefont {Habegger}, \citenamefont {Hadick}, \citenamefont {Hadjikhani}, \citenamefont {Hamilton}, \citenamefont {Harrigan}, \citenamefont {Harrington}, \citenamefont {Hartshorn}, \citenamefont {Heslin}, \citenamefont {Heu}, \citenamefont {Higgott}, \citenamefont {Hiltermann}, \citenamefont {Huang}, \citenamefont {Hucka}, \citenamefont {Hudspeth}, \citenamefont {Huff}, \citenamefont {Huggins}, \citenamefont
  {Jeffrey}, \citenamefont {Jevons}, \citenamefont {Jiang}, \citenamefont {Jin}, \citenamefont {Joshi}, \citenamefont {Juhas}, \citenamefont {Kabel}, \citenamefont {Kafri}, \citenamefont {Kang}, \citenamefont {Kang}, \citenamefont {Karamlou}, \citenamefont {Kaufman}, \citenamefont {Kechedzhi}, \citenamefont {Khattar}, \citenamefont {Khezri}, \citenamefont {Kim}, \citenamefont {Knaut}, \citenamefont {Kobrin}, \citenamefont {Kostritsa}, \citenamefont {Kreikebaum}, \citenamefont {Kudo}, \citenamefont {Kueffler}, \citenamefont {Kumar}, \citenamefont {Kurilovich}, \citenamefont {Kutsko}, \citenamefont {Lacroix}, \citenamefont {Landhuis}, \citenamefont {Lange-Dei}, \citenamefont {Langley}, \citenamefont {Laptev}, \citenamefont {Lau}, \citenamefont {Ledford}, \citenamefont {Lee}, \citenamefont {Lee}, \citenamefont {Lester}, \citenamefont {Leung}, \citenamefont {Li}, \citenamefont {Li}, \citenamefont {Li}, \citenamefont {Lill}, \citenamefont {Livingston}, \citenamefont {Lloyd}, \citenamefont {Locharla}, \citenamefont
  {Lorenzo}, \citenamefont {Lundahl}, \citenamefont {Lunt}, \citenamefont {Madhuk}, \citenamefont {Maiti}, \citenamefont {Maloney}, \citenamefont {Mandrà}, \citenamefont {Martin}, \citenamefont {Martin}, \citenamefont {Mascot}, \citenamefont {Das}, \citenamefont {Maslov}, \citenamefont {Mathews}, \citenamefont {Maxfield}, \citenamefont {McClean}, \citenamefont {McEwen}, \citenamefont {Meeks}, \citenamefont {Miao}, \citenamefont {Minev}, \citenamefont {Molavi}, \citenamefont {Molina}, \citenamefont {Montazeri}, \citenamefont {Neill}, \citenamefont {Newman}, \citenamefont {Nguyen}, \citenamefont {Nguyen}, \citenamefont {Ni}, \citenamefont {Niu}, \citenamefont {Oas}, \citenamefont {Orosco}, \citenamefont {Ottosson}, \citenamefont {Pagano}, \citenamefont {Paolo}, \citenamefont {Peek}, \citenamefont {Peterson}, \citenamefont {Pizzuto}, \citenamefont {Portoles}, \citenamefont {Potter}, \citenamefont {Pritchard}, \citenamefont {Qian}, \citenamefont {Quintana}, \citenamefont {Ranadive}, \citenamefont {Reagor},
  \citenamefont {Resnick}, \citenamefont {Rhodes}, \citenamefont {Riley}, \citenamefont {Roberts}, \citenamefont {Rodriguez}, \citenamefont {Ropes}, \citenamefont {Rose}, \citenamefont {Rosenberg}, \citenamefont {Rosenfeld}, \citenamefont {Rosenstock}, \citenamefont {Rossi}, \citenamefont {Roushan}, \citenamefont {Rower}, \citenamefont {Salazar}, \citenamefont {Sankaragomathi}, \citenamefont {Sarihan}, \citenamefont {Satzinger}, \citenamefont {Schaefer}, \citenamefont {Schroeder}, \citenamefont {Schurkus}, \citenamefont {Shahingohar}, \citenamefont {Shearn}, \citenamefont {Shorter}, \citenamefont {Shvarts}, \citenamefont {Small}, \citenamefont {Smith}, \citenamefont {Sobel}, \citenamefont {Spells}, \citenamefont {Springer}, \citenamefont {Sterling}, \citenamefont {Suchard}, \citenamefont {Szasz}, \citenamefont {Sztein}, \citenamefont {Taylor}, \citenamefont {Thiruraman}, \citenamefont {Thor}, \citenamefont {Timucin}, \citenamefont {Tomita}, \citenamefont {Torres}, \citenamefont {Torunbalci}, \citenamefont
  {Tran}, \citenamefont {Vaishnav}, \citenamefont {Vargas}, \citenamefont {Vdovichev}, \citenamefont {Vidal}, \citenamefont {Heidweiller}, \citenamefont {Voorhees}, \citenamefont {Waltman}, \citenamefont {Waltz}, \citenamefont {Wang}, \citenamefont {Ware}, \citenamefont {Watson}, \citenamefont {Wei}, \citenamefont {Weidel}, \citenamefont {White}, \citenamefont {Wong}, \citenamefont {Woo}, \citenamefont {Wood}, \citenamefont {Woodson}, \citenamefont {Xing}, \citenamefont {Yao}, \citenamefont {Yeh}, \citenamefont {Ying}, \citenamefont {Yoo}, \citenamefont {Yosri}, \citenamefont {Young}, \citenamefont {Young}, \citenamefont {Zalcman}, \citenamefont {Zhang}, \citenamefont {Zhang}, \citenamefont {Zhu}, \citenamefont {Zobrist}, \citenamefont {Zou}, \citenamefont {Babbush}, \citenamefont {Bacon}, \citenamefont {Boixo}, \citenamefont {Chen}, \citenamefont {Chen}, \citenamefont {Devoret}, \citenamefont {Hansen}, \citenamefont {Hilton}, \citenamefont {Jones}, \citenamefont {Kelly}, \citenamefont {Korotkov},
  \citenamefont {Lucero}, \citenamefont {Megrant}, \citenamefont {Neven}, \citenamefont {Oliver}, \citenamefont {Ramachandran}, \citenamefont {Smelyanskiy},\ and\ \citenamefont {Klimov}}]{sivak2025reinforcement}%
  \BibitemOpen
  \bibfield  {author} {\bibinfo {author} {\bibfnamefont {V.}~\bibnamefont {Sivak}}, \bibinfo {author} {\bibfnamefont {A.}~\bibnamefont {Morvan}}, \bibinfo {author} {\bibfnamefont {M.}~\bibnamefont {Broughton}}, \bibinfo {author} {\bibfnamefont {R.~G.}\ \bibnamefont {Cortiñas}}, \bibinfo {author} {\bibfnamefont {J.}~\bibnamefont {Bausch}}, \bibinfo {author} {\bibfnamefont {A.~W.}\ \bibnamefont {Senior}}, \bibinfo {author} {\bibfnamefont {M.}~\bibnamefont {Neeley}}, \bibinfo {author} {\bibfnamefont {A.}~\bibnamefont {Eickbusch}}, \bibinfo {author} {\bibfnamefont {N.}~\bibnamefont {Shutty}}, \bibinfo {author} {\bibfnamefont {L.~A.}\ \bibnamefont {Beni}}, \bibinfo {author} {\bibfnamefont {J.~S.}\ \bibnamefont {Spencer}}, \bibinfo {author} {\bibfnamefont {F.~J.~H.}\ \bibnamefont {Heras}}, \bibinfo {author} {\bibfnamefont {T.}~\bibnamefont {Edlich}}, \bibinfo {author} {\bibfnamefont {D.}~\bibnamefont {Abanin}}, \bibinfo {author} {\bibfnamefont {A.}~\bibnamefont {Abbas}}, \bibinfo {author} {\bibfnamefont
  {R.}~\bibnamefont {Acharya}}, \bibinfo {author} {\bibfnamefont {G.}~\bibnamefont {Aigeldinger}}, \bibinfo {author} {\bibfnamefont {R.}~\bibnamefont {Alcaraz}}, \bibinfo {author} {\bibfnamefont {S.}~\bibnamefont {Alcaraz}}, \bibinfo {author} {\bibfnamefont {T.~I.}\ \bibnamefont {Andersen}}, \bibinfo {author} {\bibfnamefont {M.}~\bibnamefont {Ansmann}}, \bibinfo {author} {\bibfnamefont {F.}~\bibnamefont {Arute}}, \bibinfo {author} {\bibfnamefont {K.}~\bibnamefont {Arya}}, \bibinfo {author} {\bibfnamefont {W.}~\bibnamefont {Askew}}, \bibinfo {author} {\bibfnamefont {N.}~\bibnamefont {Astrakhantsev}}, \bibinfo {author} {\bibfnamefont {J.}~\bibnamefont {Atalaya}}, \bibinfo {author} {\bibfnamefont {B.}~\bibnamefont {Ballard}}, \bibinfo {author} {\bibfnamefont {J.~C.}\ \bibnamefont {Bardin}}, \bibinfo {author} {\bibfnamefont {H.}~\bibnamefont {Bates}}, \bibinfo {author} {\bibfnamefont {A.}~\bibnamefont {Bengtsson}}, \bibinfo {author} {\bibfnamefont {M.~B.}\ \bibnamefont {Karimi}}, \bibinfo {author} {\bibfnamefont
  {A.}~\bibnamefont {Bilmes}}, \bibinfo {author} {\bibfnamefont {S.}~\bibnamefont {Bilodeau}}, \bibinfo {author} {\bibfnamefont {F.}~\bibnamefont {Borjans}}, \bibinfo {author} {\bibfnamefont {A.}~\bibnamefont {Bourassa}}, \bibinfo {author} {\bibfnamefont {J.}~\bibnamefont {Bovaird}}, \bibinfo {author} {\bibfnamefont {D.}~\bibnamefont {Bowers}}, \bibinfo {author} {\bibfnamefont {L.}~\bibnamefont {Brill}}, \bibinfo {author} {\bibfnamefont {P.}~\bibnamefont {Brooks}}, \bibinfo {author} {\bibfnamefont {D.~A.}\ \bibnamefont {Browne}}, \bibinfo {author} {\bibfnamefont {B.}~\bibnamefont {Buchea}}, \bibinfo {author} {\bibfnamefont {B.~B.}\ \bibnamefont {Buckley}}, \bibinfo {author} {\bibfnamefont {T.}~\bibnamefont {Burger}}, \bibinfo {author} {\bibfnamefont {B.}~\bibnamefont {Burkett}}, \bibinfo {author} {\bibfnamefont {N.}~\bibnamefont {Bushnell}}, \bibinfo {author} {\bibfnamefont {J.}~\bibnamefont {Busnaina}}, \bibinfo {author} {\bibfnamefont {A.}~\bibnamefont {Cabrera}}, \bibinfo {author} {\bibfnamefont
  {J.}~\bibnamefont {Campero}}, \bibinfo {author} {\bibfnamefont {H.-S.}\ \bibnamefont {Chang}}, \bibinfo {author} {\bibfnamefont {S.}~\bibnamefont {Chen}}, \bibinfo {author} {\bibfnamefont {B.}~\bibnamefont {Chiaro}}, \bibinfo {author} {\bibfnamefont {L.-Y.}\ \bibnamefont {Chih}}, \bibinfo {author} {\bibfnamefont {A.~Y.}\ \bibnamefont {Cleland}}, \bibinfo {author} {\bibfnamefont {B.}~\bibnamefont {Cochrane}}, \bibinfo {author} {\bibfnamefont {M.}~\bibnamefont {Cockrell}}, \bibinfo {author} {\bibfnamefont {J.}~\bibnamefont {Cogan}}, \bibinfo {author} {\bibfnamefont {R.}~\bibnamefont {Collins}}, \bibinfo {author} {\bibfnamefont {P.}~\bibnamefont {Conner}}, \bibinfo {author} {\bibfnamefont {H.}~\bibnamefont {Cook}}, \bibinfo {author} {\bibfnamefont {W.}~\bibnamefont {Courtney}}, \bibinfo {author} {\bibfnamefont {A.~L.}\ \bibnamefont {Crook}}, \bibinfo {author} {\bibfnamefont {B.}~\bibnamefont {Curtin}}, \bibinfo {author} {\bibfnamefont {M.}~\bibnamefont {Damyanov}}, \bibinfo {author} {\bibfnamefont
  {S.}~\bibnamefont {Das}}, \bibinfo {author} {\bibfnamefont {D.~M.}\ \bibnamefont {Debroy}}, \bibinfo {author} {\bibfnamefont {S.}~\bibnamefont {Demura}}, \bibinfo {author} {\bibfnamefont {P.}~\bibnamefont {Donohoe}}, \bibinfo {author} {\bibfnamefont {I.}~\bibnamefont {Drozdov}}, \bibinfo {author} {\bibfnamefont {A.}~\bibnamefont {Dunsworth}}, \bibinfo {author} {\bibfnamefont {V.}~\bibnamefont {Ehimhen}}, \bibinfo {author} {\bibfnamefont {A.~M.}\ \bibnamefont {Elbag}}, \bibinfo {author} {\bibfnamefont {L.}~\bibnamefont {Ella}}, \bibinfo {author} {\bibfnamefont {M.}~\bibnamefont {Elzouka}}, \bibinfo {author} {\bibfnamefont {D.}~\bibnamefont {Enriquez}}, \bibinfo {author} {\bibfnamefont {C.}~\bibnamefont {Erickson}}, \bibinfo {author} {\bibfnamefont {V.~S.}\ \bibnamefont {Ferreira}}, \bibinfo {author} {\bibfnamefont {M.}~\bibnamefont {Flores}}, \bibinfo {author} {\bibfnamefont {L.~F.}\ \bibnamefont {Burgos}}, \bibinfo {author} {\bibfnamefont {E.}~\bibnamefont {Forati}}, \bibinfo {author} {\bibfnamefont
  {J.}~\bibnamefont {Ford}}, \bibinfo {author} {\bibfnamefont {A.~G.}\ \bibnamefont {Fowler}}, \bibinfo {author} {\bibfnamefont {B.}~\bibnamefont {Foxen}}, \bibinfo {author} {\bibfnamefont {M.}~\bibnamefont {Fukami}}, \bibinfo {author} {\bibfnamefont {A.~W.~L.}\ \bibnamefont {Fung}}, \bibinfo {author} {\bibfnamefont {L.}~\bibnamefont {Fuste}}, \bibinfo {author} {\bibfnamefont {S.}~\bibnamefont {Ganjam}}, \bibinfo {author} {\bibfnamefont {G.}~\bibnamefont {Garcia}}, \bibinfo {author} {\bibfnamefont {C.}~\bibnamefont {Garrick}}, \bibinfo {author} {\bibfnamefont {R.}~\bibnamefont {Gasca}}, \bibinfo {author} {\bibfnamefont {H.}~\bibnamefont {Gehring}}, \bibinfo {author} {\bibfnamefont {R.}~\bibnamefont {Geiger}}, \bibinfo {author} {\bibnamefont {Élie Genois}}, \bibinfo {author} {\bibfnamefont {W.}~\bibnamefont {Giang}}, \bibinfo {author} {\bibfnamefont {D.}~\bibnamefont {Gilboa}}, \bibinfo {author} {\bibfnamefont {J.~E.}\ \bibnamefont {Goeders}}, \bibinfo {author} {\bibfnamefont {E.~C.}\ \bibnamefont
  {Gonzales}}, \bibinfo {author} {\bibfnamefont {R.}~\bibnamefont {Gosula}}, \bibinfo {author} {\bibfnamefont {S.~J.}\ \bibnamefont {de~Graaf}}, \bibinfo {author} {\bibfnamefont {A.~G.}\ \bibnamefont {Dau}}, \bibinfo {author} {\bibfnamefont {D.}~\bibnamefont {Graumann}}, \bibinfo {author} {\bibfnamefont {J.}~\bibnamefont {Grebel}}, \bibinfo {author} {\bibfnamefont {A.}~\bibnamefont {Greene}}, \bibinfo {author} {\bibfnamefont {J.~A.}\ \bibnamefont {Gross}}, \bibinfo {author} {\bibfnamefont {J.}~\bibnamefont {Guerrero}}, \bibinfo {author} {\bibfnamefont {L.~L.}\ \bibnamefont {Guevel}}, \bibinfo {author} {\bibfnamefont {T.}~\bibnamefont {Ha}}, \bibinfo {author} {\bibfnamefont {S.}~\bibnamefont {Habegger}}, \bibinfo {author} {\bibfnamefont {T.}~\bibnamefont {Hadick}}, \bibinfo {author} {\bibfnamefont {A.}~\bibnamefont {Hadjikhani}}, \bibinfo {author} {\bibfnamefont {M.~C.}\ \bibnamefont {Hamilton}}, \bibinfo {author} {\bibfnamefont {M.~P.}\ \bibnamefont {Harrigan}}, \bibinfo {author} {\bibfnamefont {S.~D.}\
  \bibnamefont {Harrington}}, \bibinfo {author} {\bibfnamefont {J.}~\bibnamefont {Hartshorn}}, \bibinfo {author} {\bibfnamefont {S.}~\bibnamefont {Heslin}}, \bibinfo {author} {\bibfnamefont {P.}~\bibnamefont {Heu}}, \bibinfo {author} {\bibfnamefont {O.}~\bibnamefont {Higgott}}, \bibinfo {author} {\bibfnamefont {R.}~\bibnamefont {Hiltermann}}, \bibinfo {author} {\bibfnamefont {H.-Y.}\ \bibnamefont {Huang}}, \bibinfo {author} {\bibfnamefont {M.}~\bibnamefont {Hucka}}, \bibinfo {author} {\bibfnamefont {C.}~\bibnamefont {Hudspeth}}, \bibinfo {author} {\bibfnamefont {A.}~\bibnamefont {Huff}}, \bibinfo {author} {\bibfnamefont {W.~J.}\ \bibnamefont {Huggins}}, \bibinfo {author} {\bibfnamefont {E.}~\bibnamefont {Jeffrey}}, \bibinfo {author} {\bibfnamefont {S.}~\bibnamefont {Jevons}}, \bibinfo {author} {\bibfnamefont {Z.}~\bibnamefont {Jiang}}, \bibinfo {author} {\bibfnamefont {X.}~\bibnamefont {Jin}}, \bibinfo {author} {\bibfnamefont {C.}~\bibnamefont {Joshi}}, \bibinfo {author} {\bibfnamefont {P.}~\bibnamefont
  {Juhas}}, \bibinfo {author} {\bibfnamefont {A.}~\bibnamefont {Kabel}}, \bibinfo {author} {\bibfnamefont {D.}~\bibnamefont {Kafri}}, \bibinfo {author} {\bibfnamefont {H.}~\bibnamefont {Kang}}, \bibinfo {author} {\bibfnamefont {K.}~\bibnamefont {Kang}}, \bibinfo {author} {\bibfnamefont {A.~H.}\ \bibnamefont {Karamlou}}, \bibinfo {author} {\bibfnamefont {R.}~\bibnamefont {Kaufman}}, \bibinfo {author} {\bibfnamefont {K.}~\bibnamefont {Kechedzhi}}, \bibinfo {author} {\bibfnamefont {T.}~\bibnamefont {Khattar}}, \bibinfo {author} {\bibfnamefont {M.}~\bibnamefont {Khezri}}, \bibinfo {author} {\bibfnamefont {S.}~\bibnamefont {Kim}}, \bibinfo {author} {\bibfnamefont {C.~M.}\ \bibnamefont {Knaut}}, \bibinfo {author} {\bibfnamefont {B.}~\bibnamefont {Kobrin}}, \bibinfo {author} {\bibfnamefont {F.}~\bibnamefont {Kostritsa}}, \bibinfo {author} {\bibfnamefont {J.~M.}\ \bibnamefont {Kreikebaum}}, \bibinfo {author} {\bibfnamefont {R.}~\bibnamefont {Kudo}}, \bibinfo {author} {\bibfnamefont {B.}~\bibnamefont {Kueffler}},
  \bibinfo {author} {\bibfnamefont {A.}~\bibnamefont {Kumar}}, \bibinfo {author} {\bibfnamefont {V.~D.}\ \bibnamefont {Kurilovich}}, \bibinfo {author} {\bibfnamefont {V.}~\bibnamefont {Kutsko}}, \bibinfo {author} {\bibfnamefont {N.}~\bibnamefont {Lacroix}}, \bibinfo {author} {\bibfnamefont {D.}~\bibnamefont {Landhuis}}, \bibinfo {author} {\bibfnamefont {T.}~\bibnamefont {Lange-Dei}}, \bibinfo {author} {\bibfnamefont {B.~W.}\ \bibnamefont {Langley}}, \bibinfo {author} {\bibfnamefont {P.}~\bibnamefont {Laptev}}, \bibinfo {author} {\bibfnamefont {K.-M.}\ \bibnamefont {Lau}}, \bibinfo {author} {\bibfnamefont {J.}~\bibnamefont {Ledford}}, \bibinfo {author} {\bibfnamefont {J.}~\bibnamefont {Lee}}, \bibinfo {author} {\bibfnamefont {K.}~\bibnamefont {Lee}}, \bibinfo {author} {\bibfnamefont {B.~J.}\ \bibnamefont {Lester}}, \bibinfo {author} {\bibfnamefont {W.}~\bibnamefont {Leung}}, \bibinfo {author} {\bibfnamefont {L.}~\bibnamefont {Li}}, \bibinfo {author} {\bibfnamefont {W.~Y.}\ \bibnamefont {Li}}, \bibinfo {author}
  {\bibfnamefont {M.}~\bibnamefont {Li}}, \bibinfo {author} {\bibfnamefont {A.~T.}\ \bibnamefont {Lill}}, \bibinfo {author} {\bibfnamefont {W.~P.}\ \bibnamefont {Livingston}}, \bibinfo {author} {\bibfnamefont {M.~T.}\ \bibnamefont {Lloyd}}, \bibinfo {author} {\bibfnamefont {A.}~\bibnamefont {Locharla}}, \bibinfo {author} {\bibfnamefont {L.~D.}\ \bibnamefont {Lorenzo}}, \bibinfo {author} {\bibfnamefont {D.}~\bibnamefont {Lundahl}}, \bibinfo {author} {\bibfnamefont {A.}~\bibnamefont {Lunt}}, \bibinfo {author} {\bibfnamefont {S.}~\bibnamefont {Madhuk}}, \bibinfo {author} {\bibfnamefont {A.}~\bibnamefont {Maiti}}, \bibinfo {author} {\bibfnamefont {A.}~\bibnamefont {Maloney}}, \bibinfo {author} {\bibfnamefont {S.}~\bibnamefont {Mandrà}}, \bibinfo {author} {\bibfnamefont {L.~S.}\ \bibnamefont {Martin}}, \bibinfo {author} {\bibfnamefont {O.}~\bibnamefont {Martin}}, \bibinfo {author} {\bibfnamefont {E.}~\bibnamefont {Mascot}}, \bibinfo {author} {\bibfnamefont {P.~M.}\ \bibnamefont {Das}}, \bibinfo {author}
  {\bibfnamefont {D.}~\bibnamefont {Maslov}}, \bibinfo {author} {\bibfnamefont {M.}~\bibnamefont {Mathews}}, \bibinfo {author} {\bibfnamefont {C.}~\bibnamefont {Maxfield}}, \bibinfo {author} {\bibfnamefont {J.~R.}\ \bibnamefont {McClean}}, \bibinfo {author} {\bibfnamefont {M.}~\bibnamefont {McEwen}}, \bibinfo {author} {\bibfnamefont {S.}~\bibnamefont {Meeks}}, \bibinfo {author} {\bibfnamefont {K.~C.}\ \bibnamefont {Miao}}, \bibinfo {author} {\bibfnamefont {Z.~K.}\ \bibnamefont {Minev}}, \bibinfo {author} {\bibfnamefont {R.}~\bibnamefont {Molavi}}, \bibinfo {author} {\bibfnamefont {S.}~\bibnamefont {Molina}}, \bibinfo {author} {\bibfnamefont {S.}~\bibnamefont {Montazeri}}, \bibinfo {author} {\bibfnamefont {C.}~\bibnamefont {Neill}}, \bibinfo {author} {\bibfnamefont {M.}~\bibnamefont {Newman}}, \bibinfo {author} {\bibfnamefont {A.}~\bibnamefont {Nguyen}}, \bibinfo {author} {\bibfnamefont {M.}~\bibnamefont {Nguyen}}, \bibinfo {author} {\bibfnamefont {C.-H.}\ \bibnamefont {Ni}}, \bibinfo {author} {\bibfnamefont
  {M.~Y.}\ \bibnamefont {Niu}}, \bibinfo {author} {\bibfnamefont {L.}~\bibnamefont {Oas}}, \bibinfo {author} {\bibfnamefont {R.}~\bibnamefont {Orosco}}, \bibinfo {author} {\bibfnamefont {K.}~\bibnamefont {Ottosson}}, \bibinfo {author} {\bibfnamefont {A.}~\bibnamefont {Pagano}}, \bibinfo {author} {\bibfnamefont {A.~D.}\ \bibnamefont {Paolo}}, \bibinfo {author} {\bibfnamefont {S.}~\bibnamefont {Peek}}, \bibinfo {author} {\bibfnamefont {D.}~\bibnamefont {Peterson}}, \bibinfo {author} {\bibfnamefont {A.}~\bibnamefont {Pizzuto}}, \bibinfo {author} {\bibfnamefont {E.}~\bibnamefont {Portoles}}, \bibinfo {author} {\bibfnamefont {R.}~\bibnamefont {Potter}}, \bibinfo {author} {\bibfnamefont {O.}~\bibnamefont {Pritchard}}, \bibinfo {author} {\bibfnamefont {M.}~\bibnamefont {Qian}}, \bibinfo {author} {\bibfnamefont {C.}~\bibnamefont {Quintana}}, \bibinfo {author} {\bibfnamefont {A.}~\bibnamefont {Ranadive}}, \bibinfo {author} {\bibfnamefont {M.~J.}\ \bibnamefont {Reagor}}, \bibinfo {author} {\bibfnamefont
  {R.}~\bibnamefont {Resnick}}, \bibinfo {author} {\bibfnamefont {D.~M.}\ \bibnamefont {Rhodes}}, \bibinfo {author} {\bibfnamefont {D.}~\bibnamefont {Riley}}, \bibinfo {author} {\bibfnamefont {G.}~\bibnamefont {Roberts}}, \bibinfo {author} {\bibfnamefont {R.}~\bibnamefont {Rodriguez}}, \bibinfo {author} {\bibfnamefont {E.}~\bibnamefont {Ropes}}, \bibinfo {author} {\bibfnamefont {L.~B.~D.}\ \bibnamefont {Rose}}, \bibinfo {author} {\bibfnamefont {E.}~\bibnamefont {Rosenberg}}, \bibinfo {author} {\bibfnamefont {E.}~\bibnamefont {Rosenfeld}}, \bibinfo {author} {\bibfnamefont {D.}~\bibnamefont {Rosenstock}}, \bibinfo {author} {\bibfnamefont {E.}~\bibnamefont {Rossi}}, \bibinfo {author} {\bibfnamefont {P.}~\bibnamefont {Roushan}}, \bibinfo {author} {\bibfnamefont {D.~A.}\ \bibnamefont {Rower}}, \bibinfo {author} {\bibfnamefont {R.}~\bibnamefont {Salazar}}, \bibinfo {author} {\bibfnamefont {K.}~\bibnamefont {Sankaragomathi}}, \bibinfo {author} {\bibfnamefont {M.~C.}\ \bibnamefont {Sarihan}}, \bibinfo {author}
  {\bibfnamefont {K.~J.}\ \bibnamefont {Satzinger}}, \bibinfo {author} {\bibfnamefont {M.}~\bibnamefont {Schaefer}}, \bibinfo {author} {\bibfnamefont {S.}~\bibnamefont {Schroeder}}, \bibinfo {author} {\bibfnamefont {H.~F.}\ \bibnamefont {Schurkus}}, \bibinfo {author} {\bibfnamefont {A.}~\bibnamefont {Shahingohar}}, \bibinfo {author} {\bibfnamefont {M.~J.}\ \bibnamefont {Shearn}}, \bibinfo {author} {\bibfnamefont {A.}~\bibnamefont {Shorter}}, \bibinfo {author} {\bibfnamefont {V.}~\bibnamefont {Shvarts}}, \bibinfo {author} {\bibfnamefont {S.}~\bibnamefont {Small}}, \bibinfo {author} {\bibfnamefont {W.~C.}\ \bibnamefont {Smith}}, \bibinfo {author} {\bibfnamefont {D.~A.}\ \bibnamefont {Sobel}}, \bibinfo {author} {\bibfnamefont {B.}~\bibnamefont {Spells}}, \bibinfo {author} {\bibfnamefont {S.}~\bibnamefont {Springer}}, \bibinfo {author} {\bibfnamefont {G.}~\bibnamefont {Sterling}}, \bibinfo {author} {\bibfnamefont {J.}~\bibnamefont {Suchard}}, \bibinfo {author} {\bibfnamefont {A.}~\bibnamefont {Szasz}}, \bibinfo
  {author} {\bibfnamefont {A.}~\bibnamefont {Sztein}}, \bibinfo {author} {\bibfnamefont {M.}~\bibnamefont {Taylor}}, \bibinfo {author} {\bibfnamefont {J.~P.}\ \bibnamefont {Thiruraman}}, \bibinfo {author} {\bibfnamefont {D.}~\bibnamefont {Thor}}, \bibinfo {author} {\bibfnamefont {D.}~\bibnamefont {Timucin}}, \bibinfo {author} {\bibfnamefont {E.}~\bibnamefont {Tomita}}, \bibinfo {author} {\bibfnamefont {A.}~\bibnamefont {Torres}}, \bibinfo {author} {\bibfnamefont {M.~M.}\ \bibnamefont {Torunbalci}}, \bibinfo {author} {\bibfnamefont {H.}~\bibnamefont {Tran}}, \bibinfo {author} {\bibfnamefont {A.}~\bibnamefont {Vaishnav}}, \bibinfo {author} {\bibfnamefont {J.}~\bibnamefont {Vargas}}, \bibinfo {author} {\bibfnamefont {S.}~\bibnamefont {Vdovichev}}, \bibinfo {author} {\bibfnamefont {G.}~\bibnamefont {Vidal}}, \bibinfo {author} {\bibfnamefont {C.~V.}\ \bibnamefont {Heidweiller}}, \bibinfo {author} {\bibfnamefont {M.}~\bibnamefont {Voorhees}}, \bibinfo {author} {\bibfnamefont {S.}~\bibnamefont {Waltman}}, \bibinfo
  {author} {\bibfnamefont {J.}~\bibnamefont {Waltz}}, \bibinfo {author} {\bibfnamefont {S.~X.}\ \bibnamefont {Wang}}, \bibinfo {author} {\bibfnamefont {B.}~\bibnamefont {Ware}}, \bibinfo {author} {\bibfnamefont {J.~D.}\ \bibnamefont {Watson}}, \bibinfo {author} {\bibfnamefont {Y.}~\bibnamefont {Wei}}, \bibinfo {author} {\bibfnamefont {T.}~\bibnamefont {Weidel}}, \bibinfo {author} {\bibfnamefont {T.}~\bibnamefont {White}}, \bibinfo {author} {\bibfnamefont {K.}~\bibnamefont {Wong}}, \bibinfo {author} {\bibfnamefont {B.~W.~K.}\ \bibnamefont {Woo}}, \bibinfo {author} {\bibfnamefont {C.~J.}\ \bibnamefont {Wood}}, \bibinfo {author} {\bibfnamefont {M.}~\bibnamefont {Woodson}}, \bibinfo {author} {\bibfnamefont {C.}~\bibnamefont {Xing}}, \bibinfo {author} {\bibfnamefont {Z.~J.}\ \bibnamefont {Yao}}, \bibinfo {author} {\bibfnamefont {P.}~\bibnamefont {Yeh}}, \bibinfo {author} {\bibfnamefont {B.}~\bibnamefont {Ying}}, \bibinfo {author} {\bibfnamefont {J.}~\bibnamefont {Yoo}}, \bibinfo {author} {\bibfnamefont
  {N.}~\bibnamefont {Yosri}}, \bibinfo {author} {\bibfnamefont {E.}~\bibnamefont {Young}}, \bibinfo {author} {\bibfnamefont {G.}~\bibnamefont {Young}}, \bibinfo {author} {\bibfnamefont {A.}~\bibnamefont {Zalcman}}, \bibinfo {author} {\bibfnamefont {R.}~\bibnamefont {Zhang}}, \bibinfo {author} {\bibfnamefont {Y.}~\bibnamefont {Zhang}}, \bibinfo {author} {\bibfnamefont {N.}~\bibnamefont {Zhu}}, \bibinfo {author} {\bibfnamefont {N.}~\bibnamefont {Zobrist}}, \bibinfo {author} {\bibfnamefont {Z.}~\bibnamefont {Zou}}, \bibinfo {author} {\bibfnamefont {R.}~\bibnamefont {Babbush}}, \bibinfo {author} {\bibfnamefont {D.}~\bibnamefont {Bacon}}, \bibinfo {author} {\bibfnamefont {S.}~\bibnamefont {Boixo}}, \bibinfo {author} {\bibfnamefont {Y.}~\bibnamefont {Chen}}, \bibinfo {author} {\bibfnamefont {Z.}~\bibnamefont {Chen}}, \bibinfo {author} {\bibfnamefont {M.}~\bibnamefont {Devoret}}, \bibinfo {author} {\bibfnamefont {M.}~\bibnamefont {Hansen}}, \bibinfo {author} {\bibfnamefont {J.}~\bibnamefont {Hilton}}, \bibinfo
  {author} {\bibfnamefont {C.}~\bibnamefont {Jones}}, \bibinfo {author} {\bibfnamefont {J.}~\bibnamefont {Kelly}}, \bibinfo {author} {\bibfnamefont {A.~N.}\ \bibnamefont {Korotkov}}, \bibinfo {author} {\bibfnamefont {E.}~\bibnamefont {Lucero}}, \bibinfo {author} {\bibfnamefont {A.}~\bibnamefont {Megrant}}, \bibinfo {author} {\bibfnamefont {H.}~\bibnamefont {Neven}}, \bibinfo {author} {\bibfnamefont {W.~D.}\ \bibnamefont {Oliver}}, \bibinfo {author} {\bibfnamefont {G.}~\bibnamefont {Ramachandran}}, \bibinfo {author} {\bibfnamefont {V.}~\bibnamefont {Smelyanskiy}},\ and\ \bibinfo {author} {\bibfnamefont {P.~V.}\ \bibnamefont {Klimov}},\ }\bibfield  {title} {\bibinfo {title} {Reinforcement learning control of quantum error correction},\ }\href {https://doi.org/10.1038/s41586-026-10759-2} {\bibfield  {journal} {\bibinfo  {journal} {Nature}\ }\textbf {\bibinfo {volume} {655}},\ \bibinfo {pages} {879} (\bibinfo {year} {2026})}\BibitemShut {NoStop}%
\bibitem [{\citenamefont {Knill}\ \emph {et~al.}(1998)\citenamefont {Knill}, \citenamefont {Laflamme},\ and\ \citenamefont {Zurek}}]{resilient_qc}%
  \BibitemOpen
  \bibfield  {author} {\bibinfo {author} {\bibfnamefont {E.}~\bibnamefont {Knill}}, \bibinfo {author} {\bibfnamefont {R.}~\bibnamefont {Laflamme}},\ and\ \bibinfo {author} {\bibfnamefont {W.~H.}\ \bibnamefont {Zurek}},\ }\bibfield  {title} {\bibinfo {title} {Resilient quantum computation},\ }\href {https://doi.org/10.1126/science.279.5349.342} {\bibfield  {journal} {\bibinfo  {journal} {Science}\ }\textbf {\bibinfo {volume} {279}},\ \bibinfo {pages} {342} (\bibinfo {year} {1998})}\BibitemShut {NoStop}%
\bibitem [{\citenamefont {Aharonov}\ and\ \citenamefont {Ben-Or}(1997)}]{fault-tolerance}%
  \BibitemOpen
  \bibfield  {author} {\bibinfo {author} {\bibfnamefont {D.}~\bibnamefont {Aharonov}}\ and\ \bibinfo {author} {\bibfnamefont {M.}~\bibnamefont {Ben-Or}},\ }\bibfield  {title} {\bibinfo {title} {Fault-tolerant quantum computation with constant error},\ }in\ \href {https://doi.org/10.1145/258533.258579} {\emph {\bibinfo {booktitle} {Proceedings of the Twenty-Ninth Annual ACM Symposium on Theory of Computing}}},\ \bibinfo {series and number} {STOC '97}\ (\bibinfo  {publisher} {Association for Computing Machinery},\ \bibinfo {address} {New York, NY, USA},\ \bibinfo {year} {1997})\ p.\ \bibinfo {pages} {176–188}\BibitemShut {NoStop}%
\bibitem [{\citenamefont {Fowler}\ \emph {et~al.}(2009)\citenamefont {Fowler}, \citenamefont {Stephens},\ and\ \citenamefont {Groszkowski}}]{fowler2009high}%
  \BibitemOpen
  \bibfield  {author} {\bibinfo {author} {\bibfnamefont {A.~G.}\ \bibnamefont {Fowler}}, \bibinfo {author} {\bibfnamefont {A.~M.}\ \bibnamefont {Stephens}},\ and\ \bibinfo {author} {\bibfnamefont {P.}~\bibnamefont {Groszkowski}},\ }\bibfield  {title} {\bibinfo {title} {High-threshold universal quantum computation on the surface code},\ }\href {https://journals.aps.org/pra/abstract/10.1103/PhysRevA.80.052312} {\bibfield  {journal} {\bibinfo  {journal} {Physical Review A}\ }\textbf {\bibinfo {volume} {80}},\ \bibinfo {pages} {052312} (\bibinfo {year} {2009})}\BibitemShut {NoStop}%
\bibitem [{\citenamefont {{Google Quantum AI and Collaborators}}(2025)}]{google2025quantum}%
  \BibitemOpen
  \bibfield  {author} {\bibinfo {author} {\bibnamefont {{Google Quantum AI and Collaborators}}},\ }\bibfield  {title} {\bibinfo {title} {Quantum error correction below the surface code threshold},\ }\href {https://doi.org/10.1038/s41586-024-08449-y} {\bibfield  {journal} {\bibinfo  {journal} {Nature}\ }\textbf {\bibinfo {volume} {638}},\ \bibinfo {pages} {920} (\bibinfo {year} {2025})}\BibitemShut {NoStop}%
\bibitem [{\citenamefont {Bravyi}\ \emph {et~al.}(2024)\citenamefont {Bravyi}, \citenamefont {Cross}, \citenamefont {Gambetta}, \citenamefont {Maslov}, \citenamefont {Rall},\ and\ \citenamefont {Yoder}}]{bravyi2024high}%
  \BibitemOpen
  \bibfield  {author} {\bibinfo {author} {\bibfnamefont {S.}~\bibnamefont {Bravyi}}, \bibinfo {author} {\bibfnamefont {A.~W.}\ \bibnamefont {Cross}}, \bibinfo {author} {\bibfnamefont {J.~M.}\ \bibnamefont {Gambetta}}, \bibinfo {author} {\bibfnamefont {D.}~\bibnamefont {Maslov}}, \bibinfo {author} {\bibfnamefont {P.}~\bibnamefont {Rall}},\ and\ \bibinfo {author} {\bibfnamefont {T.~J.}\ \bibnamefont {Yoder}},\ }\bibfield  {title} {\bibinfo {title} {High-threshold and low-overhead fault-tolerant quantum memory},\ }\href {https://www.nature.com/articles/s41586-024-07107-7} {\bibfield  {journal} {\bibinfo  {journal} {Nature}\ }\textbf {\bibinfo {volume} {627}},\ \bibinfo {pages} {778} (\bibinfo {year} {2024})}\BibitemShut {NoStop}%
\bibitem [{\citenamefont {Bluvstein}\ \emph {et~al.}(2024)\citenamefont {Bluvstein}, \citenamefont {Evered}, \citenamefont {Geim}, \citenamefont {Li}, \citenamefont {Zhou}, \citenamefont {Manovitz}, \citenamefont {Ebadi}, \citenamefont {Cain}, \citenamefont {Kalinowski}, \citenamefont {Hangleiter}, \citenamefont {Bonilla~Ataides}, \citenamefont {Maskara}, \citenamefont {Cong}, \citenamefont {Gao}, \citenamefont {Sales~Rodriguez}, \citenamefont {Karolyshyn}, \citenamefont {Semeghini}, \citenamefont {Gullans}, \citenamefont {Greiner}, \citenamefont {Vuleti{\'c}},\ and\ \citenamefont {Lukin}}]{harvard2024}%
  \BibitemOpen
  \bibfield  {author} {\bibinfo {author} {\bibfnamefont {D.}~\bibnamefont {Bluvstein}}, \bibinfo {author} {\bibfnamefont {S.~J.}\ \bibnamefont {Evered}}, \bibinfo {author} {\bibfnamefont {A.~A.}\ \bibnamefont {Geim}}, \bibinfo {author} {\bibfnamefont {S.~H.}\ \bibnamefont {Li}}, \bibinfo {author} {\bibfnamefont {H.}~\bibnamefont {Zhou}}, \bibinfo {author} {\bibfnamefont {T.}~\bibnamefont {Manovitz}}, \bibinfo {author} {\bibfnamefont {S.}~\bibnamefont {Ebadi}}, \bibinfo {author} {\bibfnamefont {M.}~\bibnamefont {Cain}}, \bibinfo {author} {\bibfnamefont {M.}~\bibnamefont {Kalinowski}}, \bibinfo {author} {\bibfnamefont {D.}~\bibnamefont {Hangleiter}}, \bibinfo {author} {\bibfnamefont {J.~P.}\ \bibnamefont {Bonilla~Ataides}}, \bibinfo {author} {\bibfnamefont {N.}~\bibnamefont {Maskara}}, \bibinfo {author} {\bibfnamefont {I.}~\bibnamefont {Cong}}, \bibinfo {author} {\bibfnamefont {X.}~\bibnamefont {Gao}}, \bibinfo {author} {\bibfnamefont {P.}~\bibnamefont {Sales~Rodriguez}}, \bibinfo {author} {\bibfnamefont
  {T.}~\bibnamefont {Karolyshyn}}, \bibinfo {author} {\bibfnamefont {G.}~\bibnamefont {Semeghini}}, \bibinfo {author} {\bibfnamefont {M.~J.}\ \bibnamefont {Gullans}}, \bibinfo {author} {\bibfnamefont {M.}~\bibnamefont {Greiner}}, \bibinfo {author} {\bibfnamefont {V.}~\bibnamefont {Vuleti{\'c}}},\ and\ \bibinfo {author} {\bibfnamefont {M.~D.}\ \bibnamefont {Lukin}},\ }\bibfield  {title} {\bibinfo {title} {Logical quantum processor based on reconfigurable atom arrays},\ }\href {https://doi.org/10.1038/s41586-023-06927-3} {\bibfield  {journal} {\bibinfo  {journal} {Nature}\ }\textbf {\bibinfo {volume} {626}},\ \bibinfo {pages} {58} (\bibinfo {year} {2024})}\BibitemShut {NoStop}%
\bibitem [{\citenamefont {Bluvstein}\ \emph {et~al.}(2026)\citenamefont {Bluvstein}, \citenamefont {Geim}, \citenamefont {Li}, \citenamefont {Evered}, \citenamefont {Bonilla~Ataides}, \citenamefont {Baranes}, \citenamefont {Gu}, \citenamefont {Manovitz}, \citenamefont {Xu}, \citenamefont {Kalinowski}, \citenamefont {Majidy}, \citenamefont {Kokail}, \citenamefont {Maskara}, \citenamefont {Trapp}, \citenamefont {Stewart}, \citenamefont {Hollerith}, \citenamefont {Zhou}, \citenamefont {Gullans}, \citenamefont {Yelin}, \citenamefont {Greiner}, \citenamefont {Vuleti{\'c}}, \citenamefont {Cain},\ and\ \citenamefont {Lukin}}]{harvard2026}%
  \BibitemOpen
  \bibfield  {author} {\bibinfo {author} {\bibfnamefont {D.}~\bibnamefont {Bluvstein}}, \bibinfo {author} {\bibfnamefont {A.~A.}\ \bibnamefont {Geim}}, \bibinfo {author} {\bibfnamefont {S.~H.}\ \bibnamefont {Li}}, \bibinfo {author} {\bibfnamefont {S.~J.}\ \bibnamefont {Evered}}, \bibinfo {author} {\bibfnamefont {J.~P.}\ \bibnamefont {Bonilla~Ataides}}, \bibinfo {author} {\bibfnamefont {G.}~\bibnamefont {Baranes}}, \bibinfo {author} {\bibfnamefont {A.}~\bibnamefont {Gu}}, \bibinfo {author} {\bibfnamefont {T.}~\bibnamefont {Manovitz}}, \bibinfo {author} {\bibfnamefont {M.}~\bibnamefont {Xu}}, \bibinfo {author} {\bibfnamefont {M.}~\bibnamefont {Kalinowski}}, \bibinfo {author} {\bibfnamefont {S.}~\bibnamefont {Majidy}}, \bibinfo {author} {\bibfnamefont {C.}~\bibnamefont {Kokail}}, \bibinfo {author} {\bibfnamefont {N.}~\bibnamefont {Maskara}}, \bibinfo {author} {\bibfnamefont {E.~C.}\ \bibnamefont {Trapp}}, \bibinfo {author} {\bibfnamefont {L.~M.}\ \bibnamefont {Stewart}}, \bibinfo {author} {\bibfnamefont
  {S.}~\bibnamefont {Hollerith}}, \bibinfo {author} {\bibfnamefont {H.}~\bibnamefont {Zhou}}, \bibinfo {author} {\bibfnamefont {M.~J.}\ \bibnamefont {Gullans}}, \bibinfo {author} {\bibfnamefont {S.~F.}\ \bibnamefont {Yelin}}, \bibinfo {author} {\bibfnamefont {M.}~\bibnamefont {Greiner}}, \bibinfo {author} {\bibfnamefont {V.}~\bibnamefont {Vuleti{\'c}}}, \bibinfo {author} {\bibfnamefont {M.}~\bibnamefont {Cain}},\ and\ \bibinfo {author} {\bibfnamefont {M.~D.}\ \bibnamefont {Lukin}},\ }\bibfield  {title} {\bibinfo {title} {A fault-tolerant neutral-atom architecture for universal quantum computation},\ }\href {https://doi.org/10.1038/s41586-025-09848-5} {\bibfield  {journal} {\bibinfo  {journal} {Nature}\ }\textbf {\bibinfo {volume} {649}},\ \bibinfo {pages} {39} (\bibinfo {year} {2026})}\BibitemShut {NoStop}%
\bibitem [{\citenamefont {Chen}\ \emph {et~al.}(2021)\citenamefont {Chen}, \citenamefont {Satzinger}, \citenamefont {Atalaya}, \citenamefont {Korotkov}, \citenamefont {Dunsworth}, \citenamefont {Sank}, \citenamefont {Quintana}, \citenamefont {McEwen}, \citenamefont {Barends}, \citenamefont {Klimov}, \citenamefont {Hong}, \citenamefont {Jones}, \citenamefont {Petukhov}, \citenamefont {Kafri}, \citenamefont {Demura}, \citenamefont {Burkett}, \citenamefont {Gidney}, \citenamefont {Fowler}, \citenamefont {Paler}, \citenamefont {Putterman}, \citenamefont {Aleiner}, \citenamefont {Arute}, \citenamefont {Arya}, \citenamefont {Babbush}, \citenamefont {Bardin}, \citenamefont {Bengtsson}, \citenamefont {Bourassa}, \citenamefont {Broughton}, \citenamefont {Buckley}, \citenamefont {Buell}, \citenamefont {Bushnell}, \citenamefont {Chiaro}, \citenamefont {Collins}, \citenamefont {Courtney}, \citenamefont {Derk}, \citenamefont {Eppens}, \citenamefont {Erickson}, \citenamefont {Farhi}, \citenamefont {Foxen}, \citenamefont
  {Giustina}, \citenamefont {Greene}, \citenamefont {Gross}, \citenamefont {Harrigan}, \citenamefont {Harrington}, \citenamefont {Hilton}, \citenamefont {Ho}, \citenamefont {Huang}, \citenamefont {Huggins}, \citenamefont {Ioffe}, \citenamefont {Isakov}, \citenamefont {Jeffrey}, \citenamefont {Jiang}, \citenamefont {Kechedzhi}, \citenamefont {Kim}, \citenamefont {Kitaev}, \citenamefont {Kostritsa}, \citenamefont {Landhuis}, \citenamefont {Laptev}, \citenamefont {Lucero}, \citenamefont {Martin}, \citenamefont {McClean}, \citenamefont {McCourt}, \citenamefont {Mi}, \citenamefont {Miao}, \citenamefont {Mohseni}, \citenamefont {Montazeri}, \citenamefont {Mruczkiewicz}, \citenamefont {Mutus}, \citenamefont {Naaman}, \citenamefont {Neeley}, \citenamefont {Neill}, \citenamefont {Newman}, \citenamefont {Niu}, \citenamefont {O'Brien}, \citenamefont {Opremcak}, \citenamefont {Ostby}, \citenamefont {Pat{\'o}}, \citenamefont {Redd}, \citenamefont {Roushan}, \citenamefont {Rubin}, \citenamefont {Shvarts}, \citenamefont
  {Strain}, \citenamefont {Szalay}, \citenamefont {Trevithick}, \citenamefont {Villalonga}, \citenamefont {White}, \citenamefont {Yao}, \citenamefont {Yeh}, \citenamefont {Yoo}, \citenamefont {Zalcman}, \citenamefont {Neven}, \citenamefont {Boixo}, \citenamefont {Smelyanskiy}, \citenamefont {Chen}, \citenamefont {Megrant}, \citenamefont {Kelly},\ and\ \citenamefont {{Google Quantum AI}}}]{google2021}%
  \BibitemOpen
  \bibfield  {author} {\bibinfo {author} {\bibfnamefont {Z.}~\bibnamefont {Chen}}, \bibinfo {author} {\bibfnamefont {K.~J.}\ \bibnamefont {Satzinger}}, \bibinfo {author} {\bibfnamefont {J.}~\bibnamefont {Atalaya}}, \bibinfo {author} {\bibfnamefont {A.~N.}\ \bibnamefont {Korotkov}}, \bibinfo {author} {\bibfnamefont {A.}~\bibnamefont {Dunsworth}}, \bibinfo {author} {\bibfnamefont {D.}~\bibnamefont {Sank}}, \bibinfo {author} {\bibfnamefont {C.}~\bibnamefont {Quintana}}, \bibinfo {author} {\bibfnamefont {M.}~\bibnamefont {McEwen}}, \bibinfo {author} {\bibfnamefont {R.}~\bibnamefont {Barends}}, \bibinfo {author} {\bibfnamefont {P.~V.}\ \bibnamefont {Klimov}}, \bibinfo {author} {\bibfnamefont {S.}~\bibnamefont {Hong}}, \bibinfo {author} {\bibfnamefont {C.}~\bibnamefont {Jones}}, \bibinfo {author} {\bibfnamefont {A.}~\bibnamefont {Petukhov}}, \bibinfo {author} {\bibfnamefont {D.}~\bibnamefont {Kafri}}, \bibinfo {author} {\bibfnamefont {S.}~\bibnamefont {Demura}}, \bibinfo {author} {\bibfnamefont {B.}~\bibnamefont
  {Burkett}}, \bibinfo {author} {\bibfnamefont {C.}~\bibnamefont {Gidney}}, \bibinfo {author} {\bibfnamefont {A.~G.}\ \bibnamefont {Fowler}}, \bibinfo {author} {\bibfnamefont {A.}~\bibnamefont {Paler}}, \bibinfo {author} {\bibfnamefont {H.}~\bibnamefont {Putterman}}, \bibinfo {author} {\bibfnamefont {I.}~\bibnamefont {Aleiner}}, \bibinfo {author} {\bibfnamefont {F.}~\bibnamefont {Arute}}, \bibinfo {author} {\bibfnamefont {K.}~\bibnamefont {Arya}}, \bibinfo {author} {\bibfnamefont {R.}~\bibnamefont {Babbush}}, \bibinfo {author} {\bibfnamefont {J.~C.}\ \bibnamefont {Bardin}}, \bibinfo {author} {\bibfnamefont {A.}~\bibnamefont {Bengtsson}}, \bibinfo {author} {\bibfnamefont {A.}~\bibnamefont {Bourassa}}, \bibinfo {author} {\bibfnamefont {M.}~\bibnamefont {Broughton}}, \bibinfo {author} {\bibfnamefont {B.~B.}\ \bibnamefont {Buckley}}, \bibinfo {author} {\bibfnamefont {D.~A.}\ \bibnamefont {Buell}}, \bibinfo {author} {\bibfnamefont {N.}~\bibnamefont {Bushnell}}, \bibinfo {author} {\bibfnamefont {B.}~\bibnamefont
  {Chiaro}}, \bibinfo {author} {\bibfnamefont {R.}~\bibnamefont {Collins}}, \bibinfo {author} {\bibfnamefont {W.}~\bibnamefont {Courtney}}, \bibinfo {author} {\bibfnamefont {A.~R.}\ \bibnamefont {Derk}}, \bibinfo {author} {\bibfnamefont {D.}~\bibnamefont {Eppens}}, \bibinfo {author} {\bibfnamefont {C.}~\bibnamefont {Erickson}}, \bibinfo {author} {\bibfnamefont {E.}~\bibnamefont {Farhi}}, \bibinfo {author} {\bibfnamefont {B.}~\bibnamefont {Foxen}}, \bibinfo {author} {\bibfnamefont {M.}~\bibnamefont {Giustina}}, \bibinfo {author} {\bibfnamefont {A.}~\bibnamefont {Greene}}, \bibinfo {author} {\bibfnamefont {J.~A.}\ \bibnamefont {Gross}}, \bibinfo {author} {\bibfnamefont {M.~P.}\ \bibnamefont {Harrigan}}, \bibinfo {author} {\bibfnamefont {S.~D.}\ \bibnamefont {Harrington}}, \bibinfo {author} {\bibfnamefont {J.}~\bibnamefont {Hilton}}, \bibinfo {author} {\bibfnamefont {A.}~\bibnamefont {Ho}}, \bibinfo {author} {\bibfnamefont {T.}~\bibnamefont {Huang}}, \bibinfo {author} {\bibfnamefont {W.~J.}\ \bibnamefont
  {Huggins}}, \bibinfo {author} {\bibfnamefont {L.~B.}\ \bibnamefont {Ioffe}}, \bibinfo {author} {\bibfnamefont {S.~V.}\ \bibnamefont {Isakov}}, \bibinfo {author} {\bibfnamefont {E.}~\bibnamefont {Jeffrey}}, \bibinfo {author} {\bibfnamefont {Z.}~\bibnamefont {Jiang}}, \bibinfo {author} {\bibfnamefont {K.}~\bibnamefont {Kechedzhi}}, \bibinfo {author} {\bibfnamefont {S.}~\bibnamefont {Kim}}, \bibinfo {author} {\bibfnamefont {A.}~\bibnamefont {Kitaev}}, \bibinfo {author} {\bibfnamefont {F.}~\bibnamefont {Kostritsa}}, \bibinfo {author} {\bibfnamefont {D.}~\bibnamefont {Landhuis}}, \bibinfo {author} {\bibfnamefont {P.}~\bibnamefont {Laptev}}, \bibinfo {author} {\bibfnamefont {E.}~\bibnamefont {Lucero}}, \bibinfo {author} {\bibfnamefont {O.}~\bibnamefont {Martin}}, \bibinfo {author} {\bibfnamefont {J.~R.}\ \bibnamefont {McClean}}, \bibinfo {author} {\bibfnamefont {T.}~\bibnamefont {McCourt}}, \bibinfo {author} {\bibfnamefont {X.}~\bibnamefont {Mi}}, \bibinfo {author} {\bibfnamefont {K.~C.}\ \bibnamefont {Miao}},
  \bibinfo {author} {\bibfnamefont {M.}~\bibnamefont {Mohseni}}, \bibinfo {author} {\bibfnamefont {S.}~\bibnamefont {Montazeri}}, \bibinfo {author} {\bibfnamefont {W.}~\bibnamefont {Mruczkiewicz}}, \bibinfo {author} {\bibfnamefont {J.}~\bibnamefont {Mutus}}, \bibinfo {author} {\bibfnamefont {O.}~\bibnamefont {Naaman}}, \bibinfo {author} {\bibfnamefont {M.}~\bibnamefont {Neeley}}, \bibinfo {author} {\bibfnamefont {C.}~\bibnamefont {Neill}}, \bibinfo {author} {\bibfnamefont {M.}~\bibnamefont {Newman}}, \bibinfo {author} {\bibfnamefont {M.~Y.}\ \bibnamefont {Niu}}, \bibinfo {author} {\bibfnamefont {T.~E.}\ \bibnamefont {O'Brien}}, \bibinfo {author} {\bibfnamefont {A.}~\bibnamefont {Opremcak}}, \bibinfo {author} {\bibfnamefont {E.}~\bibnamefont {Ostby}}, \bibinfo {author} {\bibfnamefont {B.}~\bibnamefont {Pat{\'o}}}, \bibinfo {author} {\bibfnamefont {N.}~\bibnamefont {Redd}}, \bibinfo {author} {\bibfnamefont {P.}~\bibnamefont {Roushan}}, \bibinfo {author} {\bibfnamefont {N.~C.}\ \bibnamefont {Rubin}}, \bibinfo
  {author} {\bibfnamefont {V.}~\bibnamefont {Shvarts}}, \bibinfo {author} {\bibfnamefont {D.}~\bibnamefont {Strain}}, \bibinfo {author} {\bibfnamefont {M.}~\bibnamefont {Szalay}}, \bibinfo {author} {\bibfnamefont {M.~D.}\ \bibnamefont {Trevithick}}, \bibinfo {author} {\bibfnamefont {B.}~\bibnamefont {Villalonga}}, \bibinfo {author} {\bibfnamefont {T.}~\bibnamefont {White}}, \bibinfo {author} {\bibfnamefont {Z.~J.}\ \bibnamefont {Yao}}, \bibinfo {author} {\bibfnamefont {P.}~\bibnamefont {Yeh}}, \bibinfo {author} {\bibfnamefont {J.}~\bibnamefont {Yoo}}, \bibinfo {author} {\bibfnamefont {A.}~\bibnamefont {Zalcman}}, \bibinfo {author} {\bibfnamefont {H.}~\bibnamefont {Neven}}, \bibinfo {author} {\bibfnamefont {S.}~\bibnamefont {Boixo}}, \bibinfo {author} {\bibfnamefont {V.}~\bibnamefont {Smelyanskiy}}, \bibinfo {author} {\bibfnamefont {Y.}~\bibnamefont {Chen}}, \bibinfo {author} {\bibfnamefont {A.}~\bibnamefont {Megrant}}, \bibinfo {author} {\bibfnamefont {J.}~\bibnamefont {Kelly}},\ and\ \bibinfo {author}
  {\bibnamefont {{Google Quantum AI}}},\ }\bibfield  {title} {\bibinfo {title} {Exponential suppression of bit or phase errors with cyclic error correction},\ }\href {https://doi.org/10.1038/s41586-021-03588-y} {\bibfield  {journal} {\bibinfo  {journal} {Nature}\ }\textbf {\bibinfo {volume} {595}},\ \bibinfo {pages} {383} (\bibinfo {year} {2021})}\BibitemShut {NoStop}%
\bibitem [{\citenamefont {Paetznick}\ \emph {et~al.}(2024)\citenamefont {Paetznick}, \citenamefont {Da~Silva}, \citenamefont {Ryan-Anderson}, \citenamefont {Bello-Rivas}, \citenamefont {Campora~III}, \citenamefont {Chernoguzov}, \citenamefont {Dreiling}, \citenamefont {Foltz}, \citenamefont {Frachon}, \citenamefont {Gaebler} \emph {et~al.}}]{2024arXiv240402280P}%
  \BibitemOpen
  \bibfield  {author} {\bibinfo {author} {\bibfnamefont {A.}~\bibnamefont {Paetznick}}, \bibinfo {author} {\bibfnamefont {M.}~\bibnamefont {Da~Silva}}, \bibinfo {author} {\bibfnamefont {C.}~\bibnamefont {Ryan-Anderson}}, \bibinfo {author} {\bibfnamefont {J.}~\bibnamefont {Bello-Rivas}}, \bibinfo {author} {\bibfnamefont {J.}~\bibnamefont {Campora~III}}, \bibinfo {author} {\bibfnamefont {A.}~\bibnamefont {Chernoguzov}}, \bibinfo {author} {\bibfnamefont {J.}~\bibnamefont {Dreiling}}, \bibinfo {author} {\bibfnamefont {C.}~\bibnamefont {Foltz}}, \bibinfo {author} {\bibfnamefont {F.}~\bibnamefont {Frachon}}, \bibinfo {author} {\bibfnamefont {J.}~\bibnamefont {Gaebler}}, \emph {et~al.},\ }\bibfield  {title} {\bibinfo {title} {Demonstration of logical qubits and repeated error correction with better-than-physical error rates},\ }\href {https://arxiv.org/abs/2404.02280} {\bibfield  {journal} {\bibinfo  {journal} {arXiv:2404.02280}\ } (\bibinfo {year} {2024})}\BibitemShut {NoStop}%
\bibitem [{\citenamefont {Chow}\ \emph {et~al.}(2010)\citenamefont {Chow}, \citenamefont {DiCarlo}, \citenamefont {Gambetta}, \citenamefont {Motzoi}, \citenamefont {Frunzio}, \citenamefont {Girvin},\ and\ \citenamefont {Schoelkopf}}]{PhysRevA.82.040305}%
  \BibitemOpen
  \bibfield  {author} {\bibinfo {author} {\bibfnamefont {J.~M.}\ \bibnamefont {Chow}}, \bibinfo {author} {\bibfnamefont {L.}~\bibnamefont {DiCarlo}}, \bibinfo {author} {\bibfnamefont {J.~M.}\ \bibnamefont {Gambetta}}, \bibinfo {author} {\bibfnamefont {F.}~\bibnamefont {Motzoi}}, \bibinfo {author} {\bibfnamefont {L.}~\bibnamefont {Frunzio}}, \bibinfo {author} {\bibfnamefont {S.~M.}\ \bibnamefont {Girvin}},\ and\ \bibinfo {author} {\bibfnamefont {R.~J.}\ \bibnamefont {Schoelkopf}},\ }\bibfield  {title} {\bibinfo {title} {Optimized driving of superconducting artificial atoms for improved single-qubit gates},\ }\href {https://doi.org/10.1103/PhysRevA.82.040305} {\bibfield  {journal} {\bibinfo  {journal} {Physical Review A}\ }\textbf {\bibinfo {volume} {82}},\ \bibinfo {pages} {040305(R)} (\bibinfo {year} {2010})}\BibitemShut {NoStop}%
\bibitem [{\citenamefont {Chen}\ \emph {et~al.}(2016)\citenamefont {Chen}, \citenamefont {Kelly}, \citenamefont {Quintana}, \citenamefont {Barends}, \citenamefont {Campbell}, \citenamefont {Chen}, \citenamefont {Chiaro}, \citenamefont {Dunsworth}, \citenamefont {Fowler}, \citenamefont {Lucero}, \citenamefont {Jeffrey}, \citenamefont {Megrant}, \citenamefont {Mutus}, \citenamefont {Neeley}, \citenamefont {Neill}, \citenamefont {O'Malley}, \citenamefont {Roushan}, \citenamefont {Sank}, \citenamefont {Vainsencher}, \citenamefont {Wenner}, \citenamefont {White}, \citenamefont {Korotkov},\ and\ \citenamefont {Martinis}}]{PhysRevLett.116.020501}%
  \BibitemOpen
  \bibfield  {author} {\bibinfo {author} {\bibfnamefont {Z.}~\bibnamefont {Chen}}, \bibinfo {author} {\bibfnamefont {J.}~\bibnamefont {Kelly}}, \bibinfo {author} {\bibfnamefont {C.}~\bibnamefont {Quintana}}, \bibinfo {author} {\bibfnamefont {R.}~\bibnamefont {Barends}}, \bibinfo {author} {\bibfnamefont {B.}~\bibnamefont {Campbell}}, \bibinfo {author} {\bibfnamefont {Y.}~\bibnamefont {Chen}}, \bibinfo {author} {\bibfnamefont {B.}~\bibnamefont {Chiaro}}, \bibinfo {author} {\bibfnamefont {A.}~\bibnamefont {Dunsworth}}, \bibinfo {author} {\bibfnamefont {A.~G.}\ \bibnamefont {Fowler}}, \bibinfo {author} {\bibfnamefont {E.}~\bibnamefont {Lucero}}, \bibinfo {author} {\bibfnamefont {E.}~\bibnamefont {Jeffrey}}, \bibinfo {author} {\bibfnamefont {A.}~\bibnamefont {Megrant}}, \bibinfo {author} {\bibfnamefont {J.}~\bibnamefont {Mutus}}, \bibinfo {author} {\bibfnamefont {M.}~\bibnamefont {Neeley}}, \bibinfo {author} {\bibfnamefont {C.}~\bibnamefont {Neill}}, \bibinfo {author} {\bibfnamefont {P.~J.~J.}\ \bibnamefont
  {O'Malley}}, \bibinfo {author} {\bibfnamefont {P.}~\bibnamefont {Roushan}}, \bibinfo {author} {\bibfnamefont {D.}~\bibnamefont {Sank}}, \bibinfo {author} {\bibfnamefont {A.}~\bibnamefont {Vainsencher}}, \bibinfo {author} {\bibfnamefont {J.}~\bibnamefont {Wenner}}, \bibinfo {author} {\bibfnamefont {T.~C.}\ \bibnamefont {White}}, \bibinfo {author} {\bibfnamefont {A.~N.}\ \bibnamefont {Korotkov}},\ and\ \bibinfo {author} {\bibfnamefont {J.~M.}\ \bibnamefont {Martinis}},\ }\bibfield  {title} {\bibinfo {title} {Measuring and suppressing quantum state leakage in a superconducting qubit},\ }\href {https://doi.org/10.1103/PhysRevLett.116.020501} {\bibfield  {journal} {\bibinfo  {journal} {Physical Review Letters}\ }\textbf {\bibinfo {volume} {116}},\ \bibinfo {pages} {020501} (\bibinfo {year} {2016})}\BibitemShut {NoStop}%
\bibitem [{\citenamefont {Hellings}\ \emph {et~al.}(2025)\citenamefont {Hellings}, \citenamefont {Lacroix}, \citenamefont {Remm}, \citenamefont {Boell}, \citenamefont {Herrmann}, \citenamefont {Laz\ifmmode~\u{a}\else \u{a}\fi{}r}, \citenamefont {Krinner}, \citenamefont {Swiadek}, \citenamefont {Andersen}, \citenamefont {Eichler},\ and\ \citenamefont {Wallraff}}]{1qhb-r4fb}%
  \BibitemOpen
  \bibfield  {author} {\bibinfo {author} {\bibfnamefont {C.}~\bibnamefont {Hellings}}, \bibinfo {author} {\bibfnamefont {N.}~\bibnamefont {Lacroix}}, \bibinfo {author} {\bibfnamefont {A.}~\bibnamefont {Remm}}, \bibinfo {author} {\bibfnamefont {R.}~\bibnamefont {Boell}}, \bibinfo {author} {\bibfnamefont {J.}~\bibnamefont {Herrmann}}, \bibinfo {author} {\bibfnamefont {S.}~\bibnamefont {Laz\ifmmode~\u{a}\else \u{a}\fi{}r}}, \bibinfo {author} {\bibfnamefont {S.}~\bibnamefont {Krinner}}, \bibinfo {author} {\bibfnamefont {F.}~\bibnamefont {Swiadek}}, \bibinfo {author} {\bibfnamefont {C.~K.}\ \bibnamefont {Andersen}}, \bibinfo {author} {\bibfnamefont {C.}~\bibnamefont {Eichler}},\ and\ \bibinfo {author} {\bibfnamefont {A.}~\bibnamefont {Wallraff}},\ }\bibfield  {title} {\bibinfo {title} {Calibrating magnetic flux control in superconducting circuits by compensating distortions on timescales from nanoseconds up to tens of microseconds},\ }\href {https://doi.org/10.1103/1qhb-r4fb} {\bibfield  {journal} {\bibinfo
  {journal} {Physical Review Research}\ }\textbf {\bibinfo {volume} {7}},\ \bibinfo {pages} {043142} (\bibinfo {year} {2025})}\BibitemShut {NoStop}%
\bibitem [{\citenamefont {Chen}\ \emph {et~al.}(2025{\natexlab{a}})\citenamefont {Chen}, \citenamefont {Lee}, \citenamefont {Liu}, \citenamefont {Marinelli}, \citenamefont {Naik}, \citenamefont {Kang}, \citenamefont {Goss}, \citenamefont {Kim}, \citenamefont {Santiago},\ and\ \citenamefont {Siddiqi}}]{2025arXiv250304702C}%
  \BibitemOpen
  \bibfield  {author} {\bibinfo {author} {\bibfnamefont {L.}~\bibnamefont {Chen}}, \bibinfo {author} {\bibfnamefont {K.-H.}\ \bibnamefont {Lee}}, \bibinfo {author} {\bibfnamefont {C.-H.}\ \bibnamefont {Liu}}, \bibinfo {author} {\bibfnamefont {B.}~\bibnamefont {Marinelli}}, \bibinfo {author} {\bibfnamefont {R.~K.}\ \bibnamefont {Naik}}, \bibinfo {author} {\bibfnamefont {Z.}~\bibnamefont {Kang}}, \bibinfo {author} {\bibfnamefont {N.}~\bibnamefont {Goss}}, \bibinfo {author} {\bibfnamefont {H.}~\bibnamefont {Kim}}, \bibinfo {author} {\bibfnamefont {D.~I.}\ \bibnamefont {Santiago}},\ and\ \bibinfo {author} {\bibfnamefont {I.}~\bibnamefont {Siddiqi}},\ }\bibfield  {title} {\bibinfo {title} {Scalable and site-specific frequency tuning of two-level system defects in superconducting qubit arrays},\ }\href {https://arxiv.org/abs/2503.04702} {\bibfield  {journal} {\bibinfo  {journal} {arXiv:2503.04702}\ } (\bibinfo {year} {2025}{\natexlab{a}})}\BibitemShut {NoStop}%
\bibitem [{\citenamefont {Evered}\ \emph {et~al.}(2023)\citenamefont {Evered}, \citenamefont {Bluvstein}, \citenamefont {Kalinowski}, \citenamefont {Ebadi}, \citenamefont {Manovitz}, \citenamefont {Zhou}, \citenamefont {Li}, \citenamefont {Geim}, \citenamefont {Wang}, \citenamefont {Maskara}, \citenamefont {Levine}, \citenamefont {Semeghini}, \citenamefont {Greiner}, \citenamefont {Vuleti{\'c}},\ and\ \citenamefont {Lukin}}]{995}%
  \BibitemOpen
  \bibfield  {author} {\bibinfo {author} {\bibfnamefont {S.~J.}\ \bibnamefont {Evered}}, \bibinfo {author} {\bibfnamefont {D.}~\bibnamefont {Bluvstein}}, \bibinfo {author} {\bibfnamefont {M.}~\bibnamefont {Kalinowski}}, \bibinfo {author} {\bibfnamefont {S.}~\bibnamefont {Ebadi}}, \bibinfo {author} {\bibfnamefont {T.}~\bibnamefont {Manovitz}}, \bibinfo {author} {\bibfnamefont {H.}~\bibnamefont {Zhou}}, \bibinfo {author} {\bibfnamefont {S.~H.}\ \bibnamefont {Li}}, \bibinfo {author} {\bibfnamefont {A.~A.}\ \bibnamefont {Geim}}, \bibinfo {author} {\bibfnamefont {T.~T.}\ \bibnamefont {Wang}}, \bibinfo {author} {\bibfnamefont {N.}~\bibnamefont {Maskara}}, \bibinfo {author} {\bibfnamefont {H.}~\bibnamefont {Levine}}, \bibinfo {author} {\bibfnamefont {G.}~\bibnamefont {Semeghini}}, \bibinfo {author} {\bibfnamefont {M.}~\bibnamefont {Greiner}}, \bibinfo {author} {\bibfnamefont {V.}~\bibnamefont {Vuleti{\'c}}},\ and\ \bibinfo {author} {\bibfnamefont {M.~D.}\ \bibnamefont {Lukin}},\ }\bibfield  {title} {\bibinfo {title}
  {High-fidelity parallel entangling gates on a neutral-atom quantum computer},\ }\href {https://doi.org/10.1038/s41586-023-06481-y} {\bibfield  {journal} {\bibinfo  {journal} {Nature}\ }\textbf {\bibinfo {volume} {622}},\ \bibinfo {pages} {268} (\bibinfo {year} {2023})}\BibitemShut {NoStop}%
\bibitem [{\citenamefont {Emerson}\ \emph {et~al.}(2007)\citenamefont {Emerson}, \citenamefont {Silva}, \citenamefont {Moussa}, \citenamefont {Ryan}, \citenamefont {Laforest}, \citenamefont {Baugh}, \citenamefont {Cory},\ and\ \citenamefont {Laflamme}}]{emerson2007symmetrized}%
  \BibitemOpen
  \bibfield  {author} {\bibinfo {author} {\bibfnamefont {J.}~\bibnamefont {Emerson}}, \bibinfo {author} {\bibfnamefont {M.}~\bibnamefont {Silva}}, \bibinfo {author} {\bibfnamefont {O.}~\bibnamefont {Moussa}}, \bibinfo {author} {\bibfnamefont {C.}~\bibnamefont {Ryan}}, \bibinfo {author} {\bibfnamefont {M.}~\bibnamefont {Laforest}}, \bibinfo {author} {\bibfnamefont {J.}~\bibnamefont {Baugh}}, \bibinfo {author} {\bibfnamefont {D.~G.}\ \bibnamefont {Cory}},\ and\ \bibinfo {author} {\bibfnamefont {R.}~\bibnamefont {Laflamme}},\ }\bibfield  {title} {\bibinfo {title} {Symmetrized characterization of noisy quantum processes},\ }\href {https://doi.org/10.1126/science.1145699} {\bibfield  {journal} {\bibinfo  {journal} {Science}\ }\textbf {\bibinfo {volume} {317}},\ \bibinfo {pages} {1893} (\bibinfo {year} {2007})}\BibitemShut {NoStop}%
\bibitem [{\citenamefont {Wallman}\ and\ \citenamefont {Emerson}(2016)}]{wallman2016noise}%
  \BibitemOpen
  \bibfield  {author} {\bibinfo {author} {\bibfnamefont {J.~J.}\ \bibnamefont {Wallman}}\ and\ \bibinfo {author} {\bibfnamefont {J.}~\bibnamefont {Emerson}},\ }\bibfield  {title} {\bibinfo {title} {Noise tailoring for scalable quantum computation via randomized compiling},\ }\href {https://journals.aps.org/pra/abstract/10.1103/PhysRevA.94.052325} {\bibfield  {journal} {\bibinfo  {journal} {Physical Review A}\ }\textbf {\bibinfo {volume} {94}},\ \bibinfo {pages} {052325} (\bibinfo {year} {2016})}\BibitemShut {NoStop}%
\bibitem [{\citenamefont {Van Den~Berg}\ \emph {et~al.}(2023)\citenamefont {Van Den~Berg}, \citenamefont {Minev}, \citenamefont {Kandala},\ and\ \citenamefont {Temme}}]{van2023probabilistic}%
  \BibitemOpen
  \bibfield  {author} {\bibinfo {author} {\bibfnamefont {E.}~\bibnamefont {Van Den~Berg}}, \bibinfo {author} {\bibfnamefont {Z.~K.}\ \bibnamefont {Minev}}, \bibinfo {author} {\bibfnamefont {A.}~\bibnamefont {Kandala}},\ and\ \bibinfo {author} {\bibfnamefont {K.}~\bibnamefont {Temme}},\ }\bibfield  {title} {\bibinfo {title} {Probabilistic error cancellation with sparse {Pauli--Lindblad} models on noisy quantum processors},\ }\href {https://www.nature.com/articles/s41567-023-02042-2} {\bibfield  {journal} {\bibinfo  {journal} {Nature Physics}\ }\textbf {\bibinfo {volume} {19}},\ \bibinfo {pages} {1116} (\bibinfo {year} {2023})}\BibitemShut {NoStop}%
\bibitem [{\citenamefont {Hu}\ \emph {et~al.}(2025)\citenamefont {Hu}, \citenamefont {Gu}, \citenamefont {Majumder}, \citenamefont {Ren}, \citenamefont {Zhang}, \citenamefont {Wang}, \citenamefont {You}, \citenamefont {Minev}, \citenamefont {Yelin},\ and\ \citenamefont {Seif}}]{Hu2025}%
  \BibitemOpen
  \bibfield  {author} {\bibinfo {author} {\bibfnamefont {H.-Y.}\ \bibnamefont {Hu}}, \bibinfo {author} {\bibfnamefont {A.}~\bibnamefont {Gu}}, \bibinfo {author} {\bibfnamefont {S.}~\bibnamefont {Majumder}}, \bibinfo {author} {\bibfnamefont {H.}~\bibnamefont {Ren}}, \bibinfo {author} {\bibfnamefont {Y.}~\bibnamefont {Zhang}}, \bibinfo {author} {\bibfnamefont {D.~S.}\ \bibnamefont {Wang}}, \bibinfo {author} {\bibfnamefont {Y.-Z.}\ \bibnamefont {You}}, \bibinfo {author} {\bibfnamefont {Z.}~\bibnamefont {Minev}}, \bibinfo {author} {\bibfnamefont {S.~F.}\ \bibnamefont {Yelin}},\ and\ \bibinfo {author} {\bibfnamefont {A.}~\bibnamefont {Seif}},\ }\bibfield  {title} {\bibinfo {title} {Demonstration of robust and efficient quantum property learning with shallow shadows},\ }\href {https://doi.org/10.1038/s41467-025-57349-w} {\bibfield  {journal} {\bibinfo  {journal} {Nature Communications}\ }\textbf {\bibinfo {volume} {16}},\ \bibinfo {pages} {2943} (\bibinfo {year} {2025})}\BibitemShut {NoStop}%
\bibitem [{\citenamefont {Hazan}(2016)}]{hazan2016introduction}%
  \BibitemOpen
  \bibfield  {author} {\bibinfo {author} {\bibfnamefont {E.}~\bibnamefont {Hazan}},\ }\bibfield  {title} {\bibinfo {title} {Introduction to online convex optimization},\ }\href {https://www.emerald.com/ftopt/article/2/3-4/157/1324322} {\bibfield  {journal} {\bibinfo  {journal} {Foundations and Trends in Optimization}\ }\textbf {\bibinfo {volume} {2}},\ \bibinfo {pages} {157} (\bibinfo {year} {2016})}\BibitemShut {NoStop}%
\bibitem [{\citenamefont {Panteleev}\ and\ \citenamefont {Kalachev}(2021)}]{panteleev2021degenerate}%
  \BibitemOpen
  \bibfield  {author} {\bibinfo {author} {\bibfnamefont {P.}~\bibnamefont {Panteleev}}\ and\ \bibinfo {author} {\bibfnamefont {G.}~\bibnamefont {Kalachev}},\ }\bibfield  {title} {\bibinfo {title} {Degenerate quantum {LDPC} codes with good finite length performance},\ }\href {https://quantum-journal.org/papers/q-2021-11-22-585/} {\bibfield  {journal} {\bibinfo  {journal} {Quantum}\ }\textbf {\bibinfo {volume} {5}},\ \bibinfo {pages} {585} (\bibinfo {year} {2021})}\BibitemShut {NoStop}%
\bibitem [{\citenamefont {Panteleev}\ and\ \citenamefont {Kalachev}(2022{\natexlab{a}})}]{panteleev2022almostlinear}%
  \BibitemOpen
  \bibfield  {author} {\bibinfo {author} {\bibfnamefont {P.}~\bibnamefont {Panteleev}}\ and\ \bibinfo {author} {\bibfnamefont {G.}~\bibnamefont {Kalachev}},\ }\bibfield  {title} {\bibinfo {title} {Quantum {LDPC} codes with almost linear minimum distance},\ }\href {https://ieeexplore.ieee.org/iel7/18/4667673/09567703.pdf} {\bibfield  {journal} {\bibinfo  {journal} {IEEE Transactions on Information Theory}\ }\textbf {\bibinfo {volume} {68}},\ \bibinfo {pages} {213} (\bibinfo {year} {2022}{\natexlab{a}})}\BibitemShut {NoStop}%
\bibitem [{\citenamefont {Leverrier}\ and\ \citenamefont {Z{\'e}mor}(2022)}]{leverrier2022quantum}%
  \BibitemOpen
  \bibfield  {author} {\bibinfo {author} {\bibfnamefont {A.}~\bibnamefont {Leverrier}}\ and\ \bibinfo {author} {\bibfnamefont {G.}~\bibnamefont {Z{\'e}mor}},\ }\bibfield  {title} {\bibinfo {title} {Quantum {Tanner} codes},\ }in\ \href {https://doi.org/10.1109/FOCS54457.2022.00117} {\emph {\bibinfo {booktitle} {Proceedings of 2022 IEEE 63rd Annual Symposium on Foundations of Computer Science (FOCS)}}}\ (\bibinfo {organization} {IEEE},\ \bibinfo {year} {2022})\ pp.\ \bibinfo {pages} {872--883}\BibitemShut {NoStop}%
\bibitem [{\citenamefont {Gidney}(2021)}]{gidney2021stim}%
  \BibitemOpen
  \bibfield  {author} {\bibinfo {author} {\bibfnamefont {C.}~\bibnamefont {Gidney}},\ }\bibfield  {title} {\bibinfo {title} {{Stim}: a fast stabilizer circuit simulator},\ }\href {https://doi.org/10.22331/q-2021-07-06-497} {\bibfield  {journal} {\bibinfo  {journal} {Quantum}\ }\textbf {\bibinfo {volume} {5}},\ \bibinfo {pages} {497} (\bibinfo {year} {2021})}\BibitemShut {NoStop}%
\bibitem [{\citenamefont {Wu}\ \emph {et~al.}(2022)\citenamefont {Wu}, \citenamefont {Kolkowitz}, \citenamefont {Puri},\ and\ \citenamefont {Thompson}}]{wu2022erasure}%
  \BibitemOpen
  \bibfield  {author} {\bibinfo {author} {\bibfnamefont {Y.}~\bibnamefont {Wu}}, \bibinfo {author} {\bibfnamefont {S.}~\bibnamefont {Kolkowitz}}, \bibinfo {author} {\bibfnamefont {S.}~\bibnamefont {Puri}},\ and\ \bibinfo {author} {\bibfnamefont {J.~D.}\ \bibnamefont {Thompson}},\ }\bibfield  {title} {\bibinfo {title} {Erasure conversion for fault-tolerant quantum computing in alkaline earth {Rydberg} atom arrays},\ }\href {https://doi.org/10.1038/s41467-022-32094-6} {\bibfield  {journal} {\bibinfo  {journal} {Nature Communications}\ }\textbf {\bibinfo {volume} {13}},\ \bibinfo {pages} {4657} (\bibinfo {year} {2022})}\BibitemShut {NoStop}%
\bibitem [{\citenamefont {Scholl}\ \emph {et~al.}(2023)\citenamefont {Scholl}, \citenamefont {Shaw}, \citenamefont {Tsai}, \citenamefont {Finkelstein}, \citenamefont {Choi},\ and\ \citenamefont {Endres}}]{scholl2023erasure}%
  \BibitemOpen
  \bibfield  {author} {\bibinfo {author} {\bibfnamefont {P.}~\bibnamefont {Scholl}}, \bibinfo {author} {\bibfnamefont {A.~L.}\ \bibnamefont {Shaw}}, \bibinfo {author} {\bibfnamefont {R.~B.-S.}\ \bibnamefont {Tsai}}, \bibinfo {author} {\bibfnamefont {R.}~\bibnamefont {Finkelstein}}, \bibinfo {author} {\bibfnamefont {J.}~\bibnamefont {Choi}},\ and\ \bibinfo {author} {\bibfnamefont {M.}~\bibnamefont {Endres}},\ }\bibfield  {title} {\bibinfo {title} {Erasure conversion in a high-fidelity {Rydberg} quantum simulator},\ }\href {https://doi.org/10.1038/s41586-023-06516-4} {\bibfield  {journal} {\bibinfo  {journal} {Nature}\ }\textbf {\bibinfo {volume} {622}},\ \bibinfo {pages} {273} (\bibinfo {year} {2023})}\BibitemShut {NoStop}%
\bibitem [{\citenamefont {Chow}\ \emph {et~al.}(2024)\citenamefont {Chow}, \citenamefont {Buchemmavari}, \citenamefont {Omanakuttan}, \citenamefont {Little}, \citenamefont {Pandey}, \citenamefont {Deutsch},\ and\ \citenamefont {Jau}}]{chow2024circuit}%
  \BibitemOpen
  \bibfield  {author} {\bibinfo {author} {\bibfnamefont {M.~N.}\ \bibnamefont {Chow}}, \bibinfo {author} {\bibfnamefont {V.}~\bibnamefont {Buchemmavari}}, \bibinfo {author} {\bibfnamefont {S.}~\bibnamefont {Omanakuttan}}, \bibinfo {author} {\bibfnamefont {B.~J.}\ \bibnamefont {Little}}, \bibinfo {author} {\bibfnamefont {S.}~\bibnamefont {Pandey}}, \bibinfo {author} {\bibfnamefont {I.~H.}\ \bibnamefont {Deutsch}},\ and\ \bibinfo {author} {\bibfnamefont {Y.-Y.}\ \bibnamefont {Jau}},\ }\bibfield  {title} {\bibinfo {title} {Circuit-based leakage-to-erasure conversion in a neutral-atom quantum processor},\ }\href {https://journals.aps.org/prxquantum/abstract/10.1103/PRXQuantum.5.040343} {\bibfield  {journal} {\bibinfo  {journal} {PRX Quantum}\ }\textbf {\bibinfo {volume} {5}},\ \bibinfo {pages} {040343} (\bibinfo {year} {2024})}\BibitemShut {NoStop}%
\bibitem [{\citenamefont {Carbery}\ and\ \citenamefont {Wright}(2001)}]{carbery2001distributional}%
  \BibitemOpen
  \bibfield  {author} {\bibinfo {author} {\bibfnamefont {A.}~\bibnamefont {Carbery}}\ and\ \bibinfo {author} {\bibfnamefont {J.}~\bibnamefont {Wright}},\ }\bibfield  {title} {\bibinfo {title} {Distributional and {$L_q$} norm inequalities for polynomials over convex bodies in {$\mathbb{R}_n$}},\ }\href {https://projecteuclid.org/journalArticle/Download?urlId=10.4310%2FMRL.2001.v8.n3.a1} {\bibfield  {journal} {\bibinfo  {journal} {Mathematical Research Letters}\ }\textbf {\bibinfo {volume} {8}},\ \bibinfo {pages} {233} (\bibinfo {year} {2001})}\BibitemShut {NoStop}%
\bibitem [{\citenamefont {Spielman}\ and\ \citenamefont {Teng}(2004)}]{spielman2004smoothed}%
  \BibitemOpen
  \bibfield  {author} {\bibinfo {author} {\bibfnamefont {D.~A.}\ \bibnamefont {Spielman}}\ and\ \bibinfo {author} {\bibfnamefont {S.-H.}\ \bibnamefont {Teng}},\ }\bibfield  {title} {\bibinfo {title} {Smoothed analysis of algorithms: Why the simplex algorithm usually takes polynomial time},\ }\href {https://doi.org/10.1145/990308.990310} {\bibfield  {journal} {\bibinfo  {journal} {Journal of the ACM (JACM)}\ }\textbf {\bibinfo {volume} {51}},\ \bibinfo {pages} {385} (\bibinfo {year} {2004})}\BibitemShut {NoStop}%
\bibitem [{\citenamefont {Arthur}\ \emph {et~al.}(2009)\citenamefont {Arthur}, \citenamefont {Manthey},\ and\ \citenamefont {R{\"o}glin}}]{arthur2009k}%
  \BibitemOpen
  \bibfield  {author} {\bibinfo {author} {\bibfnamefont {D.}~\bibnamefont {Arthur}}, \bibinfo {author} {\bibfnamefont {B.}~\bibnamefont {Manthey}},\ and\ \bibinfo {author} {\bibfnamefont {H.}~\bibnamefont {R{\"o}glin}},\ }\bibfield  {title} {\bibinfo {title} {K-means has polynomial smoothed complexity},\ }in\ \href {https://doi.org/10.1109/FOCS.2009.14} {\emph {\bibinfo {booktitle} {Proceedings of the 50th Annual IEEE Symposium on Foundations of Computer Science (FOCS 2009)}}}\ (\bibinfo {organization} {IEEE},\ \bibinfo {year} {2009})\ pp.\ \bibinfo {pages} {405--414}\BibitemShut {NoStop}%
\bibitem [{\citenamefont {Chen}\ \emph {et~al.}(2025{\natexlab{b}})\citenamefont {Chen}, \citenamefont {Cotler},\ and\ \citenamefont {Huang}}]{chen2025quantum}%
  \BibitemOpen
  \bibfield  {author} {\bibinfo {author} {\bibfnamefont {S.}~\bibnamefont {Chen}}, \bibinfo {author} {\bibfnamefont {J.}~\bibnamefont {Cotler}},\ and\ \bibinfo {author} {\bibfnamefont {H.-Y.}\ \bibnamefont {Huang}},\ }\bibfield  {title} {\bibinfo {title} {Quantum probe tomography},\ }\href {https://arxiv.org/abs/2510.08499} {\bibfield  {journal} {\bibinfo  {journal} {arXiv:2510.08499}\ } (\bibinfo {year} {2025}{\natexlab{b}})}\BibitemShut {NoStop}%
\bibitem [{\citenamefont {Spall}(1992)}]{spall1992spsa}%
  \BibitemOpen
  \bibfield  {author} {\bibinfo {author} {\bibfnamefont {J.~C.}\ \bibnamefont {Spall}},\ }\bibfield  {title} {\bibinfo {title} {Multivariate stochastic approximation using a simultaneous perturbation gradient approximation},\ }\href {https://doi.org/10.1109/9.119632} {\bibfield  {journal} {\bibinfo  {journal} {IEEE Transactions on Automatic Control}\ }\textbf {\bibinfo {volume} {37}},\ \bibinfo {pages} {332} (\bibinfo {year} {1992})}\BibitemShut {NoStop}%
\bibitem [{\citenamefont {Duchi}\ \emph {et~al.}(2015)\citenamefont {Duchi}, \citenamefont {Jordan}, \citenamefont {Wainwright},\ and\ \citenamefont {Wibisono}}]{duchi2015optimal}%
  \BibitemOpen
  \bibfield  {author} {\bibinfo {author} {\bibfnamefont {J.~C.}\ \bibnamefont {Duchi}}, \bibinfo {author} {\bibfnamefont {M.~I.}\ \bibnamefont {Jordan}}, \bibinfo {author} {\bibfnamefont {M.~J.}\ \bibnamefont {Wainwright}},\ and\ \bibinfo {author} {\bibfnamefont {A.}~\bibnamefont {Wibisono}},\ }\bibfield  {title} {\bibinfo {title} {Optimal rates for zero-order convex optimization: The power of two function evaluations},\ }\href {https://ieeexplore.ieee.org/abstract/document/7055287/} {\bibfield  {journal} {\bibinfo  {journal} {IEEE Transactions on Information Theory}\ }\textbf {\bibinfo {volume} {61}},\ \bibinfo {pages} {2788} (\bibinfo {year} {2015})}\BibitemShut {NoStop}%
\bibitem [{\citenamefont {Bittel}\ and\ \citenamefont {Kliesch}(2021)}]{bittel2021training}%
  \BibitemOpen
  \bibfield  {author} {\bibinfo {author} {\bibfnamefont {L.}~\bibnamefont {Bittel}}\ and\ \bibinfo {author} {\bibfnamefont {M.}~\bibnamefont {Kliesch}},\ }\bibfield  {title} {\bibinfo {title} {Training variational quantum algorithms is {NP}-hard},\ }\href {https://journals.aps.org/prl/abstract/10.1103/PhysRevLett.127.120502} {\bibfield  {journal} {\bibinfo  {journal} {Physical Review Letters}\ }\textbf {\bibinfo {volume} {127}},\ \bibinfo {pages} {120502} (\bibinfo {year} {2021})}\BibitemShut {NoStop}%
\bibitem [{\citenamefont {Kitaev}(2003)}]{kitaev2003fault}%
  \BibitemOpen
  \bibfield  {author} {\bibinfo {author} {\bibfnamefont {A.~Y.}\ \bibnamefont {Kitaev}},\ }\bibfield  {title} {\bibinfo {title} {Fault-tolerant quantum computation by anyons},\ }\href {https://www.sciencedirect.com/science/article/pii/S0003491602000180} {\bibfield  {journal} {\bibinfo  {journal} {Annals of Physics}\ }\textbf {\bibinfo {volume} {303}},\ \bibinfo {pages} {2} (\bibinfo {year} {2003})}\BibitemShut {NoStop}%
\bibitem [{\citenamefont {Dennis}\ \emph {et~al.}(2002)\citenamefont {Dennis}, \citenamefont {Kitaev}, \citenamefont {Landahl},\ and\ \citenamefont {Preskill}}]{dennis2002topological}%
  \BibitemOpen
  \bibfield  {author} {\bibinfo {author} {\bibfnamefont {E.}~\bibnamefont {Dennis}}, \bibinfo {author} {\bibfnamefont {A.}~\bibnamefont {Kitaev}}, \bibinfo {author} {\bibfnamefont {A.}~\bibnamefont {Landahl}},\ and\ \bibinfo {author} {\bibfnamefont {J.}~\bibnamefont {Preskill}},\ }\bibfield  {title} {\bibinfo {title} {Topological quantum memory},\ }\href {https://pubs.aip.org/aip/jmp/article-abstract/43/9/4452/230976/Topological-quantum-memory?} {\bibfield  {journal} {\bibinfo  {journal} {Journal of Mathematical Physics}\ }\textbf {\bibinfo {volume} {43}},\ \bibinfo {pages} {4452} (\bibinfo {year} {2002})}\BibitemShut {NoStop}%
\bibitem [{\citenamefont {Fowler}\ \emph {et~al.}(2012)\citenamefont {Fowler}, \citenamefont {Mariantoni}, \citenamefont {Martinis},\ and\ \citenamefont {Cleland}}]{fowler2012surface}%
  \BibitemOpen
  \bibfield  {author} {\bibinfo {author} {\bibfnamefont {A.~G.}\ \bibnamefont {Fowler}}, \bibinfo {author} {\bibfnamefont {M.}~\bibnamefont {Mariantoni}}, \bibinfo {author} {\bibfnamefont {J.~M.}\ \bibnamefont {Martinis}},\ and\ \bibinfo {author} {\bibfnamefont {A.~N.}\ \bibnamefont {Cleland}},\ }\bibfield  {title} {\bibinfo {title} {Surface codes: Towards practical large-scale quantum computation},\ }\href {https://link.aps.org/doi/10.1103/PhysRevA.86.032324} {\bibfield  {journal} {\bibinfo  {journal} {Physical Review A}\ }\textbf {\bibinfo {volume} {86}},\ \bibinfo {pages} {032324} (\bibinfo {year} {2012})}\BibitemShut {NoStop}%
\bibitem [{\citenamefont {MacKay}\ \emph {et~al.}(2004)\citenamefont {MacKay}, \citenamefont {Mitchison},\ and\ \citenamefont {McFadden}}]{mackay2004sparse}%
  \BibitemOpen
  \bibfield  {author} {\bibinfo {author} {\bibfnamefont {D.~J.~C.}\ \bibnamefont {MacKay}}, \bibinfo {author} {\bibfnamefont {G.}~\bibnamefont {Mitchison}},\ and\ \bibinfo {author} {\bibfnamefont {P.~L.}\ \bibnamefont {McFadden}},\ }\bibfield  {title} {\bibinfo {title} {Sparse-graph codes for quantum error correction},\ }\href {https://ieeexplore.ieee.org/document/1337106/} {\bibfield  {journal} {\bibinfo  {journal} {IEEE Transactions on Information Theory}\ }\textbf {\bibinfo {volume} {50}},\ \bibinfo {pages} {2315} (\bibinfo {year} {2004})}\BibitemShut {NoStop}%
\bibitem [{\citenamefont {Kovalev}\ and\ \citenamefont {Pryadko}(2013)}]{kovalev2013quantum}%
  \BibitemOpen
  \bibfield  {author} {\bibinfo {author} {\bibfnamefont {A.~A.}\ \bibnamefont {Kovalev}}\ and\ \bibinfo {author} {\bibfnamefont {L.~P.}\ \bibnamefont {Pryadko}},\ }\bibfield  {title} {\bibinfo {title} {Quantum {Kronecker} sum-product low-density parity-check codes with finite rate},\ }\href {https://link.aps.org/doi/10.1103/PhysRevA.88.012311} {\bibfield  {journal} {\bibinfo  {journal} {Physical Review A}\ }\textbf {\bibinfo {volume} {88}},\ \bibinfo {pages} {012311} (\bibinfo {year} {2013})}\BibitemShut {NoStop}%
\bibitem [{\citenamefont {Panteleev}\ and\ \citenamefont {Kalachev}(2022{\natexlab{b}})}]{panteleev2022asymptotically}%
  \BibitemOpen
  \bibfield  {author} {\bibinfo {author} {\bibfnamefont {P.}~\bibnamefont {Panteleev}}\ and\ \bibinfo {author} {\bibfnamefont {G.}~\bibnamefont {Kalachev}},\ }\bibfield  {title} {\bibinfo {title} {Asymptotically good quantum and locally testable classical {LDPC} codes},\ }in\ \href {https://dl.acm.org/doi/10.1145/3519935.3520017} {\emph {\bibinfo {booktitle} {Proceedings of the 54th Annual ACM SIGACT Symposium on Theory of Computing (STOC 2022)}}}\ (\bibinfo {year} {2022})\ pp.\ \bibinfo {pages} {375--388}\BibitemShut {NoStop}%
\bibitem [{\citenamefont {Tillich}\ and\ \citenamefont {Z{\'e}mor}(2014)}]{tillich2014quantum}%
  \BibitemOpen
  \bibfield  {author} {\bibinfo {author} {\bibfnamefont {J.-P.}\ \bibnamefont {Tillich}}\ and\ \bibinfo {author} {\bibfnamefont {G.}~\bibnamefont {Z{\'e}mor}},\ }\bibfield  {title} {\bibinfo {title} {Quantum {LDPC} codes with positive rate and minimum distance proportional to the square root of the blocklength},\ }\href {https://ieeexplore.ieee.org/document/6671468} {\bibfield  {journal} {\bibinfo  {journal} {IEEE Transactions on Information Theory}\ }\textbf {\bibinfo {volume} {60}},\ \bibinfo {pages} {1193} (\bibinfo {year} {2014})}\BibitemShut {NoStop}%
\bibitem [{\citenamefont {Breuckmann}\ and\ \citenamefont {Eberhardt}(2021)}]{breuckmann2021quantum}%
  \BibitemOpen
  \bibfield  {author} {\bibinfo {author} {\bibfnamefont {N.~P.}\ \bibnamefont {Breuckmann}}\ and\ \bibinfo {author} {\bibfnamefont {J.~N.}\ \bibnamefont {Eberhardt}},\ }\bibfield  {title} {\bibinfo {title} {Quantum low-density parity-check codes},\ }\href {https://link.aps.org/doi/10.1103/PRXQuantum.2.040101} {\bibfield  {journal} {\bibinfo  {journal} {PRX Quantum}\ }\textbf {\bibinfo {volume} {2}},\ \bibinfo {pages} {040101} (\bibinfo {year} {2021})}\BibitemShut {NoStop}%
\bibitem [{\citenamefont {Jandura}\ and\ \citenamefont {Pupillo}(2022)}]{Jandura2022timeoptimaltwothree}%
  \BibitemOpen
  \bibfield  {author} {\bibinfo {author} {\bibfnamefont {S.}~\bibnamefont {Jandura}}\ and\ \bibinfo {author} {\bibfnamefont {G.}~\bibnamefont {Pupillo}},\ }\bibfield  {title} {\bibinfo {title} {Time-{O}ptimal {T}wo- and {T}hree-{Q}ubit {G}ates for {R}ydberg {A}toms},\ }\href {https://doi.org/10.22331/q-2022-05-13-712} {\bibfield  {journal} {\bibinfo  {journal} {{Quantum}}\ }\textbf {\bibinfo {volume} {6}},\ \bibinfo {pages} {712} (\bibinfo {year} {2022})}\BibitemShut {NoStop}%
\bibitem [{\citenamefont {Kingma}\ and\ \citenamefont {Ba}(2015)}]{kingma2015adam}%
  \BibitemOpen
  \bibfield  {author} {\bibinfo {author} {\bibfnamefont {D.~P.}\ \bibnamefont {Kingma}}\ and\ \bibinfo {author} {\bibfnamefont {J.}~\bibnamefont {Ba}},\ }\bibfield  {title} {\bibinfo {title} {Adam: A method for stochastic optimization},\ }in\ \href {https://arxiv.org/abs/1412.6980} {\emph {\bibinfo {booktitle} {International Conference on Learning Representations (ICLR)}}}\ (\bibinfo {year} {2015})\BibitemShut {NoStop}%
\bibitem [{\citenamefont {{Lee}}\ \emph {et~al.}(2026)\citenamefont {{Lee}}, \citenamefont {{Nair}}, \citenamefont {{Zhang}}, \citenamefont {{Lee}}, \citenamefont {{Khattab}},\ and\ \citenamefont {{Finn}}}]{2026arXiv260328052L}%
  \BibitemOpen
  \bibfield  {author} {\bibinfo {author} {\bibfnamefont {Y.}~\bibnamefont {{Lee}}}, \bibinfo {author} {\bibfnamefont {R.}~\bibnamefont {{Nair}}}, \bibinfo {author} {\bibfnamefont {Q.}~\bibnamefont {{Zhang}}}, \bibinfo {author} {\bibfnamefont {K.}~\bibnamefont {{Lee}}}, \bibinfo {author} {\bibfnamefont {O.}~\bibnamefont {{Khattab}}},\ and\ \bibinfo {author} {\bibfnamefont {C.}~\bibnamefont {{Finn}}},\ }\bibfield  {title} {\bibinfo {title} {{Meta-Harness: End-to-End Optimization of Model Harnesses}},\ }\href {https://ui.adsabs.harvard.edu/abs/2026arXiv260328052L} {\bibfield  {journal} {\bibinfo  {journal} {arXiv:2603.28052}\ } (\bibinfo {year} {2026})}\BibitemShut {NoStop}%
\bibitem [{\citenamefont {{Xu}}\ \emph {et~al.}(2026)\citenamefont {{Xu}}, \citenamefont {{Han}}, \citenamefont {{Ou}}, \citenamefont {{Ye}}, \citenamefont {{Shen}}, \citenamefont {{Gao}}, \citenamefont {{Wang}}, \citenamefont {{Che}}, \citenamefont {{Song}}, \citenamefont {{Liu}}, \citenamefont {{Wang}}, \citenamefont {{Zhang}}, \citenamefont {{Zhang}},\ and\ \citenamefont {{Yu}}}]{2026arXiv260622376X}%
  \BibitemOpen
  \bibfield  {author} {\bibinfo {author} {\bibfnamefont {H.}~\bibnamefont {{Xu}}}, \bibinfo {author} {\bibfnamefont {J.}~\bibnamefont {{Han}}}, \bibinfo {author} {\bibfnamefont {S.}~\bibnamefont {{Ou}}}, \bibinfo {author} {\bibfnamefont {C.}~\bibnamefont {{Ye}}}, \bibinfo {author} {\bibfnamefont {Z.}~\bibnamefont {{Shen}}}, \bibinfo {author} {\bibfnamefont {J.}~\bibnamefont {{Gao}}}, \bibinfo {author} {\bibfnamefont {Y.}~\bibnamefont {{Wang}}}, \bibinfo {author} {\bibfnamefont {T.}~\bibnamefont {{Che}}}, \bibinfo {author} {\bibfnamefont {Y.}~\bibnamefont {{Song}}}, \bibinfo {author} {\bibfnamefont {W.}~\bibnamefont {{Liu}}}, \bibinfo {author} {\bibfnamefont {L.}~\bibnamefont {{Wang}}}, \bibinfo {author} {\bibfnamefont {L.-F.}\ \bibnamefont {{Zhang}}}, \bibinfo {author} {\bibfnamefont {P.}~\bibnamefont {{Zhang}}},\ and\ \bibinfo {author} {\bibfnamefont {H.-F.}\ \bibnamefont {{Yu}}},\ }\bibfield  {title} {\bibinfo {title} {{Vibe Calibration: Autonomous Bring-up of a 112-Qubit Superconducting Quantum Processor
  by a Skill-Orchestrating Language Agent}},\ }\href {https://ui.adsabs.harvard.edu/abs/2026arXiv260622376X} {\bibfield  {journal} {\bibinfo  {journal} {arXiv:2606.22376}\ } (\bibinfo {year} {2026})}\BibitemShut {NoStop}%
\bibitem [{\citenamefont {{Isogawa}}\ \emph {et~al.}(2026)\citenamefont {{Isogawa}}, \citenamefont {{Okabe}}, \citenamefont {{Phadetsuwannukun}}, \citenamefont {{Li}},\ and\ \citenamefont {{Cappellaro}}}]{2026arXiv260725145I}%
  \BibitemOpen
  \bibfield  {author} {\bibinfo {author} {\bibfnamefont {T.}~\bibnamefont {{Isogawa}}}, \bibinfo {author} {\bibfnamefont {R.}~\bibnamefont {{Okabe}}}, \bibinfo {author} {\bibfnamefont {N.}~\bibnamefont {{Phadetsuwannukun}}}, \bibinfo {author} {\bibfnamefont {M.}~\bibnamefont {{Li}}},\ and\ \bibinfo {author} {\bibfnamefont {P.}~\bibnamefont {{Cappellaro}}},\ }\bibfield  {title} {\bibinfo {title} {{Agentic AI for Scientific Reasoning in Autonomous Quantum Sensing Experiments}},\ }\href {https://ui.adsabs.harvard.edu/abs/2026arXiv260725145I} {\bibfield  {journal} {\bibinfo  {journal} {arXiv:2607.25145}\ } (\bibinfo {year} {2026})}\BibitemShut {NoStop}%
\bibitem [{\citenamefont {Bhardwaj}\ \emph {et~al.}(2026)\citenamefont {Bhardwaj}, \citenamefont {Ma}, \citenamefont {Meister}, \citenamefont {King}, \citenamefont {Bluvstein}, \citenamefont {Preskill}, \citenamefont {Cain}, \citenamefont {Xu},\ and\ \citenamefont {Huang}}]{bhardwaj2026high}%
  \BibitemOpen
  \bibfield  {author} {\bibinfo {author} {\bibfnamefont {A.}~\bibnamefont {Bhardwaj}}, \bibinfo {author} {\bibfnamefont {M.}~\bibnamefont {Ma}}, \bibinfo {author} {\bibfnamefont {N.}~\bibnamefont {Meister}}, \bibinfo {author} {\bibfnamefont {R.}~\bibnamefont {King}}, \bibinfo {author} {\bibfnamefont {D.}~\bibnamefont {Bluvstein}}, \bibinfo {author} {\bibfnamefont {J.}~\bibnamefont {Preskill}}, \bibinfo {author} {\bibfnamefont {M.}~\bibnamefont {Cain}}, \bibinfo {author} {\bibfnamefont {Q.}~\bibnamefont {Xu}},\ and\ \bibinfo {author} {\bibfnamefont {H.-Y.}\ \bibnamefont {Huang}},\ }\bibfield  {title} {\bibinfo {title} {High-rate qldpc processors},\ }\href {https://arxiv.org/abs/2607.28795} {\bibfield  {journal} {\bibinfo  {journal} {arXiv:2607.28795}\ } (\bibinfo {year} {2026})}\BibitemShut {NoStop}%
\bibitem [{\citenamefont {Cong}\ \emph {et~al.}(2022)\citenamefont {Cong}, \citenamefont {Levine}, \citenamefont {Keesling}, \citenamefont {Bluvstein}, \citenamefont {Wang},\ and\ \citenamefont {Lukin}}]{cong2022hardware}%
  \BibitemOpen
  \bibfield  {author} {\bibinfo {author} {\bibfnamefont {I.}~\bibnamefont {Cong}}, \bibinfo {author} {\bibfnamefont {H.}~\bibnamefont {Levine}}, \bibinfo {author} {\bibfnamefont {A.}~\bibnamefont {Keesling}}, \bibinfo {author} {\bibfnamefont {D.}~\bibnamefont {Bluvstein}}, \bibinfo {author} {\bibfnamefont {S.-T.}\ \bibnamefont {Wang}},\ and\ \bibinfo {author} {\bibfnamefont {M.~D.}\ \bibnamefont {Lukin}},\ }\bibfield  {title} {\bibinfo {title} {Hardware-efficient, fault-tolerant quantum computation with rydberg atoms},\ }\href {https://journals.aps.org/prx/abstract/10.1103/PhysRevX.12.021049} {\bibfield  {journal} {\bibinfo  {journal} {Physical Review X}\ }\textbf {\bibinfo {volume} {12}},\ \bibinfo {pages} {021049} (\bibinfo {year} {2022})}\BibitemShut {NoStop}%
\bibitem [{\citenamefont {Zinkevich}(2003)}]{zinkevich2003online}%
  \BibitemOpen
  \bibfield  {author} {\bibinfo {author} {\bibfnamefont {M.}~\bibnamefont {Zinkevich}},\ }\bibfield  {title} {\bibinfo {title} {Online convex programming and generalized infinitesimal gradient ascent},\ }in\ \href {https://cdn.aaai.org/ICML/2003/ICML03-120.pdf} {\emph {\bibinfo {booktitle} {Proceedings of the 20th International Conference on Machine Learning (ICML 2003)}}}\ (\bibinfo {year} {2003})\ pp.\ \bibinfo {pages} {928--936}\BibitemShut {NoStop}%
\bibitem [{\citenamefont {Farhi}\ \emph {et~al.}(2014)\citenamefont {Farhi}, \citenamefont {Goldstone},\ and\ \citenamefont {Gutmann}}]{farhi2014quantum}%
  \BibitemOpen
  \bibfield  {author} {\bibinfo {author} {\bibfnamefont {E.}~\bibnamefont {Farhi}}, \bibinfo {author} {\bibfnamefont {J.}~\bibnamefont {Goldstone}},\ and\ \bibinfo {author} {\bibfnamefont {S.}~\bibnamefont {Gutmann}},\ }\bibfield  {title} {\bibinfo {title} {A quantum approximate optimization algorithm},\ }\href {https://arxiv.org/abs/1411.4028} {\bibfield  {journal} {\bibinfo  {journal} {arXiv:1411.4028}\ } (\bibinfo {year} {2014})}\BibitemShut {NoStop}%
\bibitem [{\citenamefont {McClean}\ \emph {et~al.}(2016)\citenamefont {McClean}, \citenamefont {Romero}, \citenamefont {Babbush},\ and\ \citenamefont {Aspuru-Guzik}}]{mcclean2016theory}%
  \BibitemOpen
  \bibfield  {author} {\bibinfo {author} {\bibfnamefont {J.~R.}\ \bibnamefont {McClean}}, \bibinfo {author} {\bibfnamefont {J.}~\bibnamefont {Romero}}, \bibinfo {author} {\bibfnamefont {R.}~\bibnamefont {Babbush}},\ and\ \bibinfo {author} {\bibfnamefont {A.}~\bibnamefont {Aspuru-Guzik}},\ }\bibfield  {title} {\bibinfo {title} {The theory of variational hybrid quantum-classical algorithms},\ }\href {https://iopscience.iop.org/article/10.1088/1367-2630/18/2/023023/meta} {\bibfield  {journal} {\bibinfo  {journal} {New Journal of Physics}\ }\textbf {\bibinfo {volume} {18}},\ \bibinfo {pages} {023023} (\bibinfo {year} {2016})}\BibitemShut {NoStop}%
\bibitem [{\citenamefont {Cerezo}\ \emph {et~al.}(2021{\natexlab{a}})\citenamefont {Cerezo}, \citenamefont {Arrasmith}, \citenamefont {Babbush}, \citenamefont {Benjamin}, \citenamefont {Endo}, \citenamefont {Fujii}, \citenamefont {McClean}, \citenamefont {Mitarai}, \citenamefont {Yuan}, \citenamefont {Cincio} \emph {et~al.}}]{cerezo2021variational}%
  \BibitemOpen
  \bibfield  {author} {\bibinfo {author} {\bibfnamefont {M.}~\bibnamefont {Cerezo}}, \bibinfo {author} {\bibfnamefont {A.}~\bibnamefont {Arrasmith}}, \bibinfo {author} {\bibfnamefont {R.}~\bibnamefont {Babbush}}, \bibinfo {author} {\bibfnamefont {S.~C.}\ \bibnamefont {Benjamin}}, \bibinfo {author} {\bibfnamefont {S.}~\bibnamefont {Endo}}, \bibinfo {author} {\bibfnamefont {K.}~\bibnamefont {Fujii}}, \bibinfo {author} {\bibfnamefont {J.~R.}\ \bibnamefont {McClean}}, \bibinfo {author} {\bibfnamefont {K.}~\bibnamefont {Mitarai}}, \bibinfo {author} {\bibfnamefont {X.}~\bibnamefont {Yuan}}, \bibinfo {author} {\bibfnamefont {L.}~\bibnamefont {Cincio}}, \emph {et~al.},\ }\bibfield  {title} {\bibinfo {title} {Variational quantum algorithms},\ }\href {https://www.nature.com/articles/s42254-021-00348-9} {\bibfield  {journal} {\bibinfo  {journal} {Nature Reviews Physics}\ }\textbf {\bibinfo {volume} {3}},\ \bibinfo {pages} {625} (\bibinfo {year} {2021}{\natexlab{a}})}\BibitemShut {NoStop}%
\bibitem [{\citenamefont {McClean}\ \emph {et~al.}(2018)\citenamefont {McClean}, \citenamefont {Boixo}, \citenamefont {Smelyanskiy}, \citenamefont {Babbush},\ and\ \citenamefont {Neven}}]{mcclean2018barren}%
  \BibitemOpen
  \bibfield  {author} {\bibinfo {author} {\bibfnamefont {J.~R.}\ \bibnamefont {McClean}}, \bibinfo {author} {\bibfnamefont {S.}~\bibnamefont {Boixo}}, \bibinfo {author} {\bibfnamefont {V.~N.}\ \bibnamefont {Smelyanskiy}}, \bibinfo {author} {\bibfnamefont {R.}~\bibnamefont {Babbush}},\ and\ \bibinfo {author} {\bibfnamefont {H.}~\bibnamefont {Neven}},\ }\bibfield  {title} {\bibinfo {title} {Barren plateaus in quantum neural network training landscapes},\ }\href {https://www.nature.com/articles/s41467-018-07090-4} {\bibfield  {journal} {\bibinfo  {journal} {Nature Communications}\ }\textbf {\bibinfo {volume} {9}},\ \bibinfo {pages} {4812} (\bibinfo {year} {2018})}\BibitemShut {NoStop}%
\bibitem [{\citenamefont {Cerezo}\ \emph {et~al.}(2021{\natexlab{b}})\citenamefont {Cerezo}, \citenamefont {Sone}, \citenamefont {Volkoff}, \citenamefont {Cincio},\ and\ \citenamefont {Coles}}]{cerezo2021cost}%
  \BibitemOpen
  \bibfield  {author} {\bibinfo {author} {\bibfnamefont {M.}~\bibnamefont {Cerezo}}, \bibinfo {author} {\bibfnamefont {A.}~\bibnamefont {Sone}}, \bibinfo {author} {\bibfnamefont {T.}~\bibnamefont {Volkoff}}, \bibinfo {author} {\bibfnamefont {L.}~\bibnamefont {Cincio}},\ and\ \bibinfo {author} {\bibfnamefont {P.~J.}\ \bibnamefont {Coles}},\ }\bibfield  {title} {\bibinfo {title} {Cost function dependent barren plateaus in shallow parametrized quantum circuits},\ }\href {https://www.nature.com/articles/s41467-021-21728-w} {\bibfield  {journal} {\bibinfo  {journal} {Nature Communications}\ }\textbf {\bibinfo {volume} {12}},\ \bibinfo {pages} {1791} (\bibinfo {year} {2021}{\natexlab{b}})}\BibitemShut {NoStop}%
\bibitem [{\citenamefont {Ortiz~Marrero}\ \emph {et~al.}(2021)\citenamefont {Ortiz~Marrero}, \citenamefont {Kieferov{\'a}},\ and\ \citenamefont {Wiebe}}]{ortiz2021entanglement}%
  \BibitemOpen
  \bibfield  {author} {\bibinfo {author} {\bibfnamefont {C.}~\bibnamefont {Ortiz~Marrero}}, \bibinfo {author} {\bibfnamefont {M.}~\bibnamefont {Kieferov{\'a}}},\ and\ \bibinfo {author} {\bibfnamefont {N.}~\bibnamefont {Wiebe}},\ }\bibfield  {title} {\bibinfo {title} {Entanglement-induced barren plateaus},\ }\href {https://link.aps.org/doi/10.1103/PRXQuantum.2.040316} {\bibfield  {journal} {\bibinfo  {journal} {PRX quantum}\ }\textbf {\bibinfo {volume} {2}},\ \bibinfo {pages} {040316} (\bibinfo {year} {2021})}\BibitemShut {NoStop}%
\bibitem [{\citenamefont {You}\ and\ \citenamefont {Wu}(2021)}]{you2021exponentially}%
  \BibitemOpen
  \bibfield  {author} {\bibinfo {author} {\bibfnamefont {X.}~\bibnamefont {You}}\ and\ \bibinfo {author} {\bibfnamefont {X.}~\bibnamefont {Wu}},\ }\bibfield  {title} {\bibinfo {title} {Exponentially many local minima in quantum neural networks},\ }in\ \href {https://proceedings.mlr.press/v139/you21c} {\emph {\bibinfo {booktitle} {Proceedings of the 38th International Conference on Machine Learning (ICML 2021)}}},\ Vol.\ \bibinfo {volume} {139}\ (\bibinfo {organization} {PMLR},\ \bibinfo {year} {2021})\ pp.\ \bibinfo {pages} {12144--12155}\BibitemShut {NoStop}%
\bibitem [{\citenamefont {Anschuetz}(2021)}]{anschuetz2021critical}%
  \BibitemOpen
  \bibfield  {author} {\bibinfo {author} {\bibfnamefont {E.~R.}\ \bibnamefont {Anschuetz}},\ }\bibfield  {title} {\bibinfo {title} {Critical points in quantum generative models},\ }\href {https://arxiv.org/abs/2109.06957} {\bibfield  {journal} {\bibinfo  {journal} {arXiv:2109.06957}\ } (\bibinfo {year} {2021})}\BibitemShut {NoStop}%
\bibitem [{\citenamefont {Anschuetz}\ and\ \citenamefont {Kiani}(2022)}]{anschuetz2022quantum}%
  \BibitemOpen
  \bibfield  {author} {\bibinfo {author} {\bibfnamefont {E.~R.}\ \bibnamefont {Anschuetz}}\ and\ \bibinfo {author} {\bibfnamefont {B.~T.}\ \bibnamefont {Kiani}},\ }\bibfield  {title} {\bibinfo {title} {Quantum variational algorithms are swamped with traps},\ }\href {https://www.nature.com/articles/s41467-022-35364-5} {\bibfield  {journal} {\bibinfo  {journal} {Nature Communications}\ }\textbf {\bibinfo {volume} {13}},\ \bibinfo {pages} {7760} (\bibinfo {year} {2022})}\BibitemShut {NoStop}%
\bibitem [{\citenamefont {You}\ \emph {et~al.}(2022)\citenamefont {You}, \citenamefont {Chakrabarti},\ and\ \citenamefont {Wu}}]{you2022convergence}%
  \BibitemOpen
  \bibfield  {author} {\bibinfo {author} {\bibfnamefont {X.}~\bibnamefont {You}}, \bibinfo {author} {\bibfnamefont {S.}~\bibnamefont {Chakrabarti}},\ and\ \bibinfo {author} {\bibfnamefont {X.}~\bibnamefont {Wu}},\ }\bibfield  {title} {\bibinfo {title} {A convergence theory for over-parameterized variational quantum eigensolvers},\ }\href {https://arxiv.org/abs/2205.12481} {\bibfield  {journal} {\bibinfo  {journal} {arXiv:2205.12481}\ } (\bibinfo {year} {2022})}\BibitemShut {NoStop}%
\bibitem [{\citenamefont {Russell}\ \emph {et~al.}(2016)\citenamefont {Russell}, \citenamefont {Rabitz},\ and\ \citenamefont {Wu}}]{russell2016quantum}%
  \BibitemOpen
  \bibfield  {author} {\bibinfo {author} {\bibfnamefont {B.}~\bibnamefont {Russell}}, \bibinfo {author} {\bibfnamefont {H.}~\bibnamefont {Rabitz}},\ and\ \bibinfo {author} {\bibfnamefont {R.}~\bibnamefont {Wu}},\ }\bibfield  {title} {\bibinfo {title} {Quantum control landscapes are almost always trap free},\ }\href {https://arxiv.org/abs/1608.06198} {\bibfield  {journal} {\bibinfo  {journal} {arXiv:1608.06198}\ } (\bibinfo {year} {2016})}\BibitemShut {NoStop}%
\bibitem [{\citenamefont {Wu}\ \emph {et~al.}(2011)\citenamefont {Wu}, \citenamefont {Hsieh},\ and\ \citenamefont {Rabitz}}]{wu2011role}%
  \BibitemOpen
  \bibfield  {author} {\bibinfo {author} {\bibfnamefont {R.-B.}\ \bibnamefont {Wu}}, \bibinfo {author} {\bibfnamefont {M.~A.}\ \bibnamefont {Hsieh}},\ and\ \bibinfo {author} {\bibfnamefont {H.}~\bibnamefont {Rabitz}},\ }\bibfield  {title} {\bibinfo {title} {Role of controllability in optimizing quantum dynamics},\ }\href {https://journals.aps.org/pra/abstract/10.1103/PhysRevA.83.062306} {\bibfield  {journal} {\bibinfo  {journal} {Physical Review A}\ }\textbf {\bibinfo {volume} {83}},\ \bibinfo {pages} {062306} (\bibinfo {year} {2011})}\BibitemShut {NoStop}%
\bibitem [{\citenamefont {Cong}\ \emph {et~al.}(2019)\citenamefont {Cong}, \citenamefont {Choi},\ and\ \citenamefont {Lukin}}]{cong2019quantum}%
  \BibitemOpen
  \bibfield  {author} {\bibinfo {author} {\bibfnamefont {I.}~\bibnamefont {Cong}}, \bibinfo {author} {\bibfnamefont {S.}~\bibnamefont {Choi}},\ and\ \bibinfo {author} {\bibfnamefont {M.~D.}\ \bibnamefont {Lukin}},\ }\bibfield  {title} {\bibinfo {title} {Quantum convolutional neural networks},\ }\href {https://www.nature.com/articles/s41567-019-0648-8} {\bibfield  {journal} {\bibinfo  {journal} {Nature Physics}\ }\textbf {\bibinfo {volume} {15}},\ \bibinfo {pages} {1273} (\bibinfo {year} {2019})}\BibitemShut {NoStop}%
\bibitem [{\citenamefont {Kim}\ \emph {et~al.}(2021)\citenamefont {Kim}, \citenamefont {Kim},\ and\ \citenamefont {Rosa}}]{kim2021universal}%
  \BibitemOpen
  \bibfield  {author} {\bibinfo {author} {\bibfnamefont {J.}~\bibnamefont {Kim}}, \bibinfo {author} {\bibfnamefont {J.}~\bibnamefont {Kim}},\ and\ \bibinfo {author} {\bibfnamefont {D.}~\bibnamefont {Rosa}},\ }\bibfield  {title} {\bibinfo {title} {Universal effectiveness of high-depth circuits in variational eigenproblems},\ }\href {https://journals.aps.org/prresearch/abstract/10.1103/PhysRevResearch.3.023203} {\bibfield  {journal} {\bibinfo  {journal} {Physical Review Research}\ }\textbf {\bibinfo {volume} {3}},\ \bibinfo {pages} {023203} (\bibinfo {year} {2021})}\BibitemShut {NoStop}%
\bibitem [{\citenamefont {Kim}\ and\ \citenamefont {Oz}(2022)}]{kim2022quantum}%
  \BibitemOpen
  \bibfield  {author} {\bibinfo {author} {\bibfnamefont {J.}~\bibnamefont {Kim}}\ and\ \bibinfo {author} {\bibfnamefont {Y.}~\bibnamefont {Oz}},\ }\bibfield  {title} {\bibinfo {title} {Quantum energy landscape and circuit optimization},\ }\href {https://journals.aps.org/pra/abstract/10.1103/PhysRevA.106.052424} {\bibfield  {journal} {\bibinfo  {journal} {Physical Review A}\ }\textbf {\bibinfo {volume} {106}},\ \bibinfo {pages} {052424} (\bibinfo {year} {2022})}\BibitemShut {NoStop}%
\bibitem [{\citenamefont {Wiersema}\ \emph {et~al.}(2020)\citenamefont {Wiersema}, \citenamefont {Zhou}, \citenamefont {de~Sereville}, \citenamefont {Carrasquilla}, \citenamefont {Kim},\ and\ \citenamefont {Yuen}}]{wiersema2020exploring}%
  \BibitemOpen
  \bibfield  {author} {\bibinfo {author} {\bibfnamefont {R.}~\bibnamefont {Wiersema}}, \bibinfo {author} {\bibfnamefont {C.}~\bibnamefont {Zhou}}, \bibinfo {author} {\bibfnamefont {Y.}~\bibnamefont {de~Sereville}}, \bibinfo {author} {\bibfnamefont {J.~F.}\ \bibnamefont {Carrasquilla}}, \bibinfo {author} {\bibfnamefont {Y.~B.}\ \bibnamefont {Kim}},\ and\ \bibinfo {author} {\bibfnamefont {H.}~\bibnamefont {Yuen}},\ }\bibfield  {title} {\bibinfo {title} {Exploring entanglement and optimization within the hamiltonian variational ansatz},\ }\href {https://link.aps.org/doi/10.1103/PRXQuantum.1.020319} {\bibfield  {journal} {\bibinfo  {journal} {PRX Quantum}\ }\textbf {\bibinfo {volume} {1}},\ \bibinfo {pages} {020319} (\bibinfo {year} {2020})}\BibitemShut {NoStop}%
\bibitem [{\citenamefont {Jin}\ \emph {et~al.}(2018{\natexlab{a}})\citenamefont {Jin}, \citenamefont {Netrapalli},\ and\ \citenamefont {Jordan}}]{jin2018accelerated}%
  \BibitemOpen
  \bibfield  {author} {\bibinfo {author} {\bibfnamefont {C.}~\bibnamefont {Jin}}, \bibinfo {author} {\bibfnamefont {P.}~\bibnamefont {Netrapalli}},\ and\ \bibinfo {author} {\bibfnamefont {M.~I.}\ \bibnamefont {Jordan}},\ }\bibfield  {title} {\bibinfo {title} {Accelerated gradient descent escapes saddle points faster than gradient descent},\ }in\ \href {https://proceedings.mlr.press/v75/jin18a.html} {\emph {\bibinfo {booktitle} {Proceedings of the 31st Conference On Learning Theory}}},\ Vol.~\bibinfo {volume} {75}\ (\bibinfo  {publisher} {PMLR},\ \bibinfo {year} {2018})\ pp.\ \bibinfo {pages} {1042--1085}\BibitemShut {NoStop}%
\bibitem [{\citenamefont {Jin}\ \emph {et~al.}(2017)\citenamefont {Jin}, \citenamefont {Ge}, \citenamefont {Netrapalli}, \citenamefont {Kakade},\ and\ \citenamefont {Jordan}}]{jin2017escape}%
  \BibitemOpen
  \bibfield  {author} {\bibinfo {author} {\bibfnamefont {C.}~\bibnamefont {Jin}}, \bibinfo {author} {\bibfnamefont {R.}~\bibnamefont {Ge}}, \bibinfo {author} {\bibfnamefont {P.}~\bibnamefont {Netrapalli}}, \bibinfo {author} {\bibfnamefont {S.~M.}\ \bibnamefont {Kakade}},\ and\ \bibinfo {author} {\bibfnamefont {M.~I.}\ \bibnamefont {Jordan}},\ }\bibfield  {title} {\bibinfo {title} {How to escape saddle points efficiently},\ }in\ \href {https://proceedings.mlr.press/v70/jin17a.html} {\emph {\bibinfo {booktitle} {Proceedings of the 34th International Conference on Machine Learning (ICML 2017)}}}\ (\bibinfo {organization} {PMLR},\ \bibinfo {year} {2017})\ pp.\ \bibinfo {pages} {1724--1732}\BibitemShut {NoStop}%
\bibitem [{\citenamefont {Jin}\ \emph {et~al.}(2018{\natexlab{b}})\citenamefont {Jin}, \citenamefont {Liu}, \citenamefont {Ge},\ and\ \citenamefont {Jordan}}]{jin2018local}%
  \BibitemOpen
  \bibfield  {author} {\bibinfo {author} {\bibfnamefont {C.}~\bibnamefont {Jin}}, \bibinfo {author} {\bibfnamefont {L.~T.}\ \bibnamefont {Liu}}, \bibinfo {author} {\bibfnamefont {R.}~\bibnamefont {Ge}},\ and\ \bibinfo {author} {\bibfnamefont {M.~I.}\ \bibnamefont {Jordan}},\ }\bibfield  {title} {\bibinfo {title} {On the local minima of the empirical risk},\ }in\ \href {https://proceedings.neurips.cc/paper_files/paper/2018/hash/da4902cb0bc38210839714ebdcf0efc3-Abstract.html} {\emph {\bibinfo {booktitle} {Proceedings of the 32nd International Conference on Neural Information Processing Systems (NIPS 2018)}}},\ Vol.~\bibinfo {volume} {31}\ (\bibinfo {year} {2018})\ pp.\ \bibinfo {pages} {4901--4910}\BibitemShut {NoStop}%
\bibitem [{\citenamefont {Shalev-Shwartz}\ and\ \citenamefont {Singer}(2006)}]{shalev2006convex}%
  \BibitemOpen
  \bibfield  {author} {\bibinfo {author} {\bibfnamefont {S.}~\bibnamefont {Shalev-Shwartz}}\ and\ \bibinfo {author} {\bibfnamefont {Y.}~\bibnamefont {Singer}},\ }\bibfield  {title} {\bibinfo {title} {Convex repeated games and fenchel duality},\ }in\ \href {https://proceedings.neurips.cc/paper/2006/hash/1cfead9959b76ce44a847c850b61c587-Abstract.html} {\emph {\bibinfo {booktitle} {Proceedings of the 20th International Conference on Neural Information Processing Systems (NIPS 2006)}}},\ Vol.~\bibinfo {volume} {19}\ (\bibinfo {year} {2006})\BibitemShut {NoStop}%
\bibitem [{\citenamefont {H{\'e}liou}\ \emph {et~al.}(2020)\citenamefont {H{\'e}liou}, \citenamefont {Martin}, \citenamefont {Mertikopoulos},\ and\ \citenamefont {Rahier}}]{heliou2020online}%
  \BibitemOpen
  \bibfield  {author} {\bibinfo {author} {\bibfnamefont {A.}~\bibnamefont {H{\'e}liou}}, \bibinfo {author} {\bibfnamefont {M.}~\bibnamefont {Martin}}, \bibinfo {author} {\bibfnamefont {P.}~\bibnamefont {Mertikopoulos}},\ and\ \bibinfo {author} {\bibfnamefont {T.}~\bibnamefont {Rahier}},\ }\bibfield  {title} {\bibinfo {title} {Online non-convex optimization with imperfect feedback},\ }in\ \href {https://proceedings.neurips.cc/paper/2020/hash/c7c46d4baf816bfb07c7f3bf96d88544-Abstract.html} {\emph {\bibinfo {booktitle} {Proceedings of the 34th International Conference on Neural Information Processing Systems (NIPS 2020)}}},\ Vol.~\bibinfo {volume} {33}\ (\bibinfo {year} {2020})\ pp.\ \bibinfo {pages} {17224--17235}\BibitemShut {NoStop}%
\bibitem [{\citenamefont {H{\'e}liou}\ \emph {et~al.}(2021)\citenamefont {H{\'e}liou}, \citenamefont {Martin}, \citenamefont {Mertikopoulos},\ and\ \citenamefont {Rahier}}]{heliou2021zeroth}%
  \BibitemOpen
  \bibfield  {author} {\bibinfo {author} {\bibfnamefont {A.}~\bibnamefont {H{\'e}liou}}, \bibinfo {author} {\bibfnamefont {M.}~\bibnamefont {Martin}}, \bibinfo {author} {\bibfnamefont {P.}~\bibnamefont {Mertikopoulos}},\ and\ \bibinfo {author} {\bibfnamefont {T.}~\bibnamefont {Rahier}},\ }\bibfield  {title} {\bibinfo {title} {Zeroth-order non-convex learning via hierarchical dual averaging},\ }in\ \href {https://proceedings.mlr.press/v139/heliou21a.html} {\emph {\bibinfo {booktitle} {Proceedings of the 38th International Conference on Machine Learning (ICML 2021)}}}\ (\bibinfo {organization} {PMLR},\ \bibinfo {year} {2021})\ pp.\ \bibinfo {pages} {4192--4202}\BibitemShut {NoStop}%
\bibitem [{\citenamefont {Gao}\ \emph {et~al.}(2018)\citenamefont {Gao}, \citenamefont {Li},\ and\ \citenamefont {Zhang}}]{gao2018online}%
  \BibitemOpen
  \bibfield  {author} {\bibinfo {author} {\bibfnamefont {X.}~\bibnamefont {Gao}}, \bibinfo {author} {\bibfnamefont {X.}~\bibnamefont {Li}},\ and\ \bibinfo {author} {\bibfnamefont {S.}~\bibnamefont {Zhang}},\ }\bibfield  {title} {\bibinfo {title} {Online learning with non-convex losses and non-stationary regret},\ }in\ \href {https://proceedings.mlr.press/v84/gao18a.html} {\emph {\bibinfo {booktitle} {Proceedings of the Twenty-First International Conference on Artificial Intelligence and Statistics}}}\ (\bibinfo {organization} {PMLR},\ \bibinfo {year} {2018})\ pp.\ \bibinfo {pages} {235--243}\BibitemShut {NoStop}%
\bibitem [{\citenamefont {Guan}\ \emph {et~al.}(2024)\citenamefont {Guan}, \citenamefont {Zhou},\ and\ \citenamefont {Liang}}]{guan2023hardness}%
  \BibitemOpen
  \bibfield  {author} {\bibinfo {author} {\bibfnamefont {Z.}~\bibnamefont {Guan}}, \bibinfo {author} {\bibfnamefont {Y.}~\bibnamefont {Zhou}},\ and\ \bibinfo {author} {\bibfnamefont {Y.}~\bibnamefont {Liang}},\ }\bibfield  {title} {\bibinfo {title} {On the hardness of online nonconvex optimization with single oracle feedback},\ }in\ \href {https://iclr.cc/virtual/2024/poster/18061} {\emph {\bibinfo {booktitle} {Proceedings of the Twelfth International Conference on Learning Representations (ICLR 2024)}}}\ (\bibinfo {year} {2024})\BibitemShut {NoStop}%
\bibitem [{\citenamefont {Levine}\ \emph {et~al.}(2019)\citenamefont {Levine}, \citenamefont {Keesling}, \citenamefont {Semeghini}, \citenamefont {Omran}, \citenamefont {Wang}, \citenamefont {Ebadi}, \citenamefont {Bernien}, \citenamefont {Greiner}, \citenamefont {Vuleti\ifmmode~\acute{c}\else \'{c}\fi{}}, \citenamefont {Pichler},\ and\ \citenamefont {Lukin}}]{Levine_Pichler}%
  \BibitemOpen
  \bibfield  {author} {\bibinfo {author} {\bibfnamefont {H.}~\bibnamefont {Levine}}, \bibinfo {author} {\bibfnamefont {A.}~\bibnamefont {Keesling}}, \bibinfo {author} {\bibfnamefont {G.}~\bibnamefont {Semeghini}}, \bibinfo {author} {\bibfnamefont {A.}~\bibnamefont {Omran}}, \bibinfo {author} {\bibfnamefont {T.~T.}\ \bibnamefont {Wang}}, \bibinfo {author} {\bibfnamefont {S.}~\bibnamefont {Ebadi}}, \bibinfo {author} {\bibfnamefont {H.}~\bibnamefont {Bernien}}, \bibinfo {author} {\bibfnamefont {M.}~\bibnamefont {Greiner}}, \bibinfo {author} {\bibfnamefont {V.}~\bibnamefont {Vuleti\ifmmode~\acute{c}\else \'{c}\fi{}}}, \bibinfo {author} {\bibfnamefont {H.}~\bibnamefont {Pichler}},\ and\ \bibinfo {author} {\bibfnamefont {M.~D.}\ \bibnamefont {Lukin}},\ }\bibfield  {title} {\bibinfo {title} {Parallel implementation of high-fidelity multiqubit gates with neutral atoms},\ }\href {https://doi.org/10.1103/PhysRevLett.123.170503} {\bibfield  {journal} {\bibinfo  {journal} {Physical Review Letters}\ }\textbf {\bibinfo
  {volume} {123}},\ \bibinfo {pages} {170503} (\bibinfo {year} {2019})}\BibitemShut {NoStop}%
\end{thebibliography}%
\let\addcontentsline\oldaddcontentsline
 

\clearpage
\newpage
\onecolumngrid

\let\addcontentsline\realaddcontentsline

\tableofcontents

\begin{appendix}
 
\section{Surrogate error model and properties}
\label{app:appendix_1}

Here, we introduce the surrogate error model equipped with an objective function generalized from Ref.~\cite{sivak2025reinforcement}, and show under what conditions it is convex with respect to the control parameters. 
 We also prove the smoothness and Lipschitzness properties of the surrogate objective function.
 
\subsection{The surrogate error model}\label{sec:surrogate}

Our starting point is the surrogate error model. 
Following the reinforcement-learning calibration approach of Ref.~\cite{sivak2025reinforcement}, we describe a QEC circuit through the parameters of its gates, which are held in a classical controller. 
Physical errors incurred during the computation are modeled as stochastic displacements of these parameters away from their calibrated settings. 
The controller responds by probing the parameters and adjusting them to drive the errors back down. 
Because each adjustment shifts the observed syndrome statistics, calibration reduces to learning, from these statistics alone, how to move the parameters toward their optimal values. 
We formalize this below and determine the conditions under which the induced objective is convex.

Fault-tolerant circuit design arranges for every physical error to leave a trace in the stabilizer (or check) measurement record. 
This trace is captured by detectors~\cite{gidney2021stim}, each defined as a group of measurements whose combined parity is fixed in the absence of errors. 
A parity flip, or more generally, a detection event, localizes an error to a space-time region of the circuit, its detecting region. 
Since a detecting region can be triggered by several distinct error mechanisms, a decoder reconstructs the most probable error configuration from the observed detection pattern, and the logical error rate (LER) measures the residual error once this correction is applied.

Although the LER is the quantity one ultimately wants to minimize, it is ill-suited as an optimization target here, for the reasons already noted in Ref.~\cite{sivak2025reinforcement}: it is exponentially expensive to resolve below threshold~\cite{google2025quantum,harvard2026}. 
It must be optimized over a very large parameter set, and it cannot be evaluated in real time when the logical state is unknown. 
Following Ref.~\cite{sivak2025reinforcement}, we therefore optimize a surrogate: the objective $C$ is the average detection-event rate across the circuit, estimated empirically over a batch of QEC cycles, and serves as a locally computable proxy for the LER.
Our contribution is not this objective itself but the analysis, given below, of when it is convex and how efficiently it can be optimized.

Technically, let $r=1,2,\dots$ index QEC cycles. 
The physical time is $\tau_r = r\Delta t$, where $\Delta t$ is the \emph{fixed} duration of one syndrome-extraction cycle determined by gate speed and compilation on the quantum devices. 
There are $N_D$ detectors indexed by $D_k$ with $k=1,..,N_D$ across the QEC circuit.
In each cycle $r$, each detector $D_k$ produces a binary detection event $\mathcal D_{k,r}\in\{0,1\}$.
The per-cycle empirical detection fraction is
\begin{align}\label{eq:model_Y}
Y_r := \frac{1}{N_D}\sum_{k=1}^{N_D} \mathcal D_{k,r}\in[0,1].
\end{align}
In the time-dependent setting, we only see a noisy time series $\{\mathcal D_{k,r}\}$ and $\{Y_r\}$; to get a usable estimation, we must average over a window.
Let an epoch $t=1,2,\dots,T$ contain $m$ consecutive QEC cycles. We define a window corresponding to the epoch $t$ to be
\begin{align}\label{eq:model_window}
\mathcal W_t := \{r:r=(t-1)m+1,\dots,tm\}.
\end{align}
The surrogate objective $C=\mathbb E[Y_r]$ is an expectation over detector outcomes computed from $m$ cycles in each window as
\begin{align}\label{eq:model_Cemp}
\hat{C}:=\frac{1}{m}\sum_{r\in\mathcal W_t} Y_r\in[0,1]
\end{align}
We define $DR_k=\mathbb E[\mathcal D_{k,r}]$ to be the expected detection rate for detector $D_k$ in the cycle $r$, and thus $C=\frac{1}{N_D}\sum_{k=1}^{N_D}DR_k$.
Within epoch $t$, we start by implementing a control parameter perturbation, estimate $C$ from collecting detectors across the $m$ QEC cycle in the window while holding the control unchanged, and deciding the control parameter perturbation in the next epoch based on observed $C$.
The surrogate objective function $C$ can be efficiently estimated without knowing the logical state, compared to logical error rate estimation.
Moreover, to ensure efficient multi-dimensional optimization of $C$, we denote the parameter as a vector $\vec{\theta}$, and assume that it is sparse, i.e., polynomial in the system size, in the practical cases.
This assumption is natural when we consider a quantum error circuit with gate count scaling polynomially in the system size and each gate (usually a single- and two-qubit gate) parametrized by a few parameters.

\subsection{Control induced unitary error}\label{sec:unitary_convexity}
As a warm-up, we consider the model where the erroneous control results in a unitary error. 
We will prove that, for a QEC circuit equipped with randomized compiling or Pauli twirling~\cite{wallman2016noise,emerson2007symmetrized,van2023probabilistic}, the surrogate objective function $C$ is a strictly convex function of the physical control parameters within an explicit neighborhood of the optimal calibration point.

\paragraph{The error model setup.}Consider a QEC circuit with gates indexed by $i$. Each gate $i$ acts on $n_i$ qubits ($d_i = 2^{n_i}$) with ideal unitary $U_i^* = U_i(\vec{\theta}_i^{*})$ and $\vec{\theta}$ collects all $\vec{\theta}_i$'s.
Assume the error is caused by a control drift $\delta\vec{\theta}_i = \vec{\theta}_i - \vec{\theta}_i^{*}$, the error unitary is:
\begin{align*}
V_i(\delta\vec{\theta}_i) = U_i(\vec{\theta}_i^{*} + \delta\vec{\theta}_i) U_i(\vec{\theta}_i^{*})^\dagger = e^{-i\mathcal{H}_i(\delta\vec{\theta}_i)}, \qquad V_i(\vec{0}) = I.
\end{align*}
Here, the map $\delta\vec{\theta}_i \mapsto \mathcal{H}_i(\delta\vec{\theta}_i)$ is smooth from Schr\"{o}dinger's equation but generically nonlinear. The Taylor expansion for $\mathcal{H}_i(\delta\vec{\theta}_i)$ is given by
\begin{align*}
\mathcal{H}_i(\delta\vec{\theta}_i) = \sum_j \delta \theta_{i,j} G_{i,j} + \frac{1}{2}\sum_{j,k} \delta \theta_{i,j}\delta \theta_{i,k} G_{i,jk}^{(2)} + O(\norm{\delta\vec{\theta}_i}^3),
\end{align*}
where $\delta \theta_{i,j}$ is the $j$-th entry of $\delta \vec{\theta}_{i}$, and $G_{i,j}$ and $G_{i,jk}^{(2)}$ are the coefficients. 

We note that randomized compiling produces a Pauli channel as follows:
\begin{proposition}\label{prop:Pauli_twirling_unitary}
After Pauli twirling, the error channel for gate $i$ becomes a Pauli channel:
\begin{align*}
\mathcal{T}_i(\rho) = \sum_{P \in \mathcal{P}_{n_i}} p_P^{(i)}(\delta\vec{\theta}_i) P\rho P, \qquad p_P^{(i)} \geq 0,\quad \sum_P p_P^{(i)} = 1,
\end{align*}
where $\mathcal{P}_{n_i}$ denote the set of all Pauli observables on $n_i$ qubits, with exact Pauli error probabilities:
\begin{align*}
p_P^{(i)} = \frac{1}{d_i^2}\sum_{Q \in \mathcal{P}_{n_i}} (-1)^{s(P,Q)} c_Q^{(i)}, \qquad c_Q^{(i)} = \frac{1}{d_i}\Tr(Q V_i Q V_i^\dagger),
\end{align*}
where $s(P,Q)=1$ if $P$ and $Q$ anti-commute, and $s(P,Q)=0$ if $P$ and $Q$ commute.
\end{proposition}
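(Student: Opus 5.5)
We write the channel $\mathcal E_i(\rho)=V_i\rho V_i^\dagger$ in the Pauli (chi-matrix) basis and average it over uniformly random Pauli conjugations. Expanding $V_i=\sum_{P\in\mathcal P_{n_i}}v_P P$ with $v_P=\frac1{d_i}\Tr(PV_i)$, the channel becomes $\mathcal E_i(\rho)=\sum_{P,P'}v_P\bar v_{P'}\,P\rho P'$. Conjugating by a uniformly random Pauli $R$ multiplies each term $P\rho P'$ by the sign $\pm1$ coming from $RPR=\pm P$ and $RP'R=\pm P'$. Averaging over $R\in\mathcal P_{n_i}$ kills every cross term $P\neq P'$ up to phase, by orthogonality of the characters $R\mapsto(-1)^{s(R,P)+s(R,P')}$. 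It keeps only the diagonal terms, so $\mathcal T_i(\rho)=\sum_P |v_P|^2 P\rho P$. This already gives the Pauli-channel form. We then set $p^{(i)}_P=|v_P|^2\ge0$, and trace preservation gives $\sum_P p_P^{(i)}=\sum_P|v_P|^2=\frac1{d_i}\Tr(V_i^\dagger V_i)=1$ by Parseval for the orthogonal Pauli basis. (Randomized compiling for a Clifford $U_i^*$ reduces to this same twirl of the error $V_i$, since Paulis are pushed through the ideal gate.)

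Next we derive the stated closed formula by computing the Pauli eigenvalues (fidelities) of the twirled channel. For any Pauli $Q$ we have $\mathcal T_i(Q)=\lambda_Q Q$ with $\lambda_Q=\sum_P p_P(-1)^{s(P,Q)}$. On the other hand, twirling does not change the diagonal Pauli-transfer-matrix entries, because $\frac1{d_i}\Tr(Q\,R V_i R\, Q\, R V_i^\dagger R)=\frac1{d_i}\Tr(QV_iQV_i^\dagger)$ after using $RQR=\pm Q$ twice. Hence $\lambda_Q=\frac1{d_i}\Tr(Q\mathcal E_i(Q))=\frac1{d_i}\Tr(QV_iQV_i^\dagger)=c_Q^{(i)}$.

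The last step inverts the Walsh–Hadamard relation $c_Q=\sum_P(-1)^{s(P,Q)}p_P$. The matrix $H_{QP}=(-1)^{s(P,Q)}$ is the character table of the abelian group $\mathcal P_{n_i}/\{\text{phases}\}\cong\mathbb Z_2^{2n_i}$ paired with the symplectic form. It therefore satisfies $H^2=d_i^2 I$, which gives $p_P=\frac1{d_i^2}\sum_Q(-1)^{s(P,Q)}c_Q$.

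The only step needing care is the character orthogonality $\sum_R(-1)^{s(R,P)}=d_i^2\,\delta_{P,I}$ modulo phases, together with its use for $H^2$. This follows because $s(\cdot,P)$ is a nonzero linear functional on $\mathbb Z_2^{2n_i}$ whenever $P\neq I$, by nondegeneracy of the symplectic form. Everything else is bookkeeping, and we would treat global phases on Paulis consistently by working with $\pm$-free representatives.
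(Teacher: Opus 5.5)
Your proof is correct, but it goes through the $\chi$-matrix first and the Pauli transfer matrix second, which is the reverse of the paper's order.

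The paper works in the transfer-matrix picture. It shows via character orthogonality that the twirled map is Pauli-diagonal with $\mathcal T_i(Q)=c_Q^{(i)}Q$, and reads off $p_P^{(i)}$ as the inverse symplectic Fourier transform of these eigenvalues. For $p_P^{(i)}\ge 0$ it simply appeals to complete positivity, implicitly using that a Pauli-diagonal map is completely positive only if its Pauli weights are nonnegative. Normalization is left implicit in $c_I^{(i)}=1$.

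You instead expand $V_i$ in the Pauli basis and let the twirl remove the off-diagonal $\chi$-entries. This gives the weights explicitly as $p_P^{(i)}=|\Tr(PV_i)|^2/d_i^2$, so nonnegativity and $\sum_P p_P^{(i)}=1$ follow immediately from Parseval and unitarity. Only afterwards do you identify the transfer-matrix eigenvalues with $c_Q^{(i)}$, via twirl-invariance of $\frac{1}{d_i}\Tr(QV_iQV_i^\dagger)$, and invert the Walsh--Hadamard matrix to recover the displayed formula.

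What each route buys:
\begin{itemize}
\item The paper's route is the shortest path to the formula exactly as stated. It carries over verbatim to the Kraus-sum eigenvalues used in the CPTP section.
\item Your route is more self-contained on positivity and produces a closed form. Expanding $V_i=e^{-i\mathcal H_i}$ in it gives the leading-order result $p_P^{(i)}=\alpha_P^2+O(\|\mathcal H_i\|^3)$ of Lemma~\ref{lem:pP_hessian} almost at once, bypassing the restricted character-sum computation used there. It also generalizes just as easily, to $p_P^{(i)}=\sum_\alpha|\Tr(PK_{i,\alpha})|^2/d_i^2$.
\end{itemize}
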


\begin{proof}
We note that the twirled channel $\mathcal{T}_i(\rho) = \frac{1}{d_i^2}\sum_P P V_i P \rho P V_i^\dagger P$ acts diagonally on Paulis as $\mathcal{T}_i(Q) = c_Q Q$ where $c_Q = \frac{1}{d_i}\tr(QV_iQV_i^\dagger)$. The diagonality follows from symplectic character orthogonality: $\frac{1}{d_i^2}\sum_P (-1)^{s(P, Q' \oplus Q)} = \delta_{Q',Q}$. The complete positivity of $\mathcal{T}_i$, which is a convex combination of completely positive maps, ensures $p_P \geq 0$.
\end{proof}

For detector $D_k$, let $\mathcal{R}_k$ denote its detecting region that consists the set of gates whose errors can flip $D_k$. For each gate $i \in \mathcal{R}_k$, let $\mathcal{F}_k^{(i)} \subset \mathcal{P}_{n_i} \setminus \{I\}$ be the set of nontrivial Pauli errors on gate $i$ that flip detector $D_k$. Define the detector-relevant error probability:
\begin{align}
q_{ki}(\delta\vec{\theta}_i) = \sum_{P \in \mathcal{F}_k^{(i)}} p_P^{(i)}(\delta\vec{\theta}_i).
\label{eq:qki_def_unitary}
\end{align}
This is the total probability that gate $i$ produces an error flipping $D_k$. For independent errors across gates, the exact detection event rate is:
\begin{align}\label{eq:DRk_unitary}
DR_k(\delta\vec{\theta}) = \frac{1 - \prod_{i \in \mathcal{R}_k}\big(1 - 2q_{ki}(\delta\vec{\theta}_i)\big)}{2}.
\end{align}
In the case when the errors are correlated across different gates, we still have $DR_k(\delta\vec \theta)\approx \sum_{i\in\mathcal{R}_k}q_{k_i}(\delta\vec{\theta}_i)$, which is the same as Eq.~\eqref{eq:DRk_unitary} up to the second-order in the small error regime we consider.
To show that $C(\delta\vec{\theta})$ is convex within an explicit neighborhood of the optimal calibration point, we only need to show that $DR_k(\delta\vec \theta)$ is strictly convex in $\delta\vec{\theta}$ for $\|\delta\vec{\theta}\|$ below a certain threshold.

\paragraph{Convexity of the detection rate $DR_k$ at $\delta\vec{\theta}=0$.}We first analyze the behavior of each $p_P^{(i)}$ near the optimal point $\vec{\theta}^*$. 
We note that the optimal point corresponds to $\vec{\theta}=\vec{\theta}^*$ and $\delta\vec{\theta}=0$.
We have the following lemma:
\begin{lemma}[Properties of $p_P^{(i)}$ at the optimum]
\label{lem:pP_hessian}
Given the traceless error generators assumption, i.e. $\Tr(G_{i,j}) = 0$ for all $i,j$, for each nontrivial Pauli $P \neq I$, we have
\begin{enumerate}
\item $p_P^{(i)}(\vec{0}) = 0$.
\item $\nabla p_P^{(i)}(\vec{0}) = \vec{0}$.
\item The Hessian at the optimum is the rank-1 positive semidefinite matrix:
\begin{align}
\frac{\partial^2 p_P^{(i)}}{\partial(\delta \theta_{i,j})\partial(\delta \theta_{i,k})}\bigg|_{\vec{0}} = \frac{2}{d_i^2}\Tr(P G_{i,j})\Tr(P G_{i,k}) = 2 h_{P,j}^{(i)} h_{P,k}^{(i)},
\label{eq:hessian_pP}
\end{align}
where we define the Pauli-parameter coupling vector:
\begin{align*}
h_{P,j}^{(i)} := \frac{\Tr(P G_{i,j})}{d_i}.
\end{align*}
\item The nonlinear response operators $G_{i,jk}^{(2)}, G_{i,jkl}^{(3)}, \ldots$ do not contribute to the Hessian at the optimum.
\end{enumerate}
\end{lemma}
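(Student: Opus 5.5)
Starting from Proposition~\ref{prop:Pauli_twirling_unitary}, we write $p_P^{(i)}$ as a Fourier-type combination of the quantities $c_Q^{(i)}=\frac{1}{d_i}\Tr(QV_iQV_i^\dagger)$, and expand each $c_Q^{(i)}$ to second order in $\delta\vec\theta_i$. A cleaner and equivalent route uses the identity
\begin{align*}
p_P^{(i)}=\frac{1}{d_i^2}\big|\Tr(PV_i)\big|^2 ,
\end{align*}
which holds for the Pauli twirl of a unitary channel. This follows by expanding $V_i=\sum_P \frac{\Tr(PV_i)}{d_i}P$ and observing that twirling removes the off-diagonal terms of the resulting $\chi$-matrix. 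It is consistent with the stated formula through the same character orthogonality used in the proof of Proposition~\ref{prop:Pauli_twirling_unitary}. We therefore derive this identity first, and everything else reduces to expanding the single amplitude $a_P(\delta\vec\theta_i):=\Tr(PV_i)/d_i$.

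Next we expand $V_i=e^{-i\mathcal H_i}=I-i\mathcal H_i-\tfrac12\mathcal H_i^2+O(\|\mathcal H_i\|^3)$ and substitute the Taylor series of $\mathcal H_i$. For $P\neq I$ we have $\Tr(P)=0$, so
\begin{align*}
a_P=-\frac{i}{d_i}\sum_j\delta\theta_{i,j}\Tr(PG_{i,j})+O(\|\delta\vec\theta_i\|^2)=-i\sum_j \delta\theta_{i,j}h^{(i)}_{P,j}+O(\|\delta\vec\theta_i\|^2).
\end{align*}
The $O(\|\delta\vec\theta_i\|^2)$ remainder contains the $G^{(2)}_{i,jk}$ terms and $\mathcal H_i^2$. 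Since $a_P(\vec 0)=0$, we get $p_P=|a_P|^2=\big(\sum_j\delta\theta_{i,j}h_{P,j}^{(i)}\big)^2+O(\|\delta\vec\theta_i\|^3)$, because the cross term of a first-order quantity with a second-order one is third order. Items 1--3 follow at once: the value and gradient vanish at $\vec 0$, and the Hessian is $2\vec h_P^{(i)}\vec h_P^{(i)\top}$, which is rank one and positive semidefinite. Here $h_{P,j}^{(i)}$ is real because $G_{i,j}$ and $P$ are Hermitian. Item 4 holds because $G^{(2)}$ and higher operators enter $a_P$ only at order $\ge 2$, and hence enter $p_P$ only at order $\ge3$.

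The tracelessness assumption is needed only to ensure that $\mathcal H_i$ carries no global phase component. Such a component would otherwise shift $a_I$ without affecting the $P\neq I$ amplitudes; we will note that it is harmless and comes for free.

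The main obstacle is reconciling the clean $|\Tr(PV_i)|^2$ formula with the $c_Q$-based expression stated in Proposition~\ref{prop:Pauli_twirling_unitary}. If we insist on working directly from the $c_Q$ form, we would instead expand $c_Q=1-\frac{1}{d_i}\Tr([Q,\mathcal H_i]^\dagger[Q,\mathcal H_i])/2+\dots$ and then carry out the sum over $Q$ with signs $(-1)^{s(P,Q)}$. That computation is routine but messy, so we plan to prove the equivalence of the two expressions once and then work only with the amplitude form.
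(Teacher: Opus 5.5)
Your argument is correct, but it takes a different route from the paper's. The paper works entirely in the Pauli-fidelity representation of Proposition~\ref{prop:Pauli_twirling_unitary}. It expands $c_Q^{(i)}$ to second order as $1-\frac{1}{d_i}\Tr(\mathcal H_i^2)+\frac{1}{d_i}\Tr(Q\mathcal H_iQ\mathcal H_i)$, and rewrites this via $\mathcal H_i=\sum_R\alpha_R R$ as $1-2\sum_{R:s(Q,R)=1}\alpha_R^2$. It then inverts the symplectic Fourier transform using the restricted character sum $\sum_{Q:s(Q,R)=1}(-1)^{s(P,Q)}=-\frac{d_i^2}{2}\delta_{P,R}$, which gives $p_P^{(i)}=\alpha_P^2+O(\|\mathcal H_i\|^3)$. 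You instead prove the exact identity $p_P^{(i)}=|\Tr(PV_i)|^2/d_i^2$ first. After that you only need a first-order expansion of a single amplitude, and you reach the same $\alpha_P^2$.

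Your route gives somewhat more. The sum-of-squares form holds at every $\delta\vec\theta_i$, not just to second order. Nonnegativity, the rank-one PSD Hessian, and the irrelevance of $G^{(2)}_{i,jk}$ and of the $\mathcal H_i^2$ term are then immediate. It also carries over directly to the CPTP and leakage settings: sum $|\Tr(PK_{i,\alpha})|^2/d_i^2$ over Kraus operators, or replace $V_i$ by $V_i^{CC}$.

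Your remark that tracelessness is harmless is right. The paper's computation does not essentially depend on it either: the linear term $\Tr(\mathcal H_i-Q\mathcal H_iQ)$ vanishes identically, and $\alpha_I^2$ cancels because $s(Q,I)=0$.

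The step you flag as the main obstacle is not really one. Writing $V_i=\sum_R a_R R$ gives $c_Q^{(i)}=\sum_R(-1)^{s(Q,R)}|a_R|^2$, and character orthogonality inverts this to $p_P^{(i)}=|a_P|^2$ in one line. This is exactly Eq.~\eqref{eq:identity} from the paper's CPTP lemma with $A=B=V_i$.
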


\begin{proof}
According to \Cref{prop:Pauli_twirling_unitary}, the eigenvalues of the Pauli channel are $c_Q^{(i)} = \frac{1}{d_i}\Tr(Q V_i Q V_i^\dagger)$, and the Pauli error probabilities are obtained by symplectic Fourier transform: $p_P^{(i)} = \frac{1}{d_i^2}\sum_Q (-1)^{s(P,Q)} c_Q^{(i)}$. 
Using this knowledge, we prove the claimed properties sequentially
\begin{enumerate}
\item At $\delta\vec{\theta}_i = \vec{0}$, $V_i = I$, so $c_Q^{(i)} = 1$ for all $Q$. Then $p_P^{(i)}(\vec{0}) = \frac{1}{d_i^2}\sum_Q (-1)^{s(P,Q)} = \delta_{P,I}$, which vanishes for $P \neq I$ by character orthogonality.
\item Using $V_i = I - i\mathcal{H}_i + O(\mathcal{H}_i^2)$, we have
\begin{align*}
\frac{\partial c_Q^{(i)}}{\partial(\delta \theta_{i,j})}\bigg|_{\vec{0}} = \frac{1}{d_i}\Tr\big(Q(-iG_{i,j})Q + Q^2(iG_{i,j})\big) = \frac{-i}{d_i}\big(\Tr(G_{i,j}) - \Tr(G_{i,j})\big) = 0,
\end{align*}
using $Q^2 = I$ and cyclicity. 
Since $\nabla c_Q^{(i)}|_{\vec{0}} = 0$ for all $Q$, the linear combination $\nabla p_P^{(i)}|_{\vec{0}} = 0$.
\item To the second-order expansion, we have $V_i = I - i\mathcal{H}_i - \frac{1}{2}\mathcal{H}_i^2 + O(\mathcal{H}_i^3)$ and $V_i^\dagger = I + i\mathcal{H}_i - \frac{1}{2}\mathcal{H}_i^2 + O(\mathcal{H}_i^3)$. 
Thus:
\begin{align*}
QV_iQV_i^\dagger = Q(I - i\mathcal{H}_i - \tfrac{1}{2}\mathcal{H}_i^2)Q(I + i\mathcal{H}_i - \tfrac{1}{2}\mathcal{H}_i^2) + O(\mathcal{H}_i^3).
\end{align*}
Expanding to second order and using $Q^2 = I$:
\begin{align*}
QV_iQV_i^\dagger = I + i[\mathcal{H}_i, Q]Q + \mathcal{H}_i^2 - Q\mathcal{H}_i^2Q - Q\mathcal{H}_iQ\mathcal{H}_i + O(\mathcal{H}_i^3).
\end{align*}
We define $\widetilde{\mathcal{H}} := Q\mathcal{H}_i Q$ as the conjugated Hamiltonian. Using $QRQ = (-1)^{s(Q,R)} R$ for Pauli $R$, the Pauli components of $\widetilde{\mathcal{H}}$ are sign-flipped versions of those of $\mathcal{H}_i$. 
Taking the trace, we have
\begin{align*}
\begin{split}
c_Q^{(i)} &= \frac{1}{d_i}\Tr(QV_iQV_i^\dagger) = \frac{1}{d_i}\Tr(e^{-i\widetilde{\mathcal{H}}} e^{i\mathcal{H}_i})\\
&= \frac{1}{d_i}\Tr\Big[(I - i\widetilde{\mathcal{H}} - \tfrac{1}{2}\widetilde{\mathcal{H}}^2)(I + i\mathcal{H}_i - \tfrac{1}{2}\mathcal{H}_i^2)\Big] + O(\mathcal{H}_i^3) \\
&= 1 + \frac{i}{d_i}\Tr(\mathcal{H}_i - \widetilde{\mathcal{H}}) - \frac{1}{2d_i}\Tr(\mathcal{H}_i^2 + \widetilde{\mathcal{H}}^2) + \frac{1}{d_i}\Tr(\widetilde{\mathcal{H}}\mathcal{H}_i) + O(\mathcal{H}_i^3).
\end{split}
\end{align*}
Since $\Tr(\widetilde{\mathcal{H}}) = \Tr(\mathcal{H}_i) = 0$ due to the tracelessness and $\Tr(\widetilde{\mathcal{H}}^2) = \Tr(\mathcal{H}_i^2)$ due to the cyclicity, we have
\begin{align*}
c_Q^{(i)} = 1 - \frac{1}{d_i}\Tr(\mathcal{H}_i^2) + \frac{1}{d_i}\Tr(Q\mathcal{H}_i Q\mathcal{H}_i) + O(\mathcal{H}_i^3).
\end{align*}
We expand $\mathcal{H}_i$ in the Pauli basis: $\mathcal{H}_i = \sum_R \alpha_R R$ where $\alpha_R = \Tr(R\mathcal{H}_i)/d_i$. Using $QRQ = (-1)^{s(Q,R)}R$:
\begin{align*}
\Tr(Q\mathcal{H}_i Q\mathcal{H}_i) = d_i\sum_R (-1)^{s(Q,R)}\alpha_R^2.
\end{align*}
Therefore:
\begin{align*}
c_Q^{(i)} = 1 - 2\sum_{R: s(Q,R)=1} \alpha_R^2 + O(\mathcal{H}_i^3).
\end{align*}

Applying the symplectic Fourier transform to get $p_P^{(i)}$:
\begin{align*}
\begin{split}
p_P^{(i)} &= \frac{1}{d_i^2}\sum_Q (-1)^{s(P,Q)} c_Q^{(i)}\\
&= \underbrace{\frac{1}{d_i^2}\sum_Q (-1)^{s(P,Q)}}_{=\delta_{P,I}=0} - \frac{2}{d_i^2}\sum_R \alpha_R^2 \underbrace{\sum_{Q:s(Q,R)=1}(-1)^{s(P,Q)}}_{=-\frac{d_i^2}{2}\delta_{P,R}} + O(\mathcal{H}_i^3).
\end{split}
\end{align*}
The claim that the restricted characters sum $\sum_{Q:s(Q,R)=1}(-1)^{s(P,Q)}$ evaluates to $-\frac{d_i^2}{2}\delta_{P,R}$ can be seen as follows. 
Split $\sum_Q = \sum_{Q:s(Q,R)=0} + \sum_{Q:s(Q,R)=1}$ and we have
\begin{align*}
\sum_{Q:s(Q,R)=0} (-1)^{s(P,Q)} + \sum_{Q:s(Q,R)=1} (-1)^{s(P,Q)} = d^2\delta_{P,I} = 0,
\end{align*}
and
\begin{align*}
\sum_{Q:s(Q,R)=0} (-1)^{s(P,Q)} - \sum_{Q:s(Q,R)=1} (-1)^{s(P,Q)} = d^2\delta_{P,R}.
\end{align*}
The second identity follows from multiplying by $(-1)^{s(Q,R)}$ as $\sum_Q (-1)^{s(P\oplus R, Q)} = d^2\delta_{P,R}$. 
Subtracting the above two identities, we have $2\sum_{Q:s(Q,R)=1}(-1)^{s(P,Q)} = -d^2\delta_{P,R}$.
Therefore,
\begin{align*}
p_P^{(i)} = \alpha_P^2 + O(\|\mathcal{H}_i\|^3) = \left(\frac{\Tr(P\mathcal{H}_i)}{d_i}\right)^2 + O(\|\mathcal{H}_i\|^3).
\end{align*}
Now substitute the Taylor expansion $\mathcal{H}_i = \sum_j \delta \theta_{i,j} G_{i,j} + O(\norm{\delta\vec{\theta}_i}^2)$:
\begin{align*}
\alpha_P = \frac{\Tr(P\mathcal{H}_i)}{d_i} = \sum_j \delta \theta_{i,j} h_{P,j}^{(i)} + O(\norm{\delta\vec{\theta}_i}^2),
\end{align*}
where $h_{P,j}^{(i)} = \Tr(P G_{i,j})/d_i$. 
Therefore $\alpha_P^2 = (\sum_j \delta \theta_{i,j} h_{P,j}^{(i)})^2 + O(\norm{\delta\vec{\theta}_i}^3)$, and the Hessian at the origin is:
\begin{align*}
\frac{\partial^2(\alpha_P^2)}{\partial(\delta \theta_{i,j})\partial(\delta \theta_{i,k})}\bigg|_{\vec{0}} = 2 h_{P,j}^{(i)} h_{P,k}^{(i)}.
\end{align*}
\item The second-order Taylor coefficients $G_{i,jk}^{(2)}$ contribute to $\alpha_P$ at order $O(\norm{\delta\vec{\theta}_i}^2)$, which contributes to $\alpha_P^2$ at order $O(\norm{\delta\vec{\theta}_i}^3)$. 
These do not affect the Hessian at the origin.
\end{enumerate}
\end{proof}

With \Cref{lem:pP_hessian}, summing Eq.~\eqref{eq:hessian_pP} over $P \in \mathcal{F}_k^{(i)}$ and using linearity of differentiation, we have the following corollary on the Hessian of the detector-relevant error probability $q_{ki}$.
\begin{corollary}[Hessian of $q_{ki}$ at the optimum]
\label{coro:qki_hessian}
The detector-relevant error probability $q_{ki}(\delta\vec{\theta}_i) = \sum_{P \in \mathcal{F}_k^{(i)}} p_P^{(i)}(\delta\vec{\theta}_i)$ satisfies $q_{ki}(\vec{0}) = 0$, $\nabla q_{ki}(\vec{0}) = \vec{0}$, and has Hessian at the optimum:
\begin{align*}
\nabla^2 q_{ki}\big|_{\vec{0}} = 2\sum_{P \in \mathcal{F}_k^{(i)}} \vec{h}_P^{(i)} \big(\vec{h}_P^{(i)}\big)^\top = 2 H_k^{(i)} \big(H_k^{(i)}\big)^\top,
\end{align*}
where $\vec{h}_P^{(i)} \in \mathbb{R}^{\abs{\delta\vec{\theta}}_i}$ has components $h_{P,j}^{(i)} = \Tr(P G_{i,j})/d_i$, and $H_k^{(i)}$ is the $\abs{\delta\vec{\theta}}_i \times |\mathcal{F}_k^{(i)}|$ matrix whose columns are $\{\vec{h}_P^{(i)}\}_{P \in \mathcal{F}_k^{(i)}}$. 
This Hessian is positive semidefinite as it is a sum of rank-1 PSD matrices.
\end{corollary}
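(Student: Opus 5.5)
This corollary follows directly from \Cref{lem:pP_hessian}, applied term by term to the finite sum $q_{ki}=\sum_{P\in\mathcal F_k^{(i)}}p_P^{(i)}$. Since $\mathcal F_k^{(i)}\subset\mathcal P_{n_i}\setminus\{I\}$, every summand is a nontrivial Pauli. Items 1 and 2 of the lemma therefore give $p_P^{(i)}(\vec 0)=0$ and $\nabla p_P^{(i)}(\vec 0)=\vec 0$ for each $P$. Because evaluation and differentiation are linear and the sum has at most $d_i^2-1$ terms, this yields $q_{ki}(\vec 0)=0$ and $\nabla q_{ki}(\vec 0)=\vec 0$. No interchange-of-limits issue arises, since each $p_P^{(i)}$ is smooth in $\delta\vec\theta_i$. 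Indeed, $p_P^{(i)}$ is a finite linear combination of the $c_Q^{(i)}=\frac1{d_i}\Tr(QV_iQV_i^\dagger)$, and $V_i$ depends smoothly on $\delta\vec\theta_i$.

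For the Hessian, I would sum item 3 of the lemma, Eq.~\eqref{eq:hessian_pP}, over $P\in\mathcal F_k^{(i)}$. This gives
\[
\frac{\partial^2 q_{ki}}{\partial(\delta\theta_{i,j})\partial(\delta\theta_{i,l})}\bigg|_{\vec 0}=2\sum_{P\in\mathcal F_k^{(i)}}h^{(i)}_{P,j}h^{(i)}_{P,l},
\]
which in matrix form is $2\sum_P\vec h_P^{(i)}(\vec h_P^{(i)})^\top$. Let $H_k^{(i)}$ be the matrix whose columns are the $\vec h_P^{(i)}$. The $(j,l)$ entry of $H_k^{(i)}(H_k^{(i)})^\top$ is $\sum_P h_{P,j}^{(i)}h_{P,l}^{(i)}$, so the two expressions agree. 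One should also check that the $h_{P,j}^{(i)}$ are real. This holds because the $G_{i,j}$ are Hermitian: they are Taylor coefficients of the Hermitian $\mathcal H_i$ with respect to real parameters. Hence $\Tr(PG_{i,j})\in\mathbb R$, and the matrix is real symmetric.

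Positive semidefiniteness follows because, for any $\vec v$,
\[
\vec v^\top\big(2H_k^{(i)}(H_k^{(i)})^\top\big)\vec v=2\|(H_k^{(i)})^\top\vec v\|^2=2\sum_P(\vec h_P^{(i)}\cdot\vec v)^2\ge0.
\]
I do not expect a real obstacle here. The only points needing care are the reality and symmetry of the $h$-vectors and the implicit tracelessness hypothesis $\Tr(G_{i,j})=0$, which the corollary inherits from \Cref{lem:pP_hessian}.
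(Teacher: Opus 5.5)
Your proposal is correct and follows the paper's argument. The paper obtains the corollary exactly as you do, by summing the conclusions of \Cref{lem:pP_hessian} over $P\in\mathcal F_k^{(i)}$ using linearity of differentiation, and it reads off positive semidefiniteness from the sum of rank-one PSD terms. Your additional checks are valid refinements of points the paper leaves implicit: the reality of $h_{P,j}^{(i)}$ (since $G_{i,j}$ is Hermitian), and the quadratic-form bound $2\|(H_k^{(i)})^\top\vec v\|^2\ge 0$.
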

\noindent We note that the matrix $\nabla^2 q_{ki}|_{\vec{0}}$ is positive definite if and only if $H_k^{(i)}$ has full row rank $\abs{\delta\vec{\theta}}_i$. 
Equivalently, the vectors $\{\vec{h}_P^{(i)}\}_{P \in \mathcal{F}_k^{(i)}}$ must span $\mathbb{R}^{\abs{\delta\vec{\theta}}_i}$. 
Physically, this means that every direction in parameter space produces at least one Pauli error that flips detector $D_k$.
For a single detector, this is a strong requirement. 
However, the surrogate objective $C$ always averages over \emph{all} detectors, so the relevant convexity condition for $C$ will be weaker.

We then consider $DR_k(\delta\vec{\theta})$. We rewrite the detection rate as $DR_k = F(q_{k1}, q_{k2}, \ldots)$, where $F(\vec{q}) = (1 - \prod_i(1-2q_i))/2$ is a function of the individual detector-relevant error probabilities. 
We compute the Hessian via the chain rule, and we have
\begin{align}\label{eq:derivativeF}
\frac{\partial F}{\partial q_i} = \prod_{j \neq i}(1 - 2q_j), \quad
\frac{\partial^2 F}{\partial q_i^2} = 0, \quad
\frac{\partial^2 F}{\partial q_i\partial q_{i'}} = -2\prod_{j \neq i,i'}(1 - 2q_j)\ \text{for}\ \quad i \neq i'.
\end{align}
Therefore, the Hessian of $DR_k$ with respect to $\delta\vec{\theta}$ has the block structure in blocks indexed by gate pairs $(i, i')$:
\begin{align}
\nabla^2_{\delta\vec{\theta}} DR_k = \sum_i \frac{\partial F}{\partial q_{ki}} \nabla^2_{\delta\vec{\theta}_i} q_{ki} + \sum_{i \neq i'} \frac{\partial^2 F}{\partial q_{ki}\partial q_{ki'}} (\nabla_{\delta\vec{\theta}_i} q_{ki})(\nabla_{\delta\vec{\theta}_{i'}} q_{ki'})^\top.
\label{eq:DR_hessian_chain}
\end{align}
Note that the $\partial^2 F/\partial q_{ki}^2 = 0$ term has been dropped. The first sum is block-diagonal with each term living in the $\delta\vec{\theta}_i$ subspace. The second sum contributes off-diagonal blocks between different gates' parameter spaces.
Now, we are ready to show that $\delta\vec{\theta}=0$ is the optimum of $DR_k(\delta\vec{\theta})$, and present the properties for the Hessian of $DR_k(\delta\vec{\theta})$ at $\delta\vec{\theta}=0$.
\begin{lemma}[Hessian of $DR_k$ at the optimum]
\label{lem:DR_hessian_origin}
At $\delta\vec{\theta} = \vec{0}$:
\begin{align*}
\nabla^2 DR_k\big|_{\vec{0}} = \sum_{i \in \mathcal{R}_k} \nabla^2 q_{ki}\big|_{\vec{0}} = 2\sum_{i \in \mathcal{R}_k}\sum_{P \in \mathcal{F}_k^{(i)}} \vec{h}_P^{(i)} \big(\vec{h}_P^{(i)}\big)^\top \succeq 0.
\end{align*}
The off-diagonal (cross-gate) terms vanish identically because $\nabla q_{ki}|_{\vec{0}} = \vec{0}$.
\end{lemma}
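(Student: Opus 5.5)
The plan is to evaluate the chain-rule expression Eq.~\eqref{eq:DR_hessian_chain} at $\delta\vec{\theta}=\vec{0}$ term by term, using Corollary~\ref{coro:qki_hessian} for the ingredients. That corollary gives, for every $i\in\mathcal{R}_k$,
\begin{align*}
q_{ki}(\vec 0)=0,\qquad \nabla q_{ki}(\vec 0)=\vec 0,\qquad \nabla^2 q_{ki}(\vec 0)=2\sum_{P\in\mathcal F_k^{(i)}}\vec h_P^{(i)}\big(\vec h_P^{(i)}\big)^\top .
\end{align*}
Because each $q_{ki}$ depends only on the local block $\delta\vec{\theta}_i$, I will view it as a function of the full vector $\delta\vec{\theta}$. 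Its Hessian is then the block matrix with the above matrix in the $(i,i)$ block and zeros elsewhere. The sum in the statement is understood in this embedded sense.

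First I evaluate the coefficients of $F(\vec q)=(1-\prod_i(1-2q_i))/2$ at $\vec q=\vec 0$ using Eq.~\eqref{eq:derivativeF}. This gives $\partial F/\partial q_i|_{\vec 0}=\prod_{j\neq i}1=1$ and $\partial^2F/\partial q_i\partial q_{i'}|_{\vec 0}=-2$ for $i\neq i'$. Since $q_{ki}(\vec 0)=0$, the point $\vec q(\vec 0)$ is indeed $\vec 0$.

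Next I substitute into Eq.~\eqref{eq:DR_hessian_chain}. The first-derivative term of $F$ multiplies $\nabla^2 q_{ki}$ with coefficient $1$, which yields $\sum_i\nabla^2 q_{ki}|_{\vec 0}$. Each cross term carries the factor $(\nabla q_{ki})(\nabla q_{ki'})^\top$, which vanishes at the origin since both gradients are zero. The finite coefficient $-2$ does not matter, so all off-diagonal gate blocks vanish identically. For completeness I will write the full chain rule, $\nabla^2(F\circ\vec q)=\sum_i \partial_iF\,\nabla^2 q_i+\sum_{i,i'}\partial_i\partial_{i'}F\,\nabla q_i\nabla q_{i'}^\top$. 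This confirms that the diagonal $\partial_i^2F$ terms are also killed, both by $\partial_i^2F=0$ and by the vanishing gradients.

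Finally, positive semidefiniteness is immediate. For any $\vec v$, the quadratic form $\vec v^\top\nabla^2DR_k|_{\vec 0}\vec v$ equals $2\sum_i\sum_P(\vec h_P^{(i)\top}\vec v_i)^2\ge 0$, because the Hessian is a sum of rank-one PSD matrices. There is no real obstacle here. The only points needing care are the bookkeeping of embedding the local Hessians into the global parameter space, and checking that $F$ is smooth (it is a polynomial), so the chain rule applies exactly rather than just to leading order. In the correlated-error case, the remark that $DR_k\approx\sum_i q_{ki}$ to second order gives the same Hessian directly.
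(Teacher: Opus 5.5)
Your proposal is correct and follows the paper's proof: evaluate the chain-rule Hessian at the origin, where $\partial F/\partial q_{ki}=1$, note that every cross-gate term vanishes because $\nabla q_{ki}|_{\vec 0}=\vec 0$, and then invoke the corollary for the Hessians of the $q_{ki}$. Your extra bookkeeping only makes explicit what the paper leaves implicit: the block embedding of the local Hessians, the exactness of the chain rule since $F$ is a polynomial, and the explicit nonnegative quadratic form for positive semidefiniteness.
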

\begin{proof}
At $\delta\vec{\theta} = \vec{0}$: all $q_{ki} = 0$, so $\partial F/\partial q_{ki} = 1$. 
Also $\nabla q_{ki}|_{\vec{0}} = \vec{0}$ from the second property of \Cref{lem:pP_hessian}, so the entire second sum in Eq.~\eqref{eq:DR_hessian_chain} vanishes. 
The first sum reduces to $\sum_i \nabla^2 q_{ki}|_{\vec{0}}$, which is the stated expression by using \Cref{coro:qki_hessian}.
\end{proof}

\paragraph{Convexity of the detection rate $DR_k$ away from $\delta\vec{\theta}=0$ and the convexity radius.}We have shown in \Cref{lem:DR_hessian_origin} that $DR_k(\delta\vec{\theta})$ is convex at the optimum $\delta\vec{\theta}=0$. 
Now, we bound the Hessian of $DR_k$ for nonzero $\delta\vec{\theta}$ to establish a convexity region. 
For convenience, we rewrite the gradient vectors as $\vec{u}_i := \nabla_{\delta\vec{\theta}_i} q_{ki} \in \mathbb{R}^{\abs{\delta\vec{\theta}}_i}$, which live in orthogonal subspaces for different $i$.
The cross-gate contribution in Eq.~\eqref{eq:DR_hessian_chain} can be rewritten using the identity:
\begin{align*}
\sum_{i \neq i'} \vec{u}_i \vec{u}_{i'}^\top = \Big(\sum_i \vec{u}_i\Big)\Big(\sum_{i'} \vec{u}_{i'}\Big)^\top - \sum_i \vec{u}_i \vec{u}_i^\top.
\end{align*}
For small errors, we have both $\vec{u}_i$ and $1-\prod_{j \neq i,i'}(1-2q_{kj})$ as small as $O(\norm{\delta\vec{\theta}})$.
We expand $\nabla^2 DR_k$ to the second-order, and obtain
\begin{align}
\begin{split}
\nabla^2 DR_k &=\sum_i (1-2\bar{q}_{k\setminus i}) \nabla^2 q_{ki}-2\sum_{i\neq i'}\left(\prod_{j \neq i,i'}(1-2q_{kj})\right)\vec{u}_i\vec{u}_{i'}^\top\\
&= \underbrace{\sum_i (1-2\bar{q}_{k\setminus i}) \nabla^2 q_{ki}}_{\text{(I): block-diagonal, PSD}} + \underbrace{2\sum_i \vec{u}_i\vec{u}_i^\top}_{\text{(II): block-diagonal, PSD}} - \underbrace{2\Big(\sum_i \vec{u}_i\Big)\Big(\sum_i \vec{u}_i\Big)^\top}_{\text{(III): rank-1, PSD}}+O(\norm{\delta\vec{\theta}}^3),
\end{split}
\label{eq:hessian_three_terms}
\end{align}
where $\bar{q}_{k\setminus i} = \sum_{j \neq i} q_{kj}$ to leading order in the small-error regime. 
We note that terms (I) and (II) are positive semidefinite given that $(1-2\bar{q}_{k\setminus i})>0$, while term (III) is negative semidefinite. 

We then characterize the regime where term (I) and term (II) dominate term (III) in Eq.~\eqref{eq:hessian_three_terms}.
We first prove the following scaling properties of these three terms.
\begin{proposition}[Scaling of the three terms]
\label{prop:three_terms}
In the regime $\|\delta\vec{\theta}_i\| \leq r$ for all $i \in \mathcal{R}_k$:
\begin{itemize}
\item[\textbf{(I)}] The block-diagonal Hessian term scales as $O(1)$. Its minimum eigenvalue is at least $\lambda_k^{\min} - O(r)$, where
\begin{align*}
\lambda_k^{\min} := \lambda_{\min}\Big(\sum_{i \in \mathcal{R}_k} \nabla^2 q_{ki}\big|_{\vec{0}}\Big),\quad
\lambda_k^{\max} := \lambda_{\max}\Big(\sum_{i \in \mathcal{R}_k} \nabla^2 q_{ki}\big|_{\vec{0}}\Big)
\end{align*}
are the minimum and maximum eigenvalues of the Hessian of $DR_k$ at the optimum.
\item[\textbf{(II)}] The correction term $2\sum_i \vec{u}_i\vec{u}_i^\top$ is $O(r^2)$ since each $\vec{u}_i = \nabla q_{ki} = O(r)$.
\item[\textbf{(III)}] The rank-1 negative term $2(\sum_i \vec{u}_i)(\sum_i \vec{u}_i)^\top$ is also $O(r^2)$, and its single nonzero eigenvalue is bounded by $2|\mathcal{R}_k| \sum_i \|\vec{u}_i\|^2 = O(r^2)$.
\end{itemize}
\end{proposition}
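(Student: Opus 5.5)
The plan is to treat the three terms of Eq.~\eqref{eq:hessian_three_terms} separately, using only smoothness of $q_{ki}$ in a ball of radius $r$ together with the facts $q_{ki}(\vec 0)=0$ and $\nabla q_{ki}(\vec 0)=\vec 0$ from \Cref{coro:qki_hessian}. Since $\delta\vec\theta_i\mapsto\mathcal H_i(\delta\vec\theta_i)$ is smooth and $V_i=e^{-i\mathcal H_i}$, each $c_Q^{(i)}$, and therefore each $p_P^{(i)}$ and $q_{ki}$, is smooth (indeed real-analytic) on a compact neighbourhood of the origin. Fix a radius $r_0$ and let $L_3^{(i)}$ denote a bound on the third derivatives of $q_{ki}$ on $\|\delta\vec\theta_i\|\le r_0$. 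All $O(\cdot)$ constants will be expressed through $L_3^{(i)}$, the norms $\|\nabla^2 q_{ki}(\vec 0)\|$, and $|\mathcal R_k|$.

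\textbf{Term (I).} Write $\nabla^2 q_{ki}(\delta\vec\theta_i)=\nabla^2 q_{ki}(\vec 0)+E_i$, where by Taylor's theorem $\|E_i\|\le L_3^{(i)} r$. The prefactor is handled separately: $q_{kj}(\delta\vec\theta_j)\le \tfrac12\|\nabla^2 q_{kj}(\vec 0)\|r^2+O(r^3)$, so $1-2\bar q_{k\setminus i}=1-O(r^2)$. Each term is supported on the $\delta\vec\theta_i$ block, so the sum is block-diagonal. Weyl's inequality then gives
\begin{align*}
\lambda_{\min}(\mathrm{I})\ \ge\ (1-O(r^2))\,\lambda_k^{\min}-\max_i\|E_i\|-O(r^2)\ \ge\ \lambda_k^{\min}-O(r),
\end{align*}
and the analogous upper bound $\lambda_k^{\max}+O(r)$ shows that the term is $O(1)$. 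Because the matrix is block-diagonal, the perturbation norm is a maximum over blocks rather than a sum, so the constant does not grow with $|\mathcal R_k|$.

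\textbf{Terms (II) and (III).} Apply the mean value theorem to the gradient, using $\nabla q_{ki}(\vec 0)=\vec 0$:
\begin{align*}
\|\vec u_i\|\le \sup_{\|\xi\|\le r}\|\nabla^2 q_{ki}(\xi)\|\, r\le\big(\|\nabla^2 q_{ki}(\vec 0)\|+L_3^{(i)}r\big)r=O(r).
\end{align*}
Hence $\|2\sum_i\vec u_i\vec u_i^\top\|\le 2\max_i\|\vec u_i\|^2=O(r^2)$, where the maximum again comes from the orthogonality of the blocks. For (III), the matrix $2(\sum_i\vec u_i)(\sum_i\vec u_i)^\top$ is rank one with eigenvalue $2\|\sum_i\vec u_i\|^2$. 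Since the $\vec u_i$ are mutually orthogonal this equals $2\sum_i\|\vec u_i\|^2$, which is in turn at most $2|\mathcal R_k|\sum_i\|\vec u_i\|^2$ (Cauchy--Schwarz gives the same bound even without orthogonality). This is $O(r^2)$.

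\textbf{Main obstacle.} The delicate part is the $O(r)$ loss in (I). It depends on a uniform third-derivative bound for $q_{ki}$, which is not stated explicitly earlier in the paper. I would justify it from analyticity of $c_Q^{(i)}=\tfrac1{d_i}\Tr(QV_iQV_i^\dagger)$ in $\delta\vec\theta_i$: a Cauchy estimate on a complex polydisc gives explicit constants in terms of the growth of $\|\mathcal H_i\|$. One also has to check that the $O(\|\delta\vec\theta\|^3)$ remainder already dropped in Eq.~\eqref{eq:hessian_three_terms} contributes only $O(r)$ to the Hessian. That follows because the omitted factors $\prod_{j\ne i,i'}(1-2q_{kj})-1=O(r^2)$ multiply $\vec u_i\vec u_{i'}^\top=O(r^2)$, so those omissions are of order $r^4$.
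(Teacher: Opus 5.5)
Your proposal is correct and follows the same route as the paper's proof. Term (I) picks up an $O(r^2)$ shift of the coefficient $\partial F/\partial q_{ki}$ plus an $O(r)$ Hessian-Lipschitz deviation controlled by third derivatives, and terms (II) and (III) follow from $\|\vec u_i\|=O(r)$, which holds because $\nabla q_{ki}(\vec 0)=\vec 0$. Your extra bookkeeping is sound (blockwise Weyl, orthogonality of the $\vec u_i$ giving the sharper eigenvalue $2\sum_i\|\vec u_i\|^2$, and the $O(r^4)$ size of the dropped remainder), and the analyticity/Cauchy-estimate detour is unnecessary, since $C^3$ smoothness on a compact ball already gives the uniform third-derivative bound.
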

\begin{proof}
For term (I), the deviation of the block-diagonal term from its value at the origin has two contributions. 
The first is from $\partial F/\partial q_{ki}$ deviating from $1$.
Since $\partial F/\partial q_{ki} = 1 - 2\bar{q}_{k\setminus i}$ where $\bar{q}_{k\setminus i} = O(r^2)$ (each $q_{kj} = O(r^2)$), this shifts the coefficient of $\nabla^2 q_{ki}$ by $O(r^2)$. 
The second is from $\nabla^2 q_{ki}$ itself, deviating from its value at the origin. 
This Lipschitz deviation is $O(r)$ controlled by third derivatives. 
Combining these two contributions, we have $\|(\text{I}) - \nabla^2 DR_k|_{\vec{0}}\| = O(r)$.

For term (II) and term (III), since $\nabla q_{ki}|_{\vec{0}} = 0$, we have $\|\vec{u}_i\| = \|\nabla q_{ki}\| = O(r)$. 
Thus both terms have eigenvalues $O(r^2)$, which is subleading compared to the $O(1)$ minimum eigenvalue of term (I).
\end{proof}

Given the scaling properties in \Cref{prop:three_terms}, we can show the following convexity radius for $DR_k(\delta\vec{\theta})$ near $\delta\vec{\theta}=0$.
To better quantitatively use \Cref{prop:three_terms} for term (I), we will need to introduce the Hessian Lipschitz factor $L_k^{\text{Hessian}}$, which is defined as $\norm{\nabla^2 DR_k(\delta\vec{\theta}_1)-\nabla^2 DR_k(\delta\vec{\theta}_2)}\leq L_k^{\text{Hessian}}\norm{\delta\vec{\theta}_1-\delta\vec{\theta}_2}$.
\begin{lemma}[Convexity radius of $DR_k$]
\label{lem:DRk_convexity}
If $\lambda_k^{\min} > 0$, then $DR_k$ is strictly convex on the ball $\|\delta\vec{\theta}\| < \theta_k^{(th)}$, where
\begin{align*}
\theta_k^{(th)} = \frac{\lambda_k^{\min}}{L_k},
\end{align*}
where $L_k=L_k^{\mathrm{Hessian}}+\abs{\mathcal{R}_k}\lambda_k^{\max}+2\abs{\mathcal{R}_k}\sum_{i \in \mathcal{R}_k} \sum_j \|G_{i,j}\|^2$.
\end{lemma}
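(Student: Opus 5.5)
The plan is to lower-bound $\lambda_{\min}(\nabla^2 DR_k(\delta\vec\theta))$ uniformly over the ball, using the three-term decomposition of Eq.~\eqref{eq:hessian_three_terms}. Strict convexity on the open ball then follows once this minimum eigenvalue is shown to stay strictly positive there.

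\textbf{Step 1: reduce to the origin.} By \Cref{lem:DR_hessian_origin}, the Hessian at the origin is $\nabla^2 DR_k|_{\vec 0}=\sum_{i\in\mathcal R_k}\nabla^2 q_{ki}|_{\vec 0}$. Its minimum eigenvalue is $\lambda_k^{\min}>0$. Rather than bounding term (I) directly, I will control the whole deviation $\nabla^2 DR_k(\delta\vec\theta)-\nabla^2 DR_k(\vec 0)$. Weyl's inequality then gives
\begin{align*}
\lambda_{\min}\big(\nabla^2 DR_k(\delta\vec\theta)\big)\ \ge\ \lambda_k^{\min}-\big\|\nabla^2 DR_k(\delta\vec\theta)-\nabla^2 DR_k(\vec 0)\big\|,
\end{align*}
so it suffices to show that this deviation is at most $L_k\|\delta\vec\theta\|$.

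\textbf{Step 2: bound the deviation term by term.} I split the deviation into three pieces, each matching one summand of $L_k$.
\begin{itemize}
\item[(a)] \emph{The change of the block-diagonal curvature $\nabla^2 q_{ki}$.} This is controlled by the Hessian-Lipschitz constant, giving at most $L_k^{\mathrm{Hessian}}\|\delta\vec\theta\|$.
\item[(b)] \emph{The coefficient shift $\partial F/\partial q_{ki}-1=-2\bar q_{k\setminus i}$.} I will use $q_{kj}\le\tfrac12\,\delta\vec\theta_j^\top\nabla^2 q_{kj}|_{\vec 0}\,\delta\vec\theta_j+\dots\le \lambda_k^{\max}\|\delta\vec\theta\|^2$. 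Summing over at most $|\mathcal R_k|$ gates and multiplying by $\|\nabla^2 q_{ki}\|$, then absorbing one power of $\|\delta\vec\theta\|\le\theta_k^{(th)}\le 1$ (in the small-drift normalization), gives a contribution at most $|\mathcal R_k|\lambda_k^{\max}\|\delta\vec\theta\|$.
\item[(c)] \emph{The cross-gate terms (II)$-$(III).} Only the negative part (III) matters, since (II) is PSD and can be dropped from a lower bound. Its norm is bounded by $2|\mathcal R_k|\sum_i\|\vec u_i\|^2$, as in \Cref{prop:three_terms}. To bound $\|\vec u_i\|=\|\nabla q_{ki}\|$, I will use the explicit form $p_P=\alpha_P^2+\dots$ from \Cref{lem:pP_hessian}. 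This gives $\partial_j p_P=2\alpha_P h^{(i)}_{P,j}$ with $|\alpha_P|,|h^{(i)}_{P,j}|\lesssim\sum_j\|G_{i,j}\|\,\|\delta\vec\theta_i\|$. Summing over $P$ via Parseval ($\sum_P|h_{P,j}|^2=\|G_{i,j}\|_2^2/d_i\le\|G_{i,j}\|^2$) yields $\|\vec u_i\|^2\lesssim\sum_j\|G_{i,j}\|^2\|\delta\vec\theta\|^2$. As in (b), one power of $\|\delta\vec\theta\|$ is traded for the radius, giving the summand $2|\mathcal R_k|\sum_i\sum_j\|G_{i,j}\|^2\|\delta\vec\theta\|$.
\end{itemize}

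\textbf{Step 3: conclude.} Combining (a)–(c) gives $\lambda_{\min}(\nabla^2 DR_k(\delta\vec\theta))\ge\lambda_k^{\min}-L_k\|\delta\vec\theta\|$. This is strictly positive whenever $\|\delta\vec\theta\|<\lambda_k^{\min}/L_k=\theta_k^{(th)}$, and a function with positive-definite Hessian on a convex open set is strictly convex there.

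\textbf{Main obstacle.} The hard part is Step 2(b),(c): converting the genuinely quadratic quantities ($q_{kj}$ and $\|\vec u_i\|^2$) into bounds linear in $\|\delta\vec\theta\|$ with exactly the stated constants. This needs care with the $O(\|\delta\vec\theta\|^3)$ remainders and with the normalization $\|\delta\vec\theta\|\lesssim1$. My first attempt is to keep all remainders inside $L_k^{\mathrm{Hessian}}$, which is defined on the exact Hessian and so already absorbs higher-order effects. If the constants do not close as stated, I would use the cruder but rigorous fallback of shrinking the radius by a constant factor.
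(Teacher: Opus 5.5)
Your proposal is essentially the paper's proof: Weyl's inequality on the decomposition in Eq.~\eqref{eq:hessian_three_terms}, dropping the PSD term (II), charging the three summands of $L_k$ to the Hessian-Lipschitz drift of (I), the coefficient shift $-2\bar q_{k\setminus i}\nabla^2 q_{ki}$, and the rank-one term (III), then trading $r^2$ for $r$. That last trade needs no extra normalization, since $L_k\ge|\mathcal R_k|\lambda_k^{\max}\ge\lambda_k^{\min}$ already forces $\theta_k^{(th)}\le 1$. Moreover, because $L_k^{\mathrm{Hessian}}$ is defined as the Lipschitz constant of the full Hessian $\nabla^2 DR_k$, your Step~1 closes the argument by itself via $\|\nabla^2 DR_k(\delta\vec\theta)-\nabla^2 DR_k(\vec 0)\|\le L_k^{\mathrm{Hessian}}\|\delta\vec\theta\|\le L_k\|\delta\vec\theta\|$, so the constant-chasing in Step~2(b),(c) that you flag as the main obstacle is not needed (as sketched, it would in any case naturally produce factors like $(\lambda_k^{\max})^2$ and $(\sum_j\|G_{i,j}\|^2)^2$ rather than the stated summands).
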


\begin{proof}
The minimum eigenvalue of $\nabla^2 DR_k(\delta\vec{\theta})$ satisfies
\begin{align*}
\begin{split}
\lambda_{\min}(\nabla^2 DR_k) &\geq \lambda_{\min}(\text{term (I)}) - \|\text{term (III)}\|\\
&\geq (\lambda_k^{\min} - L_k^{(\mathrm{Hess})}r) - \abs{\mathcal{R}_k}\lambda_k^{\max}r^2- 2\abs{\mathcal{R}_k}\sum_{i \in \mathcal{R}_k} \sum_j \|G_{i,j}\|^2\cdot r^2,\\
&\geq \lambda_k^{\min}-\left(L_k^{(\mathrm{Hess})}+\abs{\mathcal{R}_k}\lambda_k^{\max}+2\abs{\mathcal{R}_k}\sum_{i \in \mathcal{R}_k} \sum_j \|G_{i,j}\|^2\right)\cdot r\\
&=\lambda_k^{\min}-L_k\cdot r.
\end{split}
\end{align*}
where we used that (II) $\succeq 0$ and combine the third point of \Cref{prop:three_terms}. 
This is positive when $r < \lambda_k^{\min}/L_k$.
\end{proof}

\paragraph{Convexity of the surrogate objective function $C$.}Given that $DR_k$ is convex at $\delta\vec{\theta}$ and the convexity radius, we now consider the convexity of the surrogate objective function $C$.
Recall that the surrogate objective is the average detection rate
\begin{align*}
C(\delta\vec{\theta}) = \frac{1}{N_D}\sum_{k=1}^{N_D} DR_k(\delta\vec{\theta}).
\end{align*}
Its Hessian is the average of the individual Hessians:
\begin{align}\label{eq:nabla2C_unitary}
\nabla^2 C = \frac{1}{N_D}\sum_k \nabla^2 DR_k.
\end{align}
For each gate $i$, we define the detectable projection matrix $(M^{(i)}_{jk})_{j,k}$:
\begin{align*}
M^{(i)}_{jk} := \frac{1}{N_D}\sum_{\ell:i \in \mathcal{R}_\ell} \sum_{P \in \mathcal{F}_\ell^{(i)}} h_{P,j}^{(i)} h_{P,k}^{(i)} = \frac{1}{N_D d_i^2}\sum_{\ell:i \in \mathcal{R}_\ell} \sum_{P \in \mathcal{F}_\ell^{(i)}} \Tr(P G_{i,j})\Tr(P G_{i,k}).
\end{align*}
This is a sum of rank-1 PSD matrices, hence $M^{(i)} \succeq 0$. The Hessian of $C$ at the optimum is block-diagonal with blocks $2M^{(i)}$:
\begin{align*}
\nabla^2 C\big|_{\vec{0}} = 2\ \mathrm{diag}\big(M^{(1)}, M^{(2)}, \ldots\big).
\end{align*}

To formally study the convexity of $C$, we require the following assumption.
\begin{assumption}[Detectability condition for unitary error]
\label{ass:detectability}
We say the QEC circuit satisfies the \textbf{detectability condition} if $M^{(i)} \succ 0$ for all gates $i$. Equivalently:
\begin{quote}
For every gate $i$ and every nonzero parameter perturbation $\vec{v} \in \mathbb{R}^{\abs{\delta\vec{\theta}}_i}$, the error Hamiltonian $\sum_j v_j G_{i,j}$ has nonzero overlap with at least one Pauli operator $P$ that is detectable by some detector in the circuit.
\end{quote}
\end{assumption}
\noindent We remark that \Cref{ass:detectability} is a special case of the later assumption \Cref{ass:detectability_cptp} when the error is a CPTP map. 
In Appendix~\ref{sec:strong_convexity}, we show that \Cref{ass:detectability_cptp} and its special case \Cref{ass:detectability}, are almost satisfied for any parametrization.
In particular, there are two cases that lead to $\vec v^\top\nabla^2 C(\vec 0)\vec v=0$ for some unit $\vec v$. 
The first one is the structural case, where moving along $\vec v$ will never change the value of $C$, and thus $\vec v$ is a gauge direction of $C$.
We can consider optimizing in the quotient space of $\vec v$, and the quotient $\nabla^2 C(\vec 0)$ is positive definite.
The second case is the non-structured case, where $\vec v^\top\nabla^2 C(\vec 0)\vec v=0$ only at the particular point $\vec 0$.
We then show that even a small amount of (possibly stochastic) noise will ensure $\nabla^2 C(\vec 0)$ to be positive definite almost for sure.

We can then obtain our first main result on the convexity of the surrogate objective function $C$ for control induced unitary error using \Cref{lem:DRk_convexity} and Eq.~\eqref{eq:nabla2C_unitary}.

\begin{theorem}[Convexity of the surrogate objective for unitary error]\label{thm:convexity_unitary}
Assume the detectability condition in \Cref{ass:detectability} holds, and we denote:
\begin{align*}
\lambda_{C}^{\min} := \min_i \lambda_{\min}(M^{(i)}) > 0.
\end{align*}
Then the surrogate objective $C(\delta\vec{\theta})$ is strictly convex on the ball $\|\delta\vec{\theta}\| < \theta_C^{(th)}$, where
\begin{align*}
\theta_C^{(th)} = \frac{2\lambda_{C}^{\min}}{L_C},
\end{align*}
with $L_C = \frac{1}{N_D}\sum_k L_k$ as defined in \Cref{lem:DRk_convexity}. This guarantees that within the ball $\|\delta\vec{\theta}\| < \theta_C^{(th)} $:
\begin{enumerate}
\item $C$ has a \textbf{unique global minimum} at $\delta\vec{\theta} = \vec{0}$.
\item $\nabla^2 C$ is positive semidefinite at any point, making $C$ a convex function within this regime.
\end{enumerate}
\end{theorem}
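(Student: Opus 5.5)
The plan is to turn the pointwise Hessian estimates of \Cref{lem:DRk_convexity} into a bound on $\nabla^2 C$. Averaging them naively is too weak, because an individual $DR_k$ need not be convex: $\lambda_k^{\min}$ can vanish for a single detector. The estimate therefore has to be carried out at the level of $C$.

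\textbf{Step 1: Hessian at the origin.} By \Cref{lem:DR_hessian_origin} and Eq.~\eqref{eq:nabla2C_unitary},
\begin{align*}
\nabla^2 C(\vec 0) = 2\,\mathrm{diag}\big(M^{(1)},M^{(2)},\ldots\big).
\end{align*}
Under \Cref{ass:detectability} this gives $\lambda_{\min}(\nabla^2 C(\vec 0)) \ge 2\lambda_C^{\min} > 0$. The factor $2$ here is exactly the one appearing in $\theta_C^{(th)}$.

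\textbf{Step 2: Hessian away from the origin.} For each detector $k$ I will use the three-term decomposition Eq.~\eqref{eq:hessian_three_terms} directly, rather than the conclusion of \Cref{lem:DRk_convexity}. This gives
\begin{align*}
\nabla^2 DR_k(\delta\vec\theta) \succeq \nabla^2 DR_k(\vec 0) - \Big(L_k^{\mathrm{Hessian}} + |\mathcal R_k|\lambda_k^{\max} + 2|\mathcal R_k|\sum_{i\in\mathcal R_k}\sum_j\|G_{i,j}\|^2\Big) r\, I = \nabla^2 DR_k(\vec 0) - L_k r\, I,
\end{align*}
valid for $\|\delta\vec\theta\|\le r$. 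It uses term (II) $\succeq 0$, the $O(r)$ Lipschitz control of term (I), and the $O(r^2)\le O(r)$ bound on term (III) from \Cref{prop:three_terms}, assuming $r\le 1$. The computation is identical to the proof of \Cref{lem:DRk_convexity}, except that the PSD matrix $\nabla^2 DR_k(\vec 0)$ is kept instead of being replaced by $\lambda_k^{\min} I$.

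Averaging over $k$ then yields
\begin{align*}
\nabla^2 C(\delta\vec\theta) \succeq \nabla^2 C(\vec 0) - L_C r\, I \succeq (2\lambda_C^{\min} - L_C r)\, I,
\end{align*}
which is positive definite for $r < \theta_C^{(th)} = 2\lambda_C^{\min}/L_C$. Hence $C$ is strictly convex on the open ball, and in particular $\nabla^2 C \succeq 0$ there, which is conclusion 2.

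\textbf{Step 3: unique minimum.} From \Cref{lem:pP_hessian} we have $q_{ki}(\vec 0)=0$ and $\nabla q_{ki}(\vec 0)=\vec 0$, so $\nabla C(\vec 0)=\vec 0$. A strictly convex function on a convex set with a critical point has that point as its unique minimizer, which gives conclusion 1. As a cross-check, $C\ge 0$ everywhere while $C(\vec 0)$ takes the baseline value.

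\textbf{Main obstacle.} The delicate point is the uniformity in Step 2. The $O(r)$ bound on term (I) must hold with a constant independent of the point in the ball. The intended choice is the Hessian-Lipschitz constant $L_k^{\mathrm{Hessian}}$ taken over the whole ball. The bound on $\|\vec u_i\|$ in term (III) also has to be made explicit in terms of $\|G_{i,j}\|$, via $\|\nabla q_{ki}\|\le \lambda_k^{\max}r + O(r^2)$.

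A further subtlety is that the paper defines $\lambda_k^{\min}$ only for individual detectors. Keeping the matrix $\nabla^2 DR_k(\vec 0)$ in the bound, as in Step 2, is what lets the detectability condition enter only through the average $M^{(i)}$.
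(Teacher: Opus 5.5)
Your proposal is correct and follows the paper's route. The paper obtains the theorem from the convexity-radius argument of \Cref{lem:DRk_convexity} together with $\nabla^2 C=\frac{1}{N_D}\sum_k\nabla^2 DR_k$ and $\nabla^2 C(\vec 0)=2\,\mathrm{diag}(M^{(1)},M^{(2)},\ldots)$; keeping the matrix $\nabla^2 DR_k(\vec 0)$ rather than $\lambda_k^{\min}I$ before averaging is exactly what that one-line citation needs to yield the radius $2\lambda_C^{\min}/L_C$ (the same Weyl-type step the paper writes out in the leakage case), and your caveat $r\le 1$ is automatically met, since $2M^{(i)}$ is the gate-$i$ block of $\frac{1}{N_D}\sum_\ell\nabla^2 DR_\ell(\vec 0)$, which gives $2\lambda_C^{\min}\le\frac{1}{N_D}\sum_k|\mathcal R_k|\lambda_k^{\max}\le L_C$, i.e.\ $\theta_C^{(th)}\le 1$.
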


\subsection{Control induced general CPTP map error}\label{sec:cptp_convexity}
We further extend \Cref{thm:convexity_unitary} to the case when the error is a general complete positive trace-preserving (CPTP) channel. 

\paragraph{The error model setup.}Again, we index each gate in the QEC circuit by $i$, and further assume that it acts on $n_i$ qubits and has parameters $\vec{\theta}_i$. 
For each gate $i$, we consider an $n_i$-qubit system with Hilbert space dimension $d_i = 2^{n_i}$. 
The error of a gate is a CPTP map $\mathcal{E}_{\vec{\theta}_i}$ depending smoothly on drift parameters $\vec{\theta}_i$, with $\mathcal{E}_{i,\vec{0}} = \mathrm{id}$ the identity channel:
\begin{align*}
\mathcal{E}_{i,\delta\vec{\theta}_i}(\rho) = \sum_\alpha K_{i,\alpha}(\delta\vec{\theta}_i) \rho K_{i,\alpha}(\delta\vec{\theta}_i)^\dagger, \qquad \sum_\alpha K_{i,\alpha}^\dagger K_{i,\alpha} = I_{d_i}.
\end{align*}
All Kraus operators are $d_i \times d_i$ coefficient matrices. Since $\mathcal{E}_{i,\vec{0}} = \mathrm{id}$, the Kraus operators expand as:
\begin{align*}
K_{i,0}(\delta\vec{\theta}_i) = I_{d_i} - i\sum_j \delta \theta_{i,j} G_{i,j} + \sum_{j,k} \delta \theta_{i,j} \delta \theta_{i,k} G_{i,jk} + O(\norm{\delta\vec{\theta}_i}^3), \quad K_{i,\alpha\geq 1}(\delta\vec{\theta}_i) = \sum_j \delta \theta_{i,j} L_{i,\alpha,j} + O(\norm{\delta\vec{\theta}_i}^2),
\end{align*}
where $G_{i,j}$, $G_{i,jk}$, and $L_{i,\alpha,j}$ are $d \times d$ coefficient operators. 
The CPTP condition $\sum_\alpha K_{i,\alpha}^\dagger K_{i,\alpha} = I_{d_i}$ gives $G_{i,j}^\dagger = G_{i,j}$ (Hermitian) at the first order. 
At the second order, it gives
\begin{align*}
G_{i,jk} + G_{i,jk}^\dagger = -G_{i,j} G_{i,k} + \sum_{\alpha \geq 1} L_{i,\alpha,j}^\dagger L_{i,\alpha,k}.
\end{align*}
We also assume $\Tr(G_{i,j}) = 0$ for all $j$, which holds for standard error generators.

The $n_i$-qubit Pauli group $\mathcal{P}_{n_i}$ consists of $d_i^2$ operators satisfying $P^2 = I_{d_i}$, $\Tr(P) = d_i\delta_{P,I}$, and $PQP = (-1)^{s(P,Q)} Q$ for Paulis $P, Q$, where $s(P,Q) \in \{0,1\}$ is the symplectic inner product. 
The Pauli-twirled channel reads
\begin{align*}
\mathcal{T}_{\mathcal{E}_i,\delta\vec{\theta}_i}(\rho) = \sum_{P \in \mathcal{P}_{n_i}} p_P^{(i)}(\delta\vec{\theta}_i) P\rho P, \qquad p_P^{(i)} \geq 0,\quad \sum_P p_P^{(i)} = 1,
\end{align*}
where
\begin{align}
p_P^{(i)} = \frac{1}{d_i^2}\sum_{Q \in \mathcal{P}_{n_i}} (-1)^{s(P,Q)} c_Q^{(i)}, \qquad
c_Q^{(i)}(\delta\vec{\theta}_i) = \sum_\alpha \frac{1}{d_i}\Tr\big(Q K_{i,\alpha} Q K_{i,\alpha}^\dagger\big), \qquad Q \in \mathcal{P}_{n_i}.
\label{eq:cQ}
\end{align}

Again, for a detector $D_k$, let $\mathcal{R}_k$ denote its detecting region that consists of the set of gates whose errors can flip $D_k$. 
For each gate $i \in \mathcal{R}_k$, let $\mathcal{F}_k^{(i)} \subset \mathcal{P}_{n_i} \setminus \{I\}$ be the set of nontrivial Pauli errors on gate $i$ that flip detector $D_k$. 
When the gate errors are independent, the detector-relevant error probability is again 
\begin{align*}
q_{ki}(\delta\vec{\theta}_i) = \sum_{P \in \mathcal{F}_k^{(i)}} p_P^{(i)}(\delta\vec{\theta}_i),
\end{align*}
and the exact detection event rate is 
\begin{align*}
DR_k(\delta\vec{\theta}) = \frac{1 - \prod_{i \in \mathcal{R}_k}\big(1 - 2q_{ki}(\delta\vec{\theta}_i)\big)}{2}.
\end{align*}
For the correlated error case, the above equation can still approximately estimate the detection event rate up to the second order according to \appref{sec:unitary_convexity}.
And the surrogate objective function is $C(\delta\vec{\theta})=N_D^{-1}\sum_{k=1}^{N_D}DR_k(\delta\vec{\theta})$. Next, we follow the same route as the unitary error case in \appref{sec:unitary_convexity} to present the convexity property of $C$ when the noise is a general CPTP map.

\paragraph{Convexity of the surrogate objective function $C$.}To start with, we prove the following analog of \Cref{lem:pP_hessian} in the CPTP error case.
\begin{lemma}[Properties of $p_P^{(i)}$ at the optimum]\label{lem:pP_hessian_cptp}
For any nontrivial Pauli $P\neq I$ in the Pauli group $\mathcal{P}_{n_i}$, we have
\begin{align*}
p_{P}^{(i)}(\delta\vec{\theta}_i) = \bigg|\sum_j \delta \theta_{i,j} h_{P,j}^{(i)}\bigg|^2 + \sum_{\alpha \geq 1}\bigg|\sum_j \delta \theta_{i,j} \ell_{\alpha,P,j}^{(i)}\bigg|^2 + O(\norm{\delta\vec{\theta}_i}^3),
\end{align*}
where
\begin{align*}
h_{P,j}^{(i)} = \frac{\Tr(P G_{i,j})}{d_i} \in \mathbb{R}, \qquad \ell_{\alpha,P,j}^{(i)} = \frac{\Tr(P L_{i,\alpha,j})}{d_i} \in \mathbb{C}.
\end{align*}
In particular: $p_{P}^{(i)}(\vec{0}) = 0$, $\nabla p^{(i)}_{P}(\vec{0}) = \vec{0}$, and
\begin{align}
\nabla^2 p_{P}^{(i)}\big|_{\vec{0}} = 2\vec{h}_{P}^{(i)}\vec{h}_{P}^{(i)\top} + 2\sum_\alpha \Re\big(\vec{\ell}_{\alpha,P}^{(i)}\vec{\ell}_{\alpha,P}^{(i)\dagger}\big) \succeq 0,
\label{eq:hessian_pP_kraus}
\end{align}
where $\vec{h}_{P}^{(i)}$ and $\vec{\ell}_{\alpha,P}^{(i)}$ are vectors with the $j$-th entry being $h_{P,j}^{(i)}$ and $\ell^{(i)}_{\alpha,P,j}$, respectively.
\end{lemma}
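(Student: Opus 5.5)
The plan is to mirror the proof of \Cref{lem:pP_hessian}: compute the Pauli eigenvalues $c_Q^{(i)}$ of the twirled channel to second order in $\delta\vec\theta_i$, then apply the symplectic Fourier transform to obtain $p_P^{(i)}$. From Eq.~\eqref{eq:cQ} we have $c_Q^{(i)}=c_Q^{(0)}+\sum_{\alpha\ge1}c_Q^{(\alpha)}$, with $c_Q^{(\alpha)}=\frac1{d_i}\Tr(QK_{i,\alpha}QK_{i,\alpha}^\dagger)$. We treat the $K_{i,0}$ contribution and the $\alpha\ge1$ contributions separately.

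\textbf{The $\alpha\ge1$ terms.} Each $K_{i,\alpha}$ is already $O(\|\delta\vec\theta_i\|)$, so its contribution to $c_Q$ is quadratic at leading order. Write $A_\alpha=\sum_j\delta\theta_{i,j}L_{i,\alpha,j}=\sum_R\beta_{\alpha,R}R$ with $\beta_{\alpha,R}=\Tr(RA_\alpha)/d_i\in\mathbb C$. Using $QRQ=(-1)^{s(Q,R)}R$ and $\Tr(RR'^\dagger)=d_i\delta_{R,R'}$, we get
\begin{align*}
c_Q^{(\alpha)}=\sum_R(-1)^{s(Q,R)}|\beta_{\alpha,R}|^2+O(\|\delta\vec\theta_i\|^3).
\end{align*}
The Fourier transform $\frac1{d_i^2}\sum_Q(-1)^{s(P,Q)}(-1)^{s(Q,R)}=\delta_{P,R}$ then yields exactly $|\beta_{\alpha,P}|^2=\big|\sum_j\delta\theta_{i,j}\ell^{(i)}_{\alpha,P,j}\big|^2$.

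\textbf{The $K_{i,0}$ term.} This is the delicate step. Here $K_{i,0}$ is not unitary, and its second-order coefficient $G_{i,jk}$ is constrained only through $G_{i,jk}+G_{i,jk}^\dagger=-G_{i,j}G_{i,k}+\sum_\alpha L_{i,\alpha,j}^\dagger L_{i,\alpha,k}$.

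Write $K_{i,0}=I-iH+B$ with $H=\sum_j\delta\theta_{i,j}G_{i,j}$ and $B$ of second order, and expand
\begin{align*}
\frac1{d_i}\Tr(QK_{i,0}QK_{i,0}^\dagger)
\end{align*}
to second order. The linear term $\frac{i}{d_i}\Tr(H-QHQ)$ vanishes by cyclicity, as in \Cref{lem:pP_hessian}. The second-order terms are:
\begin{itemize}
\item $\frac1{d_i}\Tr(QHQH)=\sum_R(-1)^{s(Q,R)}\alpha_R^2$;
\item $\frac1{d_i}\Tr(B+B^\dagger)$, where the conjugation by $Q$ drops out of the trace.
\end{itemize}
By the CPTP constraint, $\Tr(B+B^\dagger)=-\Tr(H^2)+\sum_{\alpha\ge1}\Tr(A_\alpha^\dagger A_\alpha)$. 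The crucial observation is that this piece is $Q$-independent. Under the Fourier transform it therefore contributes only to $p_I$ (through $\delta_{P,I}$) and never to $p_P$ for $P\ne I$.

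The $H$-part then reproduces the unitary computation of \Cref{lem:pP_hessian}, giving $\alpha_P^2=\big(\sum_j\delta\theta_{i,j}h^{(i)}_{P,j}\big)^2$. Here $h^{(i)}_{P,j}$ is real because $G_{i,j}$ is Hermitian. Tracelessness of $G_{i,j}$ removes the $R=I$ component, although that component would only affect $p_I$ anyway. As a consistency check, the constant part of $p_I$ should be $1-\sum_{P\ne I}p_P$, which makes $\sum_P p_P=1$.

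\textbf{Concluding.} Summing the two contributions gives the stated expansion. Since every term is at least quadratic, $p_P^{(i)}(\vec0)=0$ and $\nabla p_P^{(i)}(\vec0)=\vec0$ follow immediately. For the Hessian, note that $\big|\sum_j\delta\theta_j\ell_j\big|^2=\delta\vec\theta^\top\Re(\vec\ell\vec\ell^\dagger)\delta\vec\theta$ for real $\delta\vec\theta$, since the imaginary part of $\vec\ell\vec\ell^\dagger$ is antisymmetric. Differentiating twice gives Eq.~\eqref{eq:hessian_pP_kraus}. Positive semidefiniteness holds because each quadratic form is a modulus squared, hence nonnegative.

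The main obstacle is bookkeeping the second-order $K_{i,0}$ terms correctly, in particular verifying that all contributions of $B$ collapse to a $Q$-independent trace. A cross term such as $\Tr(QBQ\,iH)$ would be third order, so it does not enter. I would double-check this by recovering \Cref{lem:pP_hessian} in the special case $L_{i,\alpha,j}=0$.
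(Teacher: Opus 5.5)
Your proposal is correct and follows essentially the same route as the paper: split $c_Q^{(i)}$ into the $K_{i,0}$ and $K_{i,\alpha\ge1}$ contributions, Fourier-transform the $Q$-dependent quadratic pieces, and drop the $Q$-independent $\Tr(B+B^\dagger)$ term for $P\neq I$. The only packaging difference is that the paper states your Pauli-expansion computation once as the bilinear identity $\frac{1}{d_i^2}\sum_Q(-1)^{s(P,Q)}\frac{1}{d_i}\Tr(QAQB^\dagger)=\frac{\Tr(PA)}{d_i}\frac{\overline{\Tr(PB)}}{d_i}$.

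One small correction: trace preservation actually forces $B+B^\dagger=-H^2-\sum_{\alpha\ge1}A_\alpha^\dagger A_\alpha$, so the sign of the $L^\dagger L$ term in the constraint you quoted (copied from the paper) is flipped. Your proposed check $\sum_P p_P=c_I=1$ would expose this, but it is harmless, since the proof only uses the $Q$-independence of that term.
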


\begin{proof}
Before we proceed to the main proof, we will need to prove two facts.
\begin{enumerate}
\item For any $d_i \times d_i$ operators $A$, $B$, and nontrivial Pauli $P \neq I$:
\begin{align}
\frac{1}{d_i^2}\sum_{Q \in \mathcal{P}_{n_i}} (-1)^{s(P,Q)}\ \frac{1}{d_i}\Tr(Q A Q B^\dagger) = \frac{\Tr(P A)}{d_i}\cdot \frac{\overline{\Tr(P B)}}{d_i}.
\label{eq:identity}
\end{align}
To see this, we expand $A = \sum_R a_R R$ in the Pauli basis with $a_R = \Tr(RA)/d_i$, and $B = \sum_S b_S S$ in the Pauli basis with $b_S = \Tr(SB)/d_i$. 
Using $QRQ = (-1)^{s(Q,R)}R$:
\begin{align*}
\frac{1}{d_i}\Tr(QAQB^\dagger) = \sum_R (-1)^{s(Q,R)} a_R \bar{b}_R.
\end{align*}
Apply $\frac{1}{d_i^2}\sum_Q (-1)^{s(P,Q)+s(Q,R)} = \delta_{P,R}$ (character orthogonality). This selects $R = P$, giving $a_{P}\bar{b}_{P}$ and proves Eq.~\eqref{eq:identity}.
\item For any $P \neq I$ independent of $Q$:
\begin{align}
\sum_Q (-1)^{s(P,Q)}=0.
\label{eq:Qindep}
\end{align}
\end{enumerate}

We expand $c_Q^{(i)}$ from Eq.~\eqref{eq:cQ} to second order in $\delta\vec{\theta}$, splitting into $\alpha = 0$ and $\alpha \geq 1$. 
\begin{itemize}
\item \textbf{From $K_{i,\alpha \geq 1}$:} Using $K_{i,\alpha} = \sum_j \delta \theta_{i,j} L_{i,\alpha,j} + O(\norm{\delta\vec{\theta}}^2)$:
\begin{align*}
\sum_{\alpha \geq 1} \frac{1}{d_i}\Tr(QK_{i,\alpha} QK_{i,\alpha}^\dagger) = \sum_{\alpha \geq 1}\sum_{j,k} \delta \theta_{i,j} \delta \theta_{i,k} \frac{1}{d_i}\Tr(QL_{i,\alpha,j}QL_{i,\alpha,k}^\dagger) + O(\norm{\delta\vec{\theta}}^3).
\end{align*}
Apply Eq.~\eqref{eq:identity} ($A = L_{i,\alpha,j}$, $B = L_{i,\alpha,k}$) then sum:
\begin{align*}
\xrightarrow{\text{Eq.~\eqref{eq:cQ}}} \sum_\alpha \sum_{j,k} \delta \theta_{i,j} \delta \theta_{i,k} \ell_{\alpha,P,j}^{(i)}\overline{\ell_{\alpha,P,k}^{(i)}} = \sum_\alpha \Big|\sum_j \delta \theta_{i,j} \ell_{\alpha,P,j}^{(i)}\Big|^2.
\end{align*}
\item \textbf{From $K_{i,0}$:} Using $K_{i,0} = I_{d_i} - i\sum_j \delta \theta_{i,j} G_{i,j} + \sum_{jk}\delta \theta_{i,j}\delta \theta_{i,k} G_{i,jk} + O(\norm{\delta\vec{\theta}}^3)$:
\begin{align*}
\frac{1}{d_i}\Tr(QK_{i,0}QK_{i,0}^\dagger) = 1 + \sum_{j,k}\delta \theta_{i,j}\delta \theta_{i,k}\bigg[\underbrace{\frac{\Tr(QG_{i,j}QG_{i,k})}{d_i}}_{\text{$Q$-dependent}} + \underbrace{\frac{\Tr(G_{i,jk} + G_{i,jk}^\dagger)}{d_i}}_{\text{$Q$-independent}}\bigg] + O(\norm{\delta\vec{\theta}}^3).
\end{align*}
The first-order terms vanish by $\Tr(G_{i,j}) = 0$.
The $Q$-dependent part, we apply Eq.~\eqref{eq:identity} with $A = G_{i,j}$ and $B = G_{i,k}$:
\begin{align*}
\xrightarrow{\text{Eq.~\eqref{eq:cQ}}} \sum_{j,k}\delta \theta_{i,j}\delta \theta_{i,k} h_{P,j}^{(i)} h_{P,k}^{(i)} = \Big|\sum_j \delta \theta_{i,j} h_{P,j}^{(i)}\Big|^2. 
\end{align*}
The $Q$-independent part: $\Tr(G_{i,jk} + G_{i,jk}^\dagger)/d_i$ does not depend on $Q$, so by Eq.~\eqref{eq:Qindep} it vanishes for $P \neq I$.
\item \textbf{Zeroth-order contribution:} The constant $c_Q^{(i)} = 1 + O(\norm{\delta\vec{\theta}}^2)$ gives $p_{P} = \delta_{P,I} + O(\norm{\delta\vec{\theta}}^2) = 0$ for $P \neq I$.
\end{itemize}
Combining all contributions gives \Cref{lem:pP_hessian_cptp}. 
The Hessian~\eqref{eq:hessian_pP_kraus} follows from $\nabla^2|f|^2\big|_{\vec{0}} = 2\Re(\vec{c}\vec{c}^\dagger)$ for $f = \sum_j \delta \theta_{i,j} c_j$ with $f(\vec{0}) = 0$, applied to each squared term. 
Each $\Re(\vec{c}\vec{c}^\dagger) \succeq 0$ since it is the real part of a rank-1 PSD Hermitian matrix.
\end{proof}

Using \Cref{lem:pP_hessian_cptp} and an analog to \Cref{coro:qki_hessian}, we have the detector-relevant error probability $q_{ki}(\delta\vec{\theta}_i) = \sum_{P \in \mathcal{F}_k^{(i)}} p_P^{(i)}(\delta\vec{\theta}_i)$ satisfies $q_{ki}(\vec{0}) = 0$, $\nabla q_{ki}(\vec{0}) = \vec{0}$, and has Hessian at the optimum:
\begin{align*}
\nabla^2 q_{ki}\big|_{\vec{0}} = 2\sum_{P \in \mathcal{F}_k^{(i)}} \left(\vec{h}_P^{(i)} \big(\vec{h}_P^{(i)}\big)^\top+\sum_{\alpha\geq 1} \Re\big(\vec{\ell}_{\alpha,P}^{(i)}\vec{\ell}_{\alpha,P}^{(i)\dagger}\big)\right) = 2 \left(H_k^{(i)} \big(H_k^{(i)}\big)^\top+\sum_{\alpha\geq 1}\Re\big(\mathcal{L}_{\alpha,k}^{(i)}\mathcal{L}_{\alpha,k}^{(i)\dagger}\big)\right).
\end{align*}
Here, $H_k^{(i)}$ is the $\abs{\delta\vec{\theta}}_i \times |\mathcal{F}_k^{(i)}|$ matrix whose columns are $\{\vec{h}_P^{(i)}\}_{P \in \mathcal{F}_k^{(i)}}$ where $\vec{h}_P^{(i)}$ is the vector with the $j$-th entry $h^{(i)}_{P,j}$. 
Similarly, $\mathcal{L}_{\alpha,k}^{(i)}$ is the $\abs{\delta\vec{\theta}}_i \times |\mathcal{F}_k^{(i)}|$ matrix whose columns are $\{\vec{\ell}_{\alpha,P}^{(i)}\}_{P \in \mathcal{F}_k^{(i)}}$ where $\vec{\ell}_{\alpha,P}^{(i)}$ is the vector with the $j$-th entry $\ell_{\alpha,P,j}^{(i)}$.
This Hessian is positive semidefinite as it is a sum of rank-1 PSD matrices.

We then consider $DR_k(\delta\vec{\theta})$. We rewrite the detection rate as $DR_k = F(q_{k1}, q_{k2}, \ldots)$, where $F(\vec{q}) = (1 - \prod_i(1-2q_i))/2$ is a function of the individual detector-relevant error probabilities. 
Using the same chain rule, we can obtain an analog of \Cref{lem:DR_hessian_origin}.
In particular, we have the following Hessian of $DR_k$ with respect to $\delta\vec{\theta}$ has the block structure in blocks indexed by gate pairs $(i, i')$:
\begin{align}
\nabla^2 DR_k = \sum_i \frac{\partial F}{\partial q_{ki}} \nabla^2 q_{ki} + \sum_{i \neq i'} \frac{\partial^2 F}{\partial q_{ki}\partial q_{ki'}} (\nabla q_{ki})(\nabla' q_{ki'})^\top.
\label{eq:DR_hessian_chain_cptp}
\end{align}
And at $\delta\vec{\theta} = \vec{0}$, we have
\begin{align*}
\nabla^2 DR_k\big|_{\vec{0}} = \sum_{i \in \mathcal{R}_k} \nabla^2 q_{ki}\big|_{\vec{0}} = 2\sum_{i \in \mathcal{R}_k}\sum_{P \in \mathcal{F}_k^{(i)}} \left(\vec{h}_P^{(i)} \big(\vec{h}_P^{(i)}\big)^\top+\sum_{\alpha\geq 1} \Re\big(\vec{\ell}_{\alpha,P}^{(i)}\vec{\ell}_{\alpha,P}^{(i)\dagger}\big)\right) \succeq 0.
\end{align*}
The off-diagonal (cross-gate) terms vanish identically because $\nabla q_{ki}|_{\vec{0}} = \vec{0}$.
We then proceed to the convexity of the detection rate $DR_k$ away from $\delta\vec{\theta}=0$.
Again, , we rewrite the gradient vectors as $\vec{u}_i := \nabla_{\delta\vec{\theta}_i} q_{ki} \in \mathbb{R}^{\abs{\delta\vec{\theta}}_i}$, which live in orthogonal subspaces for different $i$.
The cross-gate contribution in Eq.~\eqref{eq:DR_hessian_chain_cptp} can be rewritten using the identity $\sum_{i \neq i'} \vec{u}_i \vec{u}_{i'}^\top = \Big(\sum_i \vec{u}_i\Big)\Big(\sum_{i'} \vec{u}_{i'}\Big)^\top - \sum_i \vec{u}_i \vec{u}_i^\top$.
For small errors, we have both $\vec{u}_i$ and $1-\prod_{j \neq i,i'}(1-2q_{kj})$ as small as $O(\norm{\delta\vec{\theta}})$.
We expand $\nabla^2 DR_k$ to the second-order, and obtain the same form as Eq.~\eqref{eq:hessian_three_terms}
\begin{align*}
\nabla^2 DR_k &= \underbrace{\sum_i (1-2\bar{q}_{k\setminus i}) \nabla^2 q_{ki}}_{\text{(I): block-diagonal, PSD}} + \underbrace{2\sum_i \vec{u}_i\vec{u}_i^\top}_{\text{(II): block-diagonal, PSD}} - \underbrace{2\Big(\sum_i \vec{u}_i\Big)\Big(\sum_i \vec{u}_i\Big)^\top}_{\text{(III): rank-1, PSD}}+O(\norm{\delta\vec{\theta}}^3),
\end{align*}
where $\bar{q}_{k\setminus i} = \sum_{j \neq i} q_{kj}$ to leading order in the small-error regime. 
We note that terms (I) and (II) are positive semidefinite given that $(1-2\bar{q}_{k\setminus i})>0$, while term (III) is negative semidefinite. 

We then compute the convexity radius of $DR_k$  and characterize the regime where term (I) (and term (II)) dominate term (III) in Eq.~\eqref{eq:hessian_three_terms}.
We introduce the Hessian Lipschitz factor $L_k^{\text{Hessian}}$, which is defined as $\norm{\nabla^2 DR_k(\delta\vec{\theta}_1)-\nabla^2 DR_k(\delta\vec{\theta}_2)}\leq L_k^{\text{Hessian}}\norm{\delta\vec{\theta}_1-\delta\vec{\theta}_2}$.
We also denote 
\begin{align*}
\lambda_k^{\min} := \lambda_{\min}\Big(\sum_{i \in \mathcal{R}_k} \nabla^2 q_{ki}\big|_{\vec{0}}\Big)\quad,
\lambda_k^{\max} := \lambda_{\max}\Big(\sum_{i \in \mathcal{R}_k} \nabla^2 q_{ki}\big|_{\vec{0}}\Big)
\end{align*}
as the minimum and the maximum eigenvalue of the Hessian of $DR_k$ at the optimum.
Therefore, in the regime $\|\delta\vec{\theta}_i\| \leq r$ for all $i \in \mathcal{R}_k$, we have a similar radius as \Cref{lem:DRk_convexity}.
If $\lambda_k^{\min} > 0$, then $DR_k$ is strictly convex on the ball $\|\delta\vec{\theta}\| < \theta_k^{(th)}$, where
\begin{align}
\theta_k^{(th)} = \frac{\lambda_k^{\min}}{L_k},
\label{eq:pk_star_cptp}
\end{align}
where $L_k=L_k^{\mathrm{Hessian}}+\abs{\mathcal{R}_k}\lambda_k^{\max}+2\abs{\mathcal{R}_k}\sum_{i \in \mathcal{R}_k}\sum_j(\|G_{i,j}\|^2+\sum_{\alpha}\norm{L_{i,\alpha,j}}^2)$.

Given that $DR_k$ is convex at $\delta\vec{\theta}$ and the convexity radius, we now consider the convexity of the surrogate objective function $C$.
Recall that the surrogate objective is the average detection rate $C(\delta\vec{\theta}) = \frac{1}{N_D}\sum_{k=1}^{N_D} DR_k(\delta\vec{\theta})$.
Its Hessian is the average of the individual Hessians $\nabla^2 C = \frac{1}{N_D}\sum_k \nabla^2 DR_k$.
For each gate $i$, we define the detectable projection matrix $(M^{(i),\text{CPTP}}_{jk})_{j,k}$:
\begin{align}\label{eq:M_cptp}
M^{(i),\text{CPTP}}_{jk} := \frac{1}{N_D}\sum_{\ell: i \in R_\ell} \sum_{P \in \mathcal{F}_\ell^{(i)}} \left(h_{P,j}^{(i)} h_{P,k}^{(i)}+\sum_{\alpha\geq 1}\ell_{\alpha,P,j}^{(i)}\ell_{\alpha,P,k}^{(i)}\right).
\end{align}
This is a sum of rank-1 PSD matrices, hence $M^{(i),\text{CPTP}} \succeq 0$. The Hessian of $C$ at the optimum is block-diagonal with blocks $2M^{(i)}$:
\begin{align}
\label{eq:nabla2C_cptp}
\nabla^2 C\big|_{\vec{0}} = 2\ \mathrm{diag}\big(M^{(1),\text{CPTP}}, M^{(2),\text{CPTP}}, \ldots\big).
\end{align}
To formally study the convexity of $C$, we require the following analog assumption of \Cref{ass:detectability}.
\begin{assumption}[Detectability condition for CPTP error]
\label{ass:detectability_cptp}
We say the QEC circuit satisfies the \textbf{detectability condition} if $M^{(i),\text{CPTP}} \succ 0$ for all gates $i$. Equivalently:
\begin{quote}
For every gate $i$ and every nonzero parameter perturbation $\vec{v} \in \mathbb{R}^{\abs{\delta\vec{\theta}}_i}$, the error channel $\mathcal{E}_{i,\delta\vec{\theta}_i}(\cdot)$ has nonzero overlap with at least one Pauli operator $P$ that is detectable by some detector in the circuit.
\end{quote}
\end{assumption}
\noindent As mentioned at the remark after \Cref{ass:detectability}, in \Cref{sec:strong_convexity}, we show that \Cref{ass:detectability_cptp} is almost satisfied for any parametrization.

We can then obtain our result on the convexity of the surrogate objective function $C$ for control induced CPTP error using Eq.~\eqref{eq:pk_star_cptp} and Eq.~\eqref{eq:nabla2C_cptp}.
\begin{theorem}[Convexity of the surrogate objective for CPTP error]\label{thm:convexity_cptp}
Assume the detectability condition in \Cref{ass:detectability_cptp} holds, and we denote:
\begin{align*}
\lambda_{C}^{\min} := \min_i \lambda_{\min}(M^{(i),\text{CPTP}}) > 0.
\end{align*}
Then the surrogate objective $C(\delta\vec{\theta})$ is strictly convex on the ball $\|\delta\vec{\theta}\| < \theta_C^{(th)}$, where
\begin{align*}
\theta_C^{(th)} = \frac{2\lambda_{C}^{\min}}{L_C},
\end{align*}
with $L_C = \frac{1}{N_D}\sum_k L_k$ as defined in Eq.~\eqref{eq:pk_star_cptp}. This guarantees that within the ball $\|\delta\vec{\theta}\| < \theta_C^{(th)} $:
\begin{enumerate}
\item $C$ has a \textbf{unique global minimum} at $\delta\vec{\theta} = \vec{0}$.
\item $\nabla^2 C$ is positive semidefinite at any point, making $C$ a convex function within this regime.
\end{enumerate}
\end{theorem}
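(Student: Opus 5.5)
The proof will follow the unitary case, \Cref{thm:convexity_unitary}, with \Cref{lem:pP_hessian_cptp} taking the place of \Cref{lem:pP_hessian}. It has three steps: a Hessian lower bound at the origin, a Lipschitz perturbation estimate that gives a convexity radius, and convexity plus a vanishing gradient to get the unique minimum.

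\textbf{Step 1: the Hessian at the origin.} By \Cref{lem:pP_hessian_cptp}, every $p_P^{(i)}$ with $P\neq I$ vanishes to second order at $\vec 0$. So $q_{ki}(\vec 0)=0$ and $\nabla q_{ki}(\vec 0)=\vec 0$, and the chain rule \eqref{eq:DR_hessian_chain_cptp} kills all cross-gate blocks at the origin. Averaging over detectors gives \eqref{eq:nabla2C_cptp}, $\nabla^2C(\vec 0)=2\,\mathrm{diag}(M^{(i),\mathrm{CPTP}})$.

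Because the parameter blocks of different gates are disjoint, \Cref{ass:detectability_cptp} then yields $\nabla^2 C(\vec 0)\succeq 2\lambda_C^{\min} I$. One check is needed here. The dissipative contribution in \eqref{eq:M_cptp} should be read as $\Re(\ell_{\alpha,P,j}\overline{\ell_{\alpha,P,k}})$, as in \eqref{eq:hessian_pP_kraus}, so that $M^{(i),\mathrm{CPTP}}$ is real symmetric PSD and its minimum eigenvalue is the right curvature constant.

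\textbf{Step 2: the convexity radius.} For $\|\delta\vec\theta\|\le r$, I will bound $\nabla^2 DR_k(\delta\vec\theta)$ with the three-term decomposition used above.
\begin{itemize}
\item Term (I) differs from $\nabla^2 DR_k(\vec 0)$ by at most $L_k^{\mathrm{Hessian}}r$, plus the $O(r^2)$ shift coming from the coefficients $1-2\bar q_{k\setminus i}$.
\item Term (II) is PSD and can be dropped.
\item Term (III) has norm at most $2|\mathcal R_k|\sum_i\|\vec u_i\|^2$.
\end{itemize}
Since $\vec u_i=\nabla q_{ki}$ vanishes at the origin, $\|\vec u_i\|$ is bounded by $r$ times the Hessian of $q_{ki}$. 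That Hessian is in turn controlled by $\lambda_k^{\max}$ and by $\sum_j(\|G_{i,j}\|^2+\sum_\alpha\|L_{i,\alpha,j}\|^2)$.

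Averaging over $k$, I will not require each $DR_k$ to be individually convex, since a single detector need not satisfy the condition. Instead I will apply the bound to the averaged Hessian directly:
\[
\lambda_{\min}(\nabla^2C(\delta\vec\theta))\ \ge\ 2\lambda_C^{\min}-\frac{1}{N_D}\sum_k L_k\, r = 2\lambda_C^{\min}-L_C r,
\]
using $r\le1$ (or absorbing the $r^2$ terms into the linear ones). This is positive for $r<2\lambda_C^{\min}/L_C=\theta_C^{(th)}$, which gives strict convexity on the open ball and in particular item 2 of the theorem.

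\textbf{Step 3: the unique global minimum.} $C$ is strictly convex on the ball and $\nabla C(\vec 0)=\frac1{N_D}\sum_k\sum_i\partial F/\partial q_{ki}\,\nabla q_{ki}(\vec 0)=\vec 0$. Hence $\vec 0$ is the unique minimizer in the ball, which is item 1. As a cross-check, $DR_k\ge DR_k(\vec 0)=0$ holds pointwise, since every $q_{ki}\ge0$ and $q_{ki}$ is small.

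\textbf{Main obstacle.} The delicate step is the perturbation bound in Step 2 for the correlated (non-product) case and for the $O(\|\delta\vec\theta\|^3)$ remainders in the Kraus expansion. There I will simply absorb everything beyond second order into the Hessian Lipschitz constant $L_k^{\mathrm{Hessian}}$, which is finite because the Kraus operators depend smoothly on $\delta\vec\theta$. I will also make sure the correlated-case approximation $DR_k\approx\sum_i q_{ki}$ has the same second-order expansion, so that it only modifies $L_k^{\mathrm{Hessian}}$ and not the curvature at the origin.
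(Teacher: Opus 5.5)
Your proposal is correct and is essentially the paper's argument. The steps match: $\nabla^2 C(\vec 0)=2\,\mathrm{diag}(M^{(i),\mathrm{CPTP}})\succeq 2\lambda_C^{\min}I$ (and your reading of the dissipative entries as $\Re(\ell^{(i)}_{\alpha,P,j}\overline{\ell^{(i)}_{\alpha,P,k}})$ is the right one). The per-detector Hessian perturbation bound is then summed over detectors as a matrix inequality to give $\lambda_{\min}(\nabla^2 C(\delta\vec\theta))\ge 2\lambda_C^{\min}-L_C r$. Finally, strict convexity together with $\nabla C(\vec 0)=\vec 0$ yields the unique minimizer.

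One bookkeeping simplification is available. Since $L_k^{\mathrm{Hessian}}$ is defined as the Lipschitz constant of $\nabla^2 DR_k$ itself and $L_k\ge L_k^{\mathrm{Hessian}}$, the operator-norm bound gives $\nabla^2 DR_k(\delta\vec\theta)\succeq\nabla^2 DR_k(\vec 0)-L_k r\,I$ directly. You then never have to reconcile your term-(III) estimate, which is quadratic in the Hessian norms since $\|\vec u_i\|^2\le r^2\sup\|\nabla^2 q_{ki}\|^2$, with the linear expression in $L_k$. Your caveat $r\le 1$ is also automatic: $2\lambda_C^{\min}\le\frac{1}{N_D}\sum_k\lambda_k^{\max}\le L_C$, which forces $\theta_C^{(th)}\le 1$.
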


\subsection{Control induced leakage error and atom loss}\label{sec:leakage_convexity}
Leakage error is a common error in all types of quantum devices.
Unfortunately, a leakage error cannot be represented as a CPTP channel as it is not trace-preserving. 
As a result, it falls out of the convexity discussions in \appref{sec:unitary_convexity} and \appref{sec:cptp_convexity}. 
Atom loss can also be regarded as a leakage error if not detected immediately, or falls into an erasure channel, which is a CPTP map in \appref{sec:cptp_convexity}, if detected~\cite{wu2022erasure,cong2022hardware,scholl2023erasure,chow2024circuit}.
Here, assuming that the detector readout is block-diagonal in the computational-leaked decomposition, we prove a similar convexity result for $C$ when control drift can cause both computational errors and leakage out of the computational subspace.

\paragraph{The error model setup.}We assume that each physical qubit lives in a Hilbert space $\mathcal H^{\text{phys}}$ of dimension $d_i^{\text{phys}}\ge 3$. 
The computational subspace is $\mathcal H^{\text{comp}}=\operatorname{span}\{|0\rangle,|1\rangle\}$ and the leaked subspace is its orthogonal complement $\mathcal H^{\text{phys}}=\mathcal H^{\text{comp}}\oplus \mathcal H^{\text{leak}}$.
We again index each gate in the QEC circuit as $i$ and assume that it acts on $n_i$ qubits, the full Hilbert space of this gate has dimension
\begin{align*}
D_i=\prod_{a=1}^n d_a^{\text{phys}},
\end{align*}
while the computational subspace has dimension $d_i=2^{n_i}$, and the leaked subspace has dimension $D_i-d_i$.
Let $\Pi^{(i)}_C$ and $\Pi^{(i)}_L=I_{D_i}-\Pi^{(i)}_C$ be the orthogonal projectors onto these two subspaces.
Assume the error is caused by a control drift $\delta\vec{\theta}_i = \vec{\theta}_i - \vec{\theta}_i^{*}$, the error unitary reads
\begin{align*}
V_i(\delta\vec \theta_i)=U_i(\vec \theta^*_i+\delta\vec \theta_i)U(\vec \theta^*_i)^\dagger,\quad\text{with}\quad V_i(\vec 0)=I_{D_i}.
\end{align*}
Similar to the unitary case in \appref{sec:unitary_convexity}, for $\delta\vec \theta_i$ sufficiently small, choose the local logarithm
\begin{align*}
V_i(\delta\vec \theta_i)=e^{-i\mathcal H_i(\delta\vec \theta_i)},\quad\text{with}\quad
\mathcal H_i(\vec 0)=0,\quad
\mathcal H_i(\delta\vec \theta_i)=\sum_j \delta \theta_{i,j}G_{i,j}+O(\|\delta\vec \theta_i\|^2).
\end{align*}
Here, each $G_{i,j}$ is the Hermitian coefficient matrix, and $G_{i,j}$ and $V_i$ block form
\begin{align*}
G_{i,j}=\begin{pmatrix}
G_{i,j}^{CC} & G_{i,j}^{CL}\\
(G_{i,j}^{CL})^\dagger & G_{i,j}^{LL}
\end{pmatrix},\quad V_i=\begin{pmatrix}
V_i^{CC} & V_i^{CL}\\
V_i^{LC} & V_i^{LL}
\end{pmatrix},
\end{align*}
where $G_{i,j}^{CC}=\Pi^{(i)}_C G_{i,j}\Pi^{(i)}_C$, $G_{i,j}^{CL}=\Pi^{(i)}_C G_{i,j}$, and $G_{i,j}^{LL}=\Pi^{(i)}_L G_{i,j}\Pi^{(i)}_L$, and similarly, $V_i^{CC}=\Pi^{(i)}_C V_i\Pi^{(i)}_C$, $V_i^{LC}=\Pi^{(i)}_L V_i\Pi^{(i)}_C$.
Because $V_i$ is unitary, we have $(V_i^{CC})^\dagger V_i^{CC} + (V_i^{LC})^\dagger V_i^{LC} = I_{d_i}$, so $V_i^{CC}$ is a contraction:
\begin{align*}
(V_i^{CC})^\dagger V_i^{CC}\preceq I_{d_i}.
\end{align*}
We assume $\Tr(G_{i,j}^{CC}) = 0$ for all $j$. Since $G_{i,j}^{CC}$ is a $d \times d$ Hermitian operator on the computational subspace, this is the standard tracelessness condition for the computational part of the error generator. 
We do not require $\Tr(G_{i,j}) = 0$ on the full space, since $G_{i,j}^{LL}$,the leaked-subspace block, may have nonzero trace. 
However, this does not affect our analysis because the initial state is supported on the computational subspace.

For each $P\in\mathcal P_{n_i}$, define the extended Pauli $\widetilde P = P\oplus I_{D_i-d_i} = \Pi^{(i)}_C P\Pi^{(i)}_C + \Pi^{(i)}_L$.
Then $\widetilde P^\dagger=\widetilde P$, and $\widetilde P^2=I_{D_i}$.
For a computational input state $\rho^{(i)}_C$ satisfying $\Pi^{(i)}_C\rho^{(i)}_C\Pi^{(i)}_C=\rho^{(i)}_C$ and $\Pi^{(i)}_L\rho^{(i)}_C=0$, define the twirled output
\begin{align*}
\rho^{(i)}_{\text{out}}=\frac{1}{d_i^2}\sum_{P\in\mathcal P_{n_i}}\widetilde PV_i\widetilde P\rho^{(i)}_C\widetilde PV_i^\dagger\widetilde P.
\end{align*}
Since $\widetilde P\rho^{(i)}_C\widetilde P=P\rho^{(i)}_C P$, we have
\begin{align*}
\widetilde PV_i\widetilde P\rho^{(i)}_C\widetilde PV_i^\dagger\widetilde P=\widetilde PV_i(P\rho^{(i)}_C P)V_i^\dagger\widetilde P.
\end{align*}
Projecting onto the $CC$, $LL$, and $CL$ blocks gives that, for any computational input $\rho^{(i)}_C$,
\begin{align*}
\begin{split}
\Pi^{(i)}_C\rho^{(i)}_{\text{out}}\Pi^{(i)}_C&=\frac{1}{d_i^2}\sum_{P\in\mathcal P_{n_i}}
PV_i^{CC}P\rho^{(i)}_CP(V_i^{CC})^\dagger P,\\
\Pi^{(i)}_L\rho^{(i)}_{\text{out}}\Pi^{(i)}_L&=\frac{1}{d_i^2}\sum_{P\in\mathcal P_{n_i}}V_i^{LC}P\rho^{(i)}_CP(V_i^{LC})^\dagger,\\
\Pi^{(i)}_C\rho^{(i)}_{\text{out}}\Pi^{(i)}_L&=\frac{1}{d_i^2}\sum_{P\in\mathcal P_{n_i}}PV_i^{CC}P\rho^{(i)}_CP(V_i^{LC})^\dagger.
\end{split}
\end{align*}
The leaked block simplifies further because the Pauli average depolarizes the input. Formally, we have the following proposition.
\begin{proposition}\label{prop:depolarize_pauli_leakage}
For any computational density matrix $\rho^{(i)}_C$, we have
\begin{align*}
\frac{1}{d_i^2}\sum_{P\in\mathcal P_{n_i}} P\rho^{(i)}_C P = \frac{I_{d_i}}{d_i}.
\end{align*}
Consequently,
\begin{align*}
\Pi^{(i)}_L\rho^{(i)}_{\text{out}}\Pi^{(i)}_L=V_i^{LC}\frac{I_{d_i}}{d_i}(V_i^{LC})^\dagger.
\end{align*}
\end{proposition}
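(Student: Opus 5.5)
The plan is to prove the depolarization identity by expanding $\rho^{(i)}_C$ in the Pauli basis of the computational subspace and using the same character-orthogonality relation that drives \Cref{prop:Pauli_twirling_unitary} and \Cref{lem:pP_hessian}. The statement about the leaked block should then follow by linearity, since $V_i^{LC}$ does not depend on the twirling Pauli $P$.

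\textbf{Pauli expansion.} Since $\rho^{(i)}_C$ is supported on $\mathcal H^{\text{comp}}$, which has dimension $d_i=2^{n_i}$, I will regard it as a $d_i\times d_i$ matrix. I expand it as
\begin{align*}
\rho^{(i)}_C=\frac{1}{d_i}\sum_{R\in\mathcal P_{n_i}} r_R R,\qquad r_R=\Tr(R\rho^{(i)}_C).
\end{align*}
In particular $r_I=\Tr\rho^{(i)}_C=1$.

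\textbf{Averaging over $P$.} For Paulis $P,R$ we have $PRP=(-1)^{s(P,R)}R$, so
\begin{align*}
\frac{1}{d_i^2}\sum_P P\rho^{(i)}_C P=\frac{1}{d_i}\sum_R r_R\Big(\frac{1}{d_i^2}\sum_P(-1)^{s(P,R)}\Big)R .
\end{align*}
The inner average equals $\delta_{R,I}$. This is exactly the orthogonality relation already used in the proof of \Cref{lem:pP_hessian}, part 1: the symplectic character $P\mapsto(-1)^{s(P,R)}$ is nontrivial whenever $R\neq I$, because some $P$ anticommutes with $R$, so its sum over the group vanishes. Only the $R=I$ term survives, which gives $r_I I_{d_i}/d_i=I_{d_i}/d_i$.

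\textbf{Leaked block.} I start from the expression for $\Pi^{(i)}_L\rho^{(i)}_{\text{out}}\Pi^{(i)}_L$ derived just before the proposition. That derivation used the facts that $\widetilde P$ acts as the identity on $\mathcal H^{\text{leak}}$ and as $P$ on $\mathcal H^{\text{comp}}$, and that $\widetilde P\rho^{(i)}_C\widetilde P=P\rho^{(i)}_CP$. Because $V_i^{LC}$ is independent of $P$, I pull it and its adjoint outside the sum:
\begin{align*}
\Pi^{(i)}_L\rho^{(i)}_{\text{out}}\Pi^{(i)}_L=V_i^{LC}\Big(\frac{1}{d_i^2}\sum_P P\rho^{(i)}_C P\Big)(V_i^{LC})^\dagger=V_i^{LC}\frac{I_{d_i}}{d_i}(V_i^{LC})^\dagger .
\end{align*}

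There is no real obstacle here. The only point needing care is the bookkeeping of blocks: one must confirm that $\widetilde P$ introduces no extra $P$-dependence on the leaked side, so that the twirl reduces to a full depolarization of the input before $V_i^{LC}$ acts.
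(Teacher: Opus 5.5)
Your proof is correct and matches the paper's argument. You expand $\rho^{(i)}_C$ in the Pauli basis, use $\frac{1}{d_i^2}\sum_P(-1)^{s(P,R)}=\delta_{R,I}$ to remove every non-identity component (leaving $r_I I_{d_i}/d_i=I_{d_i}/d_i$), and factor the $P$-independent $V_i^{LC}$ out of the $LL$-block sum. Your remark that $\widetilde P$ acts trivially on $\mathcal H^{\text{leak}}$, so the twirl adds no $P$-dependence on the leaked side, just makes explicit what the paper compresses into ``substituting into the $LL$-block formula.''
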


\begin{proof}
Expand $\rho^{(i)}_C$ in the Pauli basis:
\begin{align*}
\rho^{(i)}_C=\frac{I_{d_i}}{d_i}+\sum_{Q\neq I_{d_i}} r_Q Q.
\end{align*}
Using Pauli character orthogonality $\frac1{d_i^2}\sum_{P} P Q P = 0 \quad(Q\neq I_{d_i})$, and $\frac1{d_i^2}\sum_P P I_{d_i} P = I_{d_i}$
Hence
\begin{align*}
\frac1{d_i^2}\sum_P P\rho^{(i)}_C P = \frac{I_{d_i}}{d_i}.
\end{align*}
Substituting into the $LL$-block formula above gives the result.
\end{proof}
We also remark that  $\Pi^{(i)}_C\rho^{(i)}_{\text{out}}\Pi^{(i)}_L \neq 0$ in general, which means the twirled output is not generally block-diagonal.

We then define the detector model in the case with leakage error.
To make the detector probability rigorous, we will need the following explicit readout assumption.
\begin{assumption}
For each detector $D_k$, the effective local POVM element used to score a leakage-induced detector event is block-diagonal:
\begin{align*}
\mathcal O^{(i)}_k =\mathcal O^{(i),CC}_k\oplus \mathcal O^{(i),LL}_k,
\qquad
\mathcal O^{(i),CL}_k=0,\qquad
\Pi^{(i)}_C \mathcal O^{(i)}_k \Pi^{(i)}_L = 0.
\end{align*}
\end{assumption}
\noindent This is the natural assumption for standard energy-basis readout, where leaked levels are read as classical outcomes and no measurement is sensitive to computation-leak phase coherence.
We will provide more discussion of this assumption by the end of this subsection.
Under this assumption, the coherence block $\Pi^{(i)}_C\rho^{(i)}_{\text{out}}\Pi^{(i)}_L$ contributes exactly zero to $\Tr(\mathcal O_k^{(i)} \rho^{(i)}_{\text{out}})$.
For each gate $i$ in the detecting region $\mathcal R_k$, the local contribution to detector $D_k$ is modeled by
\begin{align*}
q_{ki}=q_{ki}^{\text{comp}}+q_{ki}^{\text{leak}},
\end{align*}
where $q_{ki}^{\text{comp}}$ is the probability that the gate induces a computational Pauli error that flips detector $D_k$, and $q_{ki}^{\text{leak}}$ is the probability that the gate induces a leaked output detected by the positive operator $\mathcal O_k^{(i),LL}$.
The exact detector rate is then
\begin{align*}
DR_k=\frac{1-\prod_{i\in\mathcal R_k}(1-2q_{ki})}{2},
\end{align*}
under the same gate-independence/parity model as in the leakage-free note. 
This is the same operational detector model as before, now with an added leakage term.
Again, for the relevant error case, the above equation can still approximately estimate the detection event rate up to the second order according to \appref{sec:unitary_convexity}.

Define the twirled computational map
\begin{align*}
\mathcal T_{i,CC}(\rho^{(i)})=\frac{1}{d_i^2}\sum_{P\in\mathcal P_{n_i}}PV_i^{CC}P\rho^{(i)} P(V_i^{CC})^\dagger P.
\end{align*}
This map is completely positive but not necessarily trace-preserving.
However, we can still show that it will diagonalize the computation subspace with the following lemma.
\begin{lemma}\label{lem:twirlinG_{i,l}eakage}
The map $\mathcal T_{i,CC}$ is diagonal in the Pauli basis. In other words, we have for any $Q\in\mathcal P_{n_i}$,
\begin{align*}
\mathcal T_{i,CC}(Q)=\tilde c^{(i)}_QQ,\qquad
\tilde c^{(i)}_Q=\frac{1}{d_i} \Tr\big(QV_i^{CC}Q(V_i^{CC})^\dagger\big).
\end{align*}
\end{lemma}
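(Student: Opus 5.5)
The plan is to mirror the proof of \Cref{prop:Pauli_twirling_unitary}, with the only change that $V_i^{CC}$ is now a contraction rather than a unitary. Since $\mathcal T_{i,CC}$ is linear, it suffices to compute its action on each Pauli basis element $Q\in\mathcal P_{n_i}$ and show that the result is proportional to $Q$.

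First I will expand $A:=V_i^{CC}$ in the Pauli basis of the computational subspace, $A=\sum_R a_R R$ with $a_R=\Tr(RA)/d_i$; this is legitimate because $\mathcal P_{n_i}$ spans all $d_i\times d_i$ matrices, and no unitarity is needed. Substituting into the definition gives
\begin{align*}
\mathcal T_{i,CC}(Q)=\frac{1}{d_i^2}\sum_{P}\sum_{R,S} a_R\bar a_S\, PRP\,Q\,PSP .
\end{align*}
Here I use that $P$ is Hermitian and that $P Q P=(-1)^{s(P,Q)}Q$, after inserting $P^2=I$ between factors. Writing $PRP=(-1)^{s(P,R)}R$ and $PSP=(-1)^{s(P,S)}S$, the average over $P$ becomes the factor $\frac{1}{d_i^2}\sum_P(-1)^{s(P,R)+s(P,S)}=\delta_{R,S}$, by the same symplectic character orthogonality used in \Cref{prop:Pauli_twirling_unitary}. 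I should double-check whether the $Q$ conjugation contributes an extra sign; in the arrangement $P R P\, Q\, P S^\dagger P$ it does not, since each $P$ pair only conjugates $R$ and $S$.

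After the sum over $P$ collapses $S=R$, I am left with $\sum_R |a_R|^2 RQR=\big(\sum_R|a_R|^2(-1)^{s(Q,R)}\big)Q$, which is diagonal in $Q$. It then remains to identify the coefficient with $\tilde c_Q^{(i)}$. Using $QRQ=(-1)^{s(Q,R)}R$ and $\Tr(RS)=d_i\delta_{R,S}$, the quantity $\frac1{d_i}\Tr(QAQA^\dagger)$ evaluates to $\sum_R(-1)^{s(Q,R)}|a_R|^2$, which is the same expression. This equivalence is exactly the identity \eqref{eq:identity} with $A=B$, summed without the outer Fourier transform.

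The main point to be careful about is the bookkeeping of the conjugations when $V_i^{CC}$ appears sandwiched between twirl Paulis. Every other step transfers verbatim, because neither trace preservation nor unitarity was used in the diagonalization argument itself. Those properties only affect the normalization $\sum_P p_P$, and since $\tilde c_I^{(i)}=\Tr((V_i^{CC})^\dagger V_i^{CC})/d_i\le 1$ by the contraction property, this will be remarked on but is not needed for the lemma.
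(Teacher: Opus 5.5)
Your argument is correct, and your sign worry resolves as you suspected: the grouping $(PV_i^{CC}P)\,Q\,(P(V_i^{CC})^\dagger P)$ is exact, so only $R$ and $S$ get conjugated and no factor $(-1)^{s(P,Q)}$ appears. Your route does differ from the paper's, though.

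The paper groups the same product the other way, $PV_i^{CC}(PQP)(V_i^{CC})^\dagger P=(-1)^{s(P,Q)}PV_i^{CC}Q(V_i^{CC})^\dagger P$. It then expands the single operator $V_i^{CC}Q(V_i^{CC})^\dagger$ in the Pauli basis and lets $\frac{1}{d_i^2}\sum_P(-1)^{s(P,Q)+s(P,R)}=\delta_{Q,R}$ select its $Q$-component. That component is $\frac{1}{d_i}\Tr\big(QV_i^{CC}Q(V_i^{CC})^\dagger\big)$ by definition, so the eigenvalue formula comes out with no further identity.

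You instead expand the Kraus operator $V_i^{CC}=\sum_R a_R R$ itself, and the twirl kills the off-diagonal ($R\neq S$) terms of its process ($\chi$) matrix. This costs you the extra trace computation matching $\sum_R(-1)^{s(Q,R)}|a_R|^2$ with $\tilde c_Q^{(i)}$, but it proves more. You get $\mathcal T_{i,CC}(\rho)=\sum_R|a_R|^2R\rho R$ for every $\rho$, so the sub-normalized Pauli weights are explicitly $\tilde p_R^{(i)}=|\Tr(RV_i^{CC})|^2/d_i^2\ge 0$, with $\sum_R\tilde p_R^{(i)}=\tilde c_I^{(i)}$ by Parseval. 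The paper instead obtains nonnegativity of $\tilde p_P^{(i)}$ by appealing to complete positivity.

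Your formula would also shortcut the paper's later second-order expansion. For $P\neq I$ one has $a_P=-i\alpha_{i,P}+O(\|\delta\vec\theta_i\|^2)$, and hence $\tilde p_P^{(i)}=\alpha_{i,P}^2+O(\|\delta\vec\theta_i\|^3)$ directly.
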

\begin{proof}
Fix $Q\in\mathcal P_{n_i}$. Then
\begin{align*}
\mathcal T_{i,CC}(Q)=\frac{1}{d_i^2}\sum_P PV_i^{CC}PQP(V_i^{CC})^\dagger P=\frac1{d_i^2}\sum_P (-1)^{s(P,Q)} PV_i^{CC}Q(V_i^{CC})^\dagger P.
\end{align*}
We expand the following into the Pauli basis
\begin{align*}
V_i^{CC}Q(V_i^{CC})^\dagger = \sum_{R\in\mathcal{P}_{n_i}} \tilde{c}^{(i)}_R R.
\end{align*}
Then
\begin{align*}
\mathcal T_{i,CC}(Q)=\sum_R \tilde{c}^{(i)}_R\left[\frac1{d_i^2}\sum_P (-1)^{s(P,Q)+s(P,R)}\right]R.
\end{align*}
By character orthogonality $\frac1{d_i^2}\sum_P (-1)^{s(P,Q)+s(P,R)} = \delta_{Q,R}$, we have $\mathcal T_{i,CC}(Q)=\tilde{c}^{(i)}_Q Q$.
Finally, we have $\tilde{c}^{(i)}_Q = \frac{1}{d_i} \Tr\big(QV_i^{CC}Q(V_i^{CC})^\dagger\big)$ as claimed.
\end{proof}
\noindent Because $\mathcal T_{i,CC}$ is Pauli diagonal and completely positive, it has a sub-normalized Pauli-channel representation
\begin{align*}
\mathcal T_{i,CC}(\rho^{(i)})=\sum_{P\in\mathcal P_{n_i}}\tilde p^{(i)}_PP\rho^{(i)} P,
\qquad
\tilde p^{(i)}_P= \frac1{d_i^2}\sum_{Q\in\mathcal P_{n_i}} (-1)^{s(P,Q)} \tilde c^{(i)}_Q\ge 0.
\end{align*}
Therefore, the total weight is
\begin{align*}
\sum_P \tilde p^{(i)}_P = \tilde c_I =  \frac{1}{d_i} \Tr\big((V_i^{CC})^\dagger V_i^{CC}\big) \le 1.
\end{align*}
Equivalently, we have $\sum_P \tilde p^{(i)}_P = 1-\mathsf{L}_{i}$,
where
\begin{align*}
\mathsf{L}_{i}:=1- \frac{1}{d_i} \Tr\big((V_i^{CC})^\dagger V_i^{CC}\big)= \frac{1}{d_i} \Tr\big((V_i^{LC})^\dagger V_i^{LC}\big)
\end{align*}
is the leakage probability from a maximally mixed computational input.

As the last part of this paragraph, we derive the second-order expansions of $V_i^{CC}$ and $V_i^{LC}$ to be used later.
We first define the first-order blocks
\begin{align*}
\begin{split}
\mathcal{H}_i^{CC}(\delta\vec{\theta})&:=\Pi^{(i)}_C\Big(\sum_j \delta \theta_{i,j} G_{i,j}\Big)\Pi^{(i)}_C = \sum_j \delta \theta_{i,j} G_{i,j}^{CC},\\
\mathcal{H}_i^{CL}(\delta\vec{\theta})&:= \Pi^{(i)}_C\Big(\sum_j \delta \theta_{i,j} G_{i,j}\Big)\Pi^{(i)}_L = \sum_j \delta \theta_{i,j} G_{i,j}^{CL}.
\end{split}
\end{align*}
Then $\mathcal{H}_i^{CC}(\delta\vec{\theta})=O(\|\delta\vec \theta_i\|)$ and $\mathcal{H}_i^{CL}(\delta\vec{\theta})=O(\|\delta\vec \theta_i\|)$. 
Using $V_i = I_{D_i} - i\mathcal H_i - \frac12 \mathcal H_i^2 + O(\|\delta\vec \theta_i\|^3)$, we obtain the block expansions
\begin{align*}
\begin{split}
V_i^{CC}&=I_{d_i} - i\mathcal{H}_i^{CC}(\delta\vec{\theta}) - \frac12\Pi^{(i)}_C \mathcal H_i^2 \Pi^{(i)}_C + O(\|\delta\vec \theta_i\|^3),\\
V_i^{LC}&=-i\Pi^{(i)}_L\mathcal H_i \Pi^{(i)}_C + O(\|\delta\vec \theta_i\|^2). 
\end{split}
\end{align*}
By utilizing the following fact
\begin{align*}
\Pi^{(i)}_C \mathcal H_i^2 \Pi^{(i)}_C=
(\Pi^{(i)}_C\mathcal H_i\Pi^{(i)}_C)^2 + (\Pi^{(i)}_C\mathcal H_i\Pi^{(i)}_L)(\Pi^{(i)}_L\mathcal H_i\Pi^{(i)}_C)=\mathcal{H}_i^{CC}(\delta\vec{\theta})^2 + \mathcal{H}_i^{CL}(\delta\vec{\theta})\mathcal{H}_i^{CL}(\delta\vec{\theta})^\dagger,
\end{align*}
we have
\begin{align*}
\begin{split}
V_i^{CC}&=I_{d_i} - i\mathcal{H}_i^{CC}(\delta\vec{\theta}) - \frac12(\mathcal{H}_i^{CC}(\delta\vec{\theta})^2+\mathcal{H}_i^{CL}(\delta\vec{\theta})\mathcal{H}_i^{CL}(\delta\vec{\theta})^\dagger) + O(\|\delta\vec \theta_i\|^3),\\
V_i^{LC}&=-i\mathcal{H}_i^{CL}(\delta\vec{\theta})^\dagger + O(\|\delta\vec \theta_i\|^2).   
\end{split}
\end{align*}

\paragraph{Convexity of the surrogate objective function $C$.}To start with, we compute the Hessian of $q_{ki}$. 
For the computational part, we use the same detector-flip sets $\mathcal F_k^{(i)}$ as in the previous leakage-free part and define $q_{ki}^{\text{comp}}=\sum_{P\in\mathcal F_k^{(i)}} \tilde p^{(i)}_P$.
We now compute the second-order expansion of $\tilde p^{(i)}_P$ for $P\neq I_{n_i}$.
Let $\widetilde{\mathcal{H}}_{i,Q}^{CC} := Q \mathcal{H}_i^{CC} Q$. Since
\begin{align*}
V_i^{CC}=I - i\mathcal{H}_i^{CC} - \frac12(\mathcal{H}_i^{CC\ 2}+\mathcal{H}_i^{CL}\mathcal{H}_i^{CL\ \dagger}) + O(\|\delta\vec \theta_i\|^3),
\end{align*}
and thus $(V_i^{CC})^\dagger=I + i\mathcal{H}_i^{CC} - \frac12(\mathcal{H}_i^{CC\ 2}+\mathcal{H}_i^{CL}\mathcal{H}_i^{CL\ \dagger}) + O(\|\delta\vec \theta_i\|^3)$,
We have
\begin{align*}
Q V_i^{CC} Q = I - i\widetilde{\mathcal{H}}_{i,Q}^{CC} - \frac12\big((\widetilde{\mathcal{H}}_{i,Q}^{CC})^2 + Q\mathcal{H}_i^{CL}\mathcal{H}_i^{CL\ \dagger} Q\big) + O(\|\delta\vec \theta_i\|^3).
\end{align*}
Multiplying the right side by $(V_i^{CC})^\dagger$, we have
\begin{align*}
\begin{split}
&Q V_i^{CC} Q (V_i^{CC})^\dagger\\
=&I + i(\mathcal{H}_i^{CC}-\widetilde{\mathcal{H}}_{i,Q}^{CC}) - \frac12\big((\widetilde{\mathcal{H}}_{i,Q}^{CC})^2 + Q\mathcal{H}_i^{CL}\mathcal{H}_i^{CL\ \dagger} Q + \mathcal{H}_i^{CC\ 2} + \mathcal{H}_i^{CL}\mathcal{H}_i^{CL\ \dagger}\big) + \widetilde{\mathcal{H}}_{i,Q}^{CC} \mathcal{H}_i^{CC} + O(\|\delta\vec \theta_i\|^3).
\end{split}
\end{align*}
Taking the normalized trace, the linear term vanishes because $\Tr(\mathcal{H}_i^{CC})=0$, and thus $\Tr(\widetilde{\mathcal{H}}_{i,Q}^{CC})=\Tr(\mathcal{H}_i^{CC})=0$.
Also, we have $\Tr((\widetilde{\mathcal{H}}_{i,Q}^{CC})^2)=\Tr(\mathcal{H}_i^{CC\ 2})$ and $\Tr(Q\mathcal{H}_i^{CL}\mathcal{H}_i^{CL\ \dagger} Q)=\Tr(\mathcal{H}_i^{CL}\mathcal{H}_i^{CL\ \dagger})$.
Hence, we can derive using \Cref{lem:twirlinG_{i,l}eakage} that
\begin{align*}
\tilde c^{(i)}_Q = 1 - \frac{1}{d_i}\Tr(\mathcal{H}_i^{CC\ 2}) - \frac{1}{d_i}\Tr(\mathcal{H}_i^{CL}\mathcal{H}_i^{CL\ \dagger}) + \frac{1}{d_i}\Tr(Q \mathcal{H}_i^{CC} Q \mathcal{H}_i^{CC}) + O(\|\delta\vec \theta_i\|^3).
\end{align*}
Expand $\mathcal{H}_i^{CC}$ in the Pauli basis as $\mathcal{H}_i^{CC} = \sum_{R\in\mathcal P_{n_i}} \alpha_{i,R} R$ where $\alpha_{i,R} = \frac{1}{d_i} \Tr(R \mathcal{H}_i^{CC})$.
Since $\Tr(\mathcal{H}_i^{CC})=0$, we have $\alpha_I=0$. Furthermore, we have $\frac{1}{d_i}\Tr(\mathcal{H}_i^{CC\ 2})=\sum_R \alpha_{i,R}^2$, and $\frac{1}{d_i}\Tr(Q \mathcal{H}_i^{CC} Q \mathcal{H}_i^{CC}) = \sum_R (-1)^{s(Q,R)} \alpha_{i,R}^2$.
Therefore, we can write the coefficients as
\begin{align*}
\tilde c^{(i)}_Q = 1 - \ell_i - 2\sum_{R:s(Q,R)=1}\alpha_{i,R}^2 + O(\|\delta\vec \theta_i\|^3),
\end{align*}
where $\ell_i:=\frac{1}{d_i}\Tr(\mathcal{H}_i^{CL}\mathcal{H}_i^{CL\ \dagger})$.
Now we perform a Fourier transformation on this expression.
For every nontrivial Pauli $P\neq I_{n_i}$, using $\tilde p^{(i)}_P=\frac1{d_i^2}\sum_Q (-1)^{s(P,Q)} \tilde c^{(i)}_Q$, we get
\begin{align*}
\tilde p^{(i)}_P=\frac1{d_i^2}\sum_Q (-1)^{s(P,Q)}\left[1-\ell_i-2\sum_{R:s(Q,R)=1}\alpha_{i,R}^2\right]+O(\|\delta\vec \theta_i\|^3).
\end{align*}
For $P\neq I_{n_i}$, as $\sum_Q (-1)^{s(P,Q)}=0$, both the constant $1$ term and the leakage constant $\ell_i$ term vanish. 
Thus, we have
\begin{align*}
\tilde p^{(i)}_P=-\frac2{d_i^2}\sum_R \alpha_{i,R}^2\sum_{Q:s(Q,R)=1} (-1)^{s(P,Q)}+O(\|\delta\vec p_i\|^3).
\end{align*}
Using the restricted character sum $\sum_{Q:s(Q,R)=1} (-1)^{s(P,Q)}=-\frac{d_i^2}{2}\delta_{P,R}$, we obtain
\begin{align*}
\tilde p^{(i)}_P = \alpha_{i,P}^2 + O(\|\delta\vec \theta_i\|^3).
\end{align*}
Since $\alpha_{i,P}=\frac{1}{d_i} \Tr(P \mathcal{H}_i^{CC})=\sum_j \delta \theta_{i,j} \frac{\Tr(P G_{i,j}^{CC})}{d_i}$, we define the following value in a similar way as the leakage-free case as
\begin{align*}
h_{P,j}^{(i)}:=\frac{\Tr(P G_{i,j}^{CC})}{d_i}.
\end{align*}
We note here that leakage affects $\tilde p_I$ at second order through the scalar term $\ell_i=\frac{1}{d_i}\Tr(\mathcal{H}_i^{CL}\mathcal{H}_i^{CL\ \dagger})$.
However, this scalar term is invisible to every nontrivial Pauli probability $\tilde p^{(i)}_P$ with $P\neq I_{n_i}$. 
Therefore, the Hessian of the computational detector-flip probability is exactly the same as in the leakage-free case.
Then we have the following corrected computational Hessian lemma.
\begin{lemma}\label{lem:qki_comp_hessian}
At $\delta\vec \theta_i=\vec 0$, we have $q_{ki}^{\text{comp}}(\vec 0)=0$, $\nabla q_{ki}^{\text{comp}}(\vec 0)=\vec 0$, and
\begin{align*}
\nabla^2 q_{ki}^{\text{comp}}\big|_{\vec 0}=2\sum_{P\in\mathcal F_k^{(i)}}\vec h_P^{(i)} (\vec h_P^{(i)})^\top\succeq 0.
\end{align*}
\end{lemma}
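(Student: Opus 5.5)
The plan is to read the three claims of \Cref{lem:qki_comp_hessian} off the second-order expansion $\tilde p^{(i)}_P=\alpha_{i,P}^2+O(\|\delta\vec\theta_i\|^3)$ just derived for every nontrivial $P\neq I_{n_i}$. By definition $q_{ki}^{\text{comp}}=\sum_{P\in\mathcal F_k^{(i)}}\tilde p^{(i)}_P$, and $\mathcal F_k^{(i)}\subset\mathcal P_{n_i}\setminus\{I\}$ is a fixed finite set. Linearity of differentiation therefore reduces the claim to the corresponding statement for each individual $\tilde p^{(i)}_P$. This is the same route by which \Cref{coro:qki_hessian} followed from \Cref{lem:pP_hessian}.

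The argument proceeds in four steps.

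\begin{enumerate}
\item \textbf{Smoothness and exact value at the origin.} First confirm that $\tilde p^{(i)}_P$ is a smooth function of $\delta\vec\theta_i$, so that the $O(\|\delta\vec\theta_i\|^3)$ remainder is a genuine Taylor remainder. This holds because $V_i$ depends smoothly on the controls, $V_i^{CC}=\Pi_C^{(i)}V_i\Pi_C^{(i)}$ is linear in $V_i$, $\tilde c^{(i)}_Q$ is a polynomial in the entries of $V_i^{CC}$, and $\tilde p^{(i)}_P$ is a fixed linear combination of the $\tilde c^{(i)}_Q$. At $\delta\vec\theta_i=\vec0$ we have $V_i^{CC}=I_{d_i}$, hence $\tilde c^{(i)}_Q=1$ for all $Q$. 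Character orthogonality then gives $\tilde p^{(i)}_P(\vec0)=\delta_{P,I}=0$ for $P\neq I$, and summing over $\mathcal F_k^{(i)}$ yields $q_{ki}^{\text{comp}}(\vec0)=0$.

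\item \textbf{Vanishing gradient.} Write $\alpha_{i,P}=\sum_j\delta\theta_{i,j}h^{(i)}_{P,j}+O(\|\delta\vec\theta_i\|^2)$. The $O(\|\delta\vec\theta_i\|^2)$ correction collects the second-order part of the local logarithm $\mathcal H_i$ projected onto the $CC$ block, exactly as in item 4 of \Cref{lem:pP_hessian}. It follows that $\alpha_{i,P}^2$ equals $\big(\sum_j\delta\theta_{i,j}h^{(i)}_{P,j}\big)^2$ up to $O(\|\delta\vec\theta_i\|^3)$. The expansion therefore contains no constant or linear term, so $\nabla\tilde p^{(i)}_P(\vec0)=\vec0$.

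\item \textbf{Hessian.} Matching the quadratic Taylor coefficient with $\big(\sum_j\delta\theta_{i,j}h^{(i)}_{P,j}\big)^2$ gives $\nabla^2\tilde p^{(i)}_P|_{\vec0}=2\vec h^{(i)}_P(\vec h^{(i)}_P)^\top$. Summing over $P\in\mathcal F_k^{(i)}$ gives the stated formula for $\nabla^2 q_{ki}^{\text{comp}}|_{\vec0}$.

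\item \textbf{Positive semidefiniteness.} The coefficients $h^{(i)}_{P,j}=\Tr(PG^{CC}_{i,j})/d_i$ are real, because $G^{CC}_{i,j}$ is Hermitian and $P$ is Hermitian. Each term $\vec h\vec h^\top$ is thus a real rank-one PSD matrix, and the sum is PSD.
\end{enumerate}

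The main obstacle is step 2: justifying that the remainder in $\tilde p^{(i)}_P$ is uniformly $O(\|\delta\vec\theta_i\|^3)$, and that leakage does not introduce a term of order $\|\delta\vec\theta_i\|^2$ with a $P$-dependent coefficient. Two facts need checking.

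\emph{The $\mathcal H^{CL}_i\mathcal H^{CL\,\dagger}_i$ term.} It enters $\tilde c^{(i)}_Q$ only through its trace. Since $Q$ is unitary, $\Tr(Q\mathcal H^{CL}_i\mathcal H^{CL\,\dagger}_iQ)=\Tr(\mathcal H^{CL}_i\mathcal H^{CL\,\dagger}_i)$, so this contribution is the $Q$-independent scalar $\ell_i$. Equation~\eqref{eq:Qindep} then kills it for every $P\neq I$.

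\emph{The first-order term.} It vanishes because of the assumption $\Tr(G^{CC}_{i,j})=0$ and the invariance $\Tr(Q\mathcal H^{CC}_iQ)=\Tr(\mathcal H^{CC}_i)$.

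With both facts verified, the remaining bookkeeping is identical to the leakage-free proof of \Cref{lem:pP_hessian}.
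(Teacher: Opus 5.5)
Your proposal is correct and is essentially the paper's argument: the paper also reads all three claims off the expansion $\tilde p^{(i)}_P=\bigl(\sum_j\delta\theta_{i,j}h^{(i)}_{P,j}\bigr)^2+O(\|\delta\vec\theta_i\|^3)$ for $P\neq I_{n_i}$, in which the $Q$-independent leakage scalar $\ell_i$ drops out via $\sum_Q(-1)^{s(P,Q)}=0$, and then sums over $P\in\mathcal F_k^{(i)}$. Your added checks (smoothness of the remainder, and reality of $h^{(i)}_{P,j}$ for the PSD claim) make explicit what the paper leaves implicit; note also that the linear term already cancels from $\Tr(QAQ)=\Tr(A)$ alone, so tracelessness of $G^{CC}_{i,j}$ is not needed at that step.
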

\begin{proof}
For $P\neq I_{n_i}$, we note that $\tilde p^{(i)}_P=\left(\sum_j \delta \theta_{i,j} h_{P,j}^{(i)}\right)^2 + O(\|\delta\vec \theta_i\|^3)$.
Summing over $P\in\mathcal F_k^{(i)}$ gives
\begin{align*}
q_{ki}^{\text{comp}}=\sum_{P\in\mathcal F_k^{(i)}} \left(\sum_j \delta \theta_{i,j} h_{P,j}^{(i)}\right)^2 +O(\|\delta\vec \theta_i\|^3),
\end{align*}
which is the stated Hessian.
\end{proof}

We then calculate the Hessian of the leakage contribution.
From \Cref{prop:depolarize_pauli_leakage} and the expansion of $V_i^{LC}$,
\begin{align*}
\begin{split}
\Pi^{(i)}_L\rho^{(i)}_{\text{out}}\Pi^{(i)}_L&=V_i^{LC}\frac{I_{d_i}}{d_i}(V_i^{LC})^\dagger=\frac{1}{d_i} \mathcal{H}_i^{CL\ \dagger} \mathcal{H}_i^{CL} + O(\|\delta\vec \theta_i\|^3)\\
&=\frac{1}{d_i} \sum_{j,l} \delta \theta_{i,j} \delta \theta_{i,l}(G_{i,j}^{CL})^\dagger G_{i,l}^{CL}+O(\|\delta\vec \theta_i\|^3).
\end{split}
\end{align*}
We note that the leakage contribution to detector $D_k$ is given by $q_{ki}^{\text{leak}}:=\Tr\big(\mathcal O_k^{LL}\Pi^{(i)}_L\rho^{(i)}_{\text{out}}\Pi^{(i)}_L\big)$. 
We thus have
\begin{align}\label{eq:qki_leak}
q_{ki}^{\text{leak}}=\frac{1}{d_i} \sum_{j,l}\delta \theta_{i,j}\delta \theta_{i,l}\Tr\big(\mathcal O_k^{LL}(G_{i,j}^{CL})^\dagger G_{i,l}^{CL}\big)+O(\|\delta\vec \theta_i\|^3).
\end{align}

\begin{lemma}\label{lem:qki_leak_hessian}
At $\delta\vec \theta=\vec 0$, we have $q_{ki}^{\text{leak}}(\vec 0)=0$, $\nabla q_{ki}^{\text{leak}}(\vec 0)=\vec 0$, and
\begin{align*}
\frac{\partial^2 q_{ki}^{\text{leak}}}{\partial(\delta \theta_{i,j})\partial(\delta \theta_{i,l})}\Big|_{\vec 0}=\frac{2}{d_i} \Re\Tr\big(\mathcal O_k^{LL}(G_{i,j}^{CL})^\dagger G_{i,l}^{CL}\big).
\end{align*}
Moreover, we have the Hessian matrix $\nabla^2 q_{ki}^{\text{leak}}=\left[\frac{2}{d_i} \Re\Tr\big(\mathcal O_k^{LL}(G_{i,j}^{CL})^\dagger G_{i,l}^{CL}\big)\right]_{j,l}\succeq 0$.
\end{lemma}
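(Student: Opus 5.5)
The plan is to read everything off the quadratic expansion in Eq.~\eqref{eq:qki_leak}. Write $q_{ki}^{\text{leak}}(\delta\vec\theta_i)=\frac{1}{d_i}\sum_{j,l}\delta\theta_{i,j}\delta\theta_{i,l}A_{jl}+O(\|\delta\vec\theta_i\|^3)$ with $A_{jl}:=\Tr\big(\mathcal O_k^{LL}(G_{i,j}^{CL})^\dagger G_{i,l}^{CL}\big)$.

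\emph{Vanishing value and gradient.} These are immediate because the expansion has no constant or linear term. Equivalently, at $\delta\vec\theta_i=\vec 0$ we have $V_i^{LC}=0$, so the leaked block $\Pi^{(i)}_L\rho^{(i)}_{\text{out}}\Pi^{(i)}_L$ vanishes. Its leading term is the product of two $O(\|\delta\vec\theta_i\|)$ factors, namely $\mathcal{H}_i^{CL\,\dagger}$ and $\mathcal{H}_i^{CL}$.

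\emph{Hessian formula.} Since $q_{ki}^{\text{leak}}$ is a real number (a trace of a positive operator against a density block), I would replace the quadratic form by its symmetric real part: $\sum_{j,l}x_jx_lA_{jl}=\sum_{j,l}x_jx_l\Re A_{jl}$ for real $x$. The skew and imaginary parts drop out because $\overline{A_{jl}}=A_{lj}$, which uses $\mathcal O_k^{LL}$ Hermitian and cyclicity of the trace. Differentiating the symmetric quadratic form $\frac1{d_i}x^\top(\Re A)x$ twice gives $\frac{2}{d_i}\Re A_{jl}$. The $O(\|\delta\vec\theta_i\|^3)$ remainder does not contribute at the origin, exactly as in part 4 of \Cref{lem:pP_hessian}.

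\emph{Positive semidefiniteness.} For a real vector $\vec v$, set $X:=\sum_l v_lG_{i,l}^{CL}$. Then
\begin{align*}
\vec v^\top\nabla^2 q_{ki}^{\text{leak}}\vec v=\frac{2}{d_i}\Re\Tr\big(\mathcal O_k^{LL}X^\dagger X\big)=\frac{2}{d_i}\Tr\big((\mathcal O_k^{LL})^{1/2}X^\dagger X(\mathcal O_k^{LL})^{1/2}\big)\ge 0.
\end{align*}
This holds because $\mathcal O_k^{LL}\succeq 0$ as a POVM block, and $X^\dagger X\succeq 0$ acts on the leaked space, matching the orientation in Eq.~\eqref{eq:qki_leak}.

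\emph{Main obstacle.} The only delicate point is bookkeeping of the block orientation. Under the paper's convention $G^{CL}_{i,j}=\Pi_C G_{i,j}\Pi_L$, the product $(G^{CL})^\dagger G^{CL}$ maps $L\to L$ only if the leaked block is written as $V^{LC}(V^{LC})^\dagger$ with $V^{LC}\approx -i\,\mathcal H^{CL\dagger}$. I would check this against \Cref{prop:depolarize_pauli_leakage} before concluding, so that $X^\dagger X$ is a positive operator on the space where $\mathcal O_k^{LL}$ acts. Positivity then follows regardless, since it is a Gram-type expression.
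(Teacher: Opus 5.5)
Your proposal is correct and follows essentially the same route as the paper: you read the vanishing value, the vanishing gradient and the Hessian directly off the quadratic expansion in Eq.~\eqref{eq:qki_leak}, and you prove positivity via $\Tr\big((\mathcal O_k^{LL})^{1/2}X^\dagger X(\mathcal O_k^{LL})^{1/2}\big)\ge 0$. Your justification of the real part via $\overline{A_{jl}}=A_{lj}$ and your prefactor $\frac{2}{d_i}$ in the quadratic form are slightly more careful than the paper's write-up, which harmlessly writes $\frac{1}{d_i}$ there; the orientation check you flag also goes through, since $V_i^{LC}\approx -i\,\mathcal H_i^{CL\dagger}$.
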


\begin{proof}
The first two statements follow because $q_{ki}^{\text{leak}}$ starts at quadratic order. For the Hessian, according to Eq.~\eqref{eq:qki_leak}, we immediately have
\begin{align*}
\frac{\partial^2 q_{ki}^{\text{leak}}}{\partial(\delta \theta_{i,j})\partial(\delta \theta_{i,l})}\Big|_{\vec 0}=\frac{2}{d_i} \Re\Tr\big(\mathcal O_k^{LL}(G_{i,j}^{CL})^\dagger G_{i,l}^{CL}\big)
\end{align*}
as claimed. 
To prove PSD, let $\vec v\in\mathbb R^{P_i}$ and set $G_{i,\vec v}:=\sum_j v_j G_{i,j}^{CL}$.
Then, we have
\begin{align*}
\vec v^\top \left[\frac{2}{d_i} \Re\Tr\big(\mathcal O_k^{LL}(G_{i,j}^{CL})^\dagger G_{i,l}^{CL}\big)\right]_{j,l} \vec v=\frac{1}{d_i} \Re\Tr\big(\mathcal O_k^{LL} G_{i,\vec v}^\dagger G_{i,\vec v}\big).
\end{align*}
Since $\mathcal O_k^{LL}\succeq 0$ and $G_{i,\vec v}^\dagger G_{i,\vec v}\succeq 0$, the operator $(\mathcal O_k^{LL})^{1/2} G_{i,\vec v}^\dagger G_{i,\vec v} (\mathcal O_k^{LL})^{1/2}$ is positive semidefinite, so its trace is nonnegative, i.e., $\Tr\big(\mathcal O_k^{LL} G_{i,\vec v}^\dagger G_{i,\vec v}\big)\ge 0$. 
Hence, $\vec v^\top \left[\frac{2}{d_i} \Re\Tr\big(\mathcal O_k^{LL}(G_{i,j}^{CL})^\dagger G_{i,l}^{CL}\big)\right]_{j,l} \vec v \ge 0$.
\end{proof}

Given \Cref{lem:qki_comp_hessian} and \Cref{lem:qki_leak_hessian}, we now compute the Hessian of the detector rate $DR_k$.
Similar to \appref{sec:unitary_convexity} and \appref{sec:cptp_convexity}, we denote $F(\{q_{ki}\}_{i\in\mathcal R_k})=\frac{1}{2}\left(1-\prod_{i\in\mathcal R_k}(1-2q_{ki})\right)$, and thus $DR_k = F(\{q_{ki}\})$.
Using Eq.~\eqref{eq:derivativeF}, we have
\begin{align*}
\nabla^2 DR_k=\sum_{i\in\mathcal R_k}\frac{\partial F}{\partial q_{ki}}\nabla^2 q_{ki}+\sum_{\substack{i,i'\in\mathcal R_k\\ i\neq i'}}\frac{\partial^2 F}{\partial q_{ki}\partial q_{ki'}}(\nabla q_{ki})(\nabla q_{ki'})^\top.
\end{align*}
At $\delta\vec \theta=\vec 0$, all $q_{ki}=0$, we thus have $\frac{\partial F}{\partial q_{ki}}\Big|_{\vec 0}=1$, and since $\nabla q_{ki}(\vec 0)=0$, all cross-gate terms vanish. 
Therefore, we have
\begin{align*}
\nabla^2 DR_k\big|_{\vec 0}=\sum_{i\in\mathcal R_k} \nabla^2 q_{ki}\big|_{\vec 0},\quad
\nabla^2 DR_k\big|_{\vec 0}=2\sum_{i\in\mathcal R_k}\left[\sum_{P\in\mathcal F_k^{(i)}} \vec h_P^{(i)}(\vec h_P^{(i)})^\top+\left[\frac{1}{d_i} \Re\Tr\big(\mathcal O_k^{LL}(G_{i,j}^{CL})^\dagger G_{i,l}^{CL}\big)\right]_{j,l}\right]\succeq 0.
\end{align*}
Recall that the surrogate objective is defined as $C(\delta\vec \theta)=\frac1{N_D}\sum_{k=1}^{N_D}DR_k(\delta\vec \theta)$.
Then, we again have $\nabla^2 C=\frac1{N_D}\sum_k \nabla^2 DR_k$.
At the optimum, $\nabla^2 C\big|_{\vec 0}=2\operatorname{blockdiag}\big(\widetilde M_{\text{ext}}^{(1)},\widetilde M_{\text{ext}}^{(2)},\ldots\big)$, where
\begin{align*}
\widetilde M_{\text{ext}}^{(i)}=\frac1{N_D}\sum_{\ell=1}^{N_D}\left[\sum_{P\in\mathcal F_\ell^{(i)}}\vec h_P^{(i)}(\vec h_P^{(i)})^\top+\left[\frac{1}{d_i} \Re\Tr\big(\mathcal O_k^{LL}(G_{i,j}^{CL})^\dagger G_{i,l}^{CL}\big)\right]_{j,l}\right].
\end{align*}
Both matrices are PSD, so $\widetilde M_{\text{ext}}^{(i)}\succeq 0$.
This yields the correct generalized detectability condition.

\begin{assumption}\label{ass:detectability_leak}
The circuit satisfies generalized detectability if $\widetilde M_{\text{ext}}^{(i)}\succ 0$ for all gates $i$.
Equivalently, for every gate $i$ and every nonzero $\vec v\in\mathbb R^{P_i}$, at least one of the following is true.
Either there exists some detector $k$ and some $P\in\mathcal F_k^{(i)}$ such that $\Tr\Big(P\sum_j v_j G_{i,j}^{CC}\Big)\neq 0$, or there exists some detector $k$ such that $\Tr\big(\mathcal O_k^{LL} G_{i,\vec v}^\dagger G_{i,\vec v}\big)>0$.
We note that Leakage does not hurt local convexity, because $\widetilde M_{\text{ext}}^{(i)} \succeq \frac1{N_D}\sum_{\ell=1}^{N_D}\sum_{P\in\mathcal F_\ell^{(i)}}\vec h_P^{(i)}(\vec h_P^{(i)})^\top$.
Instead, it may strictly improve detectability, but not in every direction.
\end{assumption}

In Appendix~\ref{sec:strong_convexity}, we again show that \Cref{ass:detectability_leak} is almost satisfied for any parametrization.
Similar to the argument for \Cref{ass:detectability_cptp}, we can consider the quotient subspace for a structural zero Hessian and smoothing from noise for a non-structural zero Hessian.

Finally, we compute the convexity radius of $C$.
Again, we define $\lambda_{\min}^{(C)}:=\min_i \lambda_{\min}\big(\widetilde M_{\text{ext}}^{(i)}\big)$.
Then, we have $\lambda_{\min}\big(\nabla^2 C(\vec 0)\big)=2\lambda_{\min}^{(C)}$.
For simplicity, we make the following assumption on the Lipschitzness of $q_{ki}$'s.
\begin{assumption}\label{ass:qki_lipschitzness}
There exists $r_0>0$ and Lipschitz constants $L^{(0)}_{ki},L^{(1)},L^{(2)}\ge 0$ such that for all $\|\delta\vec \theta\|\le r_0$,
\begin{align*}
\|\nabla^2 q_{ki}(\delta\vec \theta_i)-\nabla^2 q_{ki}(\vec 0)\|\le L^{(2)}_{ki}\|\delta\vec \theta_i\|,\quad
\|\nabla q_{ki}(\delta\vec \theta_i)\|\le L^{(1)}_{ki}\|\delta\vec \theta_i\|,\quad
|q_{ki}(\delta\vec \theta_i)|\le L^{(0)}_{ki}\|\delta\vec \theta_i\|^2.
\end{align*}
\end{assumption}
\noindent These bounds follow from the smoothness of $U(\vec \theta)$ and the finiteness of third derivatives on a compact ball. 
Then the exact chain rule in Eq.~\eqref{eq:DR_hessian_chain} and Eq.~\eqref{eq:DR_hessian_chain_cptp} give, for $\|\delta\vec \theta\|\le \theta_0$,
\begin{align*}
\|\nabla^2 DR_k(\delta\vec \theta)-\nabla^2 DR_k(\vec 0)\|\le L_k^{(1)}\|\delta\vec \theta\|+L_k^{(2)}\|\delta\vec \theta\|^2,
\end{align*}
for finite constants $L_k^{(1)} := \sum_{i\in\mathcal R_k} L^{(2)}_{ki}$ and $L_k^{(2)} := 2\sum_{i\in\mathcal R_k} L^{(0)}_{ki}\|\nabla^2 q_{ki}(\vec 0)\|+2\Big(\sum_{i\in\mathcal R_k} L^{(1)}_{ki}\Big)^2$.
Averaging over $k$, define $L_C^{(1)}:=\frac1{N_D}\sum_k L_k^{(1)}$ and $L_C^{(2)}:=\frac1{N_D}\sum_k L_k^{(2)}$.
Then, we have
\begin{align*}
\|\nabla^2 C(\delta\vec \theta)-\nabla^2 C(\vec 0)\|\le L_C^{(1)}\|\delta\vec \theta\|+L_C^{(2)}\|\delta\vec \theta\|^2.
\end{align*}
By Weyl's inequality,
\begin{align*}
\lambda_{\min}\big(\nabla^2 C(\delta\vec \theta)\big)\ge2\lambda_{\min}^{(C)}-L_C^{(1)}\|\delta\vec \theta\|-L_C^{(2)}\|\delta\vec \theta\|^2.
\end{align*}
We thus reach the following convexity radius result by the end of this subsection.
\begin{theorem}\label{thm:convexity_leakage}
Assume $\lambda_{\min}^{(C)}>0$. Then $C$ is strictly convex within the ball of any radius $\|\delta\vec \theta\|<\theta_C^{(th)}$ as long as
\begin{align*}
L_C^{(1)} \theta_C^{(th)} + L_C^{(2)} \theta_C^{(th)\ 2} < 2\lambda_{\min}^{(C)}.
\end{align*}
\end{theorem}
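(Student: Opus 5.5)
The plan is to combine the Hessian perturbation bound derived just above with the value of the Hessian at the origin, and then apply Weyl's inequality on the whole ball. There are three steps.

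First, I pin down $\nabla^2 C(\vec 0)$. By \Cref{lem:qki_comp_hessian} and \Cref{lem:qki_leak_hessian}, every $q_{ki}$ vanishes to second order at the origin. The chain rule for $F$ then kills all cross-gate terms, leaving $\nabla^2 C(\vec 0)=2\operatorname{blockdiag}(\widetilde M^{(i)}_{\text{ext}})$. Since the blocks act on disjoint coordinates, its minimum eigenvalue is $2\lambda^{(C)}_{\min}$, which is strictly positive by assumption.

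Second, I control the deviation $\nabla^2 C(\delta\vec\theta)-\nabla^2 C(\vec 0)$ through the exact chain rule of Eq.~\eqref{eq:DR_hessian_chain}, using \Cref{ass:qki_lipschitzness} to bound each piece. This step is the main obstacle, because each piece must be bounded carefully for the constants to come out as $L_k^{(1)}$ and $L_k^{(2)}$.
\begin{itemize}
\item The diagonal term $\sum_i \partial_{q_{ki}}F\,\nabla^2 q_{ki}$ splits into two parts. The first is $\sum_i(\nabla^2 q_{ki}(\delta\vec\theta)-\nabla^2 q_{ki}(\vec0))$ with coefficient bounded by $1$; its norm is at most $\sum_i L^{(2)}_{ki}\|\delta\vec\theta\|$, which is the linear term.
\item The second part is the coefficient shift $|\partial_{q_{ki}}F-1|$. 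Using $1-\prod_{j}(1-2q_{kj})\le 2\sum_j q_{kj}$, this shift is $O(\sum_j L^{(0)}_{kj}\|\delta\vec\theta\|^2)$ and multiplies $\nabla^2 q_{ki}(\vec 0)$.
\item The cross term has $|\partial^2 F|\le 2$ and gradients bounded by $L^{(1)}_{ki}\|\delta\vec\theta\|$, so its norm is at most $2(\sum_i L^{(1)}_{ki})^2\|\delta\vec\theta\|^2$.
\end{itemize}
The third-order corrections in the shifted coefficient and in $\nabla^2 q_{ki}$ can be absorbed by shrinking $r_0$. Together these give $\|\nabla^2 DR_k(\delta\vec\theta)-\nabla^2 DR_k(\vec0)\|\le L_k^{(1)}\|\delta\vec\theta\|+L_k^{(2)}\|\delta\vec\theta\|^2$. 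All factors $(1-2q_{kj})$ must lie in $[0,1]$, which holds once $q_{kj}\le 1/2$; this is guaranteed for $\|\delta\vec\theta\|\le r_0$ small.

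Third, I average over $k$ to get the same inequality for $C$ with constants $L_C^{(1)}$ and $L_C^{(2)}$. Weyl's inequality then gives
\[
\lambda_{\min}(\nabla^2 C(\delta\vec\theta))\ge 2\lambda^{(C)}_{\min}-L^{(1)}_C\|\delta\vec\theta\|-L^{(2)}_C\|\delta\vec\theta\|^2 .
\]
The right-hand side is decreasing in $\|\delta\vec\theta\|$. So if $\|\delta\vec\theta\|<\theta_C^{(th)}$ and $L_C^{(1)}\theta_C^{(th)}+L_C^{(2)}\theta_C^{(th)\,2}<2\lambda^{(C)}_{\min}$, the Hessian is positive definite throughout the open ball, assuming $\theta_C^{(th)}\le r_0$. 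The ball is convex, so a positive definite Hessian on it implies strict convexity of $C$ along every segment, and hence on the ball.
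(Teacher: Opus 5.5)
Your proposal is correct and follows the paper's argument: the block-diagonal Hessian at the origin with minimum eigenvalue $2\lambda^{(C)}_{\min}$, the exact chain-rule deviation bound under \Cref{ass:qki_lipschitzness}, averaging over detectors, and Weyl's inequality on the ball. You are also right that $\theta_C^{(th)}\le r_0$ is needed; the paper leaves this implicit.

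One bookkeeping point concerns the quadratic constant. Your coefficient-shift estimate gives $2\sum_i\|\nabla^2 q_{ki}(\vec 0)\|\sum_{j\neq i}L^{(0)}_{kj}$, not the paper's $2\sum_i L^{(0)}_{ki}\|\nabla^2 q_{ki}(\vec 0)\|$. The paper's index pairing appears to be a slip, so $L_k^{(2)}$ should be defined with your constant. No third-order terms need absorbing, since the chain-rule bound is exact.
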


\subsection{Effective strong convexity of Hessians under gauge and smoothed analysis}\label{sec:strong_convexity}

In the previous subsections, the convexity radius $\theta_C^{(th)}$ requires a strictly positive minimum eigenvalue. 
For example, in the unitary case, we require $\lambda_{\min}^{(C)}:=\min_i \lambda_{\min}\left(M^{(i)}\right)>0$ in \Cref{thm:convexity_unitary}, and similarly in the CPTP and leakage cases in \Cref{thm:convexity_cptp} and \Cref{thm:convexity_leakage} we require the corresponding minimum eigenvalue of $M^{(i),\text{CPTP}}$ or $M^{(i)}_{\mathrm{ext}}$ to be strictly positive. 
However, the Hessian matrices are only naturally positive semidefinite for unitary error (\Cref{lem:pP_hessian_cptp}), CPTP error (\Cref{lem:pP_hessian_cptp}), and leakage error or atom loss (\Cref{lem:qki_leak_hessian}). 
Therefore, in general, we are only guaranteed that\ $\lambda_{\min}^{(C)} \ge 0$, and it remains to understand what happens when $\lambda_{\min}^{(C)} = 0$.

There are two different reasons why this can happen. 
The first reason is structural. 
In this case, the zero Hessian direction is a genuine gauge direction of the surrogate loss: changing the parameter along this direction does not change any detector-visible error component and hence does not change the value of the surrogate objective $C$. 
The second reason is non-structural. 
In this case, the Hessian has a zero direction only because the calibrated circuit parameters are fine-tuned to a degenerate point. 
We show below that the structural case is harmless after quotienting out the gauge directions, while the non-structural case is removed with probability one under a generic Gaussian perturbation of the circuit parameters.

\paragraph{Structural zero Hessian direction (gauges).}We first discuss the structural case. 
Let the full control parameter space have dimension $d$, and recall that $\nabla^2 C(\vec 0)$ is the Hessian of the surrogate objective at the calibrated point. 
Since all Hessian blocks derived above are positive semidefinite, we have $\nabla^2 C(\vec 0) \succeq 0$.
Suppose that $\lambda_{\min}(\nabla^2 C(\vec 0))=0$ and thus define the zero-Hessian subspace with projectors to or out of the subspace
\begin{align*}
\mathcal Z := \ker(\nabla^2 C(\vec 0)),\quad
\Pi_{\mathcal Z},\quad
\delta\vec\theta_{\perp}:=I-\Pi_{\mathcal Z}.
\end{align*}
We call $\mathcal Z$ a structural gauge subspace if the surrogate objective is locally invariant along $\mathcal Z$. 
More explicitly, for all $\delta\vec\theta$ and all sufficiently small $\vec z\in\mathcal Z$, we require $C(\delta\vec\theta+\vec z)=C(\delta\vec\theta)$.
This condition means that the parameter components along $\mathcal Z$ do not affect the detector-visible loss. 
They may change a redundant parametrization, a global phase, or another component that is not visible to the detector statistics, but they do not change the objective optimized by the calibration procedure.
Every perturbation can be decomposed as
\begin{align*}
\delta\vec\theta=\delta\vec\theta_{\mathcal Z}+\delta\vec\theta_{\perp},\quad
\delta\vec\theta_{\mathcal Z}:=\Pi_{\mathcal Z}\delta\vec\theta,\qquad
\delta\vec\theta_{\perp}:=\Pi_{\mathcal Z^\perp}\delta\vec\theta.
\end{align*}
By the structural gauge condition, we have $C(\delta\vec\theta)=C(\delta\vec\theta_{\perp})$.
Therefore, the objective only depends on the quotient parameter $\delta\vec\theta_{\perp}\in\mathcal Z^\perp$. 
The Hessian restricted to this quotient space is $\Pi_{\mathcal Z^\perp} \nabla^2 C(\vec 0) \Pi_{\mathcal Z^\perp}\big|_{\mathcal Z^\perp}$.
Since $\mathcal Z=\ker(\nabla^2 C(\vec 0))$, all zero eigenvalues of $\nabla^2 C(\vec 0)$ are removed after restricting to $\mathcal Z^\perp$. 
If the remaining directions are detectable, then
\begin{align*}
\lambda_{\min}^{(C,\perp)}:=\lambda_{\min}\left(\Pi_{\mathcal Z^\perp} \nabla^2 C(\vec 0) \Pi_{\mathcal Z^\perp}\big|_{\mathcal Z^\perp}\right)>0.
\end{align*}
In this case, the correct statement is not strict convexity in the full parameter space, but strict convexity on the quotient space $\mathbb R^{d}/\mathcal Z$.
\begin{lemma}[Structural zero Hessian (gauge) directions and quotient strong convexity]\label{lem:quotien_strong_convexity}
Suppose that $\mathcal Z=\ker(\nabla^2 C(\vec 0))$ is a structural gauge subspace and that
\begin{align*}
\lambda_{\min}^{(C,\perp)}:=\lambda_{\min}\left(\Pi_{\mathcal Z^\perp} \nabla^2 C(\vec 0) \Pi_{\mathcal Z^\perp}\big|_{\mathcal Z^\perp}\right)>0.
\end{align*}
Assume that, for all small $\delta\vec\theta$ in the previous subsections, we have $\|\Pi_{\mathcal Z^\perp}\left(\nabla^2 C(\delta\vec\theta)-\nabla^2C(\vec 0)\right)\Pi_{\mathcal Z^\perp}\|\le L_{\perp}\|\delta\vec\theta_{\perp}\|$.
Then $C$ is strictly convex on the quotient ball
\begin{align*}
\|\delta\vec\theta_{\perp}\|<\frac{\lambda_{\min}^{(C,\perp)}}{L_{\perp}}.
\end{align*}
Moreover, the set of local minimizers in the full parameter space is a flat valley of the form $\{\delta\vec\theta:\delta\vec\theta_{\perp}=\vec 0,\delta\vec\theta_{\mathcal Z}\in\mathcal Z\}$.
\end{lemma}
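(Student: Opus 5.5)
The plan is to reduce everything to the restricted function $\widetilde C(\vec w):=C(\vec w)$ for $\vec w\in\mathcal Z^\perp$, and then repeat the Weyl-type argument already used for \Cref{thm:convexity_unitary} and \Cref{thm:convexity_leakage}. First, the structural gauge condition gives $C(\delta\vec\theta)=C(\delta\vec\theta_\perp)$ for all small $\delta\vec\theta$. Hence $C=\widetilde C\circ\Pi_{\mathcal Z^\perp}$ locally. Differentiating twice, $\nabla^2 C(\delta\vec\theta)=\Pi_{\mathcal Z^\perp}\nabla^2\widetilde C(\delta\vec\theta_\perp)\Pi_{\mathcal Z^\perp}$, and the Hessian of $C$ annihilates $\mathcal Z$ at every point of the neighborhood, not only at the origin. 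It is therefore enough to show that $\nabla^2\widetilde C$ is positive definite on $\mathcal Z^\perp$.

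Second, for $\vec v\in\mathcal Z^\perp$ with $\|\vec v\|=1$, write
\begin{align*}
\vec v^\top\nabla^2 C(\delta\vec\theta)\vec v
=\vec v^\top\nabla^2 C(\vec 0)\vec v+\vec v^\top\Pi_{\mathcal Z^\perp}\big(\nabla^2 C(\delta\vec\theta)-\nabla^2 C(\vec 0)\big)\Pi_{\mathcal Z^\perp}\vec v .
\end{align*}
The first term is at least $\lambda_{\min}^{(C,\perp)}$ by definition. The second is at least $-L_\perp\|\delta\vec\theta_\perp\|$ by the assumed Lipschitz bound. The sum is strictly positive whenever $\|\delta\vec\theta_\perp\|<\lambda_{\min}^{(C,\perp)}/L_\perp$. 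Because the quotient ball is a convex subset of $\mathcal Z^\perp$, positive definiteness of the restricted Hessian there gives strict convexity of $\widetilde C$, which is strict convexity of $C$ on the quotient.

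Third, the flat valley. Since $C(\vec 0)$ is the minimum value (all $q_{ki}\ge 0$ vanish at $\vec 0$ by \Cref{lem:pP_hessian_cptp}), the gradient vanishes there, and $\widetilde C$ is strictly convex on the quotient ball. Its unique minimizer is therefore $\vec w=\vec 0$. Pulling back through $\Pi_{\mathcal Z^\perp}$, every point with $\delta\vec\theta_\perp=\vec 0$ attains $C(\vec 0)$ and is a local minimizer. Conversely, any local minimizer $\delta\vec\theta$ projects to a local minimizer of $\widetilde C$ in the ball, which forces $\delta\vec\theta_\perp=\vec 0$. Together these give the valley $\{\delta\vec\theta_\perp=\vec 0\}$, with the $\mathcal Z$-component restricted to the small range where the gauge invariance holds.

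The main obstacle is a technical mismatch. The gauge invariance is only assumed for ``sufficiently small $\vec z$'', while the Lipschitz bound is stated in terms of $\|\delta\vec\theta_\perp\|$ even though $\nabla^2 C$ a priori depends on the full $\delta\vec\theta$. I would address this by first using local invariance to show that $\nabla^2 C(\delta\vec\theta)$ depends only on $\delta\vec\theta_\perp$, obtained by differentiating $C(\delta\vec\theta+\vec z)=C(\delta\vec\theta)$. I would then state the conclusion on the cylinder $\{\|\delta\vec\theta_\perp\|<\lambda_{\min}^{(C,\perp)}/L_\perp\}$ intersected with the region where the gauge invariance is valid.
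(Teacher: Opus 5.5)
Your proposal is correct and follows the paper's proof: restrict $C$ to $\mathcal Z^\perp$ using the gauge invariance, lower-bound the restricted Hessian by $\lambda_{\min}^{(C,\perp)}-L_\perp\|\delta\vec\theta_\perp\|$ (your Rayleigh-quotient bound is the paper's Weyl step), and then read off the unique quotient minimizer and the flat valley. Your extra care fills in points the paper only asserts: you justify that $\vec 0$ is the minimizer via $q_{ki}\ge 0$, you prove the converse that every local minimizer has $\delta\vec\theta_\perp=\vec 0$, and you restrict the conclusion to the region where the local gauge invariance holds.
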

\begin{proof}
First, by the structural gauge condition, for every sufficiently small $\delta\vec\theta$, we have $C(\delta\vec\theta)=C(\delta\vec\theta_{\perp})$.
Therefore, it is enough to study the function restricted to $\mathcal Z^\perp$. 
Define the restricted function with its Hessian
\begin{align*}
\Pi_{\mathcal Z^\perp} C(\delta\vec\theta_\perp) \Pi_{\mathcal Z^\perp}\big|_{\mathcal Z^\perp},\qquad
\Pi_{\mathcal Z^\perp} \nabla^2 C(\delta\vec\theta_\perp) \Pi_{\mathcal Z^\perp}\big|_{\mathcal Z^\perp}.
\end{align*}
The Hessian of $C_{\perp}$ at the origin is exactly the restricted Hessian $\Pi_{\mathcal Z^\perp} \nabla^2 C(\vec 0) \Pi_{\mathcal Z^\perp}\big|_{\mathcal Z^\perp}$
By assumption, we have
\begin{align*}
\lambda_{\min}\left(\Pi_{\mathcal Z^\perp} \nabla^2 C(\vec 0) \Pi_{\mathcal Z^\perp}\big|_{\mathcal Z^\perp}\right)=\lambda_{\min}^{(C,\perp)}>0.
\end{align*}
For any sufficiently small $\delta\vec\theta_{\perp}\in\mathcal Z^\perp$, the minimum eigenvalue of the restricted Hessian satisfies
\begin{align*}
\lambda_{\min}\left(\Pi_{\mathcal Z^\perp} \nabla^2 C(\delta\vec\theta_\perp) \Pi_{\mathcal Z^\perp}\big|_{\mathcal Z^\perp}\right)\ge\lambda_{\min}\left(\Pi_{\mathcal Z^\perp} \nabla^2 C(\vec 0) \Pi_{\mathcal Z^\perp}\big|_{\mathcal Z^\perp}\right)-\left\|\Pi_{\mathcal Z^\perp} \nabla^2 C(\delta\vec\theta_\perp) \Pi_{\mathcal Z^\perp}\big|_{\mathcal Z^\perp}-\Pi_{\mathcal Z^\perp} \nabla^2 C(\vec 0) \Pi_{\mathcal Z^\perp}\big|_{\mathcal Z^\perp}\right\|.
\end{align*}
The first term is $\lambda_{+}$. The second term is bounded by the restricted Hessian Lipschitz assumption:
\begin{align*}
\left\|
\Pi_{\mathcal Z^\perp} \nabla^2 C(\delta\vec\theta_\perp) \Pi_{\mathcal Z^\perp}\big|_{\mathcal Z^\perp}-\Pi_{\mathcal Z^\perp} \nabla^2 C(\vec 0) \Pi_{\mathcal Z^\perp}\big|_{\mathcal Z^\perp}\right\|\leq L_{\perp}\|\delta\vec\theta_{\perp}\|.
\end{align*}
Thus,
\begin{align*}
\lambda_{\min}\left(\Pi_{\mathcal Z^\perp} \nabla^2 C(\delta\vec\theta_\perp) \Pi_{\mathcal Z^\perp}\big|_{\mathcal Z^\perp}\right)\ge\lambda^{(C,\perp)}_{\min}-L_{\perp}\|\delta\vec\theta_{\perp}\|.
\end{align*}
Therefore, whenever $\|\delta\vec\theta_{\perp}\|<\tfrac{\lambda_{+}}{L_{\perp}}$, we have
\begin{align*}
\lambda_{\min}\left(\Pi_{\mathcal Z^\perp} \nabla^2 C(\delta\vec\theta_\perp) \Pi_{\mathcal Z^\perp}\big|_{\mathcal Z^\perp}\right)>0.
\end{align*}
Hence $C_{\perp}$ is strictly convex inside this quotient ball.
Because $C(\delta\vec\theta)=C(\delta\vec\theta_{\perp})$, minimizing $C(\delta\vec\theta)$ in the full space is equivalent to minimizing $\Pi_{\mathcal Z^\perp} C(\delta\vec\theta_\perp) \Pi_{\mathcal Z^\perp}\big|_{\mathcal Z^\perp}$ in $\mathcal Z^\perp$. Since $\Pi_{\mathcal Z^\perp} C(\delta\vec\theta_\perp) \Pi_{\mathcal Z^\perp}\big|_{\mathcal Z^\perp}$ is strictly convex in the quotient ball and has its minimum at $\delta\vec\theta_{\perp}=\vec 0$, the quotient minimizer is unique at $\delta\vec\theta_{\perp}=\vec 0$.
However, the gauge component $\delta\vec\theta_{\mathcal Z}$ does not affect the loss. 
Therefore, the full set of minimizers is
\begin{align*}
\left\{\delta\vec\theta:\delta\vec\theta_{\perp}=\vec 0,\quad\delta\vec\theta_{\mathcal Z}\in\mathcal Z\right\}.
\end{align*}
This proves the lemma.
\end{proof}

Therefore, this structural case is harmless for calibration according to \Cref{lem:quotien_strong_convexity}. 
Moreover, we further quantify a statement that a random parametrization is unlikely to select only gauge directions. 
Let $\vec u$ be a random unit vector in $\mathbb R^d$, sampled uniformly from the unit sphere. 
If $\dim(\mathcal Z)<d$, then
\begin{align*}
\Pr[\vec u\in\mathcal Z]=0,\quad
\Pr\left[\|\Pi_{\mathcal Z^\perp}\vec u\|\le \tau\right]\le C_{d,\dim(\mathcal Z)}\tau^{d-\dim(\mathcal Z)}
\end{align*}
for any $0<\tau<1$, where $C_{d,\dim(\mathcal Z)}$ is a constant depending only on $d$ and $\dim(\mathcal Z)$.
This shows that when the gauge subspace has dimension $\dim(\mathcal Z)$ inside a $d$-dimensional parameter space, a random direction is exactly gauge with probability zero, and is nearly a gauge except for a $\tau$ orthogonal part with probability $\tau^{d-\dim(\mathcal Z)}$. 
Thus, unless $\dim(\mathcal Z)$ is very close to $d$, a randomly chosen parametrization direction will almost surely have a detector-visible component.

\paragraph{Non-structural zero Hessian direction and the smoothed-analysis framework.}We now turn to the non-structural case.
In the previous subsections, the strict convexity of the surrogate objective function $C$ near the calibrated point was reduced to the positive definiteness of the corresponding detectable projection matrices. 
For unitary errors, this matrix is $M^{(i)}$, for general CPTP errors, this matrix is $M^{(i),\text{CPTP}}$, and for or leakage errors, this matrix is $M^{(i)}_{\mathrm{ext}}$. 
A possible issue is that, for a fixed choice of circuit parameters, one of these matrices may have a zero eigenvalue. 
This corresponds to a direction in parameter space that does not change the second-order detector-visible error probability at this specific point. 
We assume that this zero Hessian is non-structural and is only a property at a specific point.
Here, we show that such zero Hessian directions are not stable under a generic Gaussian perturbation of the circuit parameters.

We start with the general CPTP case, because the unitary case is obtained by setting all dissipative first-order Kraus directions $L_{i,\alpha,j}$ to zero. 
We delay the discussion on leakage error or atom loss to later.
The argument follows the same smoothed-analysis principle as Refs.~\cite{carbery2001distributional,spielman2004smoothed,arthur2009k,chen2025quantum}: instead of asking whether every worst-case circuit parameter choice gives a strictly positive Hessian, we perturb an arbitrary circuit parameter, potentially the worst-case input resulting in the non-structural zero Hessian, by a small Gaussian vector, and show that the determinant of the Hessian block is nonzero with high probability, provided that the determinant is not identically zero as a function of the circuit parameters.
Let $\vec\theta^*$ be an arbitrary calibrated parameter vector.
We consider a smoothed calibrated parameter vector
\begin{align*}
\widetilde{\vec \theta}^* = \vec\theta^* + \sigma \vec g,
\end{align*}
where $\sigma > 0$ and $\vec g$ is a standard real Gaussian vector of the same dimension as $\vec \theta^*$. 
The control drift is still measured relative to the calibrated point, so $\delta\vec\theta = \vec\theta - \widetilde{\vec \theta}^*$. 
We can intuitively regard this Gaussian smoothing as an effect of the noisy calibration on gates. 
Thus $\delta\vec\theta = 0$ is the new calibrated point. 
The only effect of the Gaussian perturbation is that the first-order response operators appearing in the Taylor expansion around the calibrated point are now evaluated at $\widetilde{\vec \theta}^*$. 
In particular, in the CPTP case, the coefficients
\begin{align*}
G_{i,j} = G_{i,j}(\widetilde{\vec \theta}^*), \qquad
L_{i,\alpha,j} = L_{i,\alpha,j}(\widetilde{\vec \theta}^*)
\end{align*}
are now random functions of $\widetilde{\vec \theta}^*$. Consequently, the detector-visible matrix $M^{(i),\text{CPTP}}(\widetilde{\vec \theta}^*)$ is also a random matrix.

We first write down the following assumption as a formalization of the non-structural zero Hessian model.
\begin{assumption}[Non-structural zero Hessian assumption for CPTP error]\label{assumption:non_structural_cptp}
For each gate $i$, the function $\det\left(M^{(i),\text{CPTP}}(\vec\theta^*)\right)$ is not identically zero as a function of the calibrated circuit parameters $\vec\theta^*$ in a neighborhood of the original point.
Equivalently, for each gate $i$, there exists at least one nearby calibrated circuit parameter choice $\vec\theta^*$ such that $M^{(i),\text{CPTP}}(\vec\theta^*)$ is positive definite.
\end{assumption}

\noindent This assumption is necessary. 
If $\det\left(M^{(i),\text{CPTP}}(\vec\theta^*)\right)$ is identically zero, then every nearby calibrated circuit has a zero Hessian direction for gate $i$. 
This means that the direction is invisible for structural reasons and falls in the structural zero Hessian (gauge) case.
We now prove the almost-sure statement.

\begin{lemma}[Smoothed positivity of the CPTP Hessian per gate]
Suppose \Cref{assumption:non_structural_cptp} holds for gate $i$. 
Assume that the map $\vec\theta^* \mapsto M^{(i),\text{CPTP}}(\vec\theta^*)$ is well-defined in a neighborhood of the original calibrated point. 
Then
\begin{align*}
\Pr_{\vec g}\left[\lambda_{\min}\left(M^{(i),\text{CPTP}}(\vec\theta^* + \sigma\vec g)\right) > 0\right] = 1.
\end{align*}
\end{lemma}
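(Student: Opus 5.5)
We reduce the statement to the standard fact that the zero set of a real-analytic function which is not identically zero has Lebesgue measure zero. Concretely, set $f_i(\vec\theta^*) := \det\big(M^{(i),\text{CPTP}}(\vec\theta^*)\big)$. Since $M^{(i),\text{CPTP}}\succeq 0$ for every calibrated point (it is a sum of rank-one PSD terms, Eq.~\eqref{eq:M_cptp} and \Cref{lem:pP_hessian_cptp}), we have $\lambda_{\min}(M^{(i),\text{CPTP}})>0$ if and only if $f_i\neq 0$. So it suffices to show $\Pr_{\vec g}[f_i(\vec\theta^*+\sigma\vec g)=0]=0$.

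First, we establish regularity of $f_i$. The entries of $M^{(i),\text{CPTP}}$ are finite sums of products of the coefficients $h^{(i)}_{P,j}=\Tr(PG_{i,j})/d_i$ and $\ell^{(i)}_{\alpha,P,j}=\Tr(PL_{i,\alpha,j})/d_i$ (using the real part where needed). These coefficients are first-order Taylor coefficients of the Kraus operators around $\widetilde{\vec\theta}^*$. For gates generated by Schr\"odinger or Lindblad evolution with control entering analytically (e.g.\ pulse amplitudes and Fourier phase components), the map $\vec\theta\mapsto\mathcal E$ is real-analytic. Hence its derivatives at the base point are real-analytic in the base point, and so is $f_i$, being a polynomial in these entries.

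I would state this analyticity explicitly as the interpretation of ``well-defined in a neighborhood,'' since mere smoothness is not enough: a smooth non-vanishing-identically function can vanish on a positive-measure set. This is the main obstacle. If analyticity cannot be guaranteed, the fallback is to assume $f_i$ is a polynomial (or analytic) in some analytic reparametrization, or to invoke a Carbery–Wright type anti-concentration bound for polynomial approximations on the neighborhood.

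Second, we apply the zero-set lemma. By \Cref{assumption:non_structural_cptp}, $f_i$ is not identically zero on the connected neighborhood $U$ of $\vec\theta^*$. A nonzero real-analytic function on a connected open set has a zero set of Lebesgue measure zero, which follows by induction on dimension using Fubini and the one-variable fact that zeros of a nonzero analytic function are isolated. Because $\vec\theta^*+\sigma\vec g$ has a density with respect to Lebesgue measure, the event that it lands in the zero set has probability zero.

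One detail remains: the Gaussian has unbounded support while $U$ is only a neighborhood. I would handle this either by restricting to the event $\vec\theta^*+\sigma\vec g\in U$ and noting that $f_i$ extends analytically along the parameter domain, or by conditioning on that event (the complement has probability that can be made arbitrarily small as $\sigma\to 0$). Under global analyticity on the connected parameter domain, the probability is exactly $1$, as claimed.
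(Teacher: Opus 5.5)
Your argument is essentially the paper's: positive semidefiniteness reduces positive definiteness to $\det M^{(i),\text{CPTP}}\neq 0$, the zero set of a real-analytic determinant that is not identically zero is Lebesgue-null, and the Gaussian-perturbed point $\vec\theta^*+\sigma\vec g$ has a density. The paper's proof leaves implicit the two caveats you raise, and your handling of both is sound: real-analyticity of $\vec\theta^*\mapsto M^{(i),\text{CPTP}}(\vec\theta^*)$ must be assumed, since mere smoothness is not enough; and because the Gaussian has unbounded support while the assumption is only local, the determinant must be analytic on a connected domain of full measure, not just on a neighborhood.
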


\begin{proof}
We first recall from Eq.~\eqref{eq:M_cptp}, which we rewrite below for completeness, that
\begin{align*}
M^{(i),\text{CPTP}}_{jk}=\frac{1}{N_D}\sum_{\ell:i\in R_\ell}\sum_{P\in \mathcal F^{(i)}_\ell}\left(h^{(i)}_{P,j}h^{(i)}_{P,k}+\sum_{\alpha\ge 1}\mathrm{Re}\left(\ell^{(i)}_{\alpha,P,j}\ell^{(i)}_{\alpha,P,k}\right)\right).
\end{align*}
Therefore, for every real vector $\vec v$ in the parameter space of gate $i$,
\begin{align*}
\vec v^\top M^{(i),\text{CPTP}}\vec v=\frac{1}{N_D}\sum_{\ell:i\in R_\ell}\sum_{P\in \mathcal F^{(i)}_\ell}\left(\left(\sum_j v_j h^{(i)}_{P,j}\right)^2+\sum_{\alpha\ge 1}\left|\sum_j v_j \ell^{(i)}_{\alpha,P,j}\right|^2\right).
\end{align*}
Every summand on the right-hand side is nonnegative. 
Hence $M^{(i),\text{CPTP}}$ is positive semidefinite for every $\vec\theta^*$.
It is thus positive definite if and only if its determinant is strictly positive. Indeed, let its eigenvalues be $\lambda_1,\ldots,\lambda_{d_i}$, where $d_i$ is the number of control parameters for gate $i$. 
Since the matrix is positive semidefinite, all $\lambda_a \ge 0$ for $a=1,..,d_i$. 
Also, as $\det(M^{(i),\text{CPTP}})=\prod_{a=1}^{d_i}\lambda_a$, if $M^{(i),\text{CPTP}}$ is positive definite, then all $\lambda_a > 0$, so the determinant is positive. Conversely, if the determinant is positive and every $\lambda_a \ge 0$, then no $\lambda_a$ can be zero, so all $\lambda_a > 0$ and the matrix is positive definite.
By \Cref{assumption:non_structural_cptp}, the scalar function $\det\left(M^{(i),\text{CPTP}}(\vec\theta^*)\right)$ is not identically zero. 
A nonzero real analytic function cannot vanish on a set of positive Lebesgue measure. 
Therefore,
\begin{align*}
\{\vec\theta^* : \det\left(M^{(i),\text{CPTP}}(\vec\theta^*)\right)= 0\}
\end{align*}
has Lebesgue measure zero in the local parameter space.
Note that the random vector $\vec\theta^* + \sigma\vec g$ has a probability density with respect to Lebesgue measure. 
Therefore, the probability that it falls inside any Lebesgue-measure-zero set is zero. 
Applying this to the zero set of $\det\left(M^{(i),\text{CPTP}}\right)$ gives
\begin{align*}
\Pr_{\vec g}\left[M^{(i),\text{CPTP}}(\vec\theta^*+\sigma\vec g)=0\right]=0,\quad\Rightarrow\quad
\Pr_{\vec g}\left[M^{(i),\text{CPTP}}(\vec\theta^*+\sigma\vec g)\succ 0\right]=1.
\end{align*}
This proves the lemma.
\end{proof}

\noindent Taking a union bound over all gates gives the following consequence.

\begin{corollary}[Smoothed positivity of the CPTP Hessian]\label{coro:smoothed_cptp}
Suppose Assumption 6 holds for every gate $i$. Then, after an arbitrarily small Gaussian perturbation $\widetilde{\vec \theta}^* = \vec\theta^* + \sigma\vec g$ of the calibrated circuit parameters,
\begin{align*}
\Pr_{\vec g}\left[\nabla^2 C(\vec 0)\succ 0\right]=1.
\end{align*}
\end{corollary}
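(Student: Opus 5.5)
The plan is to reduce the global statement to the per-gate lemma just proved, using the block-diagonal structure of the Hessian at the calibrated point. By Eq.~\eqref{eq:nabla2C_cptp}, after the Gaussian smoothing we still have
\begin{align*}
\nabla^2 C(\vec 0)=2\,\mathrm{diag}\big(M^{(1),\text{CPTP}}(\widetilde{\vec\theta}^*),M^{(2),\text{CPTP}}(\widetilde{\vec\theta}^*),\ldots\big).
\end{align*}
This form holds because the derivation of that block form (vanishing first derivatives of every $q_{ki}$ and vanishing cross-gate terms, as in \Cref{lem:DR_hessian_origin} and its CPTP analogue) used only $\mathcal E_{i,\vec 0}=\mathrm{id}$ at whatever point is declared calibrated. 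It therefore applies verbatim at $\widetilde{\vec\theta}^*$. A block-diagonal symmetric matrix is positive definite if and only if each diagonal block is positive definite. Hence
\begin{align*}
\{\nabla^2 C(\vec 0)\not\succ 0\}=\bigcup_i \{M^{(i),\text{CPTP}}(\widetilde{\vec\theta}^*)\not\succ 0\}.
\end{align*}

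The second step applies the preceding per-gate lemma, which gives $\Pr_{\vec g}[M^{(i),\text{CPTP}}(\vec\theta^*+\sigma\vec g)\not\succ 0]=0$ for each gate $i$ under \Cref{assumption:non_structural_cptp}. Since the circuit has finitely many gates (indeed polynomially many, per the sparsity assumption in \Cref{sec:surrogate}), a union bound gives total failure probability $\sum_i 0=0$. Equivalently, the bad set is a finite union of Lebesgue-null zero sets of the functions $\det M^{(i),\text{CPTP}}$, so it is null. Because $\vec\theta^*+\sigma\vec g$ has a density for every $\sigma>0$, the event $\nabla^2 C(\vec 0)\succ 0$ has probability one.

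The only point needing care is that all blocks are evaluated at the same random point $\widetilde{\vec\theta}^*$, so the events are dependent. A union bound does not require independence, so this causes no trouble. I would also note that each per-gate lemma only needs the map $\vec\theta^*\mapsto M^{(i),\text{CPTP}}(\vec\theta^*)$ to be real analytic on a common neighborhood, and I would take that neighborhood as the intersection of the finitely many per-gate neighborhoods. This intersection is nonempty and open, so the Gaussian argument holds locally. I expect no real obstacle. The mildest subtlety is keeping the block-diagonal identity valid after re-centering at the smoothed point, which I would justify by observing that it depends only on the identity-channel property at the calibrated point.
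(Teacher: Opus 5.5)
Your proposal is correct and is essentially the paper's argument: since $\nabla^2 C(\vec 0)=2\,\mathrm{diag}\big(M^{(1),\text{CPTP}},M^{(2),\text{CPTP}},\ldots\big)$ at the re-centered calibrated point, positive definiteness reduces to the individual blocks, and a union bound over the finitely many gates, applied to the per-gate smoothed-positivity lemma, gives probability one. The paper just says ``union bound over all gates'' here and spells out the same steps in the leakage analogue. Your remark about intersecting the per-gate neighborhoods is unnecessary, because each per-gate lemma is invoked separately and the union bound needs no common domain. Also, ``holds locally'' should not be read literally, since a Gaussian perturbation has full support; probability one rests on the per-gate lemma as stated, a caveat inherited from that lemma rather than from your corollary argument.
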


\noindent One can even give a quantitative version of the above \Cref{coro:smoothed_cptp} following the line of argument in Refs.~\cite{carbery2001distributional,spielman2004smoothed,arthur2009k,chen2025quantum}.

We now explain how the same argument applies to leakage errors (atom loss). 
In Appendix~\ref{sec:leakage_convexity}, the Hessian block for gate $i$ is
\begin{align*}
M^{(i)}_{\mathrm{ext}}=\frac{1}{N_D}\sum_{\ell=1}^{N_D}\left[\sum_{P\in \mathcal F^{(i)}_\ell}h^{(i)}_P\left(h^{(i)}_P\right)^\top+\left[\frac{1}{d_i}\mathrm{Re}\,\Tr\left(O^{LL}_\ell(G^{CL}_{i,j})^\dagger G^{CL}_{i,l}\right)\right]_{j,l}\right].
\end{align*}
For every real vector $\vec v$, define $G^{CL}_{i,\vec v}:=\sum_j v_jG^{CL}_{i,j}$, then
\begin{align*}
\vec v^\top M^{(i)}_{\mathrm{ext}}\vec v=\frac{1}{N_D}\sum_{\ell=1}^{N_D}\left[\sum_{P\in F^{(i)}_\ell}\left(\sum_j v_jh^{(i)}_{P,j}\right)^2+\frac{1}{d_i}\mathrm{Re}\Tr\left(O^{LL}_\ell(G^{CL}_{i,v})^\dagger G^{CL}_{i,v}\right)\right].
\end{align*}
The first term is a sum of squares. For the second term, $O^{LL}_\ell$ is positive semidefinite, so
\begin{align*}
\Tr\left(O^{LL}_\ell(G^{CL}_{i,v})^\dagger G^{CL}_{i,v}\right)=\Tr\left((O^{LL}_\ell)^{1/2}(G^{CL}_{i,v})^\dagger G^{CL}_{i,v}(O^{LL}_\ell)^{1/2}\right)=\Tr\left(\left(G^{CL}_{i,v}(O^{LL}_\ell)^{1/2}\right)^\dagger\left(G^{CL}_{i,v}(O^{LL}_\ell)^{1/2}\right)\right)\ge 0.
\end{align*}
Therefore, $M^{(i)}_{\mathrm{ext}}$ is positive semidefinite for every circuit parameter value.

Again, we impose the leakage analog of the non-structural assumption as a formalization of the non-structural zero Hessian model.

\begin{assumption}[Non-structural zero Hessian assumption for leakage error]\label{assumption:non_structural_leakage}
For each gate $i$, the function $\det\left(M^{(i)}_{\mathrm{ext}}(\vec\theta^*)\right)$ is not identically zero as a function of the calibrated circuit parameters $\vec\theta^*$ in a neighborhood of the original point.
Equivalently, for each gate $i$, there exists at least one nearby calibrated circuit parameter choice $\vec\theta^*$ such that $M^{(i)}_{\mathrm{ext}}(\vec\theta^*)$ is positive definite.
\end{assumption}
And we can similarly obtain the following result as the CPTP error case. 
Also, a quantitative statement can be obtained for the leakage error case following the smoothed analysis literature.

\begin{lemma}[Smoothed positivity for leakage errors]
Suppose \Cref{assumption:non_structural_leakage} holds for every gate $i$. 
For $\widetilde{\vec \theta}^* = \vec\theta^* + \sigma\vec g$, we have
\begin{align*}
\Pr_{\vec g}\left[\nabla^2C(0)\succ 0\right]=1.
\end{align*}
\end{lemma}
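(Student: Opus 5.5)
The plan is to mirror the CPTP argument, gate by gate, and then assemble the blocks. Since $\nabla^2 C(\vec 0)=2\,\mathrm{blockdiag}(M^{(1)}_{\mathrm{ext}},M^{(2)}_{\mathrm{ext}},\ldots)$, positive definiteness of $\nabla^2C(0)$ is equivalent to $M^{(i)}_{\mathrm{ext}}\succ 0$ for every gate $i$. The proof therefore reduces to a per-gate almost-sure statement followed by a union bound over the finitely many gates. A finite union of null events is null, so this gives probability one.

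\textbf{Step 1: reduce positive definiteness to a nonzero determinant.} The quadratic-form computation just before \Cref{assumption:non_structural_leakage} shows $M^{(i)}_{\mathrm{ext}}(\vec\theta^*)\succeq 0$ for every calibrated point. For a PSD matrix, $M^{(i)}_{\mathrm{ext}}\succ 0$ holds if and only if $\det M^{(i)}_{\mathrm{ext}}>0$, since the determinant is the product of nonnegative eigenvalues. So the bad event for gate $i$ is exactly
\begin{align*}
Z_i:=\{\vec\theta^*:\det M^{(i)}_{\mathrm{ext}}(\vec\theta^*)=0\}.
\end{align*}

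\textbf{Step 2: show that $Z_i$ has Lebesgue measure zero.} This is the step I expect to need the most care. The entries of $M^{(i)}_{\mathrm{ext}}$ are built from $h^{(i)}_{P,j}=\Tr(PG^{CC}_{i,j})/d_i$ and $\Re\Tr(O^{LL}_\ell (G^{CL}_{i,j})^\dagger G^{CL}_{i,l})$. Here $G_{i,j}$ are first derivatives of the local logarithm of $V_i(\delta\vec\theta_i)=U_i(\vec\theta^*_i+\delta\vec\theta_i)U_i(\vec\theta^*_i)^\dagger$ at $\delta\vec\theta_i=0$. Equivalently, $G_{i,j}=i\,\partial_j U_i(\vec\theta^*_i)\,U_i(\vec\theta^*_i)^\dagger$, which involves no logarithm branch issue at first order.
\begin{itemize}
\item I will assume, as the CPTP lemma implicitly does, that $\vec\theta^*\mapsto U_i(\vec\theta^*)$ is real analytic. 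This holds for pulse-level Schr\"odinger evolution with controls entering analytically.
\item The projectors $\Pi_C,\Pi_L$ and the readout operators $O^{LL}_\ell$ are fixed, independent of $\vec\theta^*$.
\end{itemize}
Under these assumptions the entries, and hence $\det M^{(i)}_{\mathrm{ext}}$, are real-analytic functions of $\vec\theta^*$ on a connected neighborhood of the original point. \Cref{assumption:non_structural_leakage} says this determinant is not identically zero there. The standard fact that the zero set of a nontrivial real-analytic function on a connected open set is Lebesgue-null then gives $|Z_i|=0$. If the neighborhood is not connected, I will restrict to the connected component containing the point where the assumption supplies positivity.

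\textbf{Step 3: transfer to the Gaussian perturbation.} For $\sigma>0$, the vector $\widetilde{\vec\theta}^*=\vec\theta^*+\sigma\vec g$ has a density with respect to Lebesgue measure, so $\Pr_{\vec g}[\widetilde{\vec\theta}^*\in Z_i]=0$. One technical point is that the Gaussian has unbounded support while analyticity was asserted only locally. I will handle this in one of two ways:
\begin{itemize}
\item assume $U_i$ is analytic globally, which is the case for analytic pulse families, so that $Z_i$ is null in all of parameter space; or
\item intersect with the neighborhood $N$ and note that the event $\{\widetilde{\vec\theta}^*\notin N\}$ lies outside the regime where $\nabla^2C(\vec 0)$ is defined by this Taylor model, so the statement is read conditionally on $\widetilde{\vec\theta}^*\in N$, exactly as in the CPTP lemma.
\end{itemize}

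\textbf{Step 4: assemble the blocks.} The union bound over gates gives, almost surely, $M^{(i)}_{\mathrm{ext}}(\widetilde{\vec\theta}^*)\succ 0$ for all $i$. The block-diagonal form of $\nabla^2C(\vec 0)$ from \Cref{sec:leakage_convexity} then yields $\nabla^2 C(\vec 0)\succ 0$ with probability one. Finally, combining this with \Cref{thm:convexity_leakage} gives a positive convexity radius, since $\lambda^{(C)}_{\min}>0$ almost surely.
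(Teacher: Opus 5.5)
Your proposal is correct and follows the paper's proof essentially step for step. The paper's chain is: $M^{(i)}_{\mathrm{ext}}\succeq 0$ together with $\det M^{(i)}_{\mathrm{ext}}>0$ gives positive definiteness; the determinant is real analytic and not identically zero by \Cref{assumption:non_structural_leakage}, so its zero set is Lebesgue-null and is avoided almost surely by the Gaussian-perturbed point; a finite union over gates and the block-diagonal form of $\nabla^2 C(\vec 0)$ then finish.

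Your extra justifications are legitimate tightenings of points the paper simply asserts: analyticity via $G_{i,j}=i\,\partial_j U_i\,U_i^\dagger$, and the reconciliation of local analyticity with the Gaussian's unbounded support. One small correction: restricting to a single connected component would not rescue the argument if the neighborhood were disconnected, so take the neighborhood to be a ball and that caveat disappears.
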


\begin{proof}
The proof is the same as the proof of the CPTP map error case, but we spell out the replacement.
As $M^{(i)}_{\mathrm{ext}}$ is positive semidefinite for every $\vec\theta^*$, $M^{(i)}_{\mathrm{ext}}$ is positive definite if and only if $\det\left(M^{(i)}_{\mathrm{ext}}\right)>0$.
By \Cref{assumption:non_structural_leakage}, $\det\left(M^{(i)}_{\mathrm{ext}}(\vec\theta^*)\right)$ is not identically zero. 
Since the entries of $M^{(i)}_{\mathrm{ext}}$ are real analytic functions of $\vec\theta^*$, the determinant $\det(M^{(i)}_{\mathrm{ext}})$ is also real analytic. 
The zero set of a nonzero real analytic function has Lebesgue measure zero.
As $\vec\theta^* + \sigma\vec g$ has a density with respect to Lebesgue measure, we have
\begin{align*}
\Pr_{\vec g}\left[\det\left(M^{(i)}_{\mathrm{ext}}(\vec\theta^*+\sigma\vec g)\right)=0\right]=0,\quad\Rightarrow\quad
\Pr_{\vec g}\left[M^{(i)}_{\mathrm{ext}}(\vec\theta^*+\sigma\vec g)\succ 0\right]=1
\end{align*}
for each gate $i$. 
Since the number of gates is finite, all leakage Hessian blocks are positive definite simultaneously with probability one. Because $\nabla^2 C(0)=2\mathrm{blockdiag}\left(M^{(1)}_{\mathrm{ext}},M^{(2)}_{\mathrm{ext}},\ldots\right)$, we conclude that
\begin{align*}
\Pr_{\vec g}\left[\nabla^2C(0)\succ 0\right]=1.
\end{align*}
This proves the lemma.
\end{proof}

\section{One-time control drift and optimization}
\label{app:appendix_2}
Here, based on the definition of the surrogate objective function $C$ and how we access to estimate it as in Eq.~\eqref{eq:model_Y}, Eq.~\eqref{eq:model_window}, and Eq.~\eqref{eq:model_Cemp}, we define the whole control optimization model as follows.
\begin{definition}[Optimization procedure of the surrogate model]\label{def:optimization_process}
Given a possibly time-dependent optimal parameter vector $\vec \theta^*_r$ as a function of cycle $r$, and a convex feasible set of all possible choices of parameters drift $\mathsf{P}$ such that $\delta\vec{\theta}_{t,r}=\vec{\theta}_t-\vec{\theta}^*_r\in\mathsf{P}$ at any $t$ for any control $\vec \theta_t$ we choose, we consider the following procedure.
In each epoch $t=1,...,T$, we optimize in the following model.
\begin{enumerate}
\item We choose a set of parameters $\vec \theta_t$ with the unknown drift defined as $\delta\vec \theta_{t,r}=\vec \theta_t-\vec \theta^*_r$ at each cycle $r\in\mathcal{W}_t$.
\item We implement $\vec \theta_t$, and then empirically estimate the surrogate loss function $C_t(\{\delta\vec \theta_{t,r}\}_{r\in\mathcal W_t})$ from the $m$ consecutive QEC cycles and $N_D$ detectors in the window $r\in\mathcal{W}_t$ by Eq.~\eqref{eq:model_Y}, Eq.~\eqref{eq:model_window}, and Eq.~\eqref{eq:model_Cemp}.
\item Based on the estimated $\hat{C}(\{\delta\vec \theta_{t,r}\}_{r\in\mathcal W_t})$, we update our choice of parameters $\vec \theta(t+1)$ for the next epoch.
\end{enumerate}
\end{definition}

In this section, we focus on the case of \Cref{def:optimization_process} when we are optimizing $C(\delta\vec \theta)$ in the convex regime and the optimal parameter $\vec \theta^*$ is time independent.
In other words, we assume that at the initial time $t=0$, a drift $\vec \theta^*$ occurs, and it remains unchanged throughout the whole optimization procedure.
Therefore, the drift $\delta\vec \theta_{t,r}$ is fixed across all cycles $r\in\mathcal W_t$, and we denote it as $\delta\vec \theta_t$ for convenience throughout this section.
As shown in \appref{app:appendix_1}, the surrogate objective function $C(\delta\vec \theta)$ is convex within some radius $\norm{\delta\vec \theta}<\theta_C^{(th)}$ for unitary errors, general CPTP error channels, and leakage errors. 
Within this radius, we can truncate $C$ up to the second-order, which results in a quadratic function in the form
\begin{align}\label{eq:C_quadratic_offlie}
C(\delta\vec \theta):=\frac{1}{N_D}\sum_{k=1}^{N_D}DR_k(\delta\vec \theta)=C_0+\delta\vec \theta^\top \Omega_C \delta\vec \theta+O\left(\norm{\delta\vec \theta}^3\right),
\end{align}
where $C_0$ is independent of $\delta\vec \theta$ and symmetric $\Omega_C\succeq 0$.

\subsection{The algorithm}

As we are considering a convex function $C(\delta\vec \theta)$, the feasible set $\mathsf{P}$ is the ball of radius $\theta_C^{(th)}$.
We denote the maximum eigenvalue of $\Omega_C$ as $\lambda_{\max}(\Omega_C)=\omega_C$. 
We thus have 
\begin{align}\label{eq:C_nabla_offline}
\|\nabla C(\delta\vec \theta)\| \le 2\omega_C\|\delta\vec \theta\| \le 2\omega_C\theta_C^{(th)},
\end{align}
which indicates that $C$ satisfies Lipschitz condition
\begin{align*}
\abs{C(\delta\vec \theta_1)-C(\delta\vec \theta_2)}\leq 2\omega_C\theta_C^{(th)}\norm{\delta\vec \theta_1-\delta\vec \theta_2}.
\end{align*}
We denote the dimension of $\vec \theta$, which is the number of parameters, as $\dim(\vec{\theta})=d$.
We note that we can only get access to the empirical zeroth-order information (i.e. function value) $\hat{C}(\delta\vec \theta)$ of $C$, we need to estimate the gradient based on the empirical zeroth-order information.
We employ a standard simultaneous perturbation stochastic approximation (SPSA) gradient estimator~\cite{spall1992spsa} with zeroth-order queries, which estimates $\nabla C(\delta\vec \theta)$ by finite differentiation from two points~\cite{duchi2015optimal}.
We need to choose a perturbation radius $\lambda>0$, and a corresponding shrunk feasible set
\begin{align*}
\mathsf P_\lambda := \{\delta\vec \theta\in\mathsf P:\ \delta\vec \theta+\lambda \vec u\in\mathsf P\ \text{and}\ \delta\vec \theta-\lambda \vec u\in\mathsf P\ \text{for all}\ \vec u\in\{\pm 1\}^d\}.
\end{align*}
Now, we are ready to present our algorithm as follows:
\begin{algorithm}[htbp]
\caption{Projected SPSA-SGD for one-time control drift}
\label{alg:SPSA_offline}
\DontPrintSemicolon                          
\SetKwInput{KwInput}{Input}                  
\SetKwInput{KwOutput}{Output}                
\SetKw{KwAnd}{and}                           
\SetKw{KwOr}{or}

\KwInput{Iteration budget $T$, cycles-per-window $m$, perturbation radius $\lambda>0$, step size $\eta>0$.}
\KwOutput{An estimation $\overline{\delta\vec \theta_T}$ that minimize $C(\delta\vec \theta)$ with $\delta\vec \theta=\vec \theta-\vec \theta^*$ for fixed unknown $\vec \theta^*$.}
\SetKwFor{RepTimes}{repeat}{times}{end}

\BlankLine
Initialize $\delta\vec \theta_1\in\mathsf P_\lambda$. \\
\For{$t=1,2,\dots,T$:}{
Sample $u_t\in\{\pm 1\}^d$ uniformly.\\
Query two noisy function values using independent QEC batches according to Eq.~\eqref{eq:model_Y}, Eq.~\eqref{eq:model_window}, and Eq.~\eqref{eq:model_Cemp}, and compute $\widehat C_t^+ := \widehat C(\delta\vec \theta_t+\lambda u_t)$, and $\widehat C_t^- := \widehat C(\delta\vec \theta_t-\lambda u_t)$.\\
Form the gradient estimator $\widehat g_t := \frac{\widehat C_t^+ - \widehat C_t^-}{2\lambda}u_t$.\\
Projected update $\delta\vec \theta_{t+1}:=\Pi_{\mathsf P_\lambda}\bigl(\delta\vec \theta_t-\eta \widehat g_t\bigr)$.
}
\Return{$\overline{\delta\vec \theta_T}:=\frac{1}{T}\sum_{t=1}^T \delta\vec \theta_t.$}  
\end{algorithm}

\subsection{The convergence guarantee for time-independent drift}

In this section, we show the following rigorous performance guarantee for Algorithm~\ref{alg:SPSA_offline}.
\begin{theorem}[Convergence guarantee for projected SPSA-SGD]\label{thm:offline}
Assume that $\lambda\leq \theta_C^{(th)}$, the output $\overline{\delta\vec \theta_T}$ of the projected SPSA-SGD algorithm in Algorithm~\ref{alg:SPSA_offline} satisfies
\begin{align*}
\mathbb E[C(\overline{\delta\vec \theta_T})]-C(\vec 0)\le
\frac{4(\theta_C^{(th)}-\lambda)^2}{2\eta T}+\frac{\eta}{2}\left(8d\omega_C^2\theta_C^{(th)\ 2}+\frac{d}{4\lambda^2mN_D}\right)
\end{align*}
for any step size $\eta>0$. Choosing the optimal step size $\eta=\frac{2(\theta_C^{(th)}-\lambda)}{\sqrt{T\left(8d\omega_C^2\theta_C^{(th)\ 2}+d/(4\lambda^2mN_D)\right)}}$, we have
\begin{align*}
\mathbb E[C(\overline{\delta\vec \theta_T})]-C(\vec 0)\le\frac{2(\theta_C^{(th)}-\lambda)}{\sqrt{T}}\sqrt{8d\omega_C^2\theta_C^{(th)\ 2}+\frac{d}{4\lambda^2mN_D}}.
\end{align*}
So to reach expected error $\le \varepsilon$, it suffices that
\begin{align*}
T \geq \frac{2(\theta_C^{(th)}-\lambda)}{\varepsilon^2}\sqrt{8d\omega_C^2\theta_C^{(th)\ 2}+\frac{d}{4\lambda^2mN_D}}.
\end{align*}
\end{theorem}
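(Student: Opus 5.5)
The plan is the standard projected online-gradient analysis with a zeroth-order two-point estimator (Duchi et al. style). I work with the quadratic model of Eq.~\eqref{eq:C_quadratic_offlie}, treating the cubic remainder as absorbed; within the convex ball this is the regime the paper assumes. The optimum $\vec 0$ lies in $\mathsf P_\lambda$, and $\mathsf P_\lambda$ is convex. The projection $\Pi_{\mathsf P_\lambda}$ is therefore nonexpansive, and for every $t$ it gives
\begin{align*}
\|\delta\vec\theta_{t+1}\|^2\le \|\delta\vec\theta_t\|^2-2\eta\,\widehat g_t^\top\delta\vec\theta_t+\eta^2\|\widehat g_t\|^2 .
\end{align*}
Taking conditional expectations, telescoping over $t=1,\dots,T$, dividing by $2\eta T$, and applying Jensen to the average $\overline{\delta\vec\theta_T}$ gives
\begin{align*}
\mathbb E[C(\overline{\delta\vec\theta_T})]-C(\vec 0)\le \frac{\mathrm{diam}(\mathsf P_\lambda)^2}{2\eta T}+\frac{\eta}{2}\max_t\mathbb E\|\widehat g_t\|^2 .
\end{align*}
Two conditions make this work. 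First, $\mathbb E[\widehat g_t\mid\delta\vec\theta_t]^\top\delta\vec\theta_t\ge C(\delta\vec\theta_t)-C(\vec 0)$. Second, the diameter of $\mathsf P_\lambda$ must be bounded by $2(\theta_C^{(th)}-\lambda)$; I would take $\mathsf P_\lambda$ to be the ball of radius $\theta_C^{(th)}-\lambda$, which yields the $4(\theta_C^{(th)}-\lambda)^2$ numerator.

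\textbf{Step 1 (unbiasedness).} Write $\widehat C^\pm=C(\delta\vec\theta_t\pm\lambda u_t)+\xi^\pm$, with zero-mean noise that is independent of $u_t$. For a quadratic $C$ the symmetric difference is exact:
\begin{align*}
C(\delta\vec\theta+\lambda u)-C(\delta\vec\theta-\lambda u)=2\lambda\,u^\top\nabla C(\delta\vec\theta).
\end{align*}
Hence $\mathbb E[\widehat g_t]=\mathbb E[u u^\top]\nabla C=\nabla C(\delta\vec\theta_t)$, using $\mathbb E[uu^\top]=I$ for Rademacher $u$. Convexity of $C$ on the ball (Theorems~\ref{thm:convexity_unitary}, \ref{thm:convexity_cptp}, \ref{thm:convexity_leakage}) then gives the first-order inequality $\nabla C^\top\delta\vec\theta\ge C(\delta\vec\theta)-C(\vec 0)$.

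\textbf{Step 2 (second moment).} Decompose
\begin{align*}
\widehat g_t=(u^\top\nabla C)\,u+\frac{\xi^+-\xi^-}{2\lambda}\,u .
\end{align*}
The cross term has mean zero. For the first term, $\|u\|^2=d$ and $(u^\top\nabla C)^2\le\ldots$; I would simply bound its expectation by $d\,\|\nabla C\|^2$, up to a constant. By Eq.~\eqref{eq:C_nabla_offline}, $\|\nabla C\|\le 2\omega_C\theta_C^{(th)}$, and the target constant $8d\omega_C^2\theta_C^{(th)2}$ leaves slack for a factor-2 loss. Alternatively, I would use $\mathbb E[(u^\top a)^2\|u\|^2]=d\|a\|^2$ directly.

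For the noise term I need $\mathrm{Var}(\widehat C)$. Each $Y_r$ averages $N_D$ Bernoulli detector outcomes, and $\widehat C$ averages $m$ cycles. Treating these $mN_D$ outcomes as independent gives $\mathrm{Var}(\widehat C)\le 1/(4mN_D)$. The two batches are independent, so $\mathbb E(\xi^+-\xi^-)^2\le 1/(2mN_D)$. Multiplying by $\|u\|^2/(4\lambda^2)=d/(4\lambda^2)$ gives at most $d/(8\lambda^2 mN_D)$, which fits within the stated $d/(4\lambda^2mN_D)$.

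\textbf{Step 3 (step size).} I substitute the bounds into the telescoped inequality and minimize over $\eta$, balancing $a/\eta+b\eta$ to get $2\sqrt{ab}$. This produces the stated optimal $\eta$, the $\sqrt T$ rate, and the $O(\varepsilon^{-2})$ epoch count.

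\textbf{Main obstacle.} The delicate point is the noise variance in Step 2. Detector events within a cycle, and across nearby cycles, are correlated, because a single fault flips neighbouring detectors. The clean $1/(mN_D)$ bound therefore needs justification. My first attempt would be to assume independence of detector outcomes across the $mN_D$ samples, or bounded-degree correlation absorbed into constants, since this matches the form of the theorem. The fallback is a worst-case bound of $1/(4m)$ using only cross-cycle independence, which sacrifices the $N_D$ factor.

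A second, smaller issue is the cubic remainder, which breaks exact unbiasedness. It adds a bias of order $O(\lambda^2)$, which I would either absorb or exclude by working with the quadratic model as the paper does.
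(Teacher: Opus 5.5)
Your proposal is correct and is essentially the paper's proof. The shared steps are nonexpansiveness of $\Pi_{\mathsf P_\lambda}$ toward $\vec 0$, exact unbiasedness of the two-point estimator for the quadratic model via $\mathbb E[uu^\top]=I$, a signal/noise split of the second moment, the first-order convexity inequality, telescoping, Jensen on the averaged iterate, and balancing $\eta$. You keep the mean-zero cross term instead of applying $\|a+b\|^2\le 2\|a\|^2+2\|b\|^2$ as the paper does. This gives the sharper bound $4d\omega_C^2(\theta_C^{(th)})^2+d/(8\lambda^2 mN_D)$, which is well inside the stated constant.

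There are two small corrections.

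First, do not redefine $\mathsf P_\lambda$ as the ball of radius $\theta_C^{(th)}-\lambda$. That change would let the query points $\delta\vec\theta_t\pm\lambda u_t$, whose norm can reach $\theta_C^{(th)}-\lambda+\lambda\sqrt d$, leave the region where the quadratic model is justified. The redefinition is also unnecessary. Take $u=\mathrm{sign}(\delta\vec\theta)$ in the paper's definition; then $(\|\delta\vec\theta\|+\lambda)^2\le\|\delta\vec\theta\|^2+2\lambda\|\delta\vec\theta\|_1+\lambda^2 d=\|\delta\vec\theta+\lambda u\|^2\le(\theta_C^{(th)})^2$. So the paper's $\mathsf P_\lambda$ already lies in that ball and has diameter at most $2(\theta_C^{(th)}-\lambda)$. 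Note, however, that you and the paper both use $\vec 0\in\mathsf P_\lambda$. Under the paper's definition this requires $\lambda\sqrt d\le\theta_C^{(th)}$, which is slightly stronger than the stated hypothesis $\lambda\le\theta_C^{(th)}$.

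Second, write $X:=8d\omega_C^2(\theta_C^{(th)})^2+d/(4\lambda^2mN_D)$. Inverting $2(\theta_C^{(th)}-\lambda)\sqrt{X/T}\le\varepsilon$ gives $T\ge 4(\theta_C^{(th)}-\lambda)^2X/\varepsilon^2$. The last display of the statement is therefore missing a square; compare \Cref{coro:imp_local_offline}, which has the squared form. Your Step 3 proves this corrected threshold, not the one as written.
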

We note that we have assumed that the errors across different gates and QEC cycles are independent in \Cref{thm:offline}.
If the errors are relevant, we have $\text{Var}(\widehat{C}(\delta\vec \theta))\leq 1$, and the bound in \Cref{thm:offline} becomes
\begin{align*}
\mathbb E[C(\overline{\delta\vec \theta_T})]-C(\vec 0)\le\frac{2(\theta_C^{(th)}-\lambda)}{\sqrt{T}}\sqrt{8d\omega_C^2\theta_C^{(th)\ 2}+\frac{d}{4\lambda^2}}.
\end{align*}
The performance result will become worse, but we can still find the optimal drift in $O(\varepsilon^{-2})$ epochs.

Before proving \Cref{thm:offline}, we first prove some useful facts.
To start with, we show the following properties of $\hat{C}$.
\begin{proposition}\label{prop:hatc_prop_offline}
For any fixed $\delta\vec \theta$, we have $\mathbb E[\widehat C(\delta\vec \theta)]=C(\delta\vec \theta)$ and $\text{Var}(\widehat C(\delta\vec \theta))\le \frac{1}{4m}$.
If we also assume the $N_D$ detector bits within a cycle are independent, then $\text{Var}(\widehat C(\delta\vec \theta))\le \frac{1}{4mN_D}$.
\end{proposition}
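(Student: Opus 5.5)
The argument splits into an unbiasedness part and a variance part.

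\textbf{Unbiasedness.} Since $\widehat C(\delta\vec\theta)=\frac1m\sum_{r\in\mathcal W_t}Y_r$ and $Y_r=\frac{1}{N_D}\sum_k\mathcal D_{k,r}$ by Eq.~\eqref{eq:model_Y} and Eq.~\eqref{eq:model_Cemp}, linearity of expectation gives
\[
\mathbb E[\widehat C]=\frac{1}{mN_D}\sum_{r,k}\mathbb E[\mathcal D_{k,r}]=\frac{1}{mN_D}\sum_{r,k}DR_k(\delta\vec\theta).
\]
Within a window the control is held fixed and $\vec\theta^*$ is time independent. Hence $DR_k$ does not depend on $r$, and the right-hand side equals $\frac1{N_D}\sum_k DR_k=C(\delta\vec\theta)$.

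\textbf{Variance.} We assume, as in the standing model, that different QEC cycles are independent (errors are independent across cycles). Then
\[
\mathrm{Var}(\widehat C)=\frac1{m^2}\sum_{r}\mathrm{Var}(Y_r)=\frac1m\mathrm{Var}(Y_r).
\]
Since $Y_r\in[0,1]$, Popoviciu's inequality (equivalently $\mathrm{Var}(Y)\le\mathbb E[Y](1-\mathbb E[Y])\le 1/4$, using $Y^2\le Y$) gives $\mathrm{Var}(Y_r)\le 1/4$. This yields the first bound $\mathrm{Var}(\widehat C)\le 1/(4m)$.

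If in addition the $N_D$ detector bits within a cycle are independent, then
\[
\mathrm{Var}(Y_r)=\frac{1}{N_D^2}\sum_k DR_k(1-DR_k)\le\frac{1}{4N_D},
\]
since each Bernoulli variance is at most $1/4$. Combining with the factor $1/m$ gives $\mathrm{Var}(\widehat C)\le 1/(4mN_D)$.

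The only delicate point is the cross-cycle independence needed to kill covariance terms between different $Y_r$. Detectors compare consecutive rounds, so detector events in adjacent cycles can be correlated through a shared measurement error. I would state independence across cycles as part of the model (consistent with the paper's remark after \Cref{thm:offline} that the independent-error assumption is in force). Alternatively, the finite-range correlations could be absorbed into a constant factor via a bounded-overlap covariance bound $\sum_{r,r'}|\mathrm{Cov}(Y_r,Y_{r'})|\le 3m\cdot\frac14$.
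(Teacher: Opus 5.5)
Your proof is correct and follows the paper's argument: linearity of expectation for unbiasedness, then $\mathrm{Var}(Y_r)\le 1/4$ for a $[0,1]$-valued variable combined with independence across cycles, and finally the Bernoulli-average bound $\mathrm{Var}(Y_r)\le 1/(4N_D)$ under within-cycle independence. You are right that cross-cycle independence is an implicit assumption, since the paper's proof simply says ``for independent cycles.'' Your bounded-overlap remark also correctly shows that nearest-neighbour correlations between cycles would only weaken the bound by a constant, from $1/(4m)$ to $3/(4m)$.
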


\begin{proof}
We first note that $\mathbb E[\widehat C(\delta\vec \theta)]=\frac1m\sum_{r=1}^m\mathbb E[Y_r(\delta\vec \theta)]=C(\delta\vec \theta)$.
Since $0\le Y_r(\delta\vec \theta)\le 1$, $\text{Var}(Y_r(\delta\vec \theta))\le 1/4$. For independent cycles,
\begin{align*}
\text{Var}(\widehat C(\delta\vec \theta))=\frac{1}{m^2}\sum_{r=1}^m \text{Var}(Y_r(\delta\vec \theta))\le \frac{1}{m^2}\cdot m\cdot \frac14=\frac{1}{4m}.
\end{align*}
If additionally the $N_D$ detector bits are independent within each cycle, then $Y_r$ is an average of $N_D$ Bernoulli variables and $\text{Var}(Y_r)\le 1/(4N_D)$, giving $\text{Var}(\widehat C)\le 1/(4mN_D)$.
\end{proof}

Next, we show some properties of the SPSA-SGD algorithm.
The key simplification in our setting is that $C(\delta\vec \theta)$ is exactly quadratic. 
That removes the usual smoothing bias and makes the two-point SPSA estimator exactly unbiased for the true gradient.
Quantitatively, we have the following lemma.
\begin{lemma}[Unbiasedness of $\widehat g_t$ for the quadratic surrogate]\label{lem:gradient_offline}
Conditioned on $\delta\vec \theta_t$, we have $\mathbb E[\widehat g_t \mid \delta\vec \theta_t]=\nabla C(\delta\vec \theta_t)$.
\end{lemma}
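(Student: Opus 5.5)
The plan is to compute the conditional expectation directly, using the exact quadratic form $C(\delta\vec\theta)=C_0+\delta\vec\theta^\top\Omega_C\delta\vec\theta$ from Eq.~\eqref{eq:C_quadratic_offlie}, with the cubic remainder dropped as the lemma's title indicates. I would first average over the measurement noise and then over the random direction $\vec u_t$.

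\textbf{Removing the shot noise.} Condition on $\delta\vec\theta_t$ and on $\vec u_t$. The two batches are independent, and by \Cref{prop:hatc_prop_offline} each empirical rate is unbiased. Hence
\begin{align*}
\mathbb E[\widehat g_t\mid\delta\vec\theta_t,\vec u_t]=\frac{C(\delta\vec\theta_t+\lambda\vec u_t)-C(\delta\vec\theta_t-\lambda\vec u_t)}{2\lambda}\,\vec u_t .
\end{align*}
This step is legitimate because $\delta\vec\theta_t\in\mathsf P_\lambda$, so both query points lie in $\mathsf P$, where the quadratic form is valid.

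\textbf{Exact symmetric difference.} For the quadratic, the constant $C_0$ and the terms $\lambda^2\vec u^\top\Omega_C\vec u$ cancel between the two evaluations. What remains is
\begin{align*}
C(\delta\vec\theta+\lambda\vec u)-C(\delta\vec\theta-\lambda\vec u)=4\lambda\,\vec u^\top\Omega_C\delta\vec\theta=2\lambda\,\vec u^\top\nabla C(\delta\vec\theta),
\end{align*}
using $\nabla C(\delta\vec\theta)=2\Omega_C\delta\vec\theta$, which holds since $\Omega_C$ is symmetric. The conditional mean therefore becomes $\vec u_t\vec u_t^\top\nabla C(\delta\vec\theta_t)$, and no finite-difference bias survives.

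\textbf{Averaging over the direction.} Because $\vec u_t$ is uniform on $\{\pm1\}^d$ and independent of $\delta\vec\theta_t$, its coordinates are independent Rademacher variables. This gives $\mathbb E[\vec u_t\vec u_t^\top]=I_d$. By the tower property, $\mathbb E[\widehat g_t\mid\delta\vec\theta_t]=\nabla C(\delta\vec\theta_t)$.

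\textbf{Main obstacle.} The only delicate point is justifying the exact quadratic, that is, neglecting the $O(\|\delta\vec\theta\|^3)$ term. If that term were kept, the symmetric difference would cancel the even-order terms but leave a third-order contribution of size $O(\lambda^2+\|\delta\vec\theta\|^2)$. I would state explicitly that we work with the truncated surrogate of Eq.~\eqref{eq:C_quadratic_offlie}, in line with the paper's modeling choice. As a secondary check, the estimator must be measurable with respect to $\delta\vec\theta_t$ while the fresh randomness $\vec u_t$ and the new batches are drawn independently of the past.
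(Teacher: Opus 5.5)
Your proof is correct and takes essentially the same route as the paper: the constant and $\lambda^2\vec u^\top\Omega_C\vec u$ terms cancel in the symmetric difference of the exactly quadratic surrogate, you use $\mathbb E[\vec u_t\vec u_t^\top]=I$ for Rademacher directions, and you invoke the unbiasedness of the empirical rates from \Cref{prop:hatc_prop_offline}. The only difference is ordering: you average out the shot noise before the direction, while the paper does it last. Like the paper, you rely on treating $C$ as exactly quadratic.
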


\begin{proof}
Recall from Eq.~\eqref{eq:C_quadratic_offlie} and Eq.~\eqref{eq:C_nabla_offline} that $C(\delta\vec \theta)=C_0+\delta\vec \theta^\top \Omega_C \delta\vec \theta$ and $\nabla C(\delta\vec \theta)=2\Omega_C\delta\vec \theta$ as $\Omega_C$ is symmetric.
Expand $C(\delta\vec \theta_t\pm \lambda u)$ as the following
\begin{align*}
\begin{split}
C(\delta\vec \theta_t+\lambda u)&=C_0+(\delta\vec \theta_t+\lambda u)^\top \Omega_C(\delta\vec \theta_t+\lambda u)= C(\delta\vec \theta_t)+2\lambda u^\top \Omega_C\delta\vec \theta_t+\lambda^2 u^\top \Omega_C u,\\
C(\delta\vec \theta_t-\lambda u)&=C(\delta\vec \theta_t)-2\lambda u^\top \Omega_C\delta\vec \theta_t+\lambda^2 u^\top \Omega_C u.
\end{split}
\end{align*}
We subtract the above two terms and obtain
\begin{align*}
C(\delta\vec \theta_t+\lambda u)-C(\delta\vec \theta_t-\lambda u)=4\lambda u^\top \Omega_C\delta\vec \theta_t=2\lambda\nabla C(\delta\vec \theta_t)^\top u.
\end{align*}
So the noiseless SPSA estimate equals
\begin{align*}
\frac{C(\delta\vec \theta_t+\lambda u)-C(\delta\vec \theta_t-\lambda u)}{2\lambda}u=(\nabla C(\delta\vec \theta_t)^\top u)u= (u u^\top)\nabla C(\delta\vec \theta_t).
\end{align*}
Now take expectation over the random sign vector $u$. 
Since the coordinates are i.i.d. Rademacher, $\mathbb E[u u^\top]=I$. Hence
\begin{align*}
\mathbb E_u[(u u^\top)\nabla C(\delta\vec \theta_t)] = \nabla C(\delta\vec \theta_t).
\end{align*}
Finally, replacing $C$ by its noisy unbiased estimates $\widehat C_t^\pm$ preserves unbiasedness because $\mathbb E[\widehat C_t^\pm\mid \delta\vec \theta_t, u_t]=C(\delta\vec \theta_t\pm \lambda u_t)$ according to \Cref{prop:hatc_prop_offline}.
We thus reach the claimed unbiasedness of $\widehat{g_t}$.
\end{proof}

We will also need the following second moment bound for $\widehat g_t$.
\begin{lemma}[Second moment bound of $\hat{g}_t$]
\label{lem:second_moment_g_offline}
Recall from \Cref{prop:hatc_prop_offline} that $\sup_{\delta\vec \theta\in\mathsf P}\text{Var}(\widehat C(\delta\vec \theta))\le 1/(4mN_D)$.
Then for any $\delta\vec \theta_t$,
\begin{align*}
\mathbb E[\|\widehat g_t\|^2 \mid \delta\vec \theta_t]\le 2d\|\nabla C(\delta\vec \theta_t)\|^2 + \frac{d}{4mN_D\lambda^2}.
\end{align*}
In particular, using $\|\nabla C(\delta\vec \theta_t)\|\le 2\omega_C\theta_C^{(th)}$ on $\mathsf P$,
\begin{align*}
\mathbb E[\|\widehat g_t\|^2] \le 8d\omega_C^2\theta_C^{(th)\ 2} + \frac{d}{4mN_D\lambda^2}.
\end{align*}
\end{lemma}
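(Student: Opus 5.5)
We decompose $\widehat g_t$ into a noiseless SPSA part and a measurement-noise part, and bound the conditional second moment of each. Write $\widehat C_t^\pm = C(\delta\vec\theta_t\pm\lambda u_t)+\xi_t^\pm$, where $\xi_t^\pm$ are zero-mean noises from the two independent QEC batches. Then
\begin{align*}
\widehat g_t = \underbrace{\frac{C(\delta\vec\theta_t+\lambda u_t)-C(\delta\vec\theta_t-\lambda u_t)}{2\lambda}u_t}_{=:a_t} + \underbrace{\frac{\xi_t^+-\xi_t^-}{2\lambda}u_t}_{=:b_t}.
\end{align*}
Conditioned on $(\delta\vec\theta_t,u_t)$, the term $a_t$ is deterministic and $b_t$ has zero mean. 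Hence the cross term vanishes, and
\begin{align*}
\mathbb E[\|\widehat g_t\|^2\mid\delta\vec\theta_t]=\mathbb E_u\|a_t\|^2+\mathbb E\|b_t\|^2 .
\end{align*}

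\textbf{Signal term.} By the computation in the proof of \Cref{lem:gradient_offline}, the quadratic surrogate gives $a_t=(u_t^\top\nabla C(\delta\vec\theta_t))\,u_t$. Since $\|u_t\|^2=d$ for $u_t\in\{\pm1\}^d$, we get $\|a_t\|^2=d\,(u_t^\top g)^2$ with $g=\nabla C(\delta\vec\theta_t)$. Using $\mathbb E[u u^\top]=I$, this yields $\mathbb E_u\|a_t\|^2=d\|g\|^2\le 2d\|g\|^2$. The factor $2$ in the claim is slack. It would be needed if we had first used $\|x+y\|^2\le2\|x\|^2+2\|y\|^2$ instead of the orthogonality argument. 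I will use whichever is cleaner, but the orthogonality route already meets the stated constant.

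\textbf{Noise term.} We have $\|b_t\|^2=d(\xi_t^+-\xi_t^-)^2/(4\lambda^2)$. Because the two batches are independent and zero-mean, $\mathbb E(\xi_t^+-\xi_t^-)^2=\mathrm{Var}(\widehat C_t^+)+\mathrm{Var}(\widehat C_t^-)\le 2/(4mN_D)$ by \Cref{prop:hatc_prop_offline}. This gives $d/(8mN_D\lambda^2)$, which is again at most the stated $d/(4mN_D\lambda^2)$. The perturbed points lie in $\mathsf P$ by the definition of $\mathsf P_\lambda$, so the supremum variance bound applies at both of them.

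\textbf{Unconditional bound.} Insert $\|\nabla C(\delta\vec\theta_t)\|\le 2\omega_C\theta_C^{(th)}$ from Eq.~\eqref{eq:C_nabla_offline}, which is valid since $\delta\vec\theta_t\in\mathsf P_\lambda\subseteq\mathsf P$. This gives $2d\cdot4\omega_C^2\theta_C^{(th)2}=8d\omega_C^2\theta_C^{(th)2}$, and taking the outer expectation finishes the proof.

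\textbf{Main obstacle.} The delicate point is justifying the vanishing of the cross term. This requires that the batch noise be conditionally mean-zero given $u_t$ and $\delta\vec\theta_t$, i.e.\ that fresh QEC batches are drawn independently of the chosen perturbation. If that independence is doubted, I would fall back on $\|a+b\|^2\le2\|a\|^2+2\|b\|^2$. That route costs exactly the factors $2$ and $2$ that the stated bound already tolerates, so the claim still follows.
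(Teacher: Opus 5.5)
Your proof is correct and follows the paper's argument: the same signal/noise decomposition, the same computation $\mathbb E\|S_t\|^2=d\|\nabla C(\delta\vec\theta_t)\|^2$ via $\|u_t\|^2=d$ and $\mathbb E[u_tu_t^\top]=I$, and the same noise bound $d/(8mN_D\lambda^2)$ from the two independent batches. The only difference is that the paper combines the two pieces with $\|a+b\|^2\le 2\|a\|^2+2\|b\|^2$, which is where the factors of $2$ in the stated constants come from. Your conditional-orthogonality step is also valid and gives the sharper bound $d\|\nabla C(\delta\vec\theta_t)\|^2+d/(8mN_D\lambda^2)$.
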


\begin{proof}
Write $\widehat C_t^\pm = C(\delta\vec \theta_t\pm \lambda u_t) + \xi_t^\pm$, where $\xi_t$ is the empirical noise and $\mathbb E[\xi_t^\pm\mid \delta\vec \theta_t,u_t]=0$ and $\text{Var}(\xi_t^\pm\mid \delta\vec \theta_t,u_t)\le 1/(4mN_D)$. 
Define noise difference $\zeta_t:=\xi_t^+-\xi_t^-$. 
With independent QEC epochs for the two evaluations, we have $\mathbb E[\zeta_t\mid \delta\vec \theta_t,u_t]=0$ and $\text{Var}(\zeta_t\mid \delta\vec \theta_t,u_t)\le \frac{1}{2mN_D}$.
Now we decompose $\widehat g_t$ into two terms:
\begin{align*}
\widehat g_t=\frac{C(\delta\vec \theta_t+\lambda u_t)-C(\delta\vec \theta_t-\lambda u_t)}{2\lambda}u_t+\frac{\zeta_t}{2\lambda}u_t.
\end{align*}
Using $\|a+b\|^2\le 2\|a\|^2+2\|b\|^2$ and denote the first term as the signal term $S_t$ and the second term as the noise term $N_t$, we have
\begin{align*}
\mathbb E[\|\widehat g_t\|^2\mid \delta\vec \theta_t]
\le 2\mathbb E[\|S_t\|^2\mid \delta\vec \theta_t] + 2\mathbb E[\|N_t\|^2\mid \delta\vec \theta_t]
\end{align*}
as $\mathbb{E}[S_tN_t]=0$ as they are independent and $\mathbb{E}[N_t]=0$.

We then compute the signal term. 
From the exact quadratic calculation in \Cref{lem:gradient_offline}, $S_t=(u_tu_t^\top)\nabla C(\delta\vec \theta_t)$. 
Hence, we have $\|S_t\|^2 = \|(u_tu_t^\top)\nabla C\|^2 = (u_t^\top \nabla C)^2 \cdot \|u_t\|^2$.
For a sign vector, $\|u_t\|^2=d$. 
Also, we have $\mathbb E[(u_t^\top v)^2]=\|v\|^2$ for any fixed $v$ because the coordinates have mean $0$, variance $1$, and are independent. 
Therefore, we have
\begin{align*}
\mathbb E[\|S_t\|^2\mid \delta\vec \theta_t] = d\mathbb E[(u_t^\top \nabla C)^2\mid \delta\vec \theta_t] = d\|\nabla C(\delta\vec \theta_t)\|^2.
\end{align*}

We then compute the noise term. We have $\|N_t\|^2 = \frac{\zeta_t^2}{4\lambda^2}\|u_t\|^2=\frac{d}{4\lambda^2}\zeta_t^2$. Taking expectation and using $\mathbb E[\zeta_t^2]\le 1/(2mN_D)$,
\begin{align*}
\mathbb E[\|N_t\|^2\mid \delta\vec \theta_t]\le \frac{d}{4\lambda^2}\cdot \frac{1}{2mN_D} = \frac{d}{8mN_D\lambda^2}.
\end{align*}
Combine and absorb the factor $2$ from $\|a+b\|^2\le 2\|a\|^2+2\|b\|^2$ to get the displayed bound. 
\end{proof}

Finally, we are ready to prove \Cref{thm:offline}.
\begin{proof}[Proof of \Cref{thm:offline}]
We first note that a projection onto a closed convex set is non-expansive as $\|\Pi_{\mathsf P_\lambda}(a)-\Pi_{\mathsf P_\lambda}(b)\|\le \|a-b\|$.
Apply with $a=\delta\vec \theta_t-\eta \widehat g_t$ and $b=\vec 0$ (noting $\Pi(\vec{0})=\vec{0}$ since $\vec{0}\in\mathsf P_\lambda$):
\begin{align*}
\|\delta\vec \theta_{t+1}-\vec{0}\|^2\le \|\delta\vec \theta_t-\eta \widehat g_t-\vec{0}\|^2 =\|\delta\vec \theta_t-\vec{0}\|^2 -2\eta \langle \widehat g_t,\delta\vec \theta_t-\vec{0}\rangle + \eta^2\|\widehat g_t\|^2.
\end{align*}
Next, we take the conditional expectation and use unbiasedness. 
Condition on $\delta\vec \theta_t$. By \Cref{lem:gradient_offline}, we have$\mathbb E[\widehat g_t\mid \delta\vec \theta_t]=\nabla C(\delta\vec \theta_t)$. 
Hence, we have
\begin{align*}
\mathbb E[\langle \widehat g_t,\delta\vec \theta_t-\vec{0}\rangle\mid \delta\vec \theta_t]=\langle \nabla C(\delta\vec \theta_t),\delta\vec \theta_t-\vec{0}\rangle.
\end{align*}
Also by \Cref{lem:second_moment_g_offline}, we have $\mathbb E[\|\widehat g_t\|^2\mid \delta\vec \theta_t]\le 2d\|\nabla C(\delta\vec \theta_t)\|^2+\frac{d}{4mN_D\lambda^2}\le 8d\omega_C^2\theta_C^{(th)\ 2}+\frac{d}{4mN_D\lambda^2}$.
Therefore, we derive that
\begin{align*}
\mathbb E[\|\delta\vec \theta_{t+1}-\vec{0}\|^2\mid \delta\vec \theta_t]\le\|\delta\vec \theta_t-\vec{0}\|^2-2\eta \langle \nabla C(\delta\vec \theta_t),\delta\vec \theta_t-\vec{0}\rangle+\eta^2\left(8d\omega_C^2\theta_C^{(th)\ 2}+\frac{d}{4mN_D\lambda^2}\right).
\end{align*}
We then use convexity to relate the inner product to suboptimality. 
For a convex differentiable $C$, we have $C(\delta\vec \theta_t)-C(\vec{0})\le \langle \nabla C(\delta\vec \theta_t), \delta\vec \theta_t-\vec{0}\rangle$.
Plugging into the equality above, we have
\begin{align*}
2\eta\mathbb E[C(\delta\vec \theta_t)-C(\vec{0})\mid \delta\vec \theta_t]\le\|\delta\vec \theta_t-\vec{0}\|^2 - \mathbb E[\|\delta\vec \theta_{t+1}-\vec{0}\|^2\mid \delta\vec \theta_t]+\eta^2\left(8d\omega_C^2\theta_C^{(th)\ 2}+\frac{d}{4mN_D\lambda^2}\right).
\end{align*}
Now we take full expectation to remove conditioning as
\begin{align*}
2\eta\mathbb E[C(\delta\vec \theta_t)-C(\vec{0})]\le\mathbb E\|\delta\vec \theta_t-\vec{0}\|^2 - \mathbb E\|\delta\vec \theta_{t+1}-\vec{0}\|^2+\eta^2\left(8d\omega_C^2\theta_C^{(th)\ 2}+\frac{d}{4mN_D\lambda^2}\right).
\end{align*}
We then sum over $t=1,\dots,T$ and yields:
\begin{align*}
2\eta \sum_{t=1}^T \mathbb E[C(\delta\vec \theta_t)-C(\vec{0})]\le\mathbb E\|\delta\vec \theta_1-\vec{0}\|^2 - \mathbb E\|\delta\vec \theta_{T+1}-\vec{0}\|^2+ T\eta^2\left(8d\omega_C^2\theta_C^{(th)\ 2}+\frac{d}{4mN_D\lambda^2}\right).
\end{align*}
We drop the nonnegative term $\mathbb E\|\delta\vec \theta_{T+1}-\vec{0}\|^2\ge 0$, and use $\|\delta\vec \theta_1-\vec{0}\|\le 2(\theta^{(th)}_C-\lambda)$ to derive
\begin{align*}
2\eta \sum_{t=1}^T \mathbb E[C(\delta\vec \theta_t)-C(\vec{0})]\le 4(\theta^{(th)}_C-\lambda)^2 + T\eta^2\left(8d\omega_C^2\theta_C^{(th)\ 2}+\frac{d}{4mN_D\lambda^2}\right).
\end{align*}
Finally, we convert to the averaged iterate. 
By the convexity of $C$, we have $C(\overline{\delta\vec \theta_T})\le \frac{1}{T}\sum_{t=1}^T C(\delta\vec \theta_t)$.
Taking expectation and subtracting $C(\vec{0})$, we have $\mathbb E[C(\overline{\delta\vec \theta_T})]-C(\vec{0})\le\frac{1}{T}\sum_{t=1}^T \mathbb E[C(\delta\vec \theta_t)-C(\vec{0})]$.
Inserting this into the last equation, we have
\begin{align*}
\mathbb E[C(\overline{\delta\vec \theta_T})]-C(\vec{0})\le\frac{4(\theta^{(th)}_C-\lambda)^2}{2\eta T}+\frac{\eta}{2}\left(8d\omega_C^2\theta_C^{(th)\ 2}+\frac{d}{4mN_D\lambda^2}\right).
\end{align*}
Optimizing the RHS over $\eta$ gives the stated choice and bound.
\end{proof}

\subsection{The improved convergence guarantee for local codes}\label{sec:imp_offline_local}
The convergence guarantee in \Cref{thm:offline} treats $C(\delta\vec\theta)$ as a generic $d$-dimensional zeroth-order objective. 
This is sufficient for proving convergence, but it does not use the potential local structure of detector data in a QEC circuit. 
In a local code, each detector depends only on \emph{a bounded number} of control parameters, and each control parameter can influence only \emph{a bounded number} of detectors.
We now show that this locality improves the second-moment bound of the SPSA estimator.

Given the surrogate objective function as the average detector-event rate $C(\delta\vec\theta)=\tfrac{1}{N_D}\sum_{k=1}^{N_D} DR_k(\delta\vec\theta)$, we again assume that the feasible set is still the convex region $\mathsf{P}$, and that the iterates and the queried points stay inside $\mathsf{P}$. 
We also keep the same Lipschitz (smoothness) parameter $\lambda_{\max}(\Omega_C)=\omega_C$, so that, as in Eq.~\eqref{eq:C_nabla_offline}, $\|\nabla C(\delta\vec\theta)\|\leq2\omega_C\theta_C^{(\mathrm{th})}$ for every $\delta\vec\theta\in \mathsf{P}$. 
The only additional assumption here is the locality.
Formally, we make the following assumption.
\begin{assumption}[Locality condition on QEC codes]\label{ass:local_code}
For every detector $D_k$, let $\mathcal{S}_k\subseteq \{1,\ldots,d\}$ be the set of scalar control coordinates on which $DR_k$ depends, i.e., $DR_k(\delta\vec\theta)=DR_k(\delta\vec\theta_{S_k})$.
Assume that there exist constant integers $s,c>0$, independent of the code distance, such that $|S_k|\leq s$ for every detector $D_k$, and $\mathcal{K}_j:=\{k:j\in S_k\}$ satisfies $|\mathcal{K}_j|\leq c$ for every coordinate $j$. We automatically have $\frac{d}{N_D}\leq s$.
\end{assumption}
\noindent \Cref{ass:local_code} indicates that each detector depends on at most $s$ scalar control coordinates and each scalar control coordinate appears in at most $c$ detector losses.
For a fixed local QEC architecture, $s,c$ are fixed geometric constants.
The improvement comes from using detector-resolved empirical rates before averaging. For detector $D_k$, define
\begin{align*}
\widehat{DR}_k(\delta\vec\theta)=\frac{1}{m}\sum_{r\in W_t}\mathcal{D}_{k,r}(\delta\vec\theta)\quad\Rightarrow\quad
\mathbb E[\widehat{DR}_k(\delta\vec\theta)]=DR_k(\delta\vec\theta),\qquad
\operatorname{Var}(\widehat{DR}_k(\delta\vec\theta))\leq\frac{1}{4m}.
\end{align*}
If detector bits are independent within a window, then the variance of a sum of detector estimates is the sum of the variances, as in \Cref{prop:hatc_prop_offline}.

We now alter Algorithm~\ref{alg:SPSA_offline} only in the classical post-processing. 
Define a graph on the coordinates $\{1,\ldots,d\}$ by connecting $j$ and $\ell$ whenever there exists a detector $D_k$ such that $j,\ell\in \mathcal{S}_k$.
By the locality condition, the degree of this graph is at most $c(s-1)$. 
Therefore, it admits a coloring with at most $\chi\leq c(s-1)+1$ colors. 
Let $G_1,\ldots,G_\chi$ be the color classes. 
The defining property is that, for each detector $D_k$ and each color $G_a$, we have $|S_k\cap G_a|\leq 1$.

At iteration $t$, for each color $G_a$, sample signs $u_{t,j}\in\{\pm1\}$ for $j\in G_a$, and define the sparse perturbation vector $u_t^{(a)}$ by
\begin{align*}
(u_t^{(a)})_j=\begin{cases}
u_{t,j}, & j\in G_a,\\
0, & j\notin G_a.
\end{cases}
\end{align*}
For each color $G_a$, query the two controls $\delta\vec\theta_t+\lambda u_t^{(a)}$ and $ \delta\vec\theta_t-\lambda u_t^{(a)}$.
Since $\mathsf{P}_\lambda$ was defined so that $\delta\vec\theta\pm \lambda u\in \mathsf{P}$ for every full sign vector $u\in\{\pm1\}^d$, and since $u_t^{(a)}$ has entries in $[-1,1]$, convexity of $\mathsf{P}$ implies that these sparse perturbed points also lie in $\mathsf{P}$.
For $j\in G_a$, define
\begin{align*}
\widehat g_{t,j}^{\mathrm{loc}}=\frac{u_{t,j}}{2\lambda N_D}\sum_{k\in K_j}\left(\widehat{DR}_k(\delta\vec\theta_t+\lambda u_t^{(a)})-\widehat{DR}_k(\delta\vec\theta_t-\lambda u_t^{(a)})\right).
\end{align*}
The update remains $\delta\vec\theta_{t+1}=\Pi_{\mathsf{P}_\lambda}(\delta\vec\theta_t-\eta \widehat g_t^{\mathrm{loc}})$, and the output remains the averaged iterate $\delta\vec\theta_T=\frac{1}{T}\sum_{t=1}^T\delta\vec\theta_t$.

This algorithm uses $2\chi$ detector-rate queries per update instead of two. 
Since $\chi$ depends only on the local constants $s$ and $c$, this is only a constant-factor overhead for a local code family.

We have the following improved convergence guarantee for the modified version of Algorithm~\ref{alg:SPSA_offline}.

\begin{corollary}[Improved convergence guarantee for local QEC codes]\label{coro:imp_local_offline}
Assume the one-time control drift setting of Appendix~\ref{app:appendix_2}. 
Assume that $C$ is convex on $\mathsf{P}$, $\lambda\leq \theta_C^{(\mathrm{th})}$, and the locality condition in \Cref{ass:local_code} holds. 
Assume also that detector bits are independent across detectors and cycles within each empirical window. 
Then the locality-aware projected update satisfies, for any step size $\eta>0$, using $d/N_D\leq s$, we have
\begin{align*}
\mathbb E[C(\delta\vec\theta_T)]-C(\vec 0)\leq\frac{4(\theta_C^{(\mathrm{th})}-\lambda)^2}{2\eta T}+\frac{\eta}{2}\left(4\omega_C^2(\theta_C^{(\mathrm{th})})^2+\frac{s c}{8m\lambda^2N_D}\right).
\end{align*}
Choosing $\eta=\tfrac{2(\theta_C^{(\mathrm{th})}-\lambda)}{\sqrt{T\left(4\omega_C^2(\theta_C^{(\mathrm{th})})^2+\frac{s c}{8m\lambda^2N_D}\right)}}$, we obtain
\begin{align*}
\mathbb E[C(\delta\vec\theta_T)]-C(\vec 0)\leq\frac{2(\theta_C^{(\mathrm{th})}-\lambda)}{\sqrt T}\sqrt{4\omega_C^2(\theta_C^{(\mathrm{th})})^2+\frac{s c}{8m\lambda^2N_D}}.
\end{align*}
Therefore, to reach expected excess average detector-event rate at most $\varepsilon$, it is sufficient to take
\begin{align*}
T\geq\frac{4(\theta_C^{(\mathrm{th})}-\lambda)^2}{\varepsilon^2}\left(4\omega_C^2(\theta_C^{(\mathrm{th})})^2+\frac{s c}{8m\lambda^2N_D}\right).
\end{align*}
\end{corollary}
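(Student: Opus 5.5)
The proof reuses the projected-SGD template from the proof of \Cref{thm:offline}. Only the two estimator lemmas need replacing: unbiasedness (\Cref{lem:gradient_offline}) and the second-moment bound (\Cref{lem:second_moment_g_offline}) are re-established for $\widehat g_t^{\mathrm{loc}}$. The rest carries over verbatim:
\begin{itemize}
\item the non-expansiveness of $\Pi_{\mathsf P_\lambda}$,
\item the one-step inequality $\mathbb E\|\delta\vec\theta_{t+1}\|^2\le\|\delta\vec\theta_t\|^2-2\eta\langle\nabla C,\delta\vec\theta_t\rangle+\eta^2\mathbb E\|\widehat g_t^{\mathrm{loc}}\|^2$,
\item the convexity step $C(\delta\vec\theta_t)-C(\vec 0)\le\langle\nabla C(\delta\vec\theta_t),\delta\vec\theta_t\rangle$,
\item telescoping, the diameter bound $\|\delta\vec\theta_1\|\le 2(\theta_C^{(\mathrm{th})}-\lambda)$, and Jensen for the averaged iterate.
\end{itemize}
So the whole task reduces to showing $\mathbb E[\widehat g_t^{\mathrm{loc}}\mid\delta\vec\theta_t]=\nabla C(\delta\vec\theta_t)$ and $\mathbb E[\|\widehat g_t^{\mathrm{loc}}\|^2\mid\delta\vec\theta_t]\le 4\omega_C^2(\theta_C^{(\mathrm{th})})^2+\frac{sc}{8m\lambda^2N_D}$. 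The step-size optimization and the bound on $T$ then follow exactly as before.

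\textbf{Unbiasedness.} Fix a color $G_a$ and $j\in G_a$. In the quadratic regime each $DR_k$ is a quadratic in $\delta\vec\theta_{\mathcal S_k}$, so the two-point difference is
\[
DR_k(\delta\vec\theta_t+\lambda u^{(a)})-DR_k(\delta\vec\theta_t-\lambda u^{(a)})=2\lambda\,\nabla DR_k(\delta\vec\theta_t)^\top u^{(a)}.
\]
By the coloring property $|\mathcal S_k\cap G_a|\le 1$, for $k\in\mathcal K_j$ the only coordinate of $u^{(a)}$ that $DR_k$ sees is $j$. The difference therefore equals $2\lambda\,\partial_jDR_k\,u_{t,j}$ exactly, with no cross terms. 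Multiplying by $u_{t,j}/(2\lambda N_D)$ and using $u_{t,j}^2=1$ gives $\frac1{N_D}\sum_{k\in\mathcal K_j}\partial_jDR_k=\partial_jC$, since $\partial_jDR_k=0$ for $k\notin\mathcal K_j$. The noise in $\widehat{DR}_k$ is mean zero, so it preserves unbiasedness. This even holds deterministically in $u$, which is stronger than \Cref{lem:gradient_offline}.

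\textbf{Second moment.} Write $\widehat g^{\mathrm{loc}}_t=\nabla C(\delta\vec\theta_t)+N_t$, where $N_{t,j}=\frac{u_{t,j}}{2\lambda N_D}\sum_{k\in\mathcal K_j}\zeta_{k}$ and $\zeta_k$ is the difference of the $\pm$ empirical noises for detector $k$ within color $a(j)$. The signal part is exactly $\nabla C$, so unlike the global SPSA case there is no $d$-fold amplification, and $\|\nabla C\|^2\le4\omega_C^2(\theta_C^{(\mathrm{th})})^2$ by Eq.~\eqref{eq:C_nabla_offline}. For the noise, $\mathbb E[N_{t,j}^2]=\frac1{4\lambda^2N_D^2}\mathbb E\bigl(\sum_{k\in\mathcal K_j}\zeta_k\bigr)^2$. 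Independence across detectors gives $\mathbb E(\sum_k\zeta_k)^2=\sum_{k\in\mathcal K_j}\mathrm{Var}(\zeta_k)\le c\cdot\frac{1}{2m}$. Summing over $j\le d$ yields $\frac{dc}{8m\lambda^2N_D^2}\le\frac{sc}{8m\lambda^2N_D}$ via $d/N_D\le s$. The cross term vanishes because the noise has mean zero, so no factor $2$ is needed here, which matches the stated constant $4\omega_C^2(\theta_C^{(\mathrm{th})})^2$ rather than $8$.

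\textbf{Main obstacle.} The delicate point is the noise accounting across colors. The $2\chi$ queries come from different QEC batches, so noises for different colors are independent. Within one color, distinct $j$ may share detectors only if they have different colors, which is excluded, so there is no double counting inside $N_{t,j}$. I would also double-check that using $\mathrm{Var}(\widehat{DR}_k)\le 1/(4m)$ per query is the right normalization for the factor $1/8$.

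A secondary issue is that the locality-based exactness of the two-point difference relies on the quadratic truncation of Eq.~\eqref{eq:C_quadratic_offlie}, as in the original proof. I would state this explicitly as the same modeling convention used there.
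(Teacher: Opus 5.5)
Your proposal is correct and follows the paper's proof essentially step for step. The paper likewise swaps in locality-aware unbiasedness and second-moment lemmas: the coloring property and $u_{t,j}^2=1$ make the noiseless masked estimator equal $\partial_j C$ exactly, the deterministic signal gives $\|\nabla C\|^2\le 4\omega_C^2(\theta_C^{(\mathrm{th})})^2$ with no factor $2$ or $d$, and the per-coordinate noise variance $c/(8m\lambda^2N_D^2)$ is summed using $d/N_D\le s$, after which the projected-SGD argument of \Cref{thm:offline} is rerun unchanged.
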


\noindent Given \Cref{ass:local_code}, the explicit $d$-dependence in \Cref{thm:offline} is removed. 
The dependence on the code size appears only through the local-code ratio $d/N_D\leq s$, and in the detector sampling term it appears as $1/N_D$.
In particular, for a fixed local architecture, the rate remains $T=O(\varepsilon^{-2})$ but the prefactor is controlled by $\omega_C$, $\theta_C^{(\mathrm{th})}$, $m$, $\lambda$, and the local constants $c,s$, rather than by the full dimension $d$ of the parameter space.

We prove the result by replacing \Cref{lem:gradient_offline} and \Cref{lem:second_moment_g_offline} with locality-aware versions, and then applying the same projected-SGD argument as in the proof of \Cref{thm:offline}.

\begin{lemma}[Unbiasedness of $\widehat g_t^{\mathrm{loc}}$ for the quadratic surrogate and local QEC codes]\label{lem:gradient_offline_local}
Conditioned on $\delta\vec\theta_t$, the estimator satisfies
\begin{align*}
\mathbb E[
\widehat g_t^{\mathrm{loc}}
\mid
\delta\vec\theta_t
]
=
\nabla C(\delta\vec\theta_t).
\end{align*}
\end{lemma}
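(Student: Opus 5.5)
The proof mirrors \Cref{lem:gradient_offline}, but works detector by detector and uses the coloring to remove cross-terms between coordinates that a single detector sees together. Under the quadratic model each detector rate is itself quadratic,
\begin{align*}
DR_k(\delta\vec\theta)=DR_k^{(0)}+\delta\vec\theta_{\mathcal S_k}^\top\Omega_k\delta\vec\theta_{\mathcal S_k},
\end{align*}
with $\Omega_C=\frac1{N_D}\sum_k\Omega_k$, where $\Omega_k$ is supported on $\mathcal S_k\times\mathcal S_k$. The symmetric difference then kills the even terms exactly:
\begin{align*}
DR_k(\delta\vec\theta_t+\lambda u^{(a)})-DR_k(\delta\vec\theta_t-\lambda u^{(a)})=2\lambda\,\nabla DR_k(\delta\vec\theta_t)^\top u^{(a)}.
\end{align*}
There is no smoothing bias. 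Conditioned on $\delta\vec\theta_t$ and $u_t^{(a)}$, the empirical $\widehat{DR}_k$ are unbiased for $DR_k$ at the queried points. So after taking expectation over the detector noise, the $j$-th component of $\widehat g^{\mathrm{loc}}_t$ for $j\in G_a$ equals
\begin{align*}
\frac{u_{t,j}}{N_D}\sum_{k\in\mathcal K_j}\sum_{\ell\in\mathcal S_k\cap G_a}\partial_\ell DR_k(\delta\vec\theta_t)\,u_{t,\ell}.
\end{align*}

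The key step is the coloring property $|\mathcal S_k\cap G_a|\le1$. Since $k\in\mathcal K_j$ means $j\in\mathcal S_k$, and $j\in G_a$, the only surviving index is $\ell=j$. The sum therefore collapses to $u_{t,j}^2\frac1{N_D}\sum_{k\in\mathcal K_j}\partial_jDR_k=\frac1{N_D}\sum_{k\in\mathcal K_j}\partial_jDR_k$, using $u_{t,j}^2=1$ deterministically. This holds even before averaging over the random signs.

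Next, $\partial_jDR_k\equiv0$ for $k\notin\mathcal K_j$, because $DR_k$ does not depend on coordinate $j$. So $\frac1{N_D}\sum_{k\in\mathcal K_j}\partial_jDR_k=\partial_jC(\delta\vec\theta_t)$.

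Every coordinate belongs to exactly one color class, so the components assemble into $\nabla C(\delta\vec\theta_t)$. Taking the outer expectation over $u_t$ and the window noise via the tower property then gives the claim.

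The main obstacle is not algebraic but a matter of model consistency. The argument requires the quadratic truncation to hold detectorwise, i.e. each $DR_k$ exactly quadratic with support $\mathcal S_k$, rather than only the average $C$. I would justify this as in \Cref{lem:gradient_offline} by adopting the same second-order surrogate, \Cref{eq:C_quadratic_offlie} restricted per detector, which \Cref{ass:local_code} makes consistent with the locality of supports. I would also note that the perturbed points lie in $\mathsf P$, as already argued, so the estimates are well defined.
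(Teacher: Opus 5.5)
Your proof is correct and matches the paper's argument. The coloring property $|\mathcal S_k\cap G_a|\le 1$ leaves $j$ as the only perturbed coordinate seen by each $k\in\mathcal K_j$, so the symmetric difference of the quadratic detector rate equals $2\lambda\,\partial_j DR_k\,u_{t,j}$; then $u_{t,j}^2=1$ and $\partial_j DR_k=0$ for $k\notin\mathcal K_j$ give $\partial_j C$, and unbiasedness of $\widehat{DR}_k$ finishes the proof. The paper likewise relies on the detector-wise quadratic regime only implicitly (``in the quadratic surrogate regime''), so your explicit remark on this is a fair clarification rather than a departure.
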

\begin{proof}
Fix a coordinate $j$, and let $G_a$ be the color class containing $j$. 
For every detector $D_k\in K_j$, we know that $j\in S_k$ by definition, and thus by the coloring property, $|S_k\cap G_a|\leq 1$.
Since $j\in S_k\cap G_a$, it follows that $j$ is the only active perturbed coordinate in $S_k$ when we apply the sparse perturbation $u_t^{(a)}$. 
Therefore, in the quadratic surrogate regime,
\begin{align*}
DR_k(\delta\vec\theta_t+\lambda u_t^{(a)})-DR_k(\delta\vec\theta_t-\lambda u_t^{(a)})=
2\lambda\partial_jDR_k(\delta\vec\theta_t)u_{t,j}.
\end{align*}
Substituting this into the noiseless version of the estimator gives
\begin{align*}
g_{t,j}^{\mathrm{loc}}=\frac{u_{t,j}}{2\lambda N_D}\sum_{k\in K_j}2\lambda\partial_jDR_k(\delta\vec\theta_t)u_{t,j}=\frac{1}{N_D}\sum_{k\in K_j}\partial_jDR_k(\delta\vec\theta_t)=\partial_j C(\delta\vec\theta_t)
\end{align*}
since $u_{t,j}^2=1$, and that for $k\notin K_j$, $DR_k$ does not depend on coordinate $j$, so $\partial_jDR_k(\delta\vec\theta_t)=0$.
Finally, replacing $DR_k$ by $\widehat{DR}_k$ preserves the expectation because the empirical detector-rate estimates are unbiased. 
Therefore, we prove the claimed unbiasedness on each coordinate as
\begin{align*}
\mathbb E[\widehat g_{t,j}^{\mathrm{loc}}\mid\delta\vec\theta_t]=\partial_j C(\delta\vec\theta_t).
\end{align*}
Since the equality holds for every coordinate $j$, the lemma follows.
\end{proof}

\begin{lemma}[Second-moment bound of $\widehat g_t^{\mathrm{loc}}$ for local QEC codes]\label{lem:second_moment_g_offline_local}
Conditioned on $\delta\vec\theta_t$,
\begin{align*}
\mathbb E[\|\widehat g_t^{\mathrm{loc}}\|^2\mid\delta\vec\theta_t]\leq\|\nabla C(\delta\vec\theta_t)\|^2+\frac{dc}{8m\lambda^2N_D^2}\leq4\omega_C^2(\theta_C^{(\mathrm{th})})^2+\frac{sc}{8m\lambda^2N_D}.
\end{align*}
\end{lemma}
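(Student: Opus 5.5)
Following the template of \Cref{lem:second_moment_g_offline}, I will split each coordinate of the estimator into a signal part and a noise part. The signal part is handled exactly by \Cref{lem:gradient_offline_local}, and the noise part is controlled using the locality counts $s$ and $c$. Fix $\delta\vec\theta_t$ and a coordinate $j\in G_a$. Write $\widehat{DR}_k(\delta\vec\theta_t\pm\lambda u_t^{(a)})=DR_k(\delta\vec\theta_t\pm\lambda u_t^{(a)})+\xi_{k}^{\pm}$, where the $\xi_k^\pm$ are the empirical detector noises. This gives
\begin{align*}
\widehat g^{\mathrm{loc}}_{t,j}=\partial_j C(\delta\vec\theta_t)+\frac{u_{t,j}}{2\lambda N_D}\sum_{k\in\mathcal K_j}\zeta_k^{(a)},\qquad \zeta_k^{(a)}:=\xi_k^{+}-\xi_k^{-}.
\end{align*}
Here I use the computation in the proof of \Cref{lem:gradient_offline_local}: because $j$ is the only perturbed coordinate in $\mathcal S_k$, the noiseless signal part equals $\partial_j C$ exactly. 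It is deterministic, so unlike the global SPSA estimator, no factor $(u^\top\nabla C)^2\|u\|^2$ appears. This is what removes the factor $2d$ in front of $\|\nabla C\|^2$.

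Next I take the conditional second moment. The noise has conditional mean zero given $\delta\vec\theta_t$ and $u_t$, so the cross term vanishes. I would keep the exact decomposition $\mathbb E\|a+b\|^2=\|a\|^2+\mathbb E\|b\|^2$ rather than the lossy $2\|a\|^2+2\|b\|^2$. This gives
\begin{align*}
\mathbb E[\|\widehat g^{\mathrm{loc}}_t\|^2\mid\delta\vec\theta_t]=\|\nabla C(\delta\vec\theta_t)\|^2+\sum_{j=1}^d\frac{1}{4\lambda^2N_D^2}\,\mathbb E\Big[\Big(\sum_{k\in\mathcal K_j}\zeta^{(a(j))}_k\Big)^2\Big],
\end{align*}
where $u_{t,j}^2=1$ has been used. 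To bound each noise term, I invoke the assumed independence of detector bits across detectors and cycles, together with the independence of the $\pm$ batches. Then $\mathbb E[(\sum_{k\in\mathcal K_j}\zeta_k)^2]=\sum_{k\in\mathcal K_j}\mathrm{Var}(\zeta_k)\le |\mathcal K_j|\cdot\frac{2}{4m}\le \frac{c}{2m}$, using $\mathrm{Var}(\widehat{DR}_k)\le 1/(4m)$. Summing over the $d$ coordinates gives the noise contribution $\frac{dc}{8m\lambda^2N_D^2}$, which is the first claimed inequality.

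The second inequality is then immediate. I apply $\|\nabla C\|\le 2\omega_C\theta_C^{(\mathrm{th})}$ on $\mathsf P$ from Eq.~\eqref{eq:C_nabla_offline}, and $d/N_D\le s$ from \Cref{ass:local_code}, which turns $\frac{dc}{8m\lambda^2N_D^2}$ into $\frac{sc}{8m\lambda^2N_D}$.

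The main obstacle is justifying the exact signal identity. Cross-coordinate contributions must not leak into a coordinate's estimate: a detector $k\in\mathcal K_j$ must see no other perturbed coordinate of the same color, which is guaranteed by the coloring property $|\mathcal S_k\cap G_a|\le1$. The quadratic term $\lambda^2\Omega$ must also cancel in the symmetric difference, which holds in the quadratic surrogate regime. A subtler point is the independence bookkeeping. Different colors use different batches, and the same detector noise $\zeta_k^{(a)}$ appears in several coordinates $j\in G_a$ through the sum over $\mathcal K_j$. This correlation across coordinates is harmless, because the second moment is a sum over $j$ of squared coordinate terms and only the within-coordinate sum requires independence across $k$.
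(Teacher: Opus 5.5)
Your proposal is correct and follows the paper's proof essentially step for step. It uses the exact signal identity from \Cref{lem:gradient_offline_local}, the zero-mean noise $\frac{u_{t,j}}{2\lambda N_D}\sum_{k\in\mathcal K_j}(\xi_k^+-\xi_k^-)$ with variance at most $\frac{c}{8m\lambda^2N_D^2}$ by detector independence and $|\mathcal K_j|\le c$, summation over the $d$ coordinates, and finally $\|\nabla C\|\le 2\omega_C\theta_C^{(\mathrm{th})}$ together with $d/N_D\le s$.

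One small correction to your closing remark: the coloring property $|\mathcal S_k\cap G_a|\le 1$ means each detector lies in $\mathcal K_j$ for at most one $j\in G_a$. So within a color, different coordinates use disjoint detector noises, and the sharing you worry about never occurs. As you correctly note, it would not matter anyway, since only per-coordinate second moments enter.
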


\begin{proof}
Write the empirical detector-rate estimates as
\begin{align*}
\widehat{DR}_k^{+}=DR_k(\delta\vec\theta_t+\lambda u_t^{(a)})+\xi_k^{+},\qquad
\widehat{DR}_k^{-}=DR_k(\delta\vec\theta_t-\lambda u_t^{(a)})+\xi_k^{-}.
\end{align*}
Here, the $+$ and $-$ batches are independent, and
\begin{align*}
\mathbb E[\xi_k^{+}]=\mathbb E[\xi_k^{-}]=0,\qquad
\operatorname{Var}(\xi_k^{+})\leq\frac{1}{4m},\qquad
\operatorname{Var}(\xi_k^{-})\leq\frac{1}{4m}.
\end{align*}
By \Cref{lem:gradient_offline_local}, the noiseless part of the estimator equals $\nabla C(\delta\vec\theta_t)$. 
Thus we can write
\begin{align*}
\widehat g_{t,j}^{\mathrm{loc}}=\partial_j C(\delta\vec\theta_t)+\zeta_{t,j},\qquad
\zeta_{t,j}=\frac{u_{t,j}}{2\lambda N_D}\sum_{k\in K_j}(\xi_k^{+}-\xi_k^{-}).
\end{align*}
The noise has mean zero. 
For each individual detector, we have
\begin{align*}
\operatorname{Var}(\xi_k^{+}-\xi_k^{-})=\operatorname{Var}(\xi_k^{+})+\operatorname{Var}(\xi_k^{-})\leq\frac{1}{2m}.
\end{align*}
Using independence across detectors and $|K_j|\leq c$,
\begin{align*}
\operatorname{Var}\left(\sum_{k\in K_j}(\xi_k^{+}-\xi_k^{-})\right)\leq\frac{c}{2m}.
\end{align*}
Therefore,
\begin{align*}
\operatorname{Var}(\zeta_{t,j})\leq\frac{1}{4\lambda^2N_D^2}\cdot\frac{c}{2m}=\frac{c}{8m\lambda^2N_D^2}.
\end{align*}

Now we compute the conditional second moment:
\begin{align*}
\mathbb E[(\widehat g_{t,j}^{\mathrm{loc}})^2\mid\delta\vec\theta_t]=(\partial_j C(\delta\vec\theta_t))^2+\mathbb E[\zeta_{t,j}^2\mid\delta\vec\theta_t]\leq(\partial_j C(\delta\vec\theta_t))^2+\frac{c}{8m\lambda^2N_D^2},
\end{align*}
because $\mathbb E[\zeta_{t,j}\mid \delta\vec\theta_t]=0$.
Summing over $j=1,\ldots,d$ gives
\begin{align*}
\mathbb E[\|\widehat g_t^{\mathrm{loc}}\|^2\mid\delta\vec\theta_t]\leq\|\nabla C(\delta\vec\theta_t)\|^2+\frac{dc}{8m\lambda^2N_D^2}.
\end{align*}
The claimed bound holds by using the facts in Eq.~\eqref{eq:C_nabla_offline}.
\end{proof}

Now, we are ready to prove \Cref{coro:imp_local_offline}

\begin{proof}[Proof of \Cref{coro:imp_local_offline}]
Since projection onto a closed convex set is non-expansive and $\vec 0\in \mathsf{P}_\lambda$,
\begin{align*}
\|\delta\vec\theta_{t+1}-\vec 0\|^2\leq\|\delta\vec\theta_t-\eta \widehat g_t^{\mathrm{loc}}-\vec 0\|^2\quad\Rightarrow\quad\|\delta\vec\theta_{t+1}\|^2
\leq
\|\delta\vec\theta_t\|^2
-
2\eta
\langle
\widehat g_t^{\mathrm{loc}},
\delta\vec\theta_t
\rangle
+
\eta^2
\|\widehat g_t^{\mathrm{loc}}\|^2.
\end{align*}
Condition on $\delta\vec\theta_t$. 
By \Cref{lem:gradient_offline_local} and \Cref{lem:second_moment_g_offline_local}, we have
\begin{align*}
\mathbb E[\|\delta\vec\theta_{t+1}\|^2\mid\delta\vec\theta_t]\leq\|\delta\vec\theta_t\|^2-2\eta\langle\nabla C(\delta\vec\theta_t),\delta\vec\theta_t\rangle+\eta^2\left(4\omega_C^2(\theta_C^{(\mathrm{th})})^2+\frac{dc}{8m\lambda^2N_D^2}\right).
\end{align*}
Because $C$ is convex on $\mathsf{P}$ and $\vec 0$ is the minimizer in the local convex regime, we have $C(\delta\vec\theta_t)-C(\vec 0)\leq\langle\nabla C(\delta\vec\theta_t),\delta\vec\theta_t-\vec 0\rangle$.
Therefore, we can deduce that
\begin{align*}
2\eta\mathbb E[C(\delta\vec\theta_t)-C(\vec 0)]\leq\mathbb E\|\delta\vec\theta_t\|^2-\mathbb E\|\delta\vec\theta_{t+1}\|^2+\eta^2\left(4\omega_C^2(\theta_C^{(\mathrm{th})})^2+\frac{dc}{8m\lambda^2N_D^2}\right).
\end{align*}
Summing over $t=1,\ldots,T$, we obtain
\begin{align*}
2\eta\sum_{t=1}^T\mathbb E[C(\delta\vec\theta_t)-C(\vec 0)]\leq\mathbb E\|\delta\vec\theta_1\|^2-\mathbb E\|\delta\vec\theta_{T+1}\|^2+T\eta^2\left(4\omega_C^2(\theta_C^{(\mathrm{th})})^2+\frac{dc}{8m\lambda^2N_D^2}\right).
\end{align*}
Dropping the nonnegative term $\mathbb E\|\delta\vec\theta_{T+1}\|^2$, and using $\|\delta\vec\theta_1\|\leq 2(\theta_C^{(\mathrm{th})}-\lambda)$, we get
\begin{align*}
2\eta\sum_{t=1}^T\mathbb E[C(\delta\vec\theta_t)-C(\vec 0)]\leq4(\theta_C^{(\mathrm{th})}-\lambda)^2+T\eta^2\left(4\omega_C^2(\theta_C^{(\mathrm{th})})^2+\frac{dc}{8m\lambda^2N_D^2}\right).
\end{align*}

Since $C$ is convex and $\delta\vec\theta_T=\tfrac{1}{T}\sum_{t=1}^T\delta\vec\theta_t$, we have $C(\delta\vec\theta_T)\leq\tfrac{1}{T}\sum_{t=1}^TC(\delta\vec\theta_t)$.
Taking expectation and subtracting $C(\vec 0)$ gives
\begin{align*}
\mathbb E[C(\delta\vec\theta_T)]-C(\vec 0)\leq\frac{1}{T}\sum_{t=1}^T\mathbb E[C(\delta\vec\theta_t)-C(\vec 0)]\leq\frac{4(\theta_C^{(\mathrm{th})}-\lambda)^2}{2\eta T}+\frac{\eta}{2}\left(4\omega_C^2(\theta_C^{(\mathrm{th})})^2+\frac{dc}{8m\lambda^2N_D^2}\right).
\end{align*}
Using $d/N_D\leq s$, we have $\tfrac{dc}{8m\lambda^2N_D^2}\leq\tfrac{s c}{8m\lambda^2N_D}$.
This proves the first two stated bounds. Optimizing over $\eta$ gives the displayed choice of step size and the final $T=O(\varepsilon^{-2})$ guarantee.
\end{proof}

Finally, we remark that the assumption that detector estimates inside $K_j$ are independent can be dropped
\begin{remark}\label{rem:non_independent_K}
If the detector estimates inside $K_j$ are not independent, then $\operatorname{Var}(\sum_{k\in K_j}(\xi_k^{+}-\xi_k^{-}))$ can be bounded by $c^2/(2m)$ instead of $c/(2m)$. 
In that case, the sampling term becomes
\begin{align*}
\frac{dc^2}{8m\lambda^2N_D^2}\leq\frac{s c^2}{8m\lambda^2N_D},
\end{align*}
which worsens only the local constant $c\to c^2$, and still does not introduce an explicit $d$-dependent prefactor.
\end{remark}

\section{Time-dependent control drift and online optimization}
\label{app:appendix_3}
In this section, we consider the case of \Cref{def:optimization_process} when the optimal parameter vector $\vec \theta_r^*$ is time dependent. 
In contrast to \appref{app:appendix_2}, the physical control $\vec \theta_t$ is held fixed only throughout one window $\mathcal W_t$, while the optimal parameter $\vec \theta_r^*$ may vary from cycle to cycle inside that same window. Therefore, the cycle-level drift is $\delta \vec \theta_{t,r} := \vec \theta_t - \vec \theta_r^*$ at $r\in\mathcal W_t$.
Since the optimum drifts with time, the problem is no longer a standard offline convex optimization problem with one fixed objective function. 
Instead, the natural framework is online convex optimization~\cite{zinkevich2003online,hazan2016introduction}, where the loss function changes with the epoch, and our goal is to track the moving optimum.

\subsection{Online convex optimization and regret}

We assume that for every admissible control $\vec \theta_t\in\mathsf Q$ and every $r\in\mathcal W_t$, the induced drift $\delta \vec \theta_{t,r}$ stays inside the convex regime $\mathsf P$ established in \appref{app:appendix_1}, so that the surrogate loss can be truncated to the second order cycle by cycle.
Recall from Eq.~\eqref{eq:model_window} that the epoch-$t$ window is $\mathcal W_t := \{r : r=(t-1)m+1,\ldots,tm\}$.
According to \Cref{def:optimization_process}, within epoch $t$, we choose one control vector $\vec \theta_t\in\mathsf Q$ and hold it fixed over all cycles in $\mathcal W_t$. 
For each cycle $r\in\mathcal W_t$, the expected surrogate loss is modeled by abusing the notation a bit and writing the single-cycle surrogate objective function $C_r(\vec \theta_t;\vec \theta_r^*)$ as
\begin{align*}
C_r(\vec \theta_t;\vec \theta_r^*)=C_{r,0}+(\vec \theta_t-\vec \theta_r^*)^\top \Omega_{C,r} (\vec \theta_t-\vec \theta_r^*),
\end{align*}
where $C_{r,0}$ is independent of $\vec \theta_t$, $\Omega_{C,r}\succeq 0$, and $\lambda_{\max}(\Omega_{C,r}) \le \omega_C$ uniformly for any $r$.
We then define the window-averaged loss by abusing notation and from Eq.~\eqref{eq:model_Y}, Eq.~\eqref{eq:model_window}, and Eq.~\eqref{eq:model_Cemp} as $C_t\bigl(\vec \theta_t;\{\vec \theta_r^*\}_{r\in\mathcal W_t}\bigr):=\frac{1}{m}\sum_{r\in\mathcal W_t} C_r(\vec \theta_t;\vec \theta_r^*)$.
Expanding the quadratic terms, we have
\begin{align*}
C_t\bigl(\vec \theta_t;\{\vec \theta_r^*\}_{r\in\mathcal W_t}\bigr)=\vec \theta_t^\top\left(\frac{1}{m}\sum_{r\in\mathcal W_t}\Omega_{C,r}\right)\vec \theta_t-2\left(\frac{1}{m}\sum_{r\in\mathcal W_t}\Omega_{C,r}\vec \theta_r^*\right)^\top \vec \theta_t+\bar C_{t,0},
\end{align*}
where $\bar C_{t,0}$ is independent of $\vec \theta_t$. 
Since each $\Omega_{C,r}$ is positive semidefinite, $C_t(\cdot;\{\vec \theta_r^*\})$ is convex on $\mathsf Q$.
Its gradient with respect to the control variable is
\begin{align*}
\nabla_{\vec \theta} C_t\bigl(\vec \theta_t;\{\vec \theta_r^*\}_{r\in\mathcal W_t}\bigr)=\frac{2}{m}\sum_{r\in\mathcal W_t}\Omega_{C,r}(\vec \theta_t-\vec \theta_r^*).
\end{align*}
Using the uniform eigenvalue bound and the assumption that every cycle-level drift remains in the convex region, we obtain
\begin{align*}
\bigl\|\nabla_{\vec \theta} C_t(\vec \theta_t;\{\vec \theta_r^*\}_{r\in\mathcal W_t})\bigr\|\le\frac{2}{m}\sum_{r\in\mathcal W_t}\|\Omega_{C,r}(\vec \theta_t-\vec \theta_r^*)\|\le\frac{2}{m}\sum_{r\in\mathcal W_t}\omega_C\|\vec \theta_t-\vec \theta_r^*\| \le 2\omega_C \theta_C^{(th)}.
\end{align*}
Therefore, every epoch loss is convex and uniformly Lipschitz in the control variable.
This exactly fits the online convex optimization framework: at epoch $t$, the learner chooses $\vec \theta_t\in\mathsf Q$, then incurs the convex loss $C_t\bigl(\vec \theta;\{\vec \theta_r^*\}_{r\in\mathcal W_t}\bigr)$.
Accordingly, for any comparator sequence $\{\vec v_t\}_{t=1}^T\subseteq\mathsf Q$, we define the dynamic regret by
\begin{align}\label{eq:dyn_reg_online}
\mathrm{DynReg}(T;\{\vec v_t\}):=\sum_{t=1}^T\Bigl(C_t\bigl(\vec \theta_t;\{\vec \theta_r^*\}_{r\in\mathcal W_t}\bigr)-C_t\bigl(\vec v_t;\{\vec \theta_r^*\}_{r\in\mathcal W_t}\bigr)\Bigr).
\end{align}
The natural comparator is the sequence of window-wise minimizers $\vec v_t^\star \in \arg\min_{v\in\mathsf Q}C_t\bigl(v;\{\vec \theta_r^*\}_{r\in\mathcal W_t}\bigr)$, but the regret theorem below holds for any comparator sequence.

\subsection{The algorithm}

\begin{algorithm}[htbp]
\caption{Online projected SPSA-SGD for time-dependent control drift}
\label{alg:SPSA_online}
\DontPrintSemicolon                          
\SetKwInput{KwInput}{Input}                  
\SetKwInput{KwOutput}{Output}                
\SetKw{KwAnd}{and}                           
\SetKw{KwOr}{or}

\KwInput{Iteration budget $T$, cycles-per-window $m$, perturbation radius $\lambda>0$, step size $\eta>0$.}
\KwOutput{A sequence of control parameters $\vec \theta_t$ for $t=1,...,T$ that minimize the dynamic regret in Eq.\eqref{eq:dyn_reg_online} for the time-dependent unknown drift sequence $\{\vec \theta^*_r\}_r$ and optimal choice of comparator $\{v^*_t\}_t$.}
\SetKwFor{RepTimes}{repeat}{times}{end}

\BlankLine
Initialize $\vec \theta_1\in\mathsf Q_\lambda$ \\
\For{$t=1,2,\dots,T$:}{
Sample $u_t\in\{\pm 1\}^d$ uniformly.\\
Query two noisy function values using independent QEC batches according to Eq.~\eqref{eq:model_Y}, Eq.~\eqref{eq:model_window}, and Eq.~\eqref{eq:model_Cemp}, and compute $\widehat C_t^+ := \widehat C(\vec \theta_t+\lambda u_t;\{\vec \theta_r^*\}_{r\in\mathcal W_t})$, and $\widehat C_t^- := \widehat C(\vec \theta_t-\lambda u_t;\{\vec \theta_r^*\}_{r\in\mathcal W_t})$.\\
Form the gradient estimator $\widehat g_t := \frac{\widehat C_t^+ - \widehat C_t^-}{2\lambda}u_t$.\\
Projected update $\vec \theta_{t+1}:=\Pi_{\mathsf P_\lambda}\bigl(\vec \theta_t-\eta \widehat g_t\bigr)$.
}
\Return{The sequence $\{\vec \theta_{t}\}_{t=1}^T$.}  
\end{algorithm}

Here, we introduce our algorithm as in Algorithm~\ref{alg:SPSA_online} as a slightly modified online version of Algorithm~\ref{alg:SPSA_offline}.
Again, we need to choose a perturbation radius $\lambda>0$ and define the shrunk feasible set
\begin{align*}
\mathsf Q_\lambda:=\{\vec \theta\in\mathsf Q : \vec \theta+\lambda u\in\mathsf Q\text{ and }\vec \theta-\lambda u\in\mathsf Q\text{ for all } u\in\{\pm1\}^d\}.
\end{align*}
We will always have that for any $x,y\in\mathsf Q_\lambda$, we have $\norm{x-y}\leq 2(\theta_C^{(th)}-\lambda)$.
In addition, for any $x,y\in\mathsf Q$, we have $\norm{x-y}\leq 2\theta_C^{(th)}$.

\subsection{The regret bound for time-dependent control drift and explanation}

In this section, we show the following rigorous performance guarantee for Algorithm~\ref{alg:SPSA_online}.
\begin{theorem}[Convergence guarantee for online projected SPSA-SGD]\label{thm:online}
Assume that $\lambda\leq \theta_C^{(th)}$, the output sequence $\{\vec \theta_t\}_t$ of the online projected SPSA-SGD algorithm in Algorithm~\ref{alg:SPSA_online} satisfies that, for any comparator sequence $\{\vec v_t\}_{t=1}^T\subseteq\mathsf P$ and any step size $\eta>0$,
\begin{align*}
\mathbb E\left[\sum_{t=1}^T\Bigl(C_t(\vec \theta_t;\{\vec \theta_r^*\}_{r\in\mathcal W_t})-C_t(\vec v_t;\{\vec \theta_r^*\}_{r\in\mathcal W_t})\Bigr)\right]\le\frac{2\theta_C^{(th)\,2}}{\eta}+\frac{\eta T}{2}\left(8d\omega_C^2 \theta_C^{(th)\,2}+\frac{d}{4mN_D\lambda^2}\right)+\frac{2\theta_C^{(th)}}{\eta}\sum_{t=1}^{T-1}\|\vec v_{t+1}-\vec v_t\|.
\end{align*}
Choosing the optimal step size $\eta=\sqrt{\frac{4\theta_C^{(th)\,2}+4\theta_C^{(th)}\sum_{t=1}^{T-1}\|\vec v_{t+1}-\vec v_t\|}{\left(8d\omega_C^2 \theta_C^{(th)\,2}+\frac{d}{4mN_D\lambda^2}\right)T}}$, we have
\begin{align*}
\mathbb E[\mathrm{DynReg}(T;\{\vec v_t\})]\le\sqrt{
\bigl(4\theta_C^{(th)\,2}+4\theta_C^{(th)}\sum_{t=1}^{T-1}\|\vec v_{t+1}-\vec v_t\|\bigr)\left(8d\omega_C^2 \theta_C^{(th)\,2}+\frac{d}{4mN_D\lambda^2}\right)T}.
\end{align*}
\end{theorem}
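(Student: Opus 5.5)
The plan follows the proof of \Cref{thm:offline} step by step. The only change is that the fixed minimizer $\vec 0$ is replaced by a moving comparator $\vec v_t$, which turns the telescoping sum into a sum with a path-length remainder.

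First, we transfer the two estimator lemmas to the online setting. Within epoch $t$ the loss $C_t(\cdot;\{\vec\theta_r^*\}_{r\in\mathcal W_t})$ is an exact quadratic in $\vec\theta$ with Hessian $\frac{2}{m}\sum_{r}\Omega_{C,r}$. The symmetric-difference computation of \Cref{lem:gradient_offline} therefore carries over verbatim: the quadratic term $\lambda^2u^\top(\cdot)u$ cancels, so $\mathbb E[\widehat g_t\mid\vec\theta_t]=\nabla C_t(\vec\theta_t)$. Likewise, \Cref{lem:second_moment_g_offline}, combined with the uniform gradient bound $\|\nabla C_t\|\le 2\omega_C\theta_C^{(th)}$ derived above and the variance bound of \Cref{prop:hatc_prop_offline} (each batch is still an average of $mN_D$ bounded bits), gives
\begin{align*}
\mathbb E\|\widehat g_t\|^2\le G^2:=8d\omega_C^2\theta_C^{(th)\,2}+\frac{d}{4mN_D\lambda^2}.
\end{align*}

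Second, we carry out the one-step analysis against $\vec v_t$. Non-expansiveness of the projection gives
\begin{align*}
\|\vec\theta_{t+1}-\vec v_t\|^2\le\|\vec\theta_t-\vec v_t\|^2-2\eta\langle\widehat g_t,\vec\theta_t-\vec v_t\rangle+\eta^2\|\widehat g_t\|^2.
\end{align*}
We take the conditional expectation, and we use convexity of $C_t$ in the form $C_t(\vec\theta_t)-C_t(\vec v_t)\le\langle\nabla C_t(\vec\theta_t),\vec\theta_t-\vec v_t\rangle$. This yields
\begin{align*}
2\eta\,\mathbb E[C_t(\vec\theta_t)-C_t(\vec v_t)]\le\mathbb E\|\vec\theta_t-\vec v_t\|^2-\mathbb E\|\vec\theta_{t+1}-\vec v_t\|^2+\eta^2G^2.
\end{align*}
Because $\vec v_t$ must lie in the set onto which we project for non-expansiveness to apply, we take comparators in $\mathsf Q_\lambda$, or equivalently we identify $\mathsf P$ with the projection set as the statement implicitly does.

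Third, and this is the only genuinely new step, the sum over $t$ no longer telescopes, because the subtracted term is $\|\vec\theta_{t+1}-\vec v_t\|^2$ rather than $\|\vec\theta_{t+1}-\vec v_{t+1}\|^2$. We rewrite
\begin{align*}
\|\vec\theta_{t+1}-\vec v_{t+1}\|^2-\|\vec\theta_{t+1}-\vec v_t\|^2=\langle\vec v_t-\vec v_{t+1},\,2\vec\theta_{t+1}-\vec v_t-\vec v_{t+1}\rangle.
\end{align*}
By Cauchy--Schwarz and the diameter bound $\|x-y\|\le2\theta_C^{(th)}$ on the feasible set, the right-hand side is at most $4\theta_C^{(th)}\|\vec v_{t+1}-\vec v_t\|$. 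Summing over $t$, dropping the final nonnegative term, and bounding the initial distance $\|\vec\theta_1-\vec v_1\|^2\le4\theta_C^{(th)\,2}$, we obtain
\begin{align*}
2\eta\,\mathbb E[\mathrm{DynReg}]\le4\theta_C^{(th)\,2}+4\theta_C^{(th)}P_T+T\eta^2G^2.
\end{align*}
Dividing by $2\eta$ gives exactly the three stated terms.

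Finally, we minimize $a/\eta+\eta b$ with $a=2\theta_C^{(th)\,2}+2\theta_C^{(th)}P_T$ and $b=TG^2/2$. The minimizer is $\eta=\sqrt{a/b}$, which is the stated step size, and the resulting value $2\sqrt{ab}$ is the stated square-root bound. The main point of care is the path-length step: we need the diameter constant to be applied consistently, using $2\theta_C^{(th)}$ rather than $2(\theta_C^{(th)}-\lambda)$ so as to match the stated constants. We also need the comparator to belong to the projection set so that the non-expansiveness step is legitimate.
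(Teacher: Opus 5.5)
Your proposal is correct and follows the paper's proof essentially line by line. Both arguments run the projected-SGD distance recursion against $\vec v_t$, use the unbiasedness and second-moment bound of the SPSA estimator, invoke convexity, and add the diameter-based correction $4\theta_C^{(th)}\|\vec v_{t+1}-\vec v_t\|$ to restore telescoping.

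Your remark that the comparators must lie in the projection set is apt. The paper's proof silently uses $\Pi_{\mathsf Q_\lambda}(\vec v_t)=\vec v_t$, and its locality-aware corollary states $\{\vec v_t\}\subseteq\mathsf Q_\lambda$ explicitly.
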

\noindent In \Cref{thm:online}, for the natural comparator $\vec v_t=\vec v_t^\star$, we obtain a sublinear regret guarantee whenever the path length of the window-wise minimizers grows sublinearly in $T$. 
Again, we have assumed that the errors across different gates and QEC cycles are independent in \Cref{thm:online}.
If the errors are relevant, we have $\text{Var}(\widehat{C}(\delta\vec \theta))\leq 1$.
Similar to \Cref{thm:offline}, the performance result will become worse, but we can still find the optimal drift in $O(\varepsilon^{-2})$ epochs.

Before proving the above theorem, we introduce some properties as tools of the proof.
Similar to the offline case in \appref{app:appendix_2}, we only observe zeroth-order empirical losses. 
In each cycle $r$, detector $D_k$ produces a bit $\mathcal D_{k,r}\in\{0,1\}$, and the per-cycle empirical detection fraction is $Y_r := \frac{1}{N_D}\sum_{k=1}^{N_D} \mathcal D_{k,r} \in [0,1]$.
The empirical window loss is $\hat C_t(\vec \theta):=\frac{1}{m}\sum_{r\in\mathcal W_t} Y_r$ according to Eq.~\eqref{eq:model_Y}, Eq.~\eqref{eq:model_window}, and Eq.~\eqref{eq:model_Cemp}.
For fixed $\vec \theta$, we can also observe that
\begin{align*}
\mathbb E[\hat C_t(\vec \theta;\{\vec \theta_r^*\}_{r\in\mathcal W_t})]=C_t\bigl(\vec \theta;\{\vec \theta_r^*\}_{r\in\mathcal W_t}\bigr).
\end{align*}
If we additionally assume that the $mN_D$ detector bits in the window are conditionally independent given $\vec \theta$, then similar to the offline case, we have $\operatorname{Var}(\hat C_t(\vec \theta)) \le \frac{1}{4mN_D}$.

We next prove the two key facts needed for the regret bound.
\begin{lemma}[Unbiasedness of the $\widehat{g}_t$]\label{lem:gt_unbiased_online}
For every epoch $t$, $\mathbb E[\hat g_t \mid \vec \theta_t]=\nabla_{\vec \theta} C_t\bigl(\vec \theta_t;\{\vec \theta_r^*\}_{r\in\mathcal W_t}\bigr)$.
\end{lemma}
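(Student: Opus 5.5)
The plan follows the proof of \Cref{lem:gradient_offline} essentially verbatim, with the fixed quadratic replaced by the window-averaged quadratic $C_t(\cdot;\{\vec\theta_r^*\}_{r\in\mathcal W_t})$. The key observation is that $C_t$ is itself an exact quadratic in the control variable: by the expansion above,
\begin{align*}
C_t(\vec\theta)=\vec\theta^\top \bar\Omega_t\vec\theta-2\vec b_t^\top\vec\theta+\bar C_{t,0},\qquad \bar\Omega_t:=\frac1m\sum_{r\in\mathcal W_t}\Omega_{C,r},\quad \vec b_t:=\frac1m\sum_{r\in\mathcal W_t}\Omega_{C,r}\vec\theta_r^*,
\end{align*}
with $\bar\Omega_t$ symmetric. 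Hence the central finite difference has no curvature bias.

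First, for any fixed sign vector $u$, I would expand $C_t(\vec\theta_t\pm\lambda u)$. The quadratic terms $\lambda^2u^\top\bar\Omega_t u$ and the constant terms are identical in both evaluations and cancel. This leaves
\begin{align*}
C_t(\vec\theta_t+\lambda u)-C_t(\vec\theta_t-\lambda u)=4\lambda u^\top(\bar\Omega_t\vec\theta_t-\vec b_t)=2\lambda\,u^\top\nabla_{\vec\theta}C_t(\vec\theta_t).
\end{align*}
The noiseless estimator is therefore $(uu^\top)\nabla C_t(\vec\theta_t)$.

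Second, I would average over $u_t$. The coordinates of $u_t$ are i.i.d.\ Rademacher and independent of $\vec\theta_t$, which is determined by past epochs, so $\mathbb E[u_tu_t^\top]=I$.

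Third, I would handle the noise. I would write $\widehat C_t^\pm=C_t(\vec\theta_t\pm\lambda u_t)+\xi_t^\pm$ and use the stated unbiasedness $\mathbb E[\hat C_t(\vec\theta)]=C_t(\vec\theta)$ conditional on $(\vec\theta_t,u_t)$. Then $\mathbb E[\xi_t^\pm\mid\vec\theta_t,u_t]=0$. By the tower property, the noise term $\frac{\xi_t^+-\xi_t^-}{2\lambda}u_t$ has zero conditional mean.

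The only subtle point is the conditioning. The two batches are run at $\vec\theta_t\pm\lambda u_t$, and their window of optima $\{\vec\theta_r^*\}$ must be the one defining $C_t$. This means the drift sequence must be exogenous, not adapted to $u_t$ or to the measurement noise. I would state this explicitly: the optima are a fixed deterministic (or independent) sequence, and both evaluations use the same window-loss $C_t$, as in Algorithm~\ref{alg:SPSA_online}. Unbiasedness then holds conditionally on $\vec\theta_t$ and on the realized optima. I also need all perturbed points to remain in the quadratic regime; this is guaranteed because $\vec\theta_t\in\mathsf Q_\lambda$.
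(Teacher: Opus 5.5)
Your proposal is correct and takes essentially the same route as the paper: the central difference of an exact quadratic gives $(u_tu_t^\top)\nabla_{\vec\theta}C_t(\vec\theta_t)$, then $\mathbb E[u_tu_t^\top]=I$, and conditional unbiasedness of $\widehat C_t^\pm$ given $(\vec\theta_t,u_t)$ takes care of the noise. The differences are cosmetic. You apply the difference identity directly to the window-averaged quadratic, whereas the paper does it cycle by cycle and then averages. You also state explicitly an assumption the paper leaves implicit, namely that both batches see the same exogenous window of optima.
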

\begin{proof}
We first fix a $t$. 
For each cycle $r\in\mathcal W_t$, since $C_r(\vec \theta;\vec \theta_r^*)$ is a quadratic function of $\vec \theta$ with symmetric Hessian $2\Omega_{C,r}$, we have
\begin{align*}
C_r(\vec \theta_t+\lambda u_t;\vec \theta_r^*)-C_r(\vec \theta_t-\lambda u_t;\vec \theta_r^*)=2\lambda\nabla_{\vec \theta} C_r(\vec \theta_t;\vec \theta_r^*)^\top u_t.
\end{align*}
Averaging over $r\in\mathcal W_t$ yields
\begin{align*}
C_t(\vec \theta_t+\lambda u_t;\{\vec \theta_r^*\})-C_t(\vec \theta_t-\lambda u_t;\{\vec \theta_r^*\})=2\lambda\nabla_{\vec \theta} C_t(\vec \theta_t;\{\vec \theta_r^*\})^\top u_t.
\end{align*}
Therefore, the noiseless SPSA estimator equals
\begin{align*}
\frac{C_t(\vec \theta_t+\lambda u_t;\{\vec \theta_r^*\})-C_t(\vec \theta_t-\lambda u_t;\{\vec \theta_r^*\})}{2\lambda}u_t=(u_tu_t^\top)\nabla_{\vec \theta} C_t(\vec \theta_t;\{\vec \theta_r^*\}).
\end{align*}
Since the coordinates of $u_t$ are i.i.d., we have $\mathbb E[u_tu_t^\top] = I$.
Hence, the noiseless estimator is unbiased. 
Replacing the exact losses by the empirical estimates $\hat C_t^\pm$ preserves unbiasedness because $\mathbb E[\hat C_t^\pm\mid \vec \theta_t,u_t]=C_t(\vec \theta_t\pm\lambda u_t;\{\vec \theta_r^*\}_{r\in\mathcal W_t})$.
\end{proof}

\begin{lemma}[Second-moment bound of $\widehat{g}_t$]\label{lem:second_moment_online}
For every epoch $t$, we have
\begin{align*}
\mathbb E[\|\hat g_t\|^2 \mid \vec \theta_t]\le2d\bigl\|\nabla_{\vec \theta} C_t(\vec \theta_t;\{\vec \theta_r^*\}_{r\in\mathcal W_t})\bigr\|^2+\frac{d}{4mN_D\lambda^2}.
\end{align*}
In particular, using the uniform gradient bound above, we have
\begin{align*}
\mathbb E[\|\hat g_t\|^2 \mid \vec \theta_t]\le8d\omega_C^2 \theta_C^{(th)\,2}+\frac{d}{4mN_D\lambda^2}.
\end{align*}
\end{lemma}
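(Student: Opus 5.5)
The plan is to mirror the proof of \Cref{lem:second_moment_g_offline}, since the online estimator has the same form. The only difference is that the exact loss is now the window-averaged quadratic $C_t(\cdot;\{\vec\theta_r^*\}_{r\in\mathcal W_t})$ instead of the fixed quadratic $C$. Fix $\vec\theta_t$ and write the two noisy evaluations as
\[
\widehat C_t^\pm = C_t(\vec\theta_t\pm\lambda u_t;\{\vec\theta_r^*\}) + \xi_t^\pm ,
\]
where the noise terms satisfy $\mathbb E[\xi_t^\pm\mid\vec\theta_t,u_t]=0$ and $\operatorname{Var}(\xi_t^\pm\mid\vec\theta_t,u_t)\le 1/(4mN_D)$. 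This variance bound comes from the conditional-independence assumption on the $mN_D$ detector bits stated just before \Cref{lem:gt_unbiased_online}. Set $\zeta_t=\xi_t^+-\xi_t^-$. Because the two batches are independent, $\zeta_t$ has conditional mean zero and conditional variance at most $1/(2mN_D)$.

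Next, split $\widehat g_t=S_t+N_t$. The signal part is $S_t=\frac{C_t(\vec\theta_t+\lambda u_t)-C_t(\vec\theta_t-\lambda u_t)}{2\lambda}u_t$, and the noise part is $N_t=\frac{\zeta_t}{2\lambda}u_t$. The proof of \Cref{lem:gt_unbiased_online} shows that, because each $C_r$ is exactly quadratic, the central difference of $C_t$ is exactly linear. Hence $S_t=(u_tu_t^\top)\nabla_{\vec\theta}C_t(\vec\theta_t;\{\vec\theta_r^*\})$. Write $\nabla C_t$ for this gradient. Then $\|S_t\|^2=\|u_t\|^2(u_t^\top\nabla C_t)^2=d\,(u_t^\top\nabla C_t)^2$. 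Rademacher coordinates give $\mathbb E[(u_t^\top v)^2]=\|v\|^2$, so $\mathbb E[\|S_t\|^2\mid\vec\theta_t]=d\|\nabla C_t\|^2$.

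For the noise part, $\|N_t\|^2=\frac{d}{4\lambda^2}\zeta_t^2$, which gives $\mathbb E[\|N_t\|^2\mid\vec\theta_t]\le\frac{d}{8mN_D\lambda^2}$. Applying $\|a+b\|^2\le 2\|a\|^2+2\|b\|^2$ then yields
\[
\mathbb E[\|\widehat g_t\|^2\mid\vec\theta_t]\le 2d\|\nabla C_t\|^2+\frac{d}{4mN_D\lambda^2}.
\]
Alternatively, one can use $\mathbb E[\zeta_t\mid\vec\theta_t,u_t]=0$ to drop the cross term and get a factor-2-better bound. The stated bound then follows a fortiori.

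The second inequality is immediate from the uniform gradient bound $\|\nabla_{\vec\theta}C_t\|\le 2\omega_C\theta_C^{(th)}$ derived in the online setup. That bound uses $\lambda_{\max}(\Omega_{C,r})\le\omega_C$ and the assumption that every cycle-level drift stays inside the convex region.

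The only step needing care is the noise model. The empirical window loss is computed from the detector bits in the window where the control is held fixed, and $\vec\theta_r^*$ varies within that window. Even so, its conditional mean equals the window average $C_t$. The variance bound also still holds, because the bits are conditionally independent given the control even though they are not identically distributed. This is exactly the assumption the paper states, so I do not expect a real obstacle here.
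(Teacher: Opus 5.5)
Your proposal is correct and follows the paper's proof essentially line for line. It uses the same signal/noise split $\widehat g_t=S_t+N_t$, the identity $S_t=(u_tu_t^\top)\nabla_{\vec\theta}C_t$ coming from exact quadraticity, the bounds $\mathbb E\|S_t\|^2=d\|\nabla_{\vec\theta}C_t\|^2$ and $\mathbb E\|N_t\|^2\le d/(8mN_D\lambda^2)$, and the inequality $\|a+b\|^2\le 2\|a\|^2+2\|b\|^2$. Your side remark is also right and goes slightly beyond the paper: since $\mathbb E[\zeta_t\mid\vec\theta_t,u_t]=0$, the cross term vanishes, which gives the bound $d\|\nabla_{\vec\theta}C_t\|^2+d/(8mN_D\lambda^2)$, sharper by a factor of two.
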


\begin{proof}
We write $\hat C_t^\pm=C_t(\vec \theta_t\pm\lambda u_t;\{\vec \theta_r^*\})+\xi_t^\pm$, where $\xi_t^\pm$ are the empirical noises. 
By the variance bound for $\hat C_t$, we have $\mathbb E[\xi_t^\pm\mid \vec \theta_t,u_t]=0$ and $\operatorname{Var}(\xi_t^\pm\mid \vec \theta_t,u_t)\le \frac{1}{4mN_D}$.
Define $\zeta_t:=\xi_t^+-\xi_t^-$. 
Using independent batches for the two function queries gives $\mathbb E[\zeta_t\mid \vec \theta_t,u_t]=0$ and $\operatorname{Var}(\zeta_t\mid \vec \theta_t,u_t)\le \frac{1}{2mN_D}$.
Now we decompose
\begin{align*}
\hat g_t&=\frac{C_t(\vec \theta_t+\lambda u_t;\{\vec \theta_r^*\})-C_t(\vec \theta_t-\lambda u_t;\{\vec \theta_r^*\})}{2\lambda}u_t+\frac{\zeta_t}{2\lambda}u_t.
\end{align*}
Using $\|a+b\|^2\le 2\|a\|^2+2\|b\|^2$ and denote the first term as the signal term $S_t$ and the second term as the noise term $N_t$, we have
\begin{align*}
\mathbb E[\|\hat g_t\|^2\mid \vec \theta_t]\le2\mathbb E[\|S_t\|^2\mid \vec \theta_t]+2\mathbb E[\|N_t\|^2\mid \vec \theta_t].
\end{align*}
From \Cref{lem:gt_unbiased_online}, we have $S_t=(u_tu_t^\top)\nabla_{\vec \theta} C_t(\vec \theta_t;\{\vec \theta_r^*\})$.
Hence, we have $\|S_t\|^2=(u_t^\top \nabla_{\vec \theta} C_t)^2 \|u_t\|^2$.
Since $\|u_t\|^2=d$ and $\mathbb E[(u_t^\top v)^2]=\|v\|^2$ for any fixed $v$,
\begin{align*}
\mathbb E[\|S_t\|^2\mid \vec \theta_t]=d\|\nabla_{\vec \theta} C_t(\vec \theta_t;\{\vec \theta_r^*\})\|^2.
\end{align*}
For the noise term, we have$\|N_t\|^2=\frac{\zeta_t^2}{4\lambda^2}\|u_t\|^2=
\frac{d\zeta_t^2}{4\lambda^2}$.
Taking expectation and using $\mathbb E[\zeta_t^2]\le 1/(2mN_D)$ yields
\begin{align*}
\mathbb E[\|N_t\|^2\mid \vec \theta_t]\le\frac{d}{8mN_D\lambda^2}.
\end{align*}
Combining the two bounds gives the stated result.
\end{proof}

Finally, we are ready to prove \Cref{thm:online}.

\begin{proof}[Proof of \Cref{thm:online}]
Define the auxiliary point $y_{t+1}:=\vec \theta_t-\eta \hat g_t$.
Note that $\|\vec \theta_{t+1}-\vec v_t\|^2=\|\Pi_{\mathsf Q_\lambda}(y_{t+1})-\Pi_{\mathsf Q_\lambda}(\vec v_t)\|^2\le\|y_{t+1}-\vec v_t\|^2$.
Expanding the above equation gives
\begin{align*}
\|\vec \theta_{t+1}-\vec v_t\|^2\le\|\vec \theta_t-\vec v_t\|^2-2\eta\langle \hat g_t,\vec \theta_t-\vec v_t\rangle+\eta^2\|\hat g_t\|^2.
\end{align*}
Conditioning on $\vec \theta_t$ and using \Cref{lem:gt_unbiased_online} and \Cref{lem:second_moment_online}, we obtain
\begin{align*}
\mathbb E[\|\vec \theta_{t+1}-\vec v_t\|^2\mid \vec \theta_t]\le\|\vec \theta_t-\vec v_t\|^2-2\eta\langle \nabla_{\vec \theta} C_t(\vec \theta_t;\{\vec \theta_r^*\}),\vec \theta_t-\vec v_t\rangle+\eta^2\left(8d\omega_C^2 \theta_C^{(th)\,2}+\frac{d}{4mN_D\lambda^2}\right).
\end{align*}
By the convexity of $C_t(\cdot;\{\vec \theta_r^*\})$, we have $C_t(\vec \theta_t;\{\vec \theta_r^*\})-C_t(\vec v_t;\{\vec \theta_r^*\})\le\langle \nabla_{\vec \theta} C_t(\vec \theta_t;\{\vec \theta_r^*\}),\vec \theta_t-\vec v_t\rangle$.
Therefore, we have
\begin{align*}
2\eta\mathbb E\left[C_t(\vec \theta_t;\{\vec \theta_r^*\})-C_t(\vec v_t;\{\vec \theta_r^*\})\right]\le\mathbb E[\|\vec \theta_t-\vec v_t\|^2-\|\vec \theta_{t+1}-\vec v_t\|^2]+\eta^2\left(8d\omega_C^2 \theta_C^{(th)\,2}+\frac{d}{4mN_D\lambda^2}\right).
\end{align*}
Next, because $\vec \theta_{t+1},\vec v_t,\vec v_{t+1}\in\mathsf Q$ and $\mathsf Q$ has diameter at most $2\theta_C^{(th)}$, $\|\vec \theta_{t+1}-\vec v_{t+1}\|^2\le\|\vec \theta_{t+1}-\vec v_t\|^2+2D\|\vec v_{t+1}-\vec v_t\|$.
Substituting into the previous inequality yields
\begin{align*}
2\eta\mathbb E\left[C_t(\vec \theta_t;\{\vec \theta_r^*\})-C_t(\vec v_t;\{\vec \theta_r^*\})\right]\le\mathbb E[\|\vec \theta_t-\vec v_t\|^2-\|\vec \theta_{t+1}-\vec v_{t+1}\|^2]+4\theta_C^{(th)}\|\vec v_{t+1}-\vec v_t\|+\eta^2\left(8d\omega_C^2 \theta_C^{(th)\,2}+\frac{d}{4mN_D\lambda^2}\right).
\end{align*}
Summing over $t=1,\ldots,T$ gives the claim in the theorem. 
\end{proof}

\subsection{The improved convergence guarantee for local codes and time-dependent drifts}\label{sec:imp_online_local}
We now extend the locality-aware argument in Appendix~\ref{sec:imp_offline_local} to the time-dependent setting. 
The setting is the same as for \Cref{thm:online}. 
In epoch $t$, the controller chooses one control vector $\vec\theta_t\in \mathsf{Q}$, holds it fixed over the whole window $W_t$, and the calibrated vector $\vec\theta_r^*$ may vary with the cycle $r\in W_t$. 
The epoch loss is $C_t(\vec\theta; \{\vec\theta_r^*\}_{r\in W_t})=\frac{1}{m}\sum_{r\in W_t}C_r(\vec\theta;\vec\theta_r^*)$.
We keep the same convex regime, the same feasible set $\mathsf{Q}$, the same shrunk set $\mathsf{Q}_\lambda$, and the same uniform smoothness bound $\lambda_{\max}(\Omega_{C,r})\leq \omega_C$ for all cycles $r$. 
Therefore, every epoch loss is convex and satisfies $\|\nabla_{\vec\theta}C_t(\vec\theta_t;\{\vec\theta_r^*\}_{r\in W_t})\|\leq2\omega_C\theta_C^{(\mathrm{th})}$.
Again, the only additional input is \Cref{ass:local_code}. 
For every detector $D_k$, let $\mathcal{S}_k\subseteq \{1,\ldots,d\}$ be the set of scalar control coordinates on which the detector rate depends. 
In the time-dependent setting, this means that for every cycle $r$, the detector contribution $DR_{k,r}(\vec\theta;\vec\theta_r^*)$ depends only on the coordinates in $\mathcal{S}_k$. 
We keep $\mathcal{K}_j:=\{k:j\in \mathcal{S}_k\}$, and \Cref{ass:local_code} gives $|\mathcal{S}_k|\leq s$, $|\mathcal{K}_j|\leq c$, and $\frac{d}{N_D}\leq s$.

For each epoch $t$, define the window-averaged detector rate
\begin{align*}
DR_{t,k}(\vec\theta;\{\vec\theta_r^*\}_{r\in W_t}):=\frac{1}{m}\sum_{r\in W_t}DR_{k,r}(\vec\theta;\vec\theta_r^*)\quad\Rightarrow\quad
C_t(\vec\theta;\{\vec\theta_r^*\}_{r\in W_t})=\frac{1}{N_D}\sum_{k=1}^{N_D}DR_{t,k}(\vec\theta;\{\vec\theta_r^*\}_{r\in W_t}).
\end{align*}
The empirical detector-resolved estimate is
\begin{align*}
\widehat{DR}_{t,k}(\vec\theta):=\frac{1}{m}\sum_{r\in W_t}\mathcal D_{k,r}(\vec\theta)\quad\Rightarrow\quad
\mathbb E[\widehat{DR}_{t,k}(\vec\theta)]=DR_{t,k}(\vec\theta;\{\vec\theta_r^*\}_{r\in W_t}),\quad
\operatorname{Var}(\widehat{DR}_{t,k}(\vec\theta))
\leq
\frac{1}{4m}
\end{align*}
for fixed $\vec\theta$

We use the same coloring construction as Appendix~\ref{sec:imp_offline_local}.
Define a graph on the coordinates $\{1,\ldots,d\}$ by connecting $j$ and $\ell$ whenever there exists a detector $D_k$ such that $j,\ell\in \mathcal{S}_k$. 
By Assumption~\ref{sec:imp_offline_local}, the degree of this graph is at most $c(s-1)$. 
Therefore, it admits a coloring with $\chi\leq c(s-1)+1$ colors. 
Let $G_1,\ldots,G_\chi$ be the color classes. Then, for every detector $D_k$ and every color $G_a$, $|\mathcal{S}_k\cap G_a|\leq 1$.
At epoch $t$, for each color $G_a$, sample signs $u_{t,j}\in\{\pm1\}$ for $j\in G_a$, and define the sparse perturbation vector $u_t^{(a)}$ by $(u_t^{(a)})_j=u_{t,j}$ if $j\in G_a$ and $0$ otherwise.
For each color $G_a$, we query the two controls $\vec\theta_t+\lambda u_t^{(a)}$ and $\vec\theta_t-\lambda u_t^{(a)}$.
Since $\mathsf{Q}_\lambda$ is defined so that $\vec\theta\pm \lambda u\in \mathsf{Q}$ for every full sign vector $u\in\{\pm1\}^d$, and since $u_t^{(a)}$ has entries in $[-1,1]$, convexity of $\mathsf{Q}$ implies that these sparse queried points also lie in $\mathsf{Q}$.
For $j\in G_a$, define the locality-aware estimator
\begin{align*}
\widehat g^{\mathrm{loc}}_{t,j}=\frac{u_{t,j}}{2\lambda N_D}\sum_{k\in \mathcal{K}_j}\left(\widehat{DR}_{t,k}(\vec\theta_t+\lambda u_t^{(a)})-\widehat{DR}_{t,k}(\vec\theta_t-\lambda u_t^{(a)})\right).
\end{align*}
The update is again $\vec\theta_{t+1}=\Pi_{\mathsf{Q}_\lambda}(\vec\theta_t-\eta \widehat g_t^{\mathrm{loc}})$.
Again, this algorithm uses $2\chi$ detector-resolved queries per epoch. Since $\chi\leq c(s-1)+1$, this is a constant-factor overhead for a fixed local code family.

\begin{corollary}[Improved dynamic regret guarantee for local QEC codes)]\label{coro:imp_local_online}
Assume the time-dependent control drift setting of Appendix~\ref{app:appendix_3}. 
Assume that every epoch loss $C_t(\cdot;\{\vec\theta_r^*\}_{r\in W_t})$ is convex on $\mathsf{Q}$, that $\lambda\leq \theta_C^{(\mathrm{th})}$, and that \Cref{ass:local_code} holds. 
Assume also that detector bits are independent across detectors and cycles within each empirical window. 
Then, for any comparator sequence $\{\vec v_t\}_{t=1}^T\subseteq \mathsf{Q}_\lambda$ and any step size $\eta>0$, the locality-aware projected update satisfies
\begin{align*}
\mathbb E[\mathrm{DynReg}(T;\{\vec v_t\})]\leq\frac{2(\theta_C^{(\mathrm{th})})^2}{\eta}+\frac{\eta T}{2}\left(4\omega_C^2(\theta_C^{(\mathrm{th})})^2+\frac{sc}{8m\lambda^2N_D}\right)+\frac{2\theta_C^{(\mathrm{th})}}{\eta}\sum_{t=1}^{T-1}\|\vec v_{t+1}-\vec v_t\|.
\end{align*}
Equivalently, if $P_T:=\sum_{t=1}^{T-1}\|\vec v_{t+1}-\vec v_t\|$, then choosing $\eta=\sqrt{\tfrac{4(\theta_C^{(\mathrm{th})})^2+4\theta_C^{(\mathrm{th})}P_T}{\left(4\omega_C^2(\theta_C^{(\mathrm{th})})^2+\tfrac{sc}{8m\lambda^2N_D}\right)T}}$ gives
\begin{align*}
\mathbb E[\mathrm{DynReg}(T;\{\vec v_t\})]\leq\sqrt{\left(4(\theta_C^{(\mathrm{th})})^2+4\theta_C^{(\mathrm{th})}P_T\right)\left(4\omega_C^2(\theta_C^{(\mathrm{th})})^2+\frac{sc}{8m\lambda^2N_D}\right)T}.
\end{align*}
\end{corollary}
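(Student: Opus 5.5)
The plan is to reuse the proof of \Cref{thm:online} line by line. Only the two gradient-estimator lemmas change: \Cref{lem:gt_unbiased_online} and \Cref{lem:second_moment_online} are replaced by time-dependent analogues of \Cref{lem:gradient_offline_local} and \Cref{lem:second_moment_g_offline_local}. With those two facts in hand, the projected online gradient-descent argument with a moving comparator goes through unchanged, with the second-moment constant $8d\omega_C^2\theta_C^{(\mathrm{th})2}+d/(4mN_D\lambda^2)$ replaced by $4\omega_C^2\theta_C^{(\mathrm{th})2}+sc/(8m\lambda^2N_D)$.

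\emph{Unbiasedness.} Fix an epoch $t$ and a coordinate $j\in G_a$. For every cycle $r\in W_t$, the rate $DR_{k,r}(\cdot;\vec\theta_r^*)$ with $k\in\mathcal K_j$ is quadratic in the coordinates of $\mathcal S_k$, and by the coloring property the only coordinate of $\mathcal S_k$ moved by $u_t^{(a)}$ is $j$. Hence
\[
DR_{k,r}(\vec\theta_t+\lambda u_t^{(a)})-DR_{k,r}(\vec\theta_t-\lambda u_t^{(a)})=2\lambda u_{t,j}\,\partial_j DR_{k,r}(\vec\theta_t).
\]
Averaging over $r\in W_t$ turns this into the same identity for $DR_{t,k}$. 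Using $u_{t,j}^2=1$ and $\partial_j DR_{t,k}=0$ for $k\notin\mathcal K_j$, the noiseless estimator equals $\partial_j C_t(\vec\theta_t)$. Finally, $\widehat{DR}_{t,k}$ is an unbiased estimate of $DR_{t,k}$ for each fixed query point, which gives $\mathbb E[\widehat g^{\mathrm{loc}}_t\mid\vec\theta_t]=\nabla_{\vec\theta}C_t(\vec\theta_t)$.

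\emph{Second moment.} Write $\widehat g^{\mathrm{loc}}_{t,j}=\partial_jC_t+\zeta_{t,j}$ with mean-zero noise
\[
\zeta_{t,j}=\frac{u_{t,j}}{2\lambda N_D}\sum_{k\in\mathcal K_j}(\xi_k^+-\xi_k^-).
\]
The $\pm$ batches are independent, the detectors are independent, each single-detector variance is at most $1/(4m)$, and $|\mathcal K_j|\le c$, so $\mathrm{Var}(\zeta_{t,j})\le c/(8m\lambda^2N_D^2)$. The cross term vanishes, so summing over the $d$ coordinates gives
\[
\mathbb E\|\widehat g^{\mathrm{loc}}_t\|^2\le\|\nabla C_t\|^2+\frac{dc}{8m\lambda^2N_D^2}\le 4\omega_C^2\theta_C^{(\mathrm{th})2}+\frac{sc}{8m\lambda^2N_D},
\]
using $d/N_D\le s$ and the uniform gradient bound $2\omega_C\theta_C^{(\mathrm{th})}$ established above for the window losses.

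\emph{Regret recursion.} Nonexpansiveness of $\Pi_{\mathsf Q_\lambda}$, together with $\vec v_t\in\mathsf Q_\lambda$, gives
\[
\|\vec\theta_{t+1}-\vec v_t\|^2\le\|\vec\theta_t-\vec v_t\|^2-2\eta\langle\widehat g^{\mathrm{loc}}_t,\vec\theta_t-\vec v_t\rangle+\eta^2\|\widehat g^{\mathrm{loc}}_t\|^2.
\]
Take conditional expectations and apply convexity of $C_t$ to lower-bound the inner product by $C_t(\vec\theta_t)-C_t(\vec v_t)$. Then switch comparators using
\[
\|\vec\theta_{t+1}-\vec v_{t+1}\|^2\le\|\vec\theta_{t+1}-\vec v_t\|^2+4\theta_C^{(\mathrm{th})}\|\vec v_{t+1}-\vec v_t\|,
\]
which holds because $\|a-c\|^2-\|a-b\|^2=\langle b-c,\,2a-b-c\rangle\le\|b-c\|\cdot(\|a-b\|+\|a-c\|)$ and the diameter of $\mathsf Q$ is at most $2\theta_C^{(\mathrm{th})}$. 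Telescope over $t$, bound $\|\vec\theta_1-\vec v_1\|^2\le4\theta_C^{(\mathrm{th})2}$, and divide by $2\eta$. This yields the first display. Setting $\eta$ as stated balances the $1/\eta$ and $\eta T$ terms and gives the square-root bound.

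The main obstacle is the unbiasedness step. The query points differ from $\vec\theta_t$ only in coordinates of $G_a$, so the quadratic identity has to hold cycle by cycle with moving centers $\vec\theta_r^*$. It does, because each $C_r$ is exactly quadratic with Hessian $2\Omega_{C,r}$, but I would check carefully that linearity of the window average preserves the single-active-coordinate property. I would also confirm that the $2\chi$ sparse queries all fall inside the same window, or else define the per-color losses consistently with $C_t$.
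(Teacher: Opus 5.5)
Your proposal is correct and follows the paper's proof essentially line by line. The paper also replaces \Cref{lem:gt_unbiased_online} and \Cref{lem:second_moment_online} with locality-aware analogues and then reruns the dynamic-regret recursion of \Cref{thm:online} unchanged; those analogues use the cycle-by-cycle symmetric difference with a single active coordinate per detector followed by window averaging, and a per-coordinate noise variance of $c/(8m\lambda^2N_D^2)$ summed over the $d$ coordinates with $d/N_D\le s$. The window-consistency issue you flag for the $2\chi$ queries is not resolved in the paper either; it is implicitly assumed there, as already in \Cref{thm:online}, that every query in epoch $t$ sees the same drift sequence $\{\vec\theta_r^*\}_{r\in W_t}$.
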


\noindent Again, for the natural comparator sequence given by the window-wise minimizers, the average dynamic regret vanishes whenever $P_T=o(T)$. 
Compared with \Cref{thm:online}, the explicit $d$-dependence in the stochastic-gradient second-moment term is removed. 
Under \Cref{ass:local_code}, the remaining dependence is through the local constants $s,c$, the sampling window $m$, the perturbation radius $\lambda$, and the detector number $N_D$.
Similar to \Cref{rem:non_independent_K}, if the detector estimates inside $\mathcal{K}_j$ are not independent, then the local sampling term becomes $\frac{dc^2}{8m\lambda^2N_D^2}\leq\frac{sc^2}{8m\lambda^2N_D}$.
Thus the bound remains independent of the full dimension $d$, up to the local-code ratio $d/N_D\leq s$, and the only change is the local constant $c\to c^2$.

We prove the result by replacing \Cref{lem:gt_unbiased_online} and \Cref{lem:second_moment_online} with locality-aware versions \Cref{lem:gt_unbiased_online_local} and \Cref{lem:second_moment_online_local}. 
The rest of the dynamic-regret proof is the same as the proof of \Cref{thm:online} and we omit for simplicity only providing \Cref{lem:gt_unbiased_online_local} and \Cref{lem:second_moment_online_local}.
\begin{lemma}[Unbiasedness of $\widehat g_t^{\mathrm{loc}}$ for local QEC codes]\label{lem:gt_unbiased_online_local}
For every epoch $t$, conditioned on $\vec\theta_t$,
\begin{align*}
\mathbb E[\widehat g_t^{\mathrm{loc}}\mid \vec\theta_t]=\nabla_{\vec\theta}C_t(\vec\theta_t;\{\vec\theta_r^*\}_{r\in W_t}).
\end{align*}
\end{lemma}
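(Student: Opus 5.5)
The plan is to mirror the proof of \Cref{lem:gradient_offline_local}, now working with the window-averaged detector rates $DR_{t,k}$ instead of the stationary $DR_k$. I would fix a coordinate $j$ and let $G_a$ be its color class, so that $j$ receives the perturbation $u_t^{(a)}$. The first step is combinatorial. For every $k\in\mathcal K_j$ we have $j\in\mathcal S_k\cap G_a$, and the coloring guarantees $|\mathcal S_k\cap G_a|\le 1$. Hence $j$ is the only coordinate perturbed by $u_t^{(a)}$ that $DR_{k,r}(\cdot;\vec\theta_r^*)$ depends on, and this holds for every cycle $r\in W_t$ because the locality sets $\mathcal S_k$ do not depend on $r$ in the time-dependent setting. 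Restricted to the variables in $\mathcal S_k$, the perturbations $\pm\lambda u_t^{(a)}$ therefore reduce to $\pm\lambda u_{t,j}\vec e_j$.

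The second step uses the quadratic form. Each $C_r$ is quadratic, and I would treat each $DR_{k,r}$ as quadratic in the same convex regime. This is the same truncation already used in the offline local lemma; it holds because $C_r$ is the average of the $DR_{k,r}$ and each one is truncated at second order. The symmetric difference of a quadratic is then exact:
\begin{align*}
DR_{k,r}(\vec\theta_t+\lambda u_{t,j}\vec e_j;\vec\theta_r^*)-DR_{k,r}(\vec\theta_t-\lambda u_{t,j}\vec e_j;\vec\theta_r^*)=2\lambda u_{t,j}\,\partial_j DR_{k,r}(\vec\theta_t;\vec\theta_r^*).
\end{align*}
Averaging over $r\in W_t$ gives the same identity for $DR_{t,k}$. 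Substituting into the noiseless estimator and using $u_{t,j}^2=1$ yields $\frac{1}{N_D}\sum_{k\in\mathcal K_j}\partial_j DR_{t,k}(\vec\theta_t)$. For $k\notin\mathcal K_j$ we have $\partial_j DR_{t,k}=0$, so this sum equals $\partial_j C_t(\vec\theta_t;\{\vec\theta_r^*\}_{r\in W_t})$. Notably, no averaging over the random signs is needed, since the estimator is exact coordinatewise.

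The last step handles sampling noise. Conditioned on $\vec\theta_t$ and on the signs, $\mathbb E[\widehat{DR}_{t,k}(\vec\theta_t\pm\lambda u_t^{(a)})]=DR_{t,k}(\vec\theta_t\pm\lambda u_t^{(a)};\{\vec\theta_r^*\})$, so linearity of expectation and the tower rule give coordinatewise unbiasedness, and hence the claim.

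The main obstacle is a subtlety in the time-dependent setting. The $2\chi$ color queries within epoch $t$ are made at different controls, and in practice they sample different sub-windows in which $\vec\theta_r^*$ may differ. The unbiasedness claim needs every query to be an unbiased estimate of the same window-averaged $DR_{t,k}$. I would resolve this by adopting the paper's convention that each query's expectation is the window average over $W_t$, exactly as in \Cref{lem:gt_unbiased_online}. I would state this modeling assumption explicitly rather than try to prove it.
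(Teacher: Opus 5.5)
Your proposal is correct and follows essentially the same route as the paper: the coloring isolates coordinate $j$ inside each $\mathcal S_k$ with $k\in\mathcal K_j$; the per-cycle quadratic symmetric difference gives $2\lambda u_{t,j}\partial_j DR_{k,r}$; averaging over $W_t$, using $u_{t,j}^2=1$ and $\partial_j DR_{t,k}=0$ for $k\notin\mathcal K_j$, yields $\partial_j C_t$; and the unbiasedness of $\widehat{DR}_{t,k}$ finishes the argument. The sub-window subtlety you flag is handled in the paper by exactly the convention you propose, namely that $\mathbb E[\widehat{DR}_{t,k}(\vec\theta)]=DR_{t,k}(\vec\theta;\{\vec\theta_r^*\}_{r\in W_t})$ for every query made in epoch $t$.
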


\begin{proof}
Fix a coordinate $j$, and let $G_a$ be the color class containing $j$. 
For every detector $D_k\in \mathcal{K}_j$, we have $j\in \mathcal{S}_k$, and by the coloring property, $|\mathcal{S}_k\cap G_a|\leq 1$.
Since $j\in \mathcal{S}_k\cap G_a$, coordinate $j$ is the only active perturbed coordinate in $\mathcal{S}_k$ when the sparse perturbation $u_t^{(a)}$ is applied.
For a fixed cycle $r\in W_t$, the detector contribution $DR_{k,r}(\vec\theta;\vec\theta_r^*)$ is quadratic in the local convex regime. 
Since only coordinate $j$ is perturbed inside $\mathcal{S}_k$, the symmetric difference gives
\begin{align*}
DR_{k,r}(\vec\theta_t+\lambda u_t^{(a)};\vec\theta_r^*)-DR_{k,r}(\vec\theta_t-\lambda u_t^{(a)};\vec\theta_r^*)=2\lambda\partial_j DR_{k,r}(\vec\theta_t;\vec\theta_r^*)u_{t,j}.
\end{align*}
Now average over $r\in W_t$. 
By the definition of $DR_{t,k}$, we have
\begin{align*}
\begin{aligned}
DR_{t,k}(\vec\theta_t+\lambda u_t^{(a)};\{\vec\theta_r^*\}_{r\in W_t})-DR_{t,k}(\vec\theta_t-\lambda u_t^{(a)};\{\vec\theta_r^*\}_{r\in W_t})&=\frac{1}{m}\sum_{r\in W_t}\left(DR_{k,r}(\vec\theta_t+\lambda u_t^{(a)};\vec\theta_r^*)-DR_{k,r}(\vec\theta_t-\lambda u_t^{(a)};\vec\theta_r^*)\right) \\
&=2\lambda u_{t,j}\frac{1}{m}\sum_{r\in W_t}\partial_jDR_{k,r}(\vec\theta_t;\vec\theta_r^*) \\
&=2\lambda u_{t,j}\partial_jDR_{t,k}(\vec\theta_t;\{\vec\theta_r^*\}_{r\in W_t}).
\end{aligned}
\end{align*}
Substituting this into the noiseless version of the estimator gives
\begin{align*}
g^{\mathrm{loc}}_{t,j}=\frac{u_{t,j}}{2\lambda N_D}\sum_{k\in \mathcal{K}_j}2\lambda u_{t,j}\partial_jDR_{t,k}(\vec\theta_t;\{\vec\theta_r^*\}_{r\in W_t})=\frac{1}{N_D}\sum_{k\in \mathcal{K}_j}\partial_jDR_{t,k}(\vec\theta_t;\{\vec\theta_r^*\}_{r\in W_t}),
\end{align*}
where we used $u_{t,j}^2=1$. 
For $k\notin \mathcal{K}_j$, the detector rate $DR_{t,k}$ does not depend on coordinate $j$, so
$\partial_jDR_{t,k}(\vec\theta_t;\{\vec\theta_r^*\}_{r\in W_t})=0$.
Therefore,
\begin{align*}
g^{\mathrm{loc}}_{t,j}=\frac{1}{N_D}\sum_{k=1}^{N_D}\partial_jDR_{t,k}(\vec\theta_t;\{\vec\theta_r^*\}_{r\in W_t})=\partial_jC_t(\vec\theta_t;\{\vec\theta_r^*\}_{r\in W_t}).
\end{align*}
Finally, replacing $DR_{t,k}$ by $\widehat{DR}_{t,k}$ preserves the expectation because
\begin{align*}
\mathbb E[\widehat{DR}_{t,k}(\vec\theta)]=DR_{t,k}(\vec\theta;\{\vec\theta_r^*\}_{r\in W_t}).
\end{align*}
Thus,
\begin{align*}
\mathbb E[\widehat g^{\mathrm{loc}}_{t,j}\mid \vec\theta_t]=\partial_jC_t(\vec\theta_t;\{\vec\theta_r^*\}_{r\in W_t})\quad\Rightarrow\quad\mathbb E[\widehat g_t^{\mathrm{loc}}\mid \vec\theta_t]=\nabla_{\vec\theta}C_t(\vec\theta_t;\{\vec\theta_r^*\}_{r\in W_t}).
\end{align*}
This proves the lemma.
\end{proof}

\begin{lemma}[Second-moment bound of $\widehat g_t^{\mathrm{loc}}$ for local QEC codes]\label{lem:second_moment_online_local}
For every epoch $t$, conditioned on $\vec\theta_t$,
\begin{align*}
\mathbb E[\|\widehat g_t^{\mathrm{loc}}\|^2\mid\vec\theta_t]\leq\left\|\nabla_{\vec\theta}C_t(\vec\theta_t;\{\vec\theta_r^*\}_{r\in W_t})\right\|^2+\frac{dc}{8m\lambda^2N_D^2}\leq4\omega_C^2(\theta_C^{(\mathrm{th})})^2+\frac{sc}{8m\lambda^2N_D}
\end{align*}
using the uniform gradient bound and $d/N_D\leq s$.
\end{lemma}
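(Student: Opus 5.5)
The plan mirrors the proof of \Cref{lem:second_moment_g_offline_local}, now carried out for the window-averaged detector rates $DR_{t,k}$. First we split each coordinate of the estimator into signal plus noise. By \Cref{lem:gt_unbiased_online_local}, the noiseless part of $\widehat g^{\mathrm{loc}}_{t,j}$ equals $\partial_j C_t(\vec\theta_t;\{\vec\theta_r^*\}_{r\in W_t})$ exactly, for every realization of the signs $u_t^{(a)}$. So for $j\in G_a$ we write $\widehat g^{\mathrm{loc}}_{t,j}=\partial_jC_t+\zeta_{t,j}$ with
\begin{align*}
\zeta_{t,j}=\frac{u_{t,j}}{2\lambda N_D}\sum_{k\in\mathcal K_j}\left(\xi_{t,k}^{+}-\xi_{t,k}^{-}\right),
\end{align*}
where $\xi_{t,k}^{\pm}:=\widehat{DR}_{t,k}(\vec\theta_t\pm\lambda u_t^{(a)})-DR_{t,k}(\vec\theta_t\pm\lambda u_t^{(a)};\{\vec\theta_r^*\}_{r\in W_t})$. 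Because the signal is exact rather than only unbiased in expectation over $u$, we do not need the lossy inequality $\|a+b\|^2\le2\|a\|^2+2\|b\|^2$. Instead, $\mathbb E[\zeta_{t,j}\mid\vec\theta_t,u_t]=0$ kills the cross term, which gives
\begin{align*}
\mathbb E[(\widehat g^{\mathrm{loc}}_{t,j})^2\mid\vec\theta_t]=(\partial_jC_t)^2+\mathbb E[\zeta_{t,j}^2\mid\vec\theta_t].
\end{align*}

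Next we bound the noise variance. The window estimate $\widehat{DR}_{t,k}$ is an average of $m$ independent Bernoulli bits with time-varying means $DR_{k,r}$. Its variance is therefore $\frac{1}{m^2}\sum_{r\in W_t}DR_{k,r}(1-DR_{k,r})\le\frac1{4m}$, even though $\vec\theta_r^*$ drifts inside the window. This is the only point where the time dependence needs attention, and it is harmless. The $\pm$ batches are independent, so $\operatorname{Var}(\xi^+_{t,k}-\xi^-_{t,k})\le\frac1{2m}$. Detectors are independent, so the variance of the sum over $\mathcal K_j$ is at most $\frac{c}{2m}$, using $|\mathcal K_j|\le c$. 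Since $u_{t,j}^2=1$, this gives $\mathbb E[\zeta_{t,j}^2\mid\vec\theta_t]\le\frac{c}{8m\lambda^2N_D^2}$.

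Finally, summing over $j=1,\dots,d$ gives
\begin{align*}
\mathbb E[\|\widehat g_t^{\mathrm{loc}}\|^2\mid\vec\theta_t]\le\|\nabla_{\vec\theta}C_t\|^2+\frac{dc}{8m\lambda^2N_D^2}.
\end{align*}
We then apply the uniform gradient bound $\|\nabla_{\vec\theta}C_t\|\le2\omega_C\theta_C^{(\mathrm{th})}$ derived earlier in this appendix, together with $d/N_D\le s$ from \Cref{ass:local_code}, to reach the stated right-hand side.

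The only step needing care is the exactness of the signal. It relies on the coloring property $|\mathcal S_k\cap G_a|\le1$ and on the quadratic form of each $DR_{k,r}$, both already used in \Cref{lem:gt_unbiased_online_local}. If we used only unbiasedness over $u$, the extra signal variance would reintroduce a factor like $2$ in front of the gradient term. Everything else is routine variance bookkeeping.
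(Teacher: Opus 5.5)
Your proposal is correct and follows essentially the same route as the paper. Both arguments write $\widehat g^{\mathrm{loc}}_{t,j}=\partial_jC_t+\zeta_{t,j}$ using the exact noiseless identity from \Cref{lem:gt_unbiased_online_local}, bound $\operatorname{Var}(\zeta_{t,j})\le c/(8m\lambda^2N_D^2)$ via independence of the $\pm$ batches and across detectors in $\mathcal K_j$, and then sum over $j$ with the uniform gradient bound and $d/N_D\le s$. The paper asserts the $1/(4m)$ window-variance bound without comment, while you check explicitly that it survives time-varying means.
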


\begin{proof}
For the two queried controls, write
\begin{align*}
\widehat{DR}_{t,k}^{+}=DR_{t,k}(\vec\theta_t+\lambda u_t^{(a)};\{\vec\theta_r^*\}_{r\in W_t})+\xi_k^{+},\qquad
\widehat{DR}_{t,k}^{-}=DR_{t,k}(\vec\theta_t-\lambda u_t^{(a)};\{\vec\theta_r^*\}_{r\in W_t})+\xi_k^{-}.
\end{align*}
The two batches are independent, and $\mathbb E[\xi_k^{+}]=\mathbb E[\xi_k^{-}]=0$.
Since each $\widehat{DR}_{t,k}$ averages $m$ Bernoulli detector outcomes, we have $\operatorname{Var}(\xi_k^{+})\leq\tfrac{1}{4m}$ and $\operatorname{Var}(\xi_k^{-})\leq\tfrac{1}{4m}$.

By \Cref{lem:gt_unbiased_online_local}, the noiseless part of the estimator equals the gradient of the epoch loss. 
Thus we can write, for each coordinate $j$,
\begin{align*}
\widehat g^{\mathrm{loc}}_{t,j}=\partial_jC_t(\vec\theta_t;\{\vec\theta_r^*\}_{r\in W_t})+\zeta_{t,j},\qquad
\zeta_{t,j}=\frac{u_{t,j}}{2\lambda N_D}\sum_{k\in \mathcal{K}_j}(\xi_k^{+}-\xi_k^{-}).
\end{align*}
The noise has mean zero, and for each detector,
\begin{align*}
\operatorname{Var}(\xi_k^{+}-\xi_k^{-})=\operatorname{Var}(\xi_k^{+})+\operatorname{Var}(\xi_k^{-})\leq\frac{1}{2m}.
\end{align*}
Using independence across detectors and $|\mathcal{K}_j|\leq c$, we have
\begin{align*}
\operatorname{Var}\left(\sum_{k\in \mathcal{K}_j}(\xi_k^{+}-\xi_k^{-})\right)\leq\frac{c}{2m}\quad\Rightarrow\quad
\operatorname{Var}(\zeta_{t,j})\leq\frac{1}{4\lambda^2N_D^2}\cdot\frac{c}{2m}=\frac{c}{8m\lambda^2N_D^2}.
\end{align*}

Now we compute the conditional second moment of the $j$-th coordinate as
\begin{align*}
\mathbb E[(\widehat g^{\mathrm{loc}}_{t,j})^2\mid\vec\theta_t]=\left(\partial_jC_t(\vec\theta_t;\{\vec\theta_r^*\}_{r\in W_t})\right)^2+\mathbb E[\zeta_{t,j}^2\mid \vec\theta_t]\leq\left(\partial_jC_t(\vec\theta_t;\{\vec\theta_r^*\}_{r\in W_t})\right)^2+\frac{c}{8m\lambda^2N_D^2}.
\end{align*}
Summing over $j=1,\ldots,d$ gives
\begin{align*}
\mathbb E[\|\widehat g_t^{\mathrm{loc}}\|^2\mid\vec\theta_t]\leq\left\|\nabla_{\vec\theta}C_t(\vec\theta_t;\{\vec\theta_r^*\}_{r\in W_t})\right\|^2+\frac{dc}{8m\lambda^2N_D^2}.
\end{align*}
As $\|\nabla_{\vec\theta}C_t(\vec\theta_t;\{\vec\theta_r^*\}_{r\in W_t})\|\leq2\omega_C\theta_C^{(\mathrm{th})}$, we have the first term bounded by $4\omega_C^2(\theta_C^{(\mathrm{th})})^2$.
Using $d/N_D\leq s$, we have the second term bounded by $\tfrac{sc}{8m\lambda^2N_D}$.
Combining these bounds proves the lemma.
\end{proof}

\section{Additional discussion on the nonconvex optimization}
\label{app:appendix_4}
In \appref{app:appendix_2} and \appref{app:appendix_3}, we have derived provable guarantees for gradient-based SPSA in the time-independent (offline) and time-dependent (online) control drift optimization.
However, we have focused on the regime where the surrogate error objective function $C$ is convex.
In this section, we consider optimization when $C$ is nonconvex.

\subsection{Local minima points of the surrogate error model}
We start from the same surrogate objective $C(\delta \vec \theta)=\frac{1}{N_D}\sum_{k=1}^{N_D} DR_k(\delta \vec \theta)$.
However, now we do not assume that we remain inside the convex ball of \appref{app:appendix_1}. 

Here, we regard $C(\delta\vec \theta)$ as the loss function of a variational quantum circuit~\cite{farhi2014quantum,mcclean2016theory,cerezo2021variational} with parameters $\vec{\theta}$.
Moreover, $C$ can be regarded as the composition of two maps.
The first map sends the physical control perturbation $\delta \vec \theta$ to the implemented gates. 
In the unitary case, this can be written schematically as $\delta \vec \theta \longmapsto \{V_i(\delta \vec \theta_i)\}_{i}$ using the notation in \appref{sec:unitary_convexity}, and in the more general setting as $\delta \vec \theta \longmapsto \{\mathcal E_i(\delta \vec \theta_i)\}_{i}$ using the notation in \appref{sec:cptp_convexity}, where $\mathcal E_i$ denotes the effective CPTP error channel of gate $i$. 
This first map is generically highly nonconvex. 
The physical intuition for this claim is that pulse phases are periodic, coherent errors interfere, and different pulse deformations can lead to nearly the same implemented gate. 
Therefore, the map from the control to the gate manifold can wind around in a complicated way.
The second map sends the implemented gates or channels to the surrogate objective $C$.
This map only retains the error components that are visible through the detectors, and thus usually has a simpler structure than the first map.

In general, because the optimization of $C(\delta\vec\theta)$ is an optimization of a variational quantum algorithm, it would be NP-hard in general to find the global optimum of $C(\delta\vec\theta)$~\cite{bittel2021training}.
In the case when we assume we are optimizing in the convex regime in \appref{app:appendix_2} and \appref{app:appendix_3}, the global optimum is guaranteed to be found efficiently.
However, when we are optimizing nonconvex $C(\delta\vec\theta)$ out of the convex regime, the only natural assumption is that $C(\delta\vec\theta)$ is smooth.
Under this setting, optimizing variational quantum circuits has been shown to suffer from vanishing gradients known as barren plateaus~\cite{mcclean2018barren,cerezo2021cost,ortiz2021entanglement}.
Even if we can find local minima efficiently, the local minima might not be a good approximation of the global optimum due to the intriguing landscape of variational quantum algorithms in under-parameterized quantum neural networks~\cite{you2021exponentially,anschuetz2021critical}.
However, in the over-parameterization regime, there is a critical point in the number of parameters above which local minima are provably close to the global minimum in function value for local loss functions~\cite{anschuetz2021critical,anschuetz2022quantum,you2022convergence}.
In the context of parameter optimization in quantum control, the goal is to optimize a continuous-time parameterized quantum evolution, which can be modeled as a variational quantum algorithm optimization problem with restricted parameterization. 
Here, it is shown that the optimization landscape of quantum control has no spurious local minima under a series of controllability conditions of the quantum systems~\cite{russell2016quantum,wu2011role}.
The controllability is related to over-parameterization, as one requires a sufficient number of parameters to traverse the space of unitary operators.
Despite over-parametrization, convergence to a good approximation of the global optimum is shown conditioned on good initialization~\cite{cong2019quantum} and highly symmetric problem settings~\cite{kim2021universal,kim2022quantum,wiersema2020exploring}. 

Back to our surrogate objective function optimization, our calibration setting is a local steering. 
Moreover, we are searching for optimal parameters over a fixed well-structured ansatz or circuits with good symmetry.
The initialization is also much better than random initialization, as we usually deviate a little from the actual global optimum.
It is thus similar to a variational quantum circuit or a quantum control optimization with a local loss function and good initialization. 
In this regime, one expects many local minima of $C(\delta\vec\theta)$ to be physically good in the following sense: they already suppress the detector-visible error channels, and the remaining directions are either flat, gauge-like, or only very weakly visible to the syndrome data. 
Therefore, outside the strict convex regime, we can argue that the local minimum reached by the steering procedure already yields an acceptably small detection rate.
In the following, we show algorithms that provide provable guarantees to find a local minimum point of $C(\delta\vec\theta)$.

\subsection{One-time control drift optimization in the nonconvex regime}\label{sec:nonconvex_offline}
We now consider the time-independent setting, namely the case where the drift is fixed throughout the optimization procedure. 
In contrast to \appref{app:appendix_2}, pure two-point SPSA is no longer sufficient if we want a rigorous guarantee of approximate local minimality, because it only provides first-order information. 
To address this issue, we use a zeroth-order method built entirely from function queries. 
One may view it as a perturbed accelerated gradient descent~\cite{jin2018accelerated} implemented through Gaussian smoothing.

\paragraph{Model setup and second-order stationary point (SOSP).}Let $\delta \vec\theta \in \mathbb R^d$ denote the stacked control perturbation vector, and let $\mathsf P \subseteq \mathbb R^d$ be a known closed convex trusted region. 
We assume that the true drift lies in $\mathsf P$, and that all modeling assumptions of the QEC circuit remain valid on $\mathsf P$.
We regard $C(\delta\vec\theta):\mathsf P\to \mathbb R$ as the time-independent nonconvex objective. 
We will use the following regularity assumptions on $\mathsf P$.
\begin{assumption}[Smooth nonconvex trusted regime]\label{ass:lipschitz_nonconvex}
There exist constants $L_0,\beta,\rho >0$ such that for all $\delta \vec\theta_1,\delta \vec\theta_2 \in \mathsf P$,
\begin{align*}
|C(\delta \vec\theta_1)-C(\delta \vec\theta_2)|& \le L_0 \|\delta \vec\theta_1-\delta \vec\theta_2\|,\
\|\nabla C(\delta \vec\theta_1)-\nabla C(\delta \vec\theta_2)\|\le \beta \|\delta \vec\theta_1-\delta \vec\theta_2\|,\
\|\nabla^2 C(\delta \vec\theta_1)-\nabla^2 C(\delta \vec\theta_2)\|\le \rho \|\delta \vec\theta_1-\delta \vec\theta_2\|.
\end{align*}
The last condition is also known as the Hessian-Lipschitz assumption.
\end{assumption}
\noindent We further assume that we only have access to noisy zeroth-order oracle queries generated by QEC sampling as in Eq.~\eqref{eq:model_Y}, Eq.~\eqref{eq:model_window}, and Eq.~\eqref{eq:model_Cemp}. 
A single function query at $\delta \vec\theta$ is implemented by running $m$ independent QEC cycles and averaging detector outcomes $\hat C(\delta \vec\theta):=\frac{1}{m}\sum_{r=1}^{m} Y_r(\delta \vec\theta)$ and $Y_r(\delta \vec\theta):=\frac{1}{N_D}\sum_{k=1}^{N_D} \mathcal D_{k,r}(\delta \vec\theta)$.
Similar to the convex case, we have the following proposition.
\begin{proposition}[Unbiased zeroth-order oracle]\label{prop:unbiased_zero_nonconvex}
For every fixed $\delta \vec\theta \in \mathsf P$, $\mathbb E[\hat C(\delta \vec\theta)] = C(\delta \vec\theta)$.
If the $mN_D$ detector bits are conditionally independent given $\delta \vec\theta$, then $\mathrm{Var}(\hat C(\delta \vec\theta)) \le \frac{1}{4mN_D}$.
\end{proposition}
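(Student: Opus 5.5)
The unbiasedness claim will follow from linearity of expectation and the definition of $\hat C$ as an empirical average of detector bits. The variance bound will use the fact that a mean of bounded independent variables has variance at most the per-variable bound divided by the count. This mirrors \Cref{prop:hatc_prop_offline}, and I will reuse that argument essentially verbatim on the trusted region $\mathsf P$. Convexity plays no role here: the statement concerns only the estimator at a fixed query point, so it holds unchanged in the nonconvex regime.

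\textbf{Unbiasedness.} Fix $\delta\vec\theta\in\mathsf P$. By definition $DR_k(\delta\vec\theta)=\mathbb E[\mathcal D_{k,r}(\delta\vec\theta)]$ for every cycle $r$, since the drift is time-independent and the control is held fixed during the query. Then
\begin{align*}
\mathbb E[Y_r(\delta\vec\theta)]=\frac{1}{N_D}\sum_{k=1}^{N_D}DR_k(\delta\vec\theta)=C(\delta\vec\theta),
\end{align*}
and averaging over the $m$ cycles gives $\mathbb E[\hat C(\delta\vec\theta)]=C(\delta\vec\theta)$. This step needs no independence assumption.

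\textbf{Variance.} Write
\begin{align*}
\hat C(\delta\vec\theta)=\frac{1}{mN_D}\sum_{r=1}^{m}\sum_{k=1}^{N_D}\mathcal D_{k,r}(\delta\vec\theta),
\end{align*}
an average of $mN_D$ Bernoulli variables. Each has variance $p(1-p)\le 1/4$. Under the stated conditional independence given $\delta\vec\theta$, the covariances vanish, so
\begin{align*}
\mathrm{Var}(\hat C(\delta\vec\theta))=\frac{1}{(mN_D)^2}\sum_{r,k}\mathrm{Var}(\mathcal D_{k,r})\le\frac{mN_D}{4(mN_D)^2}=\frac{1}{4mN_D}.
\end{align*}

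The only delicate point is stating exactly what the independence is conditioned on, and it is mild. Independence must hold conditionally on the query point $\delta\vec\theta$, which itself may be random because it is chosen adaptively by the algorithm. Stating the bound conditionally is what later lets it be plugged into the smoothed-gradient analysis without change. Without independence one still has $\mathrm{Var}\le 1/4$ since $\hat C\in[0,1]$, which I would note as a remark, as in the convex case.
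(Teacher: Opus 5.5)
Your proof is correct and follows the paper's own argument: linearity of expectation over the $m$ cycles gives unbiasedness, and the $mN_D$ conditionally independent Bernoulli bits, each with variance at most $1/4$, give the $1/(4mN_D)$ bound. Your side remark that $\mathrm{Var}(\hat C)\le 1/4$ still holds without independence, because $\hat C\in[0,1]$, is also valid.
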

\begin{proof}
The unbiasedness follows from the linearity of expectation $\mathbb E[\hat C(\delta \vec\theta)]=\frac{1}{m}\sum_{r=1}^{m}\mathbb E[Y_r(\delta \vec\theta)]=C(\delta \vec\theta)$.
For the variance bound, $\hat C(\delta \vec\theta)$ is the average of $mN_D$ Bernoulli random variables scaled by $1/(mN_D)$, and each Bernoulli variable has variance at most $1/4$. Hence, $\mathrm{Var}(\hat C(\delta \vec\theta)) \le \frac{1}{4mN_D}$.
\end{proof}

Since we want to find a local minimum, we have to first formalize local minima.
A direct intuition is to find $\delta\vec\theta$ such that $\nabla C(\delta\vec\theta)=0$, which is known as a \emph{first-order stationary point}.
However, first-order stationary points include local minima, saddle points, or even local maxima, and a guarantee of convergence to such points is unsatisfying.
Therefore, we consider finding a \emph{second-order stationary point (SOSP)} with $\nabla C(\delta\vec\theta)=0$ and $\nabla^2 C(\delta\vec\theta)\succeq0$. 
Second-order stationary point rules out many common types of saddle points and allows only local minima and higher-order saddle points.
Since we want a second-order guarantee, we adopt the standard $\varepsilon$-SOSP scaling used in the Hessian-Lipschitz nonconvex literature.
\begin{definition}[$\varepsilon$-SOSP]\label{def:eps_sosp}
Let $\varepsilon>0$. We say that $\delta \vec\theta \in \mathsf P$ is an $\varepsilon$-SOSP of $C$ if
\begin{align*}
\|\nabla C(\delta \vec\theta)\| \le \varepsilon,
\qquad
\lambda_{\min}\bigl(\nabla^2 C(\delta \vec\theta)\bigr)\ge -\sqrt{\rho \varepsilon},
\end{align*}
where $\rho$ is the Hessian-Lipschitz parameter in \Cref{ass:lipschitz_nonconvex}.
\end{definition}
\noindent The appearance of $\sqrt{\rho \varepsilon}$ is natural under Hessian-Lipschitzness: if the Hessian is more negative than this scale, then a cubic-order Taylor expansion reveals a descent direction of order $\varepsilon^{3/2}/\sqrt{\rho}$.

Before introducing components of the algorithm, we need to introduce an auxiliary function.
Because we only observe function values, we first pass to a Gaussian-smoothed surrogate~\cite{jin2017escape,jin2018local}. 
Let $u \sim \mathcal N(0,I_d)$ and define
\begin{align*}
C_\nu(\delta \vec\theta):=\mathbb E_u\bigl[C(\delta \vec\theta+\nu u)\bigr],
\end{align*}
where $\nu>0$ is the smoothing radius.
The smoothed function admits derivative identities that depend only on function values.
\begin{lemma}[Gradient identity of Gaussian smoothing] \label{lem:gradient_gaussian}
For every $\delta \vec\theta \in \mathsf P$, we have $\nabla C_\nu(\delta \vec\theta)=\mathbb E_u\left[\frac{C(\delta \vec\theta+\nu u)-C(\delta \vec\theta)}{\nu}u\right]$.
\end{lemma}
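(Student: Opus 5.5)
The plan is to prove the identity by differentiating under the Gaussian integral and then using Gaussian integration by parts (Stein's identity). Write
\begin{align*}
C_\nu(\delta\vec\theta)=\int_{\mathbb R^d} C(\delta\vec\theta+\nu u)\,\phi(u)\,du=\frac{1}{\nu^d}\int_{\mathbb R^d} C(y)\,\phi\!\left(\frac{y-\delta\vec\theta}{\nu}\right)dy,
\end{align*}
where $\phi(u)=(2\pi)^{-d/2}e^{-\|u\|^2/2}$ and we changed variables to $y=\delta\vec\theta+\nu u$. After this substitution all dependence on $\delta\vec\theta$ sits in the Gaussian kernel. That kernel is smooth and its derivatives decay rapidly, so the derivative can be moved under the integral.

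The first step is to differentiate under the integral sign. This uses $\nabla_{\delta\vec\theta}\phi((y-\delta\vec\theta)/\nu)=\phi((y-\delta\vec\theta)/\nu)\,(y-\delta\vec\theta)/\nu^2$. Changing variables back gives
\begin{align*}
\nabla C_\nu(\delta\vec\theta)=\frac{1}{\nu}\mathbb E_u\big[C(\delta\vec\theta+\nu u)\,u\big].
\end{align*}
The second step subtracts the baseline term. Since $\mathbb E_u[u]=0$ and $C(\delta\vec\theta)$ does not depend on $u$, we have $\mathbb E_u[C(\delta\vec\theta)u]/\nu=0$. Subtracting it yields the stated form $\mathbb E_u[(C(\delta\vec\theta+\nu u)-C(\delta\vec\theta))u/\nu]$. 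As an alternative route, for differentiable $C$ one can write $\nabla C_\nu=\mathbb E_u[\nabla C(\delta\vec\theta+\nu u)]$ and apply Stein's identity $\mathbb E[u\,g(u)]=\mathbb E[\nabla g(u)]$ with $g(u)=C(\delta\vec\theta+\nu u)/\nu$. This reaches the same expression.

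The main obstacle is justifying the interchange of derivative and expectation. A related issue is that $C$ is only specified on the trusted region $\mathsf P$, while $\delta\vec\theta+\nu u$ ranges over all of $\mathbb R^d$. I would handle this by assuming, implicitly as the smoothed surrogate already requires, that $C$ is extended to $\mathbb R^d$ as a bounded measurable function. This is natural because $C\in[0,1]$ is a detection rate, and a globally $L_0$-Lipschitz extension is also available. With $|C|\le 1$, the integrand's derivative is dominated by $\nu^{-2}\|y-\delta\vec\theta\|\phi((y-\delta\vec\theta)/\nu)\nu^{-d}$, uniformly for $\delta\vec\theta$ in a neighborhood. This bound is integrable, so dominated convergence justifies the interchange. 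The identity then holds for every $\delta\vec\theta$, in particular every $\delta\vec\theta\in\mathsf P$.
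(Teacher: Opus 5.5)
Your argument is correct. Your main route differs from the paper's mainly in where the derivative lands first.

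\textbf{The paper's route.} It writes $\nabla C_\nu(\delta\vec\theta)=\int \nabla C(\delta\vec\theta+\nu u)\phi(u)\,du$, so it differentiates $C$ under the integral. It then integrates by parts coordinatewise to move the derivative onto $\phi$; this is exactly your ``alternative route'' via Stein's identity. Finally it subtracts the same zero-mean baseline you do.

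\textbf{Your route.} The change of variables $y=\delta\vec\theta+\nu u$ puts all $\delta\vec\theta$-dependence in the kernel, so you never differentiate $C$ at all. You only need an extension of $C$ to $\mathbb R^d$ that is bounded and measurable. By contrast, the paper's interchange and integration by parts need $\nabla C$ to exist, with some growth control, at every point $\delta\vec\theta+\nu u$. That includes points outside the trusted region $\mathsf P$, which is the only place \Cref{ass:lipschitz_nonconvex} is imposed. Your remark about extending $C$ off $\mathsf P$ therefore makes explicit something the paper leaves implicit.

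\textbf{What the paper's route buys.} Its intermediate representation $\nabla C_\nu=\mathbb E_u[\nabla C(\delta\vec\theta+\nu u)]$ is reused in \Cref{lem:gaussian_bias} to bound the smoothing bias. Kernel differentiation alone does not produce that representation. For that later step you would still need your Stein-identity variant, and hence a differentiable extension.

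\textbf{One small repair.} In dominated convergence, the dominating function must not depend on the point being varied. For $\|\delta\vec\theta'-\delta\vec\theta_0\|\le r$, take $g(y)=\nu^{-d-2}(2\pi)^{-d/2}\bigl(\|y-\delta\vec\theta_0\|+r\bigr)\exp\bigl(-(\|y-\delta\vec\theta_0\|-r)_+^2/(2\nu^2)\bigr)$. This $g$ is integrable and bounds the $\delta\vec\theta'$-derivative of the integrand uniformly over the ball.
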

\begin{proof}
We write $\phi$ as the standard Gaussian density. 
We write $C_\nu$ as an integral of $\phi$ and differentiate under the integral sign.
We have
\begin{align*}
C_\nu(\delta \vec\theta)=\int_{\mathbb R^d} C(\delta \vec\theta+\nu u)\phi(u)du,\qquad \nabla C_\nu(\delta \vec\theta)=\int \nabla C(\delta \vec\theta+\nu u)\phi(u)du.
\end{align*}
For each coordinate $j$, integration by parts gives
\begin{align*}
\partial_{\delta\theta_j} C_\nu(\delta \vec\theta)=\frac{1}{\nu}\int C(\delta \vec\theta+\nu u)u_j \phi(u)du.
\end{align*}
Since $\mathbb E[u_j]=0$, subtracting $C(\delta \vec\theta)\mathbb E[u_j]/\nu=0$ yields
\begin{align*}
\partial_{\delta\theta_j} C_\nu(\delta \vec\theta)
=
\mathbb E\left[
\frac{C(\delta \vec\theta+\nu u)-C(\delta \vec\theta)}{\nu}u_j
\right].
\end{align*}
Stacking the coordinates proves the claim.
\end{proof}

\begin{lemma}[Hessian identity of Gaussian smoothing]\label{lem:hessian_gaussian}
For every $\delta \vec\theta \in \mathsf P$, we have
\begin{align*}
\nabla^2 C_\nu(\delta \vec\theta)=\mathbb E_u\left[\frac{C(\delta \vec\theta+\nu u)+C(\delta \vec\theta-\nu u)-2C(\delta \vec\theta)}{2\nu^2}(uu^\top-I)\right].
\end{align*}
\end{lemma}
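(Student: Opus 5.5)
The plan is to mirror the proof of \Cref{lem:gradient_gaussian}, now applying Gaussian integration by parts twice, and then symmetrize. Writing $\phi$ for the standard Gaussian density, I would start from
\begin{align*}
C_\nu(\delta\vec\theta)=\int_{\mathbb R^d} C(\delta\vec\theta+\nu u)\phi(u)\,du ,
\end{align*}
and use the change of variables $w=\delta\vec\theta+\nu u$. This rewrites $C_\nu(\delta\vec\theta)=\nu^{-d}\int C(w)\phi\bigl((w-\delta\vec\theta)/\nu\bigr)dw$, so every derivative in $\delta\vec\theta$ falls on the Gaussian kernel. Differentiating twice under the integral sign, which is justified by the Lipschitz bound in \Cref{ass:lipschitz_nonconvex} together with Gaussian tails, I would use
\begin{align*}
\nabla^2_{\delta\vec\theta}\,\phi\bigl((w-\delta\vec\theta)/\nu\bigr)=\nu^{-2}(uu^\top-I)\phi(u).
\end{align*}
This follows from the standard identity $\nabla^2\phi(u)=(uu^\top-I)\phi(u)$ and the chain rule, where the two factors of $-1/\nu$ combine to give $+1/\nu^2$. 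Changing variables back should yield
\begin{align*}
\nabla^2 C_\nu(\delta\vec\theta)=\frac{1}{\nu^2}\mathbb E_u\bigl[C(\delta\vec\theta+\nu u)(uu^\top-I)\bigr].
\end{align*}

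Next I would subtract a zero-mean term. Since $\mathbb E[uu^\top-I]=0$, we may subtract $C(\delta\vec\theta)\mathbb E[uu^\top-I]/\nu^2$. Since $u\mapsto -u$ preserves the Gaussian law and leaves $uu^\top-I$ invariant, we also have
\begin{align*}
\mathbb E\bigl[C(\delta\vec\theta+\nu u)(uu^\top-I)\bigr]=\mathbb E\bigl[C(\delta\vec\theta-\nu u)(uu^\top-I)\bigr].
\end{align*}
Averaging these two equal expressions and subtracting $2C(\delta\vec\theta)$ times the zero-mean matrix gives exactly the stated symmetric second-difference formula with prefactor $1/(2\nu^2)$.

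The main obstacle is technical rather than algebraic. $C$ is only defined on $\mathsf P$, while $\delta\vec\theta\pm\nu u$ ranges over all of $\mathbb R^d$. I would assume, as the proof of \Cref{lem:gradient_gaussian} implicitly does, that $C$ is extended to a bounded Lipschitz function on $\mathbb R^d$; a natural choice is to compose $C$ with the projection onto $\mathsf P$. This extension justifies dominated convergence and makes the boundary terms in the integration by parts vanish, because $\phi$ and its derivatives decay faster than any polynomial.
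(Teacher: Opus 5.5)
Your proposal is correct and follows the paper's route. Like the paper, you first obtain $\nabla^2 C_\nu(\delta\vec\theta)=\nu^{-2}\,\mathbb E_u\bigl[C(\delta\vec\theta+\nu u)(uu^\top-I)\bigr]$ by Gaussian integration by parts (placing the derivatives on the kernel after the substitution $w=\delta\vec\theta+\nu u$ is the same computation), then symmetrize under $u\mapsto -u$ and subtract the zero-mean term $2C(\delta\vec\theta)\,\mathbb E[uu^\top-I]$. One small advantage of your kernel-differentiation version is that it only needs a bounded measurable extension of $C$ off $\mathsf P$, such as $C\circ\Pi_{\mathsf P}$, whereas the paper's starting point $\mathbb E_u[\nabla^2 C(\delta\vec\theta+\nu u)]$ tacitly assumes a $C^2$ extension; note, however, that $C\circ\Pi_{\mathsf P}$ is not $C^2$, so it would not by itself supply the smoothness used later in \Cref{lem:gaussian_bias}.
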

\begin{proof}
We start from $\nabla^2 C_\nu(\delta \vec\theta)=\mathbb E_u[\nabla^2 C(\delta \vec\theta+\nu u)]$.
For the $(j,k)$-entry, integrating by parts twice yields
\begin{align*}
\partial_{\delta\theta_k}\partial_{\delta\theta_j} C_\nu(\delta \vec\theta)
=
\frac{1}{\nu^2}\mathbb E\left[C(\delta \vec\theta+\nu u)(u_ju_k-\delta_{jk})\right].
\end{align*}
Now symmetrize using the fact that $u$ and $-u$ have the same distribution:
\begin{align*}
\mathbb E\left[C(\delta \vec\theta+\nu u)(u_ju_k-\delta_{jk})\right]
=
\frac{1}{2}\mathbb E\left[(C(\delta \vec\theta+\nu u)+C(\delta \vec\theta-\nu u))(u_ju_k-\delta_{jk})\right].
\end{align*}
Since $\mathbb E[u_ju_k-\delta_{jk}]=0$, we may subtract $2C(\delta \vec\theta)\mathbb E[u_ju_k-\delta_{jk}] = 0$, which gives the claimed three-point form.
\end{proof}

\paragraph{Gradient and Hessian estimators from zeroth-order queries.}We will need to estimate gradients and Hessians using the noisy function value queries.
At iterate $\delta \vec\theta_t$, we estimate the gradient and Hessian of $C_\nu$ by Monte Carlo averages of the identities above.
We consider sampling $K_g$ and $K_h$ independent Gaussian directions for gradient and Hessian estimations
\begin{align*}
u_t^{(1)},\dots,u_t^{(K_g)} \sim \mathcal N(0,I_d),
\qquad
w_t^{(1)},\dots,w_t^{(K_h)} \sim \mathcal N(0,I_d).
\end{align*}
Then we define the gradient estimator as
\begin{align}\label{eq:nonconvex_hat_gradient}
\hat g_t:=\frac{1}{K_g}\sum_{i=1}^{K_g}\left[\frac{\hat C(\delta \vec\theta_t+\nu u_t^{(i)})-\hat C(\delta \vec\theta_t)}{\nu}u_t^{(i)}\right],
\end{align}
and the Hessian estimator as
\begin{align}\label{eq:nonconvex_hat_hessian}
\hat H_t:=\frac{1}{K_h}\sum_{i=1}^{K_h}\left[\frac{\hat C(\delta \vec\theta_t+\nu w_t^{(i)})+\hat C(\delta \vec\theta_t-\nu w_t^{(i)})-2\hat C(\delta \vec\theta_t)}{2\nu^2}\bigl(w_t^{(i)}w_t^{(i)\top}-I\bigr)\right].
\end{align}

We derive the following properties of the gradient and Hessian estimator in Eq.~\eqref{eq:nonconvex_hat_gradient} and Eq.~\eqref{eq:nonconvex_hat_hessian}.
We first show the unbiasedness of the estimators.
\begin{proposition}[Unbiasedness]\label{prop:unbiasedness_nonconvex}
Conditioned on $\delta \vec\theta_t$, we have $\mathbb E[\hat g_t \mid \delta \vec\theta_t] = \nabla C_\nu(\delta \vec\theta_t)$, and
$\mathbb E[\hat H_t \mid \delta \vec\theta_t] = \nabla^2 C_\nu(\delta \vec\theta_t)$.
\end{proposition}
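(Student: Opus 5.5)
The plan is to condition on $\delta\vec\theta_t$ and the sampled directions, and then average over the QEC sampling noise. After that, average over the Gaussian directions using \Cref{lem:gradient_gaussian} and \Cref{lem:hessian_gaussian}. By \Cref{prop:unbiased_zero_nonconvex}, each query satisfies $\mathbb E[\hat C(\vec x)]=C(\vec x)$ for any fixed $\vec x\in\mathsf P$.

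The one subtlety is that $\hat C(\delta\vec\theta_t)$ is used inside each term of $\hat g_t$ and $\hat H_t$, and it is multiplied by the random factors $u_t^{(i)}$ or $(w_t^{(i)}w_t^{(i)\top}-I)$. I will assume, as in the algorithms of Appendices B and C, that every query is run on an independent QEC batch. Then the noise in each $\hat C$ is independent of the Gaussian directions, given $\delta\vec\theta_t$. Under this assumption, conditioning on $(\delta\vec\theta_t,\{u_t^{(i)}\},\{w_t^{(i)}\})$ and applying linearity of expectation replaces every $\hat C$ by $C$ in Eq.~\eqref{eq:nonconvex_hat_gradient} and Eq.~\eqref{eq:nonconvex_hat_hessian}. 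This works because the direction-dependent factors are measurable with respect to the conditioning. It remains true even if the base query $\hat C(\delta\vec\theta_t)$ is shared across terms, since only first moments appear.

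For the gradient, the tower property then gives
\[
\mathbb E[\hat g_t\mid\delta\vec\theta_t]=\frac{1}{K_g}\sum_{i=1}^{K_g}\mathbb E_{u}\Bigl[\frac{C(\delta\vec\theta_t+\nu u)-C(\delta\vec\theta_t)}{\nu}u\Bigr].
\]
Each of the $K_g$ identically distributed terms equals $\nabla C_\nu(\delta\vec\theta_t)$ by \Cref{lem:gradient_gaussian}. The Hessian is handled the same way using the three-point identity of \Cref{lem:hessian_gaussian}, which gives $\nabla^2 C_\nu(\delta\vec\theta_t)$.

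The main obstacle I expect is a domain issue rather than algebra. The smoothing lemmas integrate $C$ over all of $\mathbb R^d$ against a Gaussian, but $C$ and the oracle are only modeled on $\mathsf P$. I would handle this by treating $C$ as extended smoothly to $\mathbb R^d$ (or, equivalently, by taking the queried points to be admissible controls). Integrability then holds because $C\in[0,1]$ and Gaussian moments are finite. This same boundedness justifies exchanging expectation and differentiation, and Fubini, in the conditioning step.
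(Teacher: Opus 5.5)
Your proposal is correct and follows the paper's argument: replace each $\hat C$ by $C$ using the unbiased oracle of \Cref{prop:unbiased_zero_nonconvex}, then apply \Cref{lem:gradient_gaussian} and \Cref{lem:hessian_gaussian} term by term. Your added care does no harm and makes explicit what the paper's two-line proof leaves implicit: conditioning on the sampled directions so the direction-dependent factors are measurable, and noting that the Gaussian integral requires $C$ to be defined beyond $\mathsf P$.
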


\begin{proof}
By \Cref{prop:unbiased_zero_nonconvex}, $\mathbb E[\hat C(\cdot)] = C(\cdot)$. 
Substituting this into the definitions of $\hat g_t$ and $\hat H_t$ in Eq.~\eqref{eq:nonconvex_hat_gradient} and Eq.~\eqref{eq:nonconvex_hat_hessian}, and then using \Cref{lem:gradient_gaussian} and \Cref{lem:hessian_gaussian} term by term, proves the claim.
\end{proof}

We next bound the estimator variances.
\begin{lemma}[Second moment of the gradient and Hessian estimator]\label{lem:second_moment_nonconvex}
For all $t$ and all $\delta\vec\theta_t\in\mathsf{P}$, we have the second moment of the gradient estimator as
\begin{align*}
\mathbb E\left[\|\hat g_t-\nabla C_\nu(\delta \vec\theta_t)\|^2\middle| \delta \vec\theta_t\right]\le\frac{1}{K_g}\left(L_0^2 d(d+2)+\frac{d}{2\nu^2mN_D}\right).
\end{align*}
We also have the second moment of the Hessian estimator as
\begin{align*}
\mathbb E\left[\|\hat H_t-\nabla^2 C_\nu(\delta \vec\theta_t)\|^2\middle| \delta \vec\theta_t\right]\le\frac{1}{K_h}\left(C_{\beta,d}+\frac{3}{8mN_D\nu^4}d(d+1)\right),
\end{align*}
where $C_{\beta,d}:=\frac{\beta^2}{4}\Bigl(d(d+2)(d+4)(d+6)+d^2(d+2)\Bigr)$.
\end{lemma}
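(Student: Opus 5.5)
The plan is to use the standard variance decomposition for an average of $K$ i.i.d. summands. Conditioned on $\delta\vec\theta_t$, the summands of $\hat g_t$ are i.i.d. with mean $\nabla C_\nu(\delta\vec\theta_t)$ by \Cref{prop:unbiasedness_nonconvex}. Hence
\[
\mathbb E\|\hat g_t-\nabla C_\nu\|^2=\tfrac{1}{K_g}\mathbb E\|X-\mathbb E X\|^2\le \tfrac{1}{K_g}\mathbb E\|X\|^2 ,
\]
with $X$ a single summand, and the same holds for $\hat H_t$. For each single summand I would split the noisy difference into a signal part and an oracle-noise part. For the gradient, write $\hat C(\delta\vec\theta_t+\nu u)-\hat C(\delta\vec\theta_t)=\Delta+\zeta$, where $\Delta=C(\delta\vec\theta_t+\nu u)-C(\delta\vec\theta_t)$ and $\zeta=\xi^+-\xi^0$ is the difference of independent oracle noises. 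Then $\mathbb E[\zeta\mid u]=0$, so the cross term vanishes and
\[
\mathbb E\|X\|^2=\nu^{-2}\mathbb E[\Delta^2\|u\|^2]+\nu^{-2}\mathbb E[\zeta^2\|u\|^2].
\]

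For the signal term of the gradient, the Lipschitz bound in \Cref{ass:lipschitz_nonconvex} gives $|\Delta|\le L_0\nu\|u\|$. This yields $L_0^2\,\mathbb E\|u\|^4=L_0^2 d(d+2)$ using the Gaussian moment $\mathbb E\|u\|^4=d(d+2)$. For the noise term, \Cref{prop:unbiased_zero_nonconvex} gives $\mathrm{Var}(\zeta)\le 2/(4mN_D)$, and independence of $\zeta$ from $u$ gives $\nu^{-2}\cdot\frac{1}{2mN_D}\cdot d$. Summing the two reproduces the stated gradient bound.

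For the Hessian summand $Y=\frac{\Delta_2+\zeta_2}{2\nu^2}(ww^\top-I)$, I would use the Frobenius norm, which upper bounds the operator norm. Here $\Delta_2=C(\delta\vec\theta_t+\nu w)+C(\delta\vec\theta_t-\nu w)-2C(\delta\vec\theta_t)$, and $\zeta_2=\xi^++\xi^--2\xi^0$ has variance at most $6/(4mN_D)$. For the noise part,
\[
\frac{1}{4\nu^4}\cdot\frac{3}{2mN_D}\,\mathbb E\|ww^\top-I\|_F^2,\qquad \mathbb E\|ww^\top-I\|_F^2=\mathbb E\|w\|^4-2\mathbb E\|w\|^2+d=d(d+1),
\]
which gives exactly $\frac{3d(d+1)}{8mN_D\nu^4}$. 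For the signal part, $\beta$-smoothness implies $|\Delta_2|\le\beta\nu^2\|w\|^2$ via the symmetric second difference. This leaves
\[
\frac{\beta^2}{4}\,\mathbb E\big[\|w\|^4\|ww^\top-I\|_F^2\big]=\frac{\beta^2}{4}\,\mathbb E\big[\|w\|^8-2\|w\|^6+d\|w\|^4\big].
\]
Using the chi-square moments $\mathbb E\|w\|^{2k}=d(d+2)\cdots(d+2k-2)$, this is at most $\frac{\beta^2}{4}\big(d(d+2)(d+4)(d+6)+d^2(d+2)\big)=C_{\beta,d}$ after dropping the negative term.

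The main obstacle is justifying the oracle-noise bookkeeping. I need conditional independence of the separate QEC batches (so that the cross terms vanish and the variances add), and I need the noise to be independent of the sampled Gaussian directions. This holds if every query uses a fresh batch; if the base point $\hat C(\delta\vec\theta_t)$ is reused across the $K$ summands, the summands are no longer independent, and I would instead require a fresh base-point query per summand. There is also a secondary care point. Queries at $\delta\vec\theta_t\pm\nu u$ may leave $\mathsf P$, because the Gaussian direction is unbounded, so I would assume the smoothness bounds of \Cref{ass:lipschitz_nonconvex} extend to the smoothing neighborhood.
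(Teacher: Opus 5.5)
Your proposal is correct and follows essentially the same route as the paper's proof. Both reduce to a single summand via the conditionally i.i.d. average, split each difference into a signal part and zero-mean oracle noise, bound the signal by $L_0$-Lipschitzness (gradient) and by $\beta$-smoothness of the symmetric second difference (Hessian), and evaluate the Gaussian moments, using the Frobenius norm for the Hessian.

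One small refinement: you do not need the oracle noise to be independent of $u$ or $w$. In fact it is not independent, since its variance depends on the query point. The conditional bounds $\mathbb E[\zeta\mid u]=0$ and $\mathbb E[\zeta^2\mid u]\le 1/(2mN_D)$ suffice, together with their Hessian analogues, and these are exactly what the paper uses.
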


\begin{proof}
Let $Z:=\frac{\hat C(\delta \vec\theta_t+\nu u)-\hat C(\delta \vec\theta_t)}{\nu}u$ with $u \sim \mathcal N(0,I_d)$.
As $\hat g_t$ is the average of $K_g$ i.i.d. copies of $Z$, we have
\begin{align*}
\mathbb E\left[\|\hat g_t-\mathbb E[Z]\|^2\middle| \delta \vec\theta_t\right]=\frac{1}{K_g}\mathbb E\left[\|Z-\mathbb E[Z]\|^2\middle| \delta \vec\theta_t\right]\le\frac{1}{K_g}\mathbb E\left[\|Z\|^2\middle| \delta \vec\theta_t\right].
\end{align*}
By \Cref{prop:unbiasedness_nonconvex}, $\mathbb E[Z]=\nabla C_\nu(\delta \vec\theta_t)$.
It remains to bound $\mathbb E\|Z\|^2$.
Let $\Delta := \hat C(\delta \vec\theta_t+\nu u)-\hat C(\delta \vec\theta_t)$.
Then, we have $\|Z\|^2 = \frac{\Delta^2}{\nu^2}\|u\|^2$.
Condition on $u$. Decompose $\Delta$ as
\begin{align*}
\Delta=\bigl(C(\delta \vec\theta_t+\nu u)-C(\delta \vec\theta_t)\bigr)+\xi,
\end{align*}
where $\xi$ is zero-mean oracle noise. 
Since the two oracle calls are independent, we have $\mathrm{Var}(\xi|u) \le 1/(2mN_D)$ according to \Cref{prop:unbiased_zero_nonconvex}.
Hence, we have
\begin{align*}
\mathbb E[\Delta^2 \mid u]\le\bigl(C(\delta \vec\theta_t+\nu u)-C(\delta \vec\theta_t)\bigr)^2 + \frac{1}{2mN_D}.
\end{align*}
Using the $L_0$-Lipschitz property as in \Cref{ass:lipschitz_nonconvex}, $|C(\delta \vec\theta_t+\nu u)-C(\delta \vec\theta_t)| \le L_0 \nu \|u\|$.
Therefore, we have
\begin{align*}
\mathbb E[\Delta^2 \mid u]\le L_0^2 \nu^2 \|u\|^2 + \frac{1}{2mN_D},\qquad
\mathbb E[\|Z\|^2 \mid u]\le L_0^2 \|u\|^4 + \frac{1}{2mN_D\nu^2}\|u\|^2.
\end{align*}
Finally, for $u \sim \mathcal N(0,I_d)$, as we have $\mathbb E\|u\|^2=d$ and $\mathbb E\|u\|^4=d(d+2)$.
Substituting these Gaussian moments proves the claim for the gradient estimator.

Now, we consider the Hessian estimator and let
\begin{align*}
W:=\frac{\hat C(\delta \vec\theta_t+\nu w)+\hat C(\delta \vec\theta_t-\nu w)-2\hat C(\delta \vec\theta_t)}{2\nu^2} (ww^\top-I),
\qquad
w \sim \mathcal N(0,I_d).
\end{align*}
Then $\hat H_t$ is the average of $K_h$ i.i.d. copies of $W$. 
Hence, we have
\begin{align*}
\mathbb E\left[\|\hat H_t-\mathbb E[W]\|^2\middle| \delta \vec\theta_t\right]\le\frac{1}{K_h}\mathbb E\left[\|W\|^2\middle| \delta \vec\theta_t\right]\le\frac{1}{K_h}\mathbb E\left[\|W\|_{F}^2\middle| \delta \vec\theta_t\right].
\end{align*}
By \Cref{prop:unbiasedness_nonconvex}, $\mathbb E[W]=\nabla^2 C_\nu(\delta \vec\theta_t)$.
Define the second difference $\Delta_2:=\hat C(\delta \vec\theta_t+\nu w)+\hat C(\delta \vec\theta_t-\nu w)-2\hat C(\delta \vec\theta_t)$.
Then $\|W\|_F^2=\frac{\Delta_2^2}{4\nu^4}\|ww^\top-I\|_F^2$.
Condition on $w$, we write $\Delta_2 = \Delta_{2,\mathrm{mean}}+\xi_2$, where
$\Delta_{2,\mathrm{mean}}=C(\delta \vec\theta_t+\nu w)+C(\delta \vec\theta_t-\nu w)-2C(\delta \vec\theta_t)$, and $\xi_2$ is the zero-mean noise term. 
Since we use three independent oracle calls, we have $\mathrm{Var}(\xi_2|w)\le 1/(4mN_D)+1/(4mN_D)+1/(mN_D)=3/(2mN_D)$.
By $\beta$-smoothness, we have $|\Delta_{2,\mathrm{mean}}|\le\beta \nu^2 \|w\|^2$.
Therefore, we have
\begin{align*}
\mathbb E[\Delta_2^2 \mid w]\le\beta^2 \nu^4 \|w\|^4 + \frac{3}{2mN_D}.
\end{align*}
Also, as $\|ww^\top-I\|_F^2=\|w\|^4 -2\|w\|^2 + d$.
Hence, we have
\begin{align*}
\mathbb E[\|W\|_F^2 \mid w]\le\frac{\beta^2}{4}\|w\|^4 \|ww^\top-I\|_F^2+\frac{3}{8mN_D\nu^4}\|ww^\top-I\|_F^2.
\end{align*}
Taking expectation over $w$, and using the Gaussian moments $\mathbb E\|w\|^4=d(d+2)$ and $\mathbb E\|w\|^8=d(d+2)(d+4)(d+6)$, we have
\begin{align*}
\mathbb E\left[\|W\|_F^2\right]\le C_{\beta,d}+\frac{3}{8mN_D\nu^4}d(d+1),
\end{align*}
which proves the claim.
\end{proof}

\paragraph{The algorithm.}Before introducing our algorithm, we need to introduce a suitable measure for the performance.
We now recall the two deterministic facts that underlie second-order descent.

\begin{fact}[Smoothness descent for the gradient step]\label{fact:smooth_gradient_nonconvex}
Let $f$ be $\beta$-smooth. Then for $s_g := -\frac{1}{\beta}\nabla f(x)$, we have
\begin{align*}
f(x+s_g) \le f(x)-\frac{1}{2\beta}\|\nabla f(x)\|^2.
\end{align*}
\end{fact}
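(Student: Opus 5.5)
The plan is to prove the descent bound directly from the quadratic upper bound implied by $\beta$-smoothness. Recall that a $\beta$-smooth function, meaning one whose gradient is $\beta$-Lipschitz as in \Cref{ass:lipschitz_nonconvex}, satisfies for all $x$ and $s$
\begin{align*}
f(x+s)\le f(x)+\langle \nabla f(x),s\rangle+\frac{\beta}{2}\|s\|^2 .
\end{align*}
I would first establish this inequality and then substitute the step $s=s_g=-\frac{1}{\beta}\nabla f(x)$.

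To derive the quadratic upper bound, write
\begin{align*}
f(x+s)-f(x)=\int_0^1\langle \nabla f(x+\tau s),s\rangle\,d\tau .
\end{align*}
Subtracting $\langle \nabla f(x),s\rangle$ leaves $\int_0^1\langle \nabla f(x+\tau s)-\nabla f(x),s\rangle\,d\tau$. By Cauchy--Schwarz and the Lipschitz gradient, this is at most $\int_0^1\beta\tau\|s\|^2\,d\tau=\frac{\beta}{2}\|s\|^2$.

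Plugging in $s_g$ gives a linear term $-\frac{1}{\beta}\|\nabla f(x)\|^2$ and a quadratic term $\frac{\beta}{2}\cdot\frac{1}{\beta^2}\|\nabla f(x)\|^2=\frac{1}{2\beta}\|\nabla f(x)\|^2$. These sum to $-\frac{1}{2\beta}\|\nabla f(x)\|^2$, which is the claimed bound.

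The only subtle point is domain validity. If $f$ is smooth only on the trusted region $\mathsf P$, the segment $[x,x+s_g]$ must lie in the region where the Lipschitz bound holds. I would either assume this, as is implicit in applying the fact to $C_\nu$ inside $\mathsf P$, or note that convexity of $\mathsf P$ guarantees it whenever both endpoints lie in $\mathsf P$. This is the one step needing care; the rest is routine.
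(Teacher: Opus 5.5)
Your proof is correct and follows the paper's argument: apply the descent lemma $f(x+s)\le f(x)+\langle \nabla f(x),s\rangle+\frac{\beta}{2}\|s\|^2$ and substitute $s_g=-\frac{1}{\beta}\nabla f(x)$, so that the linear and quadratic terms combine to $-\frac{1}{2\beta}\|\nabla f(x)\|^2$. The only difference is that the paper cites the descent lemma as standard, whereas you derive it from the integral form of the mean-value identity and also point out that the segment $[x,x+s_g]$ must lie where the gradient is $\beta$-Lipschitz.
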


\begin{proof}
By the standard descent lemma, we have
\begin{align*}
f(x+s)\le f(x)+\langle \nabla f(x),s\rangle + \frac{\beta}{2}\|s\|^2.
\end{align*}
We substitute $s=s_g$, then obtain $\langle \nabla f(x),s_g\rangle=-\frac{1}{\beta}\|\nabla f(x)\|^2$ and $\frac{\beta}{2}\|s_g\|^2=\frac{1}{2\beta}\|\nabla f(x)\|^2$, and the claim follows.
\end{proof}

We also prove the following facts on the third-order bound and on negative-curvature descent.
\begin{fact}[Third-order upper bound]\label{fact:third_order}
Let $f$ have $\rho$-Lipschitz Hessian. Then for every $x$ and every $s$,
\begin{align*}
f(x+s)\le f(x)+\langle \nabla f(x),s\rangle+\frac{1}{2}s^\top \nabla^2 f(x) s+\frac{\rho}{6}\|s\|^3.
\end{align*}
\end{fact}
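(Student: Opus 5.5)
The plan is a Taylor expansion with integral remainder, applied along the segment from $x$ to $x+s$. Define $\phi(\tau):=f(x+\tau s)$ for $\tau\in[0,1]$. Since $f$ has a Lipschitz Hessian, it is twice continuously differentiable, so $\phi$ is $C^2$ with
\begin{align*}
\phi'(\tau)=\langle \nabla f(x+\tau s),s\rangle,\qquad \phi''(\tau)=s^\top \nabla^2 f(x+\tau s)s .
\end{align*}
Taylor's theorem with integral remainder then gives
\begin{align*}
f(x+s)=f(x)+\langle \nabla f(x),s\rangle+\int_0^1(1-\tau)\,s^\top\nabla^2 f(x+\tau s)s\,d\tau .
\end{align*}
We use the integral form rather than a third derivative because only Hessian-Lipschitzness is assumed; no third derivative need exist.

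Next, add and subtract the Hessian at $x$ inside the integral. Because $\int_0^1(1-\tau)\,d\tau=\tfrac12$, the constant part is exactly $\tfrac12 s^\top\nabla^2 f(x)s$. The remainder is
\begin{align*}
\int_0^1(1-\tau)\,s^\top\bigl(\nabla^2 f(x+\tau s)-\nabla^2 f(x)\bigr)s\,d\tau .
\end{align*}
Bound the integrand using the operator-norm inequality $|s^\top A s|\le\|A\|\,\|s\|^2$ together with the $\rho$-Lipschitz property:
\begin{align*}
\bigl|s^\top\bigl(\nabla^2 f(x+\tau s)-\nabla^2 f(x)\bigr)s\bigr|\le \rho\tau\|s\|^3 .
\end{align*}

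Finally, since $\int_0^1(1-\tau)\tau\,d\tau=\tfrac16$, the remainder is at most $\tfrac{\rho}{6}\|s\|^3$, which is the stated bound.

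There is no real obstacle. The only point needing care is that the segment $x+\tau s$ must lie where the Lipschitz assumption holds. In our setting this is $\mathsf P$, which is convex, so the segment stays inside whenever both endpoints do; otherwise one takes $f$ to be defined on all of $\mathbb R^d$.
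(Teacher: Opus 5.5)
Your proof is correct and follows the same route as the paper, which simply invokes Taylor's theorem with integral remainder and bounds the remainder via Hessian-Lipschitzness. You fill in the details the paper omits: the weight $\int_0^1(1-\tau)\tau\,d\tau=\tfrac16$ and the requirement that the segment from $x$ to $x+s$ lie in the convex region $\mathsf P$.
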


\begin{proof}
Apply Taylor's theorem with integral remainder, and use the Hessian-Lipschitz bound to estimate the third-order remainder by $(\rho/6)\|s\|^3$.
\end{proof}

\begin{fact}[Negative-curvature descent]\label{fact:negative_curvature}
Let $f$ have $\rho$-Lipschitz Hessian, and let $\lambda = \lambda_{\min}(\nabla^2 f(x)) < 0$ with corresponding unit eigenvector $z$. 
Define $s_{nc}:=\frac{2\lambda}{\rho}z$.
Then, after choosing the sign of $z$ so that $\langle \nabla f(x),z\rangle \ge 0$, we have
\begin{align*}
f(x+s_{nc}) \le f(x) - \frac{2}{3}\frac{|\lambda|^3}{\rho^2}.
\end{align*}
\end{fact}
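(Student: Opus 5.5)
The plan is to apply \Cref{fact:third_order} directly with the step $s=s_{nc}$ and evaluate each of the three correction terms explicitly. No estimate is needed beyond that fact, because every term can be computed in closed form from the eigenvector structure.

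\emph{Setup.} Since $\lambda<0$, we can write $s_{nc}=\frac{2\lambda}{\rho}z=-\frac{2|\lambda|}{\rho}z$. Because $z$ is a unit vector, this gives $\|s_{nc}\|=\frac{2|\lambda|}{\rho}$.

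\emph{First-order term.} We have $\langle\nabla f(x),s_{nc}\rangle=\frac{2\lambda}{\rho}\langle\nabla f(x),z\rangle$. With the sign convention $\langle\nabla f(x),z\rangle\ge 0$ and $\lambda<0$, this term is nonpositive, so we bound it above by $0$. This is the only place where the sign choice enters. The choice is always available because replacing $z$ by $-z$ gives another unit eigenvector for $\lambda$.

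\emph{Quadratic term.} Using $\nabla^2 f(x)z=\lambda z$, we get
\begin{align*}
\tfrac12 s_{nc}^\top\nabla^2 f(x)s_{nc}=\tfrac12\cdot\tfrac{4\lambda^2}{\rho^2}\cdot\lambda=-\tfrac{2|\lambda|^3}{\rho^2}.
\end{align*}

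\emph{Cubic term.} We have $\frac{\rho}{6}\|s_{nc}\|^3=\frac{\rho}{6}\cdot\frac{8|\lambda|^3}{\rho^3}=\frac{4}{3}\frac{|\lambda|^3}{\rho^2}$.

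\emph{Conclusion.} Substituting the three terms into \Cref{fact:third_order} gives
\begin{align*}
f(x+s_{nc})\le f(x)+0-2\frac{|\lambda|^3}{\rho^2}+\frac43\frac{|\lambda|^3}{\rho^2}=f(x)-\frac23\frac{|\lambda|^3}{\rho^2},
\end{align*}
which is the claim.

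There is no substantive obstacle here. The only steps that need care are the signs: the first-order term has the right sign only because of the orientation of $z$, and the quadratic term is negative because $\lambda^3<0$.
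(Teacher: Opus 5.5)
Your proposal is correct and follows the same route as the paper: apply the third-order upper bound with $s=s_{nc}$, use the sign choice of $z$ to drop the first-order term, and evaluate the quadratic term $2\lambda^3/\rho^2=-2|\lambda|^3/\rho^2$ and the cubic term $\tfrac43|\lambda|^3/\rho^2$ exactly. Your arithmetic and sign bookkeeping match the paper's proof line for line.
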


\begin{proof}
By \Cref{fact:third_order}, we have
\begin{align*}
f(x+s_{nc})\le f(x)+\langle \nabla f(x),s_{nc}\rangle+\frac{1}{2}s_{nc}^\top \nabla^2 f(x)s_{nc}+\frac{\rho}{6}\|s_{nc}\|^3.
\end{align*}
Because $\lambda<0$, the vector $s_{nc}$ is a negative multiple of $z$, so our choice of sign implies $\langle \nabla f(x),s_{nc}\rangle \le 0$.
Note that $s_{nc}^\top \nabla^2 f(x)s_{nc}=\left(\frac{2\lambda}{\rho}\right)^2 \lambda=\frac{4\lambda^3}{\rho^2}$, and $\frac{\rho}{6}\|s_{nc}\|^3=\frac{\rho}{6}\left(\frac{2|\lambda|}{\rho}\right)^3=\frac{4}{3}\frac{|\lambda|^3}{\rho^2}$.
Since $\lambda^3=-|\lambda|^3$, we obtain
\begin{align*}
f(x+s_{nc})-f(x)\le\frac{1}{2}\frac{4\lambda^3}{\rho^2}+\frac{4}{3}\frac{|\lambda|^3}{\rho^2}=-\frac{2}{3}\frac{|\lambda|^3}{\rho^2}.
\end{align*}
\end{proof}

\Cref{fact:smooth_gradient_nonconvex}, \Cref{fact:third_order}, and \Cref{fact:negative_curvature} suggest the stationarity measure
\begin{align}\label{eq:stationary_measure}
\Psi_\nu(\delta \vec\theta):=\max\left\{\|\nabla C_\nu(\delta \vec\theta)\|^2,\frac{4\beta}{3\rho^2}\bigl(-\lambda_{\min}(\nabla^2 C_\nu(\delta \vec\theta))\bigr)_+^3\right\}.
\end{align}
Here, $(\cdot)_+$ keeps unchanged for non-negative values and is $0$ for negative values.
Indeed, both the exact gradient step and the exact negative-curvature step decrease $C_\nu$ by at least $\frac{1}{2\beta}\Psi_\nu(\delta \vec\theta)$ as in Eq.~\eqref{eq:stationary_measure}.
Now, we are ready to present our algorithm as in Algorithm~\ref{alg:PAGD_nonconvex}

\begin{algorithm}[htbp]
\caption{Perturbed accelerated gradient descent with zeroth-order noisy queries for time-independent control drift}
\label{alg:PAGD_nonconvex}
\DontPrintSemicolon                          
\SetKwInput{KwInput}{Input}                  
\SetKwInput{KwOutput}{Output}                
\SetKw{KwAnd}{and}                           
\SetKw{KwOr}{or}

\KwInput{Iteration budget $T$, cycles-per-query $m$, smoothing radius $\nu>0$, batch sizes $K_g,K_h$, Lipschitz factor $\beta$, and trusted region $\mathsf P$.}
\KwOutput{Control parameters with drift $\delta\vec\theta_T$ that is an $\varepsilon$-SOSP of $C(\delta\vec\theta)$ with $\delta\vec\theta=\vec\theta-\vec\theta^*$ for fixed unknown $\vec\theta^*$.}
\SetKwFor{RepTimes}{repeat}{times}{end}

\BlankLine
Initialize $\delta \vec\theta_1 \in \mathsf P$. \\
\For{$t=1,2,\dots,T$:}{
Construct $\hat g_t$ and $\hat H_t$ from fresh QEC batches. As in Eq.~\eqref{eq:nonconvex_hat_gradient} and Eq.~\eqref{eq:nonconvex_hat_hessian}, we sample $K_g$ and $K_h$ independent Gaussian directions for gradient and Hessian estimations $u_t^{(1)},\dots,u_t^{(K_g)} \sim \mathcal N(0,I_d)$ and $w_t^{(1)},\dots,w_t^{(K_h)} \sim \mathcal N(0,I_d)$. Then we define the gradient estimator as
\begin{align*}
\begin{split}
\hat g_t:&=\frac{1}{K_g}\sum_{i=1}^{K_g}\left[\frac{\hat C(\delta \vec\theta_t+\nu u_t^{(i)})-\hat C(\delta \vec\theta_t)}{\nu}u_t^{(i)}\right],\\
\hat H_t:&=\frac{1}{K_h}\sum_{i=1}^{K_h}\left[\frac{\hat C(\delta \vec\theta_t+\nu w_t^{(i)})+\hat C(\delta \vec\theta_t-\nu w_t^{(i)})-2\hat C(\delta \vec\theta_t)}{2\nu^2}\bigl(w_t^{(i)}w_t^{(i)\top}-I\bigr)\right].
\end{split}
\end{align*}\\
Compute the gradient-step candidate $\delta \vec\theta_t^{g}:=\Pi_{\mathsf{P}}\left(\delta \vec\theta_t-\frac{1}{\beta}\hat g_t\right)$.\\
Let $(\hat \lambda_t,\hat z_t)$ be the minimum eigenvalue-eigenvector pair of $\hat H_t$, with $\|\hat z_t\|=1$.\\
\If{$\hat \lambda_t<0$}{
We define $a_t:=\frac{2|\hat \lambda_t|}{\rho}$ and the negative-curvature candidates
\begin{align*}
\delta \vec\theta_t^{\,nc,+}:=\Pi_{\mathsf{P}}(\delta \vec\theta_t+a_t\hat z_t),
\qquad
\delta \vec\theta_t^{\,nc,-}:=\Pi_{\mathsf{P}}(\delta \vec\theta_t-a_t\hat z_t).
\end{align*}
}
Use fresh oracle calls to estimate $\hat C$ at all available candidates, and set $\delta \vec\theta_{t+1}$ to the candidate with the smallest observed value.\\
Stop early if $\|\hat g_t\| \le \frac{\varepsilon}{2}$ and $\hat \lambda_t \ge -\frac{1}{2}\sqrt{\rho \varepsilon}$.
}
\Return{The uniformly random parameter drift $\delta\vec\theta_t$ for $t=1,2,...,T$.}  
\end{algorithm}

We note that in Algorithm~\ref{alg:PAGD_nonconvex}, each iteration uses $2K_g+3K_h$ function queries for derivative estimation, plus only a constant number of extra function queries for candidate comparison. 
Since each function query costs $m$ QEC cycles, the total number of QEC cycles scales as $Tm(2K_g+3K_h+O(1))$.

\paragraph{Performance guarantee.}Now, we analyze the performance of Algorithm~\ref{alg:PAGD_nonconvex}. 
We will first need the following key lemma on one-step descent estimation.
\begin{lemma}[Expected one-step decrease]\label{lem:one_step_nonconvex}
There exist universal constants $c_0,c_g,c_H>0$ such that for Algprothm~\ref{alg:PAGD_nonconvex}
\begin{align*}
\mathbb E\left[C_\nu(\delta \vec\theta_{t+1})\middle|\delta \vec\theta_t\right]\le C_\nu(\delta \vec\theta_t)-c_0 \Psi_\nu(\delta \vec\theta_t)+c_g\mathbb E\left[\|\hat g_t-\nabla C_\nu(\delta \vec\theta_t)\|^2\middle| \delta \vec\theta_t\right]+c_H\left(\mathbb E\left[\|\hat H_t-\nabla^2 C_\nu(\delta \vec\theta_t)\|^2\middle| \delta \vec\theta_t\right]\right)^{3/2}.
\end{align*}
\end{lemma}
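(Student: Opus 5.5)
The plan is to analyze one iteration of Algorithm~\ref{alg:PAGD_nonconvex} by comparing the candidate it actually picks with whichever exact step would be best for $C_\nu$. Write $e_g:=\hat g_t-\nabla C_\nu(\delta\vec\theta_t)$ and $E_H:=\hat H_t-\nabla^2 C_\nu(\delta\vec\theta_t)$. First check that $C_\nu$ inherits the constants of \Cref{ass:lipschitz_nonconvex}: it is $\beta$-smooth with $\rho$-Lipschitz Hessian, since Gaussian averaging preserves these moduli. This lets us apply \Cref{fact:smooth_gradient_nonconvex}, \Cref{fact:third_order} and \Cref{fact:negative_curvature} to $f=C_\nu$. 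We work conditionally on $\delta\vec\theta_t$ and split into two cases according to which term attains the maximum in $\Psi_\nu$ in Eq.~\eqref{eq:stationary_measure}.

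\emph{Gradient case.} Suppose $\Psi_\nu=\|\nabla C_\nu\|^2$. Apply the descent lemma to the step $s=-\hat g_t/\beta$; we ignore the projection for now and handle it through non-expansiveness of $\Pi_{\mathsf P}$ and convexity of $\mathsf P$ in the trusted region. This gives
\[
C_\nu(\delta\vec\theta_t^g)\le C_\nu-\tfrac{1}{2\beta}\|\nabla C_\nu\|^2+\tfrac{1}{2\beta}\|e_g\|^2 .
\]
We will need $\beta\ge1$, or else we rescale, to absorb the $1/\beta$ factors into $c_g$.

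\emph{Curvature case.} Suppose $\Psi_\nu=\frac{4\beta}{3\rho^2}|\lambda|^3$ with $\lambda=\lambda_{\min}(\nabla^2C_\nu)<0$. Weyl's inequality gives $|\hat\lambda_t-\lambda|\le\|E_H\|$. Use \Cref{fact:third_order} along $\pm a_t\hat z_t$, taking the sign that makes $\langle\nabla C_\nu,s\rangle\le0$; because both signs are tried, this sign is available to the algorithm. The quadratic term is
\[
\tfrac12 a_t^2\,\hat z_t^\top\nabla^2C_\nu\hat z_t\le \tfrac12 a_t^2(\hat\lambda_t+\|E_H\|).
\]
Substituting $a_t=2|\hat\lambda_t|/\rho$ yields a decrease of at least $\tfrac23|\hat\lambda_t|^3/\rho^2-2|\hat\lambda_t|^2\|E_H\|/\rho^2$. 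Now convert $|\hat\lambda_t|^3$ to $|\lambda|^3$ using $(x+y)^3$-type inequalities and Young's inequality $|\lambda|^2\|E_H\|\le\epsilon|\lambda|^3+C_\epsilon\|E_H\|^3$. The result is
\[
\text{decrease}\ \ge c\,|\lambda|^3/\rho^2-C\,\|E_H\|^3/\rho^2 .
\]
If $\hat\lambda_t\ge0$ even though $\lambda<0$, then $|\lambda|\le\|E_H\|$, so $|\lambda|^3\le\|E_H\|^3$ and the gradient candidate alone suffices up to the $E_H$ term. The $\rho$ and $\beta$ factors are absorbed into the constants.

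\emph{Selection step.} The algorithm picks the candidate with the smallest \emph{noisy} estimate, not the smallest true value. Since the estimates are unbiased with variance at most $1/(4mN_D)$, the true value of the chosen candidate exceeds the best true value by at most the maximum of a constant number of noise deviations. In expectation this is $O(1/\sqrt{mN_D})$, which is not of the stated form. I would handle it in one of two ways. One option is to absorb it into the $c_g$ term, noting that $\mathbb E\|e_g\|^2$ already contains a $1/(mN_D)$ contribution and using larger batches for the comparison. The alternative is to use separate fresh batches and bound the selection error by $c_g\mathbb E\|e_g\|^2$ after enlarging $c_g$.

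Taking expectations and using Jensen's inequality, $\mathbb E\|E_H\|^3\le(\mathbb E\|E_H\|^2)^{3/2}$, which is exactly the claimed $c_H$ term. Taking the worse of the two cases gives $c_0$.

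I expect the main obstacle to be this noisy candidate comparison. The bound is not literally implied unless its error is controlled by the estimator variances, so I may need to state it as an extra additive term of order $1/\sqrt{mN_D}$ that is absorbed by the batch-size choice. The projection onto $\mathsf P$ is a secondary obstacle for the curvature step, and I would assume the iterates stay interior.
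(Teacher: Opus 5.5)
Your outline is the same as the paper's proof: the descent lemma for the gradient candidate, Weyl plus \Cref{fact:third_order} for the curvature candidate, best-of-candidates, and Jensen for the cube. Your curvature computation is more explicit than the paper's ``degraded only by a cubic term''. The gap is the selection step, which you flag but do not close.

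The paper disposes of it as ``an additional zero-mean oracle noise term, which is absorbed into the same constants''. As you observe, the noise at a data-dependent $\arg\min$ is not mean-zero, and your bound $C(x_{j^\ast})\le\min_i C(x_i)+\max_{i,j}(\xi_i-\xi_j)$ for the selected candidate leaves an additive term of order $(mN_D)^{-1/2}$. Neither of your remedies removes it. That term is first order in the standard deviation of the comparison noise and does not depend on $\hat g_t$. By contrast, \Cref{lem:second_moment_nonconvex} gives $\mathbb E\|e_g\|^2\le K_g^{-1}\bigl(L_0^2d(d+2)+d/(2\nu^2mN_D)\bigr)$, which is second order and tends to $0$ as $K_g\to\infty$ with $m,N_D$ fixed. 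So for a comparison batch fixed independently of $K_g$, no constant $c_g$ dominates the selection term, and fresh batches alone do not help.

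You also miss a second offset: the oracle estimates $C$, not $C_\nu$. Even a noiseless comparison therefore selects the best candidate for $C$, which can be worse for $C_\nu$ by up to $2\sup|C_\nu-C|\le\beta\nu^2d$. What your argument actually proves is the stated inequality plus an additive $O\bigl((m'N_D)^{-1/2}+\beta\nu^2d\bigr)$, where $m'$ is the comparison batch size. That extra term would have to be carried through \Cref{lem:C_nu_nonconvex}.

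The idea that gives the stated form is to use the two curvature candidates symmetrically. Write $x=\delta\vec\theta_t$ and $s=a_t\hat z_t$. Then \Cref{fact:third_order} gives $\tfrac12\bigl[C_\nu(x+s)+C_\nu(x-s)\bigr]\le C_\nu(x)+\tfrac12 s^\top\nabla^2C_\nu(x)s+\tfrac{\rho}{6}\|s\|^3$, so the linear term cancels. Your Weyl/Young estimate then gives an average decrease of at least $\frac{|\hat\lambda_t|^3}{3\rho^2}-\frac{32}{3}\frac{\|E_H\|^3}{\rho^2}$, with no need to know the sign of $\langle\nabla C_\nu,\hat z_t\rangle$.

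Next, replace the noisy comparison by a deterministic rule. Take the gradient step if $\frac{1}{2\beta}\|\hat g_t\|^2\ge\frac{1}{3\rho^2}(-\hat\lambda_t)_+^3$, and otherwise step to $x\pm s$ with an independent fair sign. In each branch the threshold trades $\|\hat g_t\|^2$ against $(-\hat\lambda_t)_+^3$. Combined with $\|\nabla C_\nu\|^2\le2\|\hat g_t\|^2+2\|e_g\|^2$ and $(-\lambda)_+^3\le4(-\hat\lambda_t)_+^3+4\|E_H\|^3$, this yields exactly the claimed inequality.

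If the comparison noise were exchangeable across candidates, you could even keep the $\arg\min$. Its selection probabilities would then be non-increasing in $C(x_i)$, so its expected $C$-value is at most the uniform average over candidates, which the pair bound controls. The $C$ versus $C_\nu$ offset would still remain.

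Two smaller points.
\begin{itemize}
\item The constants cannot be universal. The gradient branch forces $c_0,c_g\propto1/\beta$ and the curvature branch forces $c_H\propto1/\rho^2$; ``$\beta\ge1$ or rescale'' does not change that.
\item Non-expansiveness of $\Pi_{\mathsf P}$ does not rescue the projected gradient step. At a minimizer of $C_\nu$ over $\mathsf P$ lying on the boundary with $\nabla C_\nu\neq0$, no projected candidate can decrease $C_\nu$, yet $\Psi_\nu>0$. You therefore need interiority for both candidates; the paper's proof silently drops the projection as well.
\end{itemize}
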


\begin{proof}
We define $e_t^{(g)}:=\hat g_t-\nabla C_\nu(\delta \vec\theta_t)$ and $e_t^{(H)}:=\hat H_t-\nabla^2 C_\nu(\delta \vec\theta_t)$.
We first look at the gradient candidate. By $\beta$-smoothness of $C_\nu$, we have
\begin{align*}
C_\nu\left(\delta \vec\theta_t-\frac{1}{\beta}\hat g_t\right)\le C_\nu(\delta \vec\theta_t)-\frac{1}{2\beta}\|\nabla C_\nu(\delta \vec\theta_t)\|^2+\frac{1}{2\beta}\|e_t^{(g)}\|^2.
\end{align*}
Thus, up to an additive estimator-error term, the gradient candidate still decreases $C_\nu$ by the correct first-order amount.
We next look at the negative-curvature candidate, and let $H_t:=\nabla^2 C_\nu(\delta \vec\theta_t)$ and $\lambda_t:=\lambda_{\min}(H_t)$.
By Weyl's inequality, we have $\lambda_t \ge \hat \lambda_t-\|e_t^{(H)}\|$, and thus $(-\lambda_t)_+ \le (-\hat \lambda_t)_+ + \|e_t^{(H)}\|$.
Cubing both sides and using $(a+b)^3 \le 4a^3+4b^3$, we have
\begin{align*}
(-\lambda_t)_+^3\le 4(-\hat \lambda_t)_+^3 + 4\|e_t^{(H)}\|^3.
\end{align*}
Now apply \Cref{fact:negative_curvature} to the step built from the estimated negative eigendirection. 
The decrease for the true Hessian is degraded only by a cubic term in $\|e_t^{(H)}\|$. 
Hence, up to a universal constant, we have
\begin{align*}
C_\nu(\delta \vec\theta_t^{\,nc})\le C_\nu(\delta \vec\theta_t)-\frac{1}{2\beta}\cdot\frac{4\beta}{3\rho^2}(-\lambda_t)_+^3+O\left(\|e_t^{(H)}\|^3\right).
\end{align*}
Since the algorithm chooses the best candidate by fresh function-value comparisons, it performs, in expectation, at least as well as the better of these two candidates up to an additional zero-mean oracle noise term, which is absorbed into the same constants. Combining the two candidate estimates gives
\begin{align*}
\mathbb E\left[C_\nu(\delta \vec\theta_{t+1})\middle|\delta \vec\theta_t\right]\le C_\nu(\delta \vec\theta_t)-c_0\Psi_\nu(\delta \vec\theta_t)+c_g\mathbb E\left[\|e_t^{(g)}\|^2 \mid \delta \vec\theta_t\right]+c_H\mathbb E\left[\|e_t^{(H)}\|^3 \mid \delta \vec\theta_t\right].
\end{align*}
Finally, Jensen's inequality gives
\begin{align*}
\mathbb E\left[\|e_t^{(H)}\|^3 \mid \delta \vec\theta_t\right]
\le
\left(
\mathbb E\left[\|e_t^{(H)}\|^2 \mid \delta \vec\theta_t\right]
\right)^{3/2},
\end{align*}
which proves the stated form.
\end{proof}

We are now ready to present our guarantee for the Gaussian smoothing $C_\nu$.
\begin{lemma}[Offline nonconvex guarantee for $C_\nu$]\label{lem:C_nu_nonconvex}
We denote $\Delta_\nu := C_\nu(\delta \vec\theta_1)-\inf_{\delta \vec\theta \in \mathsf P} C_\nu(\delta \vec\theta)$.
Then the iterates produced by Algorithm~\ref{alg:PAGD_nonconvex} satisfy
\begin{align*}
\frac{1}{T}\sum_{t=1}^{T}\mathbb E[\Psi_\nu(\delta \vec\theta_t)]\le\frac{\Delta_\nu}{c_0 T}+\frac{c_g}{c_0}\frac{1}{K_g}\left(L_0^2 d(d+2)+\frac{2d\sigma_C^2}{\nu^2}\right)+\frac{c_H}{c_0}\left[\frac{1}{K_h}\left(C_{\beta,d}+\frac{6\sigma_C^2}{4\nu^4}d(d+1)\right)\right]^{3/2},
\end{align*}
where $\sigma_C=1/(4mN_D)$.
If $R$ is sampled uniformly from $\{1,\dots,T\}$ and independently of the algorithmic randomness, then $\mathbb E[\Psi_\nu(\delta \vec\theta_R)]=\frac{1}{T}\sum_{t=1}^{T}\mathbb E[\Psi_\nu(\delta \vec\theta_t)]$.
\end{lemma}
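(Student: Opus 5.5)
The plan is the standard telescoping argument built on \Cref{lem:one_step_nonconvex}. Write $e^{(g)}_t$ and $e^{(H)}_t$ for the estimator errors, as in its proof. First apply \Cref{lem:one_step_nonconvex} at every iteration $t=1,\dots,T$. Then take full expectations using the tower property; this is legitimate because each iteration draws fresh QEC batches and fresh Gaussian directions, so the conditional bounds hold for every realization of $\delta\vec\theta_t$. The result is
\begin{align*}
c_0\,\mathbb E[\Psi_\nu(\delta\vec\theta_t)]\le \mathbb E[C_\nu(\delta\vec\theta_t)]-\mathbb E[C_\nu(\delta\vec\theta_{t+1})]+c_g\,\mathbb E\|e^{(g)}_t\|^2+c_H\,\mathbb E\Bigl[\bigl(\mathbb E[\|e^{(H)}_t\|^2\mid\delta\vec\theta_t]\bigr)^{3/2}\Bigr].
\end{align*}

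Next I substitute the variance bounds of \Cref{lem:second_moment_nonconvex}. These are uniform in $\delta\vec\theta_t\in\mathsf P$, so the inner conditional expectation in the last term is bounded deterministically, and its $3/2$ power can be pulled out directly without a further Jensen step. Both bounds must be rewritten in terms of the oracle variance $\sigma_C$. The gradient noise term $\frac{d}{2\nu^2 mN_D}$ becomes $\frac{2d\sigma_C^2}{\nu^2}$ after identifying the per-query variance bound $1/(4mN_D)$ as the quantity $\sigma_C^2$. Likewise the Hessian noise $\frac{3}{8mN_D\nu^4}d(d+1)$ becomes $\frac{6\sigma_C^2}{4\nu^4}d(d+1)$, since the three-call variance is $3/(2mN_D)=6\sigma_C^2$.

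I then sum over $t$. The $C_\nu$ differences telescope to $\mathbb E[C_\nu(\delta\vec\theta_1)]-\mathbb E[C_\nu(\delta\vec\theta_{T+1})]$. All iterates are projected onto $\mathsf P$, so $\delta\vec\theta_{T+1}\in\mathsf P$, and this difference is at most $\Delta_\nu$. Dividing by $c_0T$ gives the displayed average bound.

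The final claim about the random index $R$ follows by conditioning on $R$. Since $R$ is uniform on $\{1,\dots,T\}$ and independent of the algorithm, $\mathbb E[\Psi_\nu(\delta\vec\theta_R)]=\sum_{t=1}^T \frac{1}{T}\mathbb E[\Psi_\nu(\delta\vec\theta_t)]$.

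I expect the main obstacle to be justifying the telescoping step, not the algebra. Two points need care. First, early stopping must be treated as freezing the iterate, or equivalently as continuing without loss, so that the bound holds for all $T$ steps. Second, $C_\nu$ is defined through Gaussian perturbations that may leave $\mathsf P$, so the infimum in $\Delta_\nu$ and the smoothness used in \Cref{lem:one_step_nonconvex} need $C$, and hence $C_\nu$, to be bounded and $\beta$-smooth on the relevant set. I would handle this by assuming that the regularity in \Cref{ass:lipschitz_nonconvex} extends to a neighborhood of radius $O(\nu\sqrt d)$, or simply by noting that $C\in[0,1]$, so that $\Delta_\nu\le 1$ is finite regardless.
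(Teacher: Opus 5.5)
Your proposal is correct and is the paper's argument: take full expectations in \Cref{lem:one_step_nonconvex}, insert the uniform bounds of \Cref{lem:second_moment_nonconvex}, telescope to $\mathbb E[C_\nu(\delta\vec\theta_1)-C_\nu(\delta\vec\theta_{T+1})]\le\Delta_\nu$, divide by $c_0T$, and read off the identity for the uniform index $R$. There are two points you should state explicitly rather than absorb silently. First, the displayed constants match \Cref{lem:second_moment_nonconvex} only if $\sigma_C^2=1/(4mN_D)$, so the statement's $\sigma_C=1/(4mN_D)$ is a typo: read literally, both noise terms would be a factor $4mN_D$ smaller than anything that lemma supplies. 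Second, freezing the iterate at an early stop is \emph{not} equivalent to continuing, because a frozen step produces no decrease and the one-step inequality fails there; the analysis should simply disregard the stopping rule, as the paper implicitly does.
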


\begin{proof}
Take full expectation in \Cref{lem:one_step_nonconvex} and sum over $t=1,\dots,T$. 
The left-hand side telescopes as
\begin{align*}
\sum_{t=1}^{T}\mathbb E\left[C_\nu(\delta \vec\theta_t)-C_\nu(\delta \vec\theta_{t+1})\right]=\mathbb E[C_\nu(\delta \vec\theta_1)-C_\nu(\delta \vec\theta_{T+1})]\le\Delta_\nu.
\end{align*}
Substituting \Cref{lem:second_moment_nonconvex} into \Cref{lem:one_step_nonconvex}, dividing by $T$, and rearranging yields the first inequality. 
The identity for the random index $R$ is immediate from the uniform choice of $R$.
\end{proof}

To obtain a guarantee for the original objective $C$, we need smoothing-bias estimates.
\begin{lemma}\label{lem:gaussian_bias}
For all $\delta \vec\theta \in \mathsf P$, we have the following relationship between $C$ and $C_\nu$
\begin{align*}
\|\nabla C_\nu(\delta \vec\theta)-\nabla C(\delta \vec\theta)\|\le\beta \nu \sqrt d,\quad
\|\nabla^2 C_\nu(\delta \vec\theta)-\nabla^2 C(\delta\vec\theta)\|\le\rho \nu \sqrt d.
\end{align*}
\end{lemma}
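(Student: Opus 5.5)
The plan is to write both smoothing biases as Gaussian averages of differences evaluated at shifted points, and then use the Lipschitz constants from \Cref{ass:lipschitz_nonconvex}. Differentiating under the integral sign, as in the proofs of \Cref{lem:gradient_gaussian} and \Cref{lem:hessian_gaussian}, gives $\nabla C_\nu(\delta\vec\theta)=\mathbb E_u[\nabla C(\delta\vec\theta+\nu u)]$ and $\nabla^2 C_\nu(\delta\vec\theta)=\mathbb E_u[\nabla^2 C(\delta\vec\theta+\nu u)]$ with $u\sim\mathcal N(0,I_d)$. Subtracting the unsmoothed quantities, which do not depend on $u$, yields
\begin{align*}
\nabla C_\nu(\delta\vec\theta)-\nabla C(\delta\vec\theta)&=\mathbb E_u\bigl[\nabla C(\delta\vec\theta+\nu u)-\nabla C(\delta\vec\theta)\bigr],\\
\nabla^2 C_\nu(\delta\vec\theta)-\nabla^2 C(\delta\vec\theta)&=\mathbb E_u\bigl[\nabla^2 C(\delta\vec\theta+\nu u)-\nabla^2 C(\delta\vec\theta)\bigr].
\end{align*}

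Next, I apply Jensen's inequality to the norm to move it inside the expectation. The $\beta$-smoothness and the $\rho$-Hessian-Lipschitz bounds then give
\begin{align*}
\|\nabla C_\nu-\nabla C\|&\le\beta\nu\,\mathbb E\|u\|,\\
\|\nabla^2 C_\nu-\nabla^2 C\|&\le\rho\nu\,\mathbb E\|u\|.
\end{align*}
It remains to use $\mathbb E\|u\|\le\sqrt{\mathbb E\|u\|^2}=\sqrt d$ (Cauchy--Schwarz), which gives both claimed bounds.

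The only real obstacle is a domain issue. \Cref{ass:lipschitz_nonconvex} is stated on $\mathsf P$, but $\delta\vec\theta+\nu u$ ranges over all of $\mathbb R^d$. I would handle this the way the smoothing literature does: assume $C$ is extended to $\mathbb R^d$ with the same $L_0,\beta,\rho$ constants, for example the physical model is globally smooth, or one takes a Lipschitz extension. This extension is implicitly needed already for $C_\nu$ to be defined and for \Cref{lem:gradient_gaussian} to hold. The interchange of differentiation and expectation is justified by dominated convergence, since the gradient and Hessian have at most linear growth and Gaussian tails dominate.
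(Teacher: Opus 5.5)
Your proposal is correct and follows the paper's proof step for step. You express each bias as a Gaussian average of differences, e.g.\ $\mathbb E_u[\nabla C(\delta\vec\theta+\nu u)-\nabla C(\delta\vec\theta)]$ and its Hessian analogue, move the norm inside by Jensen, apply $\beta$-smoothness or $\rho$-Hessian-Lipschitzness, and finish with $\mathbb E\|u\|\le\sqrt{d}$. Your remark that \Cref{ass:lipschitz_nonconvex} is stated only on $\mathsf P$, while $\delta\vec\theta+\nu u$ has unbounded support, is a genuine point the paper's proof passes over silently. Assuming an extension of $C$ to $\mathbb R^d$ with the same constants, as you do, is the right way to close it.
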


\begin{proof}
Recall that $\nabla C_\nu(\delta \vec\theta)=\mathbb E[\nabla C(\delta \vec\theta+\nu u)]$, we have
\begin{align*}
\|\nabla C_\nu(\delta \vec\theta)-\nabla C(\delta \vec\theta)\|\le\mathbb E\|\nabla C(\delta \vec\theta+\nu u)-\nabla C(\delta \vec\theta)\|\le\beta \nu \mathbb E\|u\|\le\beta \nu \sqrt{\mathbb E\|u\|^2}=\beta \nu \sqrt d.
\end{align*}
The Hessian bound is identical. 
Using Hessian-Lipschitzness, we have
\begin{align*}
\nabla^2 C_\nu(\delta \vec\theta)=\mathbb E[\nabla^2 C(\delta \vec\theta+\nu u)],\quad\Rightarrow\quad
\|\nabla^2 C_\nu(\delta \vec\theta)-\nabla^2 C(\delta \vec\theta)\|\le \rho \nu \sqrt d.
\end{align*}
\end{proof}

We can now state the main result on $\varepsilon$-SOSP guarantee.
\begin{theorem}[Offline nonconvex guarantee for $C$]\label{thm:nonconvex}
Let $R$ be as in \Cref{lem:C_nu_nonconvex}. 
Assume that the smoothing radius $\nu$ is chosen so that $\beta \nu \sqrt d \le \frac{\varepsilon}{2}$ and $\rho \nu \sqrt d \le \frac{1}{2}\sqrt{\rho \varepsilon}$.
Assume also that the parameters $T,K_g,K_h,m$ are chosen so that $\mathbb E[\Psi_\nu(\delta \vec\theta_R)]\le\min\left\{\frac{\varepsilon^2}{4},\frac{\beta}{6\sqrt \rho}\varepsilon^{3/2}\right\}$.
Then, we have
\begin{align*}
\mathbb E\|\nabla C(\delta \vec\theta_R)\| \le \varepsilon,
\qquad
\mathbb E\left[\lambda_{\min}(\nabla^2 C(\delta \vec\theta_R))\right]\ge-\sqrt{\rho \varepsilon}.
\end{align*}
In other words, the returned iterate is an $\varepsilon$-SOSP of $C$ in expectation.
\end{theorem}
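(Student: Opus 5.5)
The plan is to transfer the guarantee of \Cref{lem:C_nu_nonconvex} from the smoothed objective $C_\nu$ to the original objective $C$ using the bias bounds of \Cref{lem:gaussian_bias}. The two parts of the $\varepsilon$-SOSP condition are handled separately. Each part of $\Psi_\nu$ is extracted and controlled by Jensen's inequality, and the smoothing-bias term is then added through the triangle inequality (for the gradient) or Weyl's inequality (for the Hessian).

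For the gradient, the definition in Eq.~\eqref{eq:stationary_measure} gives $\|\nabla C_\nu(\delta\vec\theta_R)\|^2\le\Psi_\nu(\delta\vec\theta_R)$. By Jensen,
\begin{align*}
\mathbb E\|\nabla C_\nu(\delta\vec\theta_R)\|\le\sqrt{\mathbb E[\Psi_\nu(\delta\vec\theta_R)]}\le\varepsilon/2,
\end{align*}
using the assumption $\mathbb E[\Psi_\nu]\le\varepsilon^2/4$. The first bound of \Cref{lem:gaussian_bias}, together with $\beta\nu\sqrt d\le\varepsilon/2$, gives pointwise $\|\nabla C\|\le\|\nabla C_\nu\|+\varepsilon/2$. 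Taking expectations yields $\mathbb E\|\nabla C(\delta\vec\theta_R)\|\le\varepsilon$.

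For the curvature, write $X:=(-\lambda_{\min}(\nabla^2 C_\nu(\delta\vec\theta_R)))_+\ge0$. Then $\frac{4\beta}{3\rho^2}X^3\le\Psi_\nu$, so $\mathbb E[X^3]\le\frac{3\rho^2}{4\beta}\mathbb E[\Psi_\nu]\le\frac{3\rho^2}{4\beta}\cdot\frac{\beta}{6\sqrt\rho}\varepsilon^{3/2}=\frac{\rho^{3/2}\varepsilon^{3/2}}{8}$. Jensen for the concave map $y\mapsto y^{1/3}$ then gives $\mathbb E[X]\le(\mathbb E X^3)^{1/3}\le\frac12\sqrt{\rho\varepsilon}$. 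Weyl's inequality combined with the second bound of \Cref{lem:gaussian_bias} and $\rho\nu\sqrt d\le\frac12\sqrt{\rho\varepsilon}$ gives pointwise
\begin{align*}
\lambda_{\min}(\nabla^2 C)\ge\lambda_{\min}(\nabla^2 C_\nu)-\tfrac12\sqrt{\rho\varepsilon}\ge -X-\tfrac12\sqrt{\rho\varepsilon}.
\end{align*}
Taking expectations yields $\mathbb E[\lambda_{\min}(\nabla^2 C(\delta\vec\theta_R))]\ge-\sqrt{\rho\varepsilon}$.

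Two points need care. The first is that the threshold $\frac{\beta}{6\sqrt\rho}\varepsilon^{3/2}$ must produce exactly the constant $\frac12$ after the cube-root Jensen step; the computation above confirms that it does. The second is that \Cref{lem:gaussian_bias} and the uniform choice of $R$ in \Cref{lem:C_nu_nonconvex} must apply at the random iterate. This holds because the bias bounds are pointwise on $\mathsf P$ and every iterate lies in $\mathsf P$ after projection. The theorem's conclusion is stated only in expectation, so no high-probability argument is needed.
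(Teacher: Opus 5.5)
Your proposal is correct and follows essentially the same route as the paper: you apply Jensen to each component of $\Psi_\nu$ to bound the smoothed gradient norm and the negative curvature, then transfer to $C$ using \Cref{lem:gaussian_bias} via the triangle inequality and Weyl's inequality. Your explicit cube-root Jensen computation, $\mathbb E[X^3]\le(\rho\varepsilon)^{3/2}/8$ and hence $\mathbb E[X]\le\frac12\sqrt{\rho\varepsilon}$, fills in a step the paper only asserts.
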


\begin{proof}
By Jensen's inequality and the definition of $\Psi_\nu$, we have
\begin{align*}
\mathbb E\|\nabla C_\nu(\delta \vec\theta_R)\|\le\sqrt{\mathbb E\|\nabla C_\nu(\delta \vec\theta_R)\|^2}\le\sqrt{\mathbb E[\Psi_\nu(\delta \vec\theta_R)]}\le\frac{\varepsilon}{2}.
\end{align*}
Similarly, we also have
\begin{align*}
\frac{4\beta}{3\rho^2}\mathbb E\left[\bigl(-\lambda_{\min}(\nabla^2 C_\nu(\delta \vec\theta_R))\bigr)_+^3\right]\le\mathbb E[\Psi_\nu(\delta \vec\theta_R)]\le\frac{\beta}{6\sqrt \rho}\varepsilon^{3/2}.
\end{align*}
Hence, we can guarantee that
\begin{align*}
\mathbb E\left[\bigl(-\lambda_{\min}(\nabla^2 C_\nu(\delta \vec\theta_R))\bigr)_+\right]\le\frac{1}{2}\sqrt{\rho \varepsilon}.
\end{align*}
Now apply \Cref{lem:gaussian_bias}. For the gradient, we have
\begin{align*}
\mathbb E\|\nabla C(\delta \vec\theta_R)\|\le\mathbb E\|\nabla C_\nu(\delta \vec\theta_R)\|+\beta \nu \sqrt d\le\frac{\varepsilon}{2}+\frac{\varepsilon}{2}=\varepsilon.
\end{align*}
For the Hessian, Weyl's inequality gives
\begin{align*}
\lambda_{\min}(\nabla^2 C(\delta \vec\theta_R))\ge\lambda_{\min}(\nabla^2 C_\nu(\delta \vec\theta_R))-\|\nabla^2 C_\nu(\delta \vec\theta_R)-\nabla^2 C(\delta \vec\theta_R)\|.
\end{align*}
Taking expectation and using \Cref{lem:gaussian_bias}, we prove the claimed result as
\begin{align*}
\mathbb E\left[\lambda_{\min}(\nabla^2 C(\delta \vec\theta_R))\right]\ge-\frac{1}{2}\sqrt{\rho \varepsilon}-\frac{1}{2}\sqrt{\rho \varepsilon}=-\sqrt{\rho \varepsilon}.
\end{align*}
\end{proof}

\subsection{Improved convergence guarantee for local codes and nonconvex one-time drift}\label{sec:nonconvex_imp}
We now combine the locality condition in \Cref{ass:local_code} with the nonconvex one-time drift setting in \Cref{thm:nonconvex}. 
This is the same mechanism as \Cref{coro:imp_local_offline} and \Cref{coro:imp_local_online} applied to the Gaussian-smoothed gradient and Hessian estimators used for finding an $\varepsilon$-SOSP.

Given the objective $C(\delta\vec\theta)=\tfrac{1}{N_D}\sum_{k=1}^{N_D}DR_k(\delta\vec\theta)$, we keep considering the case where the drift is fixed throughout the optimization procedure.
We also keep \Cref{ass:lipschitz_nonconvex} on the smooth nonconvex trusted region $\mathcal{P}$. 
In addition, we impose the locality condition in \Cref{ass:local_code}: for every detector $D_k$, let $\mathcal S_k\subseteq \{1,\ldots,d\}$ be the set of scalar control coordinates on which $DR_k$ depends, and then $|\mathcal S_k|\leq s$ for every detector $D_k$, $\mathcal K_j:=\{k:j\in \mathcal S_k\}$ satisfies $|\mathcal K_j|\leq c$ for every coordinate $j$, and $\tfrac{d}{N_D}\leq s$.
For the Hessian estimator, we will also need the set of detectors that depend on two coordinates. 
We define a shorthand derived from $\mathcal S_k$ and $\mathcal K_j$ as
\begin{align*}
\mathcal K_{j\ell}:=\{k:j,\ell\in \mathcal S_k\},\qquad
|\mathcal K_{j\ell}|\leq c,
\end{align*}
since $\mathcal K_{j\ell}\subseteq \mathcal K_j$.
For a fixed query point $\delta\vec\theta$, define the detector-resolved empirical rate with its mean and variance as
\begin{align*}
\widehat{DR}_k(\delta\vec\theta)=\frac{1}{m}\sum_{r=1}^{m}\mathcal D_{k,r}(\delta\vec\theta),\qquad
\mathbb E[\widehat{DR}_k(\delta\vec\theta)]=DR_k(\delta\vec\theta),\qquad
\operatorname{Var}(\widehat{DR}_k(\delta\vec\theta))\leq\frac{1}{4m}.
\end{align*}

We now alter Algorithm~\ref{alg:PAGD_nonconvex} only in the construction of $\widehat g_t$ and $\widehat H_t$. The descent step, the negative-curvature step, the candidate comparison step, and the stopping rule are unchanged.
At iteration $t$, sample
\begin{align*}
u_t^{(1)},\ldots,u_t^{(K_g)}\sim N(0,I_d),\qquad
w_t^{(1)},\ldots,w_t^{(K_h)}\sim N(0,I_d)
\end{align*}
for gradient estimation and Hessian estimation. 
For every gradient direction $u_t^{(i)}$, query the two controls $\delta\vec\theta_t+\nu u_t^{(i)}$ and $\delta\vec\theta_t$.
However, instead of forming the averaged loss first, we keep the detector-resolved empirical rates. 
For each coordinate $j$, define
\begin{align*}
\widehat g^{\mathrm{loc}}_{t,j}:=\frac{1}{K_g}\sum_{i=1}^{K_g}\frac{u^{(i)}_{t,j}}{\nu N_D}\sum_{k\in \mathcal K_j}\left(\widehat{DR}_k(\delta\vec\theta_t+\nu u_t^{(i)})-\widehat{DR}_k(\delta\vec\theta_t)\right).
\end{align*}
For every Hessian direction $w_t^{(i)}$, query the three controls $\delta\vec\theta_t+\nu w_t^{(i)}$, $\delta\vec\theta_t-\nu w_t^{(i)}$, and $\delta\vec\theta_t$.
For each pair of coordinates $j,\ell$, define
\begin{align*}
\widehat H^{\mathrm{loc}}_{t,j\ell}:=\frac{1}{K_h}\sum_{i=1}^{K_h}\frac{w^{(i)}_{t,j}w^{(i)}_{t,\ell}-\mathbf 1_{j=\ell}}{2\nu^2N_D}\sum_{k\in \mathcal K_{j\ell}}\left(\widehat{DR}_k(\delta\vec\theta_t+\nu w_t^{(i)})+\widehat{DR}_k(\delta\vec\theta_t-\nu w_t^{(i)})-2\widehat{DR}_k(\delta\vec\theta_t)\right).
\end{align*}
If $\mathcal K_{j\ell}=\emptyset$, then we set $\widehat H^{\mathrm{loc}}_{t,j\ell}=0$.
The rest of Algorithm~\ref{alg:PAGD_nonconvex} is unchanged. 
We compute the gradient-step candidate
\begin{align*}
\delta\vec\theta_t^g=\Pi_{\mathsf{P}}\left(\delta\vec\theta_t-\frac{1}{\beta}\widehat g_t^{\mathrm{loc}}\right).
\end{align*}
Let $(\widehat\lambda_t,\widehat z_t)$ be the minimum eigenvalue-eigenvector pair of $\widehat H_t^{\mathrm{loc}}$, with $\|\widehat z_t\|=1$.
If $\widehat\lambda_t<0$, define $a_t:=\tfrac{2|\widehat\lambda_t|}{\rho}$, and form the two negative-curvature candidates
\begin{align*}
\delta\vec\theta_t^{nc,+}=\Pi_{\mathsf{P}}(\delta\vec\theta_t+a_t\widehat z_t),\qquad
\delta\vec\theta_t^{nc,-}=\Pi_{\mathsf{P}}(\delta\vec\theta_t-a_t\widehat z_t).
\end{align*}
Then we use fresh oracle calls to estimate $C$ at all available candidates and set $\delta\vec\theta_{t+1}$ to be the candidate with the smallest observed value. 
The algorithm returns a uniformly random iterate from $\delta\vec\theta_1,\ldots,\delta\vec\theta_T$, as in Algorithm~\ref{alg:PAGD_nonconvex}.

The number of physical query points is unchanged up to the same constant factors as Algorithm~\ref{alg:PAGD_nonconvex}. 
The difference is in the classical post-processing: instead of averaging all detectors first and then forming the gradient and Hessian estimators, we use only the detectors that can depend on the coordinate or coordinate pair being estimated. 
This removes the explicit ambient-dimension dependence from the detector-sampling terms.
We obtain the following improved performance guarantee.

\begin{corollary}[Improved offline nonconvex guarantee for $C$ and local QEC codes]\label{coro:nonconvex_local}
Assume the time-independent nonconvex setting of \Cref{thm:nonconvex}. 
Assume \Cref{ass:local_code} and \Cref{ass:lipschitz_nonconvex}, and also that detector bits are independent across detectors and cycles inside each empirical query batch. 
Then the locality-aware estimators satisfy
\begin{align*}
\mathbb E[\widehat g_t^{\mathrm{loc}}\mid \delta\vec\theta_t]=\nabla C_\nu(\delta\vec\theta_t),\qquad
\mathbb E[\widehat H_t^{\mathrm{loc}}\mid \delta\vec\theta_t]=\nabla^2 C_\nu(\delta\vec\theta_t).
\end{align*}
Moreover, their second moments satisfy
\begin{align*}
&\mathbb E[\|\widehat g_t^{\mathrm{loc}}-\nabla C_\nu(\delta\vec\theta_t)\|^2\mid\delta\vec\theta_t]\leq\frac{1}{K_g}\left(L_0^2\frac{cs(s+2)}{N_D}+\frac{sc}{2m\nu^2N_D}\right),\\
&\mathbb E[\|\widehat H_t^{\mathrm{loc}}-\nabla^2 C_\nu(\delta\vec\theta_t)\|_F^2\mid\delta\vec\theta_t]\leq\frac{1}{K_h}\left(\frac{\beta^2c}{4N_D}\left(s(s+2)(s+4)(s+6)+s^2(s+2)\right)+\frac{3cs^2}{4m\nu^4N_D}\right).
\end{align*}
Consequently, in \Cref{lem:C_nu_nonconvex} and \Cref{thm:nonconvex}, the estimator-error terms can be replaced by the two locality-aware bounds above. 
In particular, compared with \Cref{lem:second_moment_nonconvex}, the detector-sampling terms no longer scale explicitly with the ambient dimension $d$. They scale with the local constants $s,c$ and with the detector number through $1/N_D$. 
\end{corollary}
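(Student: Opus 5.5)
The proof has three parts. First we check unbiasedness of the locality-aware estimators coordinate by coordinate. Then we bound their second moments by splitting each into a signal part and a noise part. Finally we substitute these bounds into the existing descent argument. Throughout, write $C=\frac1{N_D}\sum_k DR_k$ and $DR_{k,\nu}(\delta\vec\theta):=\mathbb E_u[DR_k(\delta\vec\theta+\nu u)]$, so that $C_\nu=\frac1{N_D}\sum_k DR_{k,\nu}$.

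\emph{Unbiasedness.} Apply \Cref{lem:gradient_gaussian} and \Cref{lem:hessian_gaussian} to each $DR_k$ separately rather than to $C$. By the linearity argument of \Cref{prop:unbiasedness_nonconvex}, together with $\mathbb E[\widehat{DR}_k]=DR_k$, the conditional mean of $\widehat g^{\mathrm{loc}}_{t,j}$ is $\frac1{N_D}\sum_{k\in\mathcal K_j}\partial_j DR_{k,\nu}(\delta\vec\theta_t)$. For $k\notin\mathcal K_j$, the smoothed rate $DR_{k,\nu}$ does not depend on coordinate $j$, because smoothing does not create dependence on coordinates outside $\mathcal S_k$. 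So those terms contribute $\partial_jDR_{k,\nu}=0$, and the sum equals $\partial_jC_\nu$. The Hessian case is identical with $\mathcal K_{j\ell}$ in place of $\mathcal K_j$, since $\partial_j\partial_\ell DR_{k,\nu}=0$ unless $j,\ell\in\mathcal S_k$.

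\emph{Second moments.} Expand the squared error as $\sum_j$ (respectively $\sum_{j,\ell}$) of per-coordinate variances. Each variance is bounded by $\frac1{K_g}$ times the second moment of a single sample, and we split that sample into a signal part and a noise part. The noise part is the easy one. Conditioned on $u$, the noise in $\sum_{k\in\mathcal K_j}(\widehat{DR}_k^+-\widehat{DR}_k^0)$ has variance at most $c/(2m)$, by independence and $|\mathcal K_j|\le c$. Multiplying by $u_j^2/(\nu N_D)^2$ and taking $\mathbb E u_j^2=1$ gives $c/(2m\nu^2N_D^2)$ per coordinate. Summing over $j$ and using $d/N_D\le s$ yields $sc/(2m\nu^2N_D)$. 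The Hessian noise works the same way: the three-point variance is $3c/(2m)$ and $\mathbb E(w_jw_\ell-\mathbf 1_{j=\ell})^2\le 2$, and the pairs $(j,\ell)$ with $\mathcal K_{j\ell}\ne\emptyset$ number at most $\sum_k|\mathcal S_k|^2\le N_Ds^2$. This gives the $3cs^2/(4m\nu^4N_D)$ term.

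\emph{Signal part (the main obstacle).} Here we need $\mathbb E\bigl[u_j^2(\sum_{k\in\mathcal K_j}\Delta_k)^2\bigr]$ with $\Delta_k=DR_k(\delta\vec\theta+\nu u)-DR_k(\delta\vec\theta)$, where $\Delta_k$ depends only on $u_{\mathcal S_k}$. We require a per-detector Lipschitz constant; assume each $DR_k$ is $L_0$-Lipschitz, which is the reading under which the stated constants come out. First use Cauchy--Schwarz, $(\sum_{k\in\mathcal K_j}\Delta_k)^2\le c\sum_{k\in\mathcal K_j}\Delta_k^2$, and then $\Delta_k^2\le L_0^2\nu^2\|u_{\mathcal S_k}\|^2$. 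Next swap the order of summation, $\sum_j\sum_{k\in\mathcal K_j}=\sum_k\sum_{j\in\mathcal S_k}$. This reduces the bound to $\sum_k\mathbb E\|u_{\mathcal S_k}\|^4\le N_D s(s+2)$, using the Gaussian fourth moment in dimension at most $s$. The resulting bound is $cL_0^2 s(s+2)N_D/N_D^2=L_0^2cs(s+2)/N_D$, which is the stated one.

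The step most likely to fail is this normalization. The stated $1/N_D$ factor holds only if the per-detector Lipschitz and smoothness constants are $O(1)$. If $L_0$ is instead read as the constant for $C$ itself, one must rescale, and I would state that reading explicitly. The Hessian signal term is handled analogously. We use $|\Delta_{2,k}|\le\beta\nu^2\|w_{\mathcal S_k}\|^2$ and the same exchange over $\sum_{j,\ell\in\mathcal S_k}$ with $\|w_{\mathcal S_k}w_{\mathcal S_k}^\top-I\|_F^2$. The eighth-moment computation of \Cref{lem:second_moment_nonconvex} then applies in dimension $s$, producing $\frac{\beta^2 c}{4N_D}\bigl(s(s+2)(s+4)(s+6)+s^2(s+2)\bigr)$.

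\emph{Conclusion.} \Cref{lem:one_step_nonconvex} uses the estimators only through their means and second moments; the Frobenius norm dominates the operator norm, so the Hessian bound transfers. Hence \Cref{lem:C_nu_nonconvex} and \Cref{thm:nonconvex} go through verbatim with the new bounds substituted.
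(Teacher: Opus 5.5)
Your proposal is correct and follows essentially the same route as the paper's proof. Unbiasedness comes from applying the Gaussian smoothing identities per detector. The signal term is bounded by Cauchy--Schwarz over $\mathcal K_j$ (or $\mathcal K_{j\ell}$), the exchange $\sum_j\sum_{k\in\mathcal K_j}=\sum_k\sum_{j\in\mathcal S_k}$, and Gaussian moments in dimension at most $s$. The noise term uses independence across detectors, $d/N_D\le s$, and the $N_Ds^2$ pair count, and the result is substituted into \Cref{lem:one_step_nonconvex}. Your caveat that $L_0$ and $\beta$ must be per-detector constants for each $DR_k$ rather than for $C$ is the same reading the paper's proof adopts implicitly, when it asserts that each $DR_k$ is $L_0$-Lipschitz and $\beta$-smooth.
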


\noindent Similar to \Cref{rem:non_independent_K}, if the detector estimates inside the same local set $\mathcal K_j$ or $\mathcal K_{j\ell}$ are not independent, then the same proof gives the same conclusion with $c$ replaced by $c^2$ in the sampling terms. 
Thus the local sampling terms become $\tfrac{sc^2}{2m\nu^2N_D}$ for the gradient estimator, and $\tfrac{3c^2s^2}{4m\nu^4N_D}$ for the Hessian estimator. 
This worsens only the local constant and still does not introduce an explicit ambient-dimension prefactor.

\begin{proof}
First, we prove the unbiasedness of the gradient estimator. 
Fix a coordinate $j$, since $DR_k$ does not depend on coordinate $j$ when $k\notin \mathcal K_j$, we have
\begin{align*}
\partial_j C_\nu(\delta\vec\theta_t)=\frac{1}{N_D}\sum_{k\in \mathcal K_j}\partial_j (DR_k)_\nu(\delta\vec\theta_t).
\end{align*}
Here, $(DR_k)_\nu$ denotes the Gaussian smoothing of $DR_k$ with respect to the coordinates on which $DR_k$ depends, as $DR_k$ is independent of all coordinates outside $\mathcal S_k$. 
Therefore, applying the Gaussian smoothing identity in \Cref{lem:gradient_gaussian} to each detector gives
\begin{align*}
\partial_j(DR_k)_\nu(\delta\vec\theta_t)=\mathbb E_u\left[\frac{DR_k(\delta\vec\theta_t+\nu u)-DR_k(\delta\vec\theta_t)}{\nu}u_j\right].
\end{align*}
Summing over $k\in \mathcal K_j$, we obtain
\begin{align*}
\partial_j C_\nu(\delta\vec\theta_t)=\mathbb E_u\left[\frac{u_j}{\nu N_D}\sum_{k\in \mathcal K_j}\left(DR_k(\delta\vec\theta_t+\nu u)-DR_k(\delta\vec\theta_t)\right)\right].
\end{align*}
Replacing $DR_k$ by the empirical detector-resolved estimator $\widehat{DR}_k$ preserves the expectation, as $\mathbb E[\widehat{DR}_k(\delta\vec\theta)]=DR_k(\delta\vec\theta)$.
Averaging over $K_g$ independent Gaussian directions preserves the same expectation. 
Hence, we have
\begin{align*}
\mathbb E[\widehat g^{\mathrm{loc}}_{t,j}\mid \delta\vec\theta_t]=\partial_j C_\nu(\delta\vec\theta_t)\quad\Rightarrow\quad
\mathbb E[\widehat g_t^{\mathrm{loc}}\mid \delta\vec\theta_t]=\nabla C_\nu(\delta\vec\theta_t).
\end{align*}

We next prove the second-moment bound for the gradient estimator. 
It is enough to analyze one Gaussian sample, as the estimator averages $K_g$ independent samples. 
For one sample, define
\begin{align*}
Z_j=\frac{u_j}{\nu N_D}\sum_{k\in \mathcal K_j}\left(\widehat{DR}_k(\delta\vec\theta_t+\nu u)-\widehat{DR}_k(\delta\vec\theta_t)\right).
\end{align*}
Again, we write
\begin{align*}
\widehat{DR}_k(\delta\vec\theta_t+\nu u)-\widehat{DR}_k(\delta\vec\theta_t)=DR_k(\delta\vec\theta_t+\nu u)-DR_k(\delta\vec\theta_t)+\xi_k.
\end{align*}
The noise $\xi_k$ has mean zero and $\operatorname{Var}(\xi_k\mid u)\leq\tfrac{1}{2m}$.
We first bound the deterministic part. 
Since $DR_k$ depends only on $\mathcal S_k$ and is $L_0$-Lipschitz on the trusted region, we have $|DR_k(\delta\vec\theta_t+\nu u)-DR_k(\delta\vec\theta_t)|\leq L_0\nu \|u_{\mathcal S_k}\|$.
Therefore, we have
\begin{align*}
\left|\frac{u_j}{\nu N_D}\sum_{k\in \mathcal K_j}\left(DR_k(\delta\vec\theta_t+\nu u)-DR_k(\delta\vec\theta_t)\right)\right|^2\leq\frac{L_0^2u_j^2}{N_D^2}\left(\sum_{k\in \mathcal K_j}\|u_{\mathcal S_k}\|\right)^2\leq\frac{L_0^2c}{N_D^2}u_j^2\sum_{k\in \mathcal K_j}\|u_{\mathcal S_k}\|^2\leq L_0^2\frac{cs(s+2)}{N_D}
\end{align*}
using $|\mathcal K_j|\leq c$, and the last step sums over $j=1,\ldots,d$: as $\sum_{j=1}^du_j^2\sum_{k\in \mathcal K_j}\|u_{\mathcal S_k}\|^2=\sum_{k=1}^{N_D}\sum_{j\in \mathcal S_k}u_j^2\|u_{\mathcal S_k}\|^2=\sum_{k=1}^{N_D}\|u_{\mathcal S_k}\|^4$, for each $k$, the vector $u_{\mathcal S_k}$ has dimension at most $s$, hence $\mathbb E\|u_{\mathcal S_k}\|^4\leq s(s+2)$.
We now bound the sampling-noise contribution. 
For one coordinate, we denote $\zeta_j=\frac{u_j}{\nu N_D}\sum_{k\in \mathcal K_j}\xi_k$.
Conditioned on $u$, independence across detectors gives
\begin{align*}
\operatorname{Var}\left(\sum_{k\in \mathcal K_j}\xi_k\mid u\right)\leq\frac{c}{2m}\quad\Rightarrow\quad
\operatorname{Var}(\zeta_j\mid u)\leq\frac{u_j^2}{\nu^2N_D^2}\frac{c}{2m}.
\end{align*}
Taking expectation over $u$ gives $\mathbb E[\zeta_j^2]\leq\frac{c}{2m\nu^2N_D^2}$, and summing over $j=1,\ldots,d$ and using $d/N_D\leq s$ gives
\begin{align*}
\sum_{j=1}^d\mathbb E[\zeta_j^2]\leq\frac{dc}{2m\nu^2N_D^2}\leq\frac{sc}{2m\nu^2N_D}.
\end{align*}
Combining the deterministic contribution and the sampling-noise contribution, and then dividing by $K_g$ because $\widehat g_t^{\mathrm{loc}}$ averages $K_g$ independent samples, gives
\begin{align*}
\mathbb E[\|\widehat g_t^{\mathrm{loc}}-\nabla C_\nu(\delta\vec\theta_t)\|^2\mid\delta\vec\theta_t]\leq\frac{1}{K_g}\left(L_0^2\frac{cs(s+2)}{N_D}+\frac{sc}{2m\nu^2N_D}\right),
\end{align*}
which proves the gradient part.

We now prove the unbiasedness of the Hessian. 
Fix a coordinate pair $j,\ell$. Since $DR_k$ does not depend jointly on $j$ and $\ell$ unless $k\in \mathcal K_{j\ell}$, we have
\begin{align*}
\partial_{j\ell}^2 C_\nu(\delta\vec\theta_t)=\frac{1}{N_D}\sum_{k\in \mathcal K_{j\ell}}\partial_{j\ell}^2 (DR_k)_\nu(\delta\vec\theta_t).
\end{align*}
Applying the Hessian identity in \Cref{lem:hessian_gaussian} to each detector gives
\begin{align*}
\partial_{j\ell}^2 (DR_k)_\nu(\delta\vec\theta_t)=\mathbb E_w\left[\frac{DR_k(\delta\vec\theta_t+\nu w)+DR_k(\delta\vec\theta_t-\nu w)-2DR_k(\delta\vec\theta_t)}{2\nu^2}(w_jw_\ell-\mathbf 1_{j=\ell})\right].
\end{align*}
Summing over $k\in \mathcal K_{j\ell}$, we get
\begin{align*}
\partial_{j\ell}^2 C_\nu(\delta\vec\theta_t)=\mathbb E_w\left[\frac{w_jw_\ell-\mathbf 1_{j=\ell}}{2\nu^2N_D}\sum_{k\in \mathcal K_{j\ell}}\left(DR_k(\delta\vec\theta_t+\nu w)+DR_k(\delta\vec\theta_t-\nu w)-2DR_k(\delta\vec\theta_t)\right)\right].
\end{align*}
Replacing each $DR_k$ by $\widehat{DR}_k$ preserves expectation, and averaging over $K_h$ independent samples preserves expectation. 
Hence,
\begin{align*}
\mathbb E[\widehat H_t^{\mathrm{loc}}\mid \delta\vec\theta_t]=\nabla^2 C_\nu(\delta\vec\theta_t).
\end{align*}

We now bound the second moment of the Hessian. 
For one Hessian sample, define
\begin{align*}
W_{j\ell}=\frac{w_jw_\ell-\mathbf 1_{j=\ell}}{2\nu^2N_D}\sum_{k\in \mathcal K_{j\ell}}\left(\widehat{DR}_k(\delta\vec\theta_t+\nu w)+\widehat{DR}_k(\delta\vec\theta_t-\nu w)-2\widehat{DR}_k(\delta\vec\theta_t)\right).
\end{align*}
We write
\begin{align*}
\widehat{DR}_k(\delta\vec\theta_t+\nu w)+\widehat{DR}_k(\delta\vec\theta_t-\nu w)-2\widehat{DR}_k(\delta\vec\theta_t)=DR_k(\delta\vec\theta_t+\nu w)+DR_k(\delta\vec\theta_t-\nu w)-2DR_k(\delta\vec\theta_t)+\xi_k.
\end{align*}
Because the three empirical detector-rate estimates use independent batches, the variance satisfies $\operatorname{Var}(\xi_k\mid w)\leq\tfrac{1}{4m}+\tfrac{1}{4m}+\tfrac{1}{m}=\tfrac{3}{2m}$.
We first bound the deterministic part. 
Since $DR_k$ depends only on $\mathcal S_k$, and since it is $\beta$-smooth on the trusted region,
\begin{align*}
|DR_k(\delta\vec\theta_t+\nu w)+DR_k(\delta\vec\theta_t-\nu w)-2DR_k(\delta\vec\theta_t)|\leq\beta\nu^2\|w_{\mathcal S_k}\|^2.
\end{align*}
Therefore,
\begin{align*}
\left|\frac{w_jw_\ell-\mathbf 1_{j=\ell}}{2\nu^2N_D}\sum_{k\in \mathcal K_{j\ell}}\left(DR_k(\delta\vec\theta_t+\nu w)+DR_k(\delta\vec\theta_t-\nu w)-2DR_k(\delta\vec\theta_t)\right)\right|^2\leq\frac{\beta^2}{4N_D^2}(w_jw_\ell-\mathbf 1_{j=\ell})^2\left(\sum_{k\in \mathcal K_{j\ell}}\|w_{\mathcal S_k}\|^2\right)^2.
\end{align*}
Using $|\mathcal K_{j\ell}|\leq c$, we have $\sum_{k\in \mathcal K_{j\ell}}\|w_{\mathcal S_k}\|^2)^2\leq c\sum_{k\in \mathcal K_{j\ell}}\|w_{\mathcal S_k}\|^4$, and thus,
\begin{align*}
\left|\frac{w_jw_\ell-\mathbf 1_{j=\ell}}{2\nu^2N_D}\sum_{k\in \mathcal K_{j\ell}}\left(DR_k(\delta\vec\theta_t+\nu w)+DR_k(\delta\vec\theta_t-\nu w)-2DR_k(\delta\vec\theta_t)\right)\right|^2\leq\frac{\beta^2c}{4N_D^2}(w_jw_\ell-\mathbf 1_{j=\ell})^2\sum_{k\in \mathcal K_{j\ell}}\|w_{\mathcal S_k}\|^4.
\end{align*}
Summing over all coordinate pairs $(j,\ell)$, we obtain
\begin{align*}
\sum_{j,\ell}(w_jw_\ell-\mathbf 1_{j=\ell})^2\sum_{k\in \mathcal K_{j\ell}}\|w_{\mathcal S_k}\|^4=\sum_{k=1}^{N_D}\|w_{\mathcal S_k}\|^4\sum_{j,\ell\in \mathcal S_k}(w_jw_\ell-\mathbf 1_{j=\ell})^2.
\end{align*}
For each $k$, we have
\begin{align*}
\sum_{j,\ell\in \mathcal S_k}(w_jw_\ell-\mathbf 1_{j=\ell})^2=\|w_{\mathcal S_k}w_{\mathcal S_k}^{\top}-I_{\mathcal S_k}\|_F^2=\|w_{\mathcal S_k}\|^4-2\|w_{\mathcal S_k}\|^2+|\mathcal S_k|.
\end{align*}
Since $|\mathcal S_k|\leq s$,
\begin{align*}
\mathbb E\left[\|w_{\mathcal S_k}\|^4\|w_{\mathcal S_k}w_{\mathcal S_k}^{\top}-I_{\mathcal S_k}\|_F^2\right]\leq s(s+2)(s+4)(s+6)+s^2(s+2).
\end{align*}
Therefore, the deterministic contribution to the Hessian second moment is bounded by
\begin{align*}
\frac{\beta^2c}{4N_D}\left(s(s+2)(s+4)(s+6)+s^2(s+2)\right).
\end{align*}
We now bound the sampling-noise contribution. 
For one coordinate pair,
\begin{align*}
\zeta_{j\ell}=\frac{w_jw_\ell-\mathbf 1_{j=\ell}}{2\nu^2N_D}\sum_{k\in \mathcal K_{j\ell}}\xi_k.
\end{align*}
Conditioned on $w$, independence across detectors gives $\operatorname{Var}(\sum_{k\in \mathcal K_{j\ell}}\xi_k\mid w)\leq\tfrac{3c}{2m}$, and thus
\begin{align*}
\operatorname{Var}(\zeta_{j\ell}\mid w)\leq\frac{(w_jw_\ell-\mathbf 1_{j=\ell})^2}{4\nu^4N_D^2}\frac{3c}{2m}\quad\Rightarrow\quad\mathbb E[\zeta_{j\ell}^2]\leq\frac{3c}{4m\nu^4N_D^2}..
\end{align*}
after taking expectation over $w$ and using $\mathbb E[(w_jw_\ell-\mathbf 1_{j=\ell})^2]\leq 2$ and obtain
Note that the number of coordinate pairs $(j,\ell)$ for which $\mathcal K_{j\ell}\neq \emptyset$ is at most $\sum_{k=1}^{N_D}|\mathcal S_k|^2\leq N_Ds^2$.
Thus the total sampling-noise contribution to the Frobenius second moment is bounded by
\begin{align*}
N_Ds^2\frac{3c}{4m\nu^4N_D^2}=\frac{3cs^2}{4m\nu^4N_D}.
\end{align*}
Combining the deterministic contribution and the sampling-noise contribution, and then dividing by $K_h$ because $\widehat H_t^{\mathrm{loc}}$ averages $K_h$ independent samples, gives
\begin{align*}
\mathbb E[\|\widehat H_t^{\mathrm{loc}}-\nabla^2 C_\nu(\delta\vec\theta_t)\|_F^2\mid\delta\vec\theta_t]\leq\frac{1}{K_h}\left(\frac{\beta^2c}{4N_D}\left(s(s+2)(s+4)(s+6)+s^2(s+2)\right)+\frac{3cs^2}{4m\nu^4N_D}\right),
\end{align*}
which proves the Hessian part.

It remains to explain how the lemma modifies the final nonconvex guarantee. 
\Cref{lem:one_step_nonconvex} is deterministic once the errors
\begin{align*}
\widehat g_t-\nabla C_\nu(\delta\vec\theta_t),\quad
\widehat H_t-\nabla^2 C_\nu(\delta\vec\theta_t)
\end{align*}
are bounded in second moment. 
Therefore, replacing $\widehat g_t$ and $\widehat H_t$ by $\widehat g_t^{\mathrm{loc}}$ and $\widehat H_t^{\mathrm{loc}}$ changes only the two estimator-error terms in \Cref{lem:C_nu_nonconvex}. 
The argument in \Cref{lem:C_nu_nonconvex} is thus unchanged because it uses only the one-step decrease inequality. 
The smoothing-bias argument in \Cref{lem:gaussian_bias} is also unchanged because it depends only on the smoothness of $C$, not on how the derivatives of $C_\nu$ are estimated. 
Finally, \Cref{coro:nonconvex_local} follows in the same way as \Cref{thm:nonconvex} once $T,K_g,K_h,m$ are chosen so that the modified bound on $\mathbb E[\Psi_\nu(\delta\vec\theta_R)]$ is at most $\min\{\tfrac{\varepsilon^2}{4},\tfrac{\beta}{6\sqrt{\rho}}\varepsilon^{3/2}\}$.
Hence the locality-aware version of Algorithm~\ref{alg:PAGD_nonconvex} returns an $\varepsilon$-SOSP of $C$ in expectation, with the detector-sampling prefactors controlled by $s,c$ rather than by the full ambient dimension $d$.
\end{proof}

\subsection{Additional discussion on the time-dependent case control drifts}\label{sec:nonconvex_online}
The time-dependent drift case is fundamentally harder.
In particular, we need to optimize to a local minima of $C_t(\delta\vec\theta_t)$ in \Cref{def:optimization_process} for nonconvex $C_t$ at any $t=1,...,T$.
To the best of our knowledge, how to find local minima efficiently of a sequence of target nonconvex functions $f_1,...,f_T$ in the online setting remains unknown, especially given only function value queries.
Based on a template known as ``follow the
regularized leader'' (FTRL)~\cite{shalev2006convex,hazan2016introduction}, there are algorithms achieving a $O(\sqrt{T\sum_{t=1}^{T-1}\norm{\vec \vec v_{t+1}-\vec \vec v_t}})$ global dynamic regret bound, which compute the accumulated difference between the objective function of $\{\vec\theta_t\}$ and the objective function of the optimal comparator $\{\vec \vec v_t^*\}$, similar to \Cref{thm:online} with noisy function value queries~\cite{heliou2020online,heliou2021zeroth}.
However, computing the next $\delta\vec\theta_{t+1}$ based on the noisy function queries using the FTRL template is usually computationally inefficient.
Some other algorithms also achieve an $o(T\sum_{t=1}^{T-1}\norm{\vec \vec v_{t+1}-\vec v_t})$ dynamic regret efficiently based on additional assumptions on the nonconvex objective functions, such as the weak pseudo-convex condition~\cite{gao2018online}.
However, it is unclear whether the weak pseudo-convex condition can fit into our surrogate objective function $C_t$.
Despite dynamic regret, another widely known metric is the local regret, which is the summation of $\norm{\nabla C_t(\delta\vec\theta_t)}$.
An efficient algorithm is known to achieve a $O(\sqrt{T\sum_{t=1}^{T-1}\norm{\vec \vec v_{t+1}-\vec \vec v_t}})$ local regret~\cite{guan2023hardness}.
Unfortunately, such an algorithm can only provide guarantees on first-order stationary points, which contain unwanted local maxima and saddle points other than local minima.
In conclusion, it remains open to provide any nontrivial guarantee, such as sublinear $o(T)$ bounds for dynamics regret or local regret, for finding even global (local) minima in each iteration for our surrogate model.

\paragraph{A special case: piecewise time-independent drifts.}There is, however, an important special case in which a much better statement is possible: piecewise time-independent drifts with only finitely many jumps.
Assume that the time-dependent objective is piecewise constant. 
Namely, there exist jump times $1=\tau_0<\tau_1<\cdots<\tau_J<\tau_{J+1}=T+1$ such that for each segment $t \in \{\tau_j,\tau_j+1,\dots,\tau_{j+1}-1\}$ and $C_t(\delta \vec\theta)=C^{(j)}(\delta \vec\theta)$,
where each $C^{(j)}$ satisfies \Cref{ass:lipschitz_nonconvex} on the same trusted region $P$. Suppose also that the jump times are known and that on each segment we restart Algorithm~\ref{alg:PAGD_nonconvex} from the previous endpoint.
Let $T_{\mathrm{off}}(\varepsilon)$ denote any offline iteration budget sufficient in \Cref{thm:nonconvex} to obtain an $\varepsilon$-SOSP for a fixed objective. 
If every segment length satisfies $\tau_{j+1}-\tau_j \ge T_{\mathrm{off}}(\varepsilon)$, then on every segment there exists an iterate $\delta \vec\theta^{(j)}_{\mathrm{out}}$ that is an $\varepsilon$-SOSP of the active objective $C^{(j)}$ in expectation.
Moreover, if we define the segment-wise second-order stationarity measure
\begin{align*}
\Psi_t(\delta \vec\theta):=\max\left\{\|\nabla C_t(\delta \vec\theta)\|_2^2,\frac{4\beta}{3\rho^2}\bigl(-\lambda_{\min}(\nabla^2C_t(\delta \vec\theta))\bigr)_+^3\right\},
\end{align*}
We have the following corollary based on \Cref{thm:nonconvex}.
\begin{corollary}[Piecewise constant drift with a constant number of jumps]
Based on the setting mentioned above, the average local regret obeys
\begin{align*}
\frac{1}{T}\sum_{t=1}^{T}\mathbb E[\Psi_t(\delta \vec\theta_t)]\le\varepsilon^2+\frac{(J+1)T_{\mathrm{off}}(\varepsilon)}{T}.
\end{align*}
\end{corollary}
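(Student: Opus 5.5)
The plan is to split the horizon into the $J+1$ stationary segments $\{\tau_j,\dots,\tau_{j+1}-1\}$ and, inside each segment, into a \emph{burn-in} phase of length $T_{\mathrm{off}}(\varepsilon)$ and a \emph{settled} phase consisting of the remaining epochs. This gives
\begin{align*}
\sum_{t=1}^{T}\mathbb E[\Psi_t(\delta\vec\theta_t)]=\sum_{j=0}^{J}\Bigl(\sum_{t\in\text{burn-in}_j}+\sum_{t\in\text{settled}_j}\Bigr)\mathbb E[\Psi_t(\delta\vec\theta_t)].
\end{align*}
On every epoch of segment $j$ the active objective is the fixed function $C^{(j)}$. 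We therefore restart Algorithm~\ref{alg:PAGD_nonconvex} at $\tau_j$ and treat the segment as an offline problem to which \Cref{thm:nonconvex} applies verbatim. This is possible because $C^{(j)}$ satisfies \Cref{ass:lipschitz_nonconvex} on the same trusted region, and the theorem places no assumption on the starting point other than membership in $\mathsf P$. The previous endpoint is always in $\mathsf P$ because every iterate is projected onto $\mathsf P$.

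For the burn-in epochs we only need a uniform bound on $\Psi_t$. Here we use the normalization implicit in the corollary, namely that $\Psi_t$ is measured in units in which it is at most $1$ on $\mathsf P$. This follows from $\|\nabla C_t\|\le L_0$ and $|\lambda_{\min}(\nabla^2C_t)|\le\beta$, which give $\Psi_t\le\max\{L_0^2,\tfrac{4\beta^4}{3\rho^2}\}$; the constant is absorbed by rescaling $C$ or by stating the bound up to this constant. Each segment then contributes at most $T_{\mathrm{off}}(\varepsilon)$ to the sum, and the $J+1$ segments together contribute the term $(J+1)T_{\mathrm{off}}(\varepsilon)$.

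For the settled epochs we need $\mathbb E[\Psi_t(\delta\vec\theta_t)]\le\varepsilon^2$, and this is the main obstacle. \Cref{thm:nonconvex} certifies only a \emph{randomly selected} iterate, and only with $\mathbb E\|\nabla C\|\le\varepsilon$ and $\mathbb E[\lambda_{\min}]\ge-\sqrt{\rho\varepsilon}$, not a pointwise bound on $\Psi$ at every later epoch. I would proceed in three steps.

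\begin{enumerate}
\item Once the burn-in budget is spent, freeze the control at the output $\delta\vec\theta^{(j)}_{\mathrm{out}}$ for the rest of the segment, or equivalently apply the stopping rule of Algorithm~\ref{alg:PAGD_nonconvex}. Since $C^{(j)}$ does not change within the segment, $\Psi_t(\delta\vec\theta_t)=\Psi^{(j)}(\delta\vec\theta^{(j)}_{\mathrm{out}})$ for every settled epoch.
\item Bound this value through $\Psi_\nu$. Lemma~\ref{lem:C_nu_nonconvex} together with the parameter choice in \Cref{thm:nonconvex} gives $\mathbb E[\Psi_\nu]\le\min\{\varepsilon^2/4,\tfrac{\beta}{6\sqrt\rho}\varepsilon^{3/2}\}$. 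Then \Cref{lem:gaussian_bias}, combined with $(a+b)^2\le2a^2+2b^2$ and $(a+b)^3\le4a^3+4b^3$, converts this into $\mathbb E[\Psi^{(j)}]\lesssim\varepsilon^2$ for the unsmoothed objective.
\item The curvature part of $\Psi$ scales as $\tfrac{\beta}{\rho^2}(\rho\varepsilon)^{3/2}$ rather than $\varepsilon^2$. I would handle this by reading $\varepsilon^2$ in the statement as the target value of $\Psi$, tuning $T_{\mathrm{off}}(\varepsilon)$ so that the smoothed bound in Lemma~\ref{lem:C_nu_nonconvex} is at most a constant fraction of $\varepsilon^2$, and taking $\nu$ small enough that the bias terms fit inside the same budget.
\end{enumerate}

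Summing the two contributions over all segments yields at most $T\varepsilon^2+(J+1)T_{\mathrm{off}}(\varepsilon)$, and dividing by $T$ gives the claim. The segment-length hypothesis $\tau_{j+1}-\tau_j\ge T_{\mathrm{off}}(\varepsilon)$ is used only to guarantee that each burn-in phase fits inside its segment.
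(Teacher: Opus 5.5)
Your proposal is correct and follows the same argument as the paper. The paper restarts on each of the $J+1$ segments, charges at most $T_{\mathrm{off}}(\varepsilon)$ transient epochs per segment, and bounds the remaining epochs by $O(\varepsilon^2)$ via \Cref{thm:nonconvex}. What you add are the steps the paper's short proof asserts without justification:
\begin{itemize}
\item the normalization $\Psi_t\le 1$ on transient epochs;
\item holding the uniformly random output $\delta\vec\theta_R$ for the rest of the segment, since \Cref{thm:nonconvex} certifies only that iterate;
\item shrinking $\nu$ and enlarging $T_{\mathrm{off}}(\varepsilon)$, because the curvature part of $\Psi$ at an $\varepsilon$-SOSP is only guaranteed to be $O(\varepsilon^{3/2})$.
\end{itemize}
One small correction: applying the early-stopping rule is not equivalent to freezing at $\delta\vec\theta_R$. \Cref{lem:C_nu_nonconvex} controls only the uniformly sampled iterate over a fixed horizon, so keep the freezing version.
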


\begin{proof}
On each segment, the objective is time independent, so \Cref{thm:nonconvex} applies directly after restart. 
Thus, after at most $T_{\mathrm{off}}(\varepsilon)$ iterations on segment $j$, we reach an $\varepsilon$-SOSP of $C^{(j)}$ in expectation. 
The only epochs that may fail to satisfy the target stationarity level are the transient epochs immediately after each restart. 
Their total number is at most $(J+1)T_{\mathrm{off}}(\varepsilon)$. 
On all remaining epochs, the expected stationarity measure is at most $O(\varepsilon^2)$. 
Dividing by $T$ gives the stated bound.
\end{proof}

\noindent This corollary shows that the fully adversarial linear-in-$T$ barrier is not the right picture for slowly changing or jump-sparse drifts. 
If the number of jumps $J$ is $O(1)$, then the transient cost of each restart is also $O(1)$, and the average local regret summing up $\Psi_t(\delta\vec\theta)$ vanishes as $T \to \infty$. 
In that sense, piecewise constant drift is much closer to a sequence of offline calibration problems than to a genuinely adversarial online one.
For our application, this is the regime most relevant to rare recalibration events, abrupt hardware interventions, or occasional changes of the background operating condition. 
In such cases, one can hope to combine the offline nonconvex guarantee of the present subsection with a restart logic, and thereby recover a meaningful control guarantee even though the general online nonconvex problem admits only linear worst-case cumulative regret.

\section{Exact circuit-level noise simulation of Rydberg atom arrays}
\label{app:appendix_5}

\subsection{Review of quantum control of Rydberg Hamiltonian}

In neutral atom arrays, we choose to encode the qubit information with two hyperfine states ($\ket{0}$ and $\ket{1}$) of the atom. These two states can be coupled by the laser, and single qubit gates can be implemented with high fidelity by tuning the Rabi frequency and the detuning of the laser. This implementation is relatively straightforward.

To mediate interactions between atoms and build up entanglement, we utilized the third level, the highly excited Rydberg state $\ket{r}$. Specifically, the atoms are controlled with a global Rabi frequency $\Omega(t)=|\Omega(t)|e^{i\varphi(t)}$ between $\ket{1}$ and the Rydberg state $\ket{r}$, and the Hamiltonian is 
\eqs{
H(t) = \dfrac{\Omega(t)}{2}\sum_j |1\rangle_{j} {}_j\langle r|+h.c.+\sum_{jk}B_{jk}|rr\rangle_{jk}{}_{jk}\langle rr|,
}
where $B_{jk}$ is the Rydberg interaction. The native entangling gate of the neutral atom arrays is the CPHASE gate or CZ gate, which is mediated by Rydberg states with global controls \cite{Levine_Pichler}.

Usually, for the implementation of the CZ gate, we assume the two atoms are in the strongly blocked region; therefore, $\ket{rr}$ is prohibited. To achieve the time-optimal controlled-phase gate, one chooses $|\Omega(t)|=\Omega_{\max}$ for the maximum value and optimizes the phase $\varphi(t)$ profile \cite{Jandura2022timeoptimaltwothree}. For the Rydberg CZ gate $U(T)\ket{q}=e^{i\xi_q}\ket{q}$ where $q\in \{00,01,10,11\}$, since $\xi_{00}=0$, we require $\xi_{11}-\xi_{01}-\xi_{10}=\pi$. This will realize a CZ gate up to some single qubit gates. The average gate fidelity can be calculated as 
\eqs{
F = \dfrac{1}{2^N(2^N+1)}\left(|\sum_q e^{-i\xi_q }\langle q|U(T)|q\rangle|^2+\sum_q |\langle q |U(T)|q\rangle|^2\right).
}

For global controls, the control Hamiltonian and relevant states can be further simplified using the symmetric subspace. Following Ref.~\cite{Jandura2022timeoptimaltwothree}, the CZ gate is implemented by a global laser
pulse $\Omega(t) = \Omega_{\max} e^{i\varphi(t)}$ coupling $|1\rangle
\leftrightarrow |r\rangle$ on both atoms. In the Rydberg blockade regime ($B =
\infty$), the two-atom Hamiltonian is block-diagonal. For a global pulse, the
relevant blocks in the $\{|01\rangle,|0r\rangle, |11\rangle,
|W\rangle\}$ basis (with $|W\rangle = (|1r\rangle+|r1\rangle)/\sqrt{2}$) are
\begin{equation}
  H = \begin{pmatrix}
    0 & \frac{\Omega}{2} & 0 & 0 \\
    \frac{\Omega^*}{2} & 0 & 0 & 0 \\
    0 & 0 & 0 & \frac{\sqrt{2}\,\Omega}{2} \\
    0 & 0 & \frac{\sqrt{2}\,\Omega^*}{2} & 0
  \end{pmatrix}.
\end{equation}
The time-optimal phase profile $\varphi(t)$ is found by GRAPE optimization,
minimizing the averaged gate infidelity
\begin{equation}
  1 - F = 1 - \frac{1}{20}\!\left(|1 + 2a_{01} + a_{11}|^2
  + 1 + 2|a_{01}|^2 + |a_{11}|^2\right),
\end{equation}
with $a_q = e^{-i\xi_q}\langle q|\psi(T)\rangle$, over the piecewise-constant
phase $\varphi_k$ and single-qubit phase $\theta$. The CZ condition requires
$\xi_{11} - \xi_{01} - \xi_{10} = \pi$. After the pulse, single-qubit $R_z(-\theta)$
rotations on each atom yield the standard CZ gate.

For the qutrit simulation, we construct the full $9\times 9$ unitary $U_{\mathrm{CZ}}$
in the two-qutrit basis $\{|00\rangle, |01\rangle, |0r\rangle, |10\rangle, |11\rangle,
|1r\rangle, |r0\rangle, |r1\rangle, |rr\rangle\}$ by propagating each
independent block analytically:
\begin{align}
\begin{split}
  &\{|01\rangle,|0r\rangle\}: \quad \alpha = \Omega_{\max}\,\delta t/2, \\
  &\{|10\rangle,|r0\rangle\}: \quad \text{identical to above}, \\
  &\{|11\rangle,|W\rangle\}: \quad \beta = \sqrt{2}\,\Omega_{\max}\,\delta t/2,
\end{split}
\end{align}
with the dark state $|D\rangle = (|1r\rangle - |r1\rangle)/\sqrt{2}$ decoupled.
The states $|00\rangle$ and $|rr\rangle$ are trivially unchanged. This $9\times 9$
matrix is exactly unitary for any real-valued $\varphi(t)$, ensuring $\|\psi\|^2 = 1$
throughout the simulation, regardless of pulse imperfections. Using gradient descent with smoothness constraints, one obtains the ideal laser control pulse $\varphi(t)$ and $|\Omega(t)|$, as shown in \Cref{fig:phi_t} as the solid lines, with total gate fidelity $99.99\%$.

\begin{figure}
    \centering
    \includegraphics[width=0.8\linewidth]{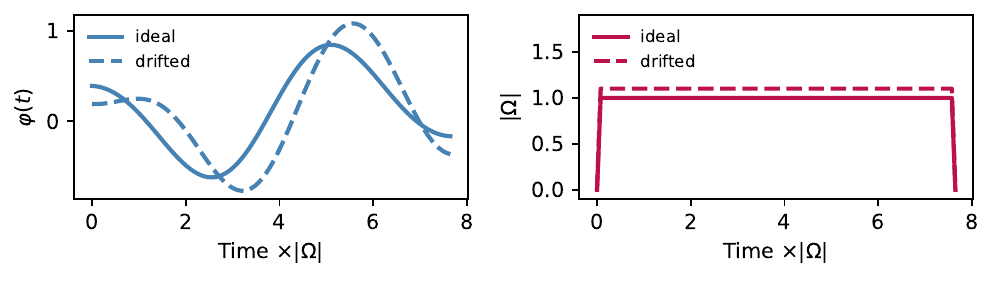}
    \caption{Native controlled-$Z$ gates are implemented by optimized laser amplitude $|\Omega|$ and phase $\varphi(t)$ shown as solid lines and dashed curves illustrate representative typical control drifts.}
    \label{fig:phi_t}
\end{figure}

\subsection{Pulse-level error model}

In the simulation, we consider that the two qubit entangling gate error is the main source of error and ignore the single qubit gate errors~\cite{995}. As shown before, there are two pulse control parameters: the laser phase $\varphi(t)$ and the laser amplitude $|\Omega|$. As shown in \Cref{fig:phi_t}, $\varphi(t)$ is a smooth function of time and only low frequency Fourier modes contribute. In our simulation, $\varphi(t)$ is discretized into 100 steps, and only the first two Fourier modes have significant non-zero contributions. 

Two physically motivated error models are considered:

\paragraph{Pulse distortion.} The phase profile is decomposed into Fourier modes,
$\varphi(t) = \sum_k c_k\, e^{i 2\pi f_k t}$, and individual coefficients $c_k$ are
perturbed (maintaining conjugate symmetry to keep $\varphi$ real). Each of the 8 CZ
gates in the syndrome circuit can receive independent perturbations, parameterized by
an $8 \times K$ matrix of Fourier coefficient shifts at $K=2$.

\paragraph{Coherent overrotation.} A unitary ZZ error $\exp(-i\theta\, Z\otimes Z/2)$
is applied after each CZ gate, where $Z_{\mathrm{qutrit}} = \mathrm{diag}(1,-1,1)$.
Each gate receives an independent error angle $\theta_i$.

\subsection{Exact simulation of \texorpdfstring{$[\![4,1,2]\!]$}{[4,1,2]} quantum code with Rydberg atom arrays and error model}
\begin{figure}[htbp]
    \centering
    \includegraphics[width=0.95\linewidth]{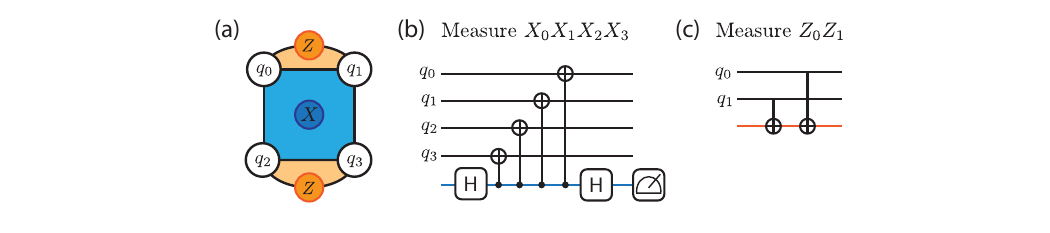}
    \caption{Small error detection code and syndrome extraction circuit.}
    \label{fig:small_code}
\end{figure}

\begin{figure}
    \centering
    \includegraphics[width=0.8\linewidth]{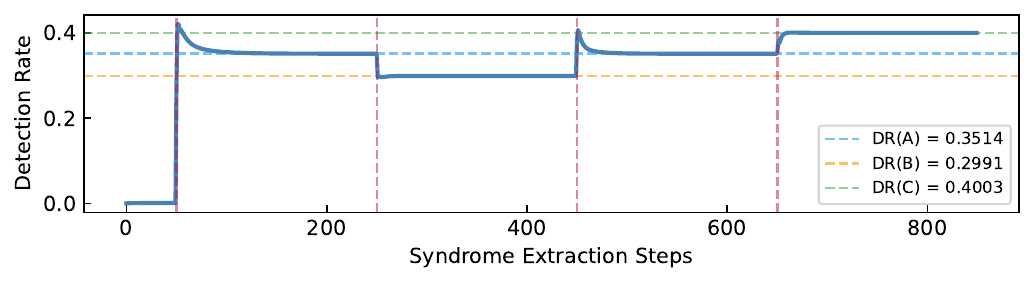}
    \caption{The steady-state detection rate is a state function of the control parameters.
Per-round detection rate of the $[\![4,1,2]\!]$ code, with the $d=32$ pulse parameters held fixed within each period and switched abruptly between three random miscalibrations along the schedule $A \to B \to A \to C$ ($200$ rounds each, preceded by $50$ drift-free rounds; vertical dashed lines mark the switches).
After a transient of a few tens of rounds the rate settles onto a plateau (horizontal dashed lines) set only by the current control point: the two $A$ periods agree to $4.8\times10^{-5}$ despite the excursion through $B$, three orders of magnitude below the $5.2\times10^{-2}$ and $1.0\times10^{-1}$ separations between distinct configurations.
}
    \label{fig:placeholder2}
\end{figure}

In this section, we provide the details in simulating a simple quantum error detection code with a logical quantum memory experiment involving pulse-level noise. The code is the $[\![4,1,2]\!]$ quantum detection code with four physical qubits and three ancillary qubits. The stabilizers of the code are
\eqs{
& S_1 = X_0 X_1 X_2 X_3,\\
& S_2 = Z_0 Z_1,\\
& S_3 = Z_2 Z_3,
}
and the logical operators are
\eqs{
& L_Z = Z_0 Z_2 \\
& L_X = X_0 X_1.
}
This code is the smallest rotated surface code, which is visualized in \Cref{fig:small_code} (a). The syndrome measurement circuits are plotted in \Cref{fig:small_code} (b) and (c). The physical device we focused on is the Rydberg atom array, where we encode qubits $\ket{0}$ and $\ket{1}$ as two hyperfine states. The single qubit gates are implemented through the Rabi frequency coupling those two hyperfine states. For simplicity, we ignore the control error of single qubit gates.

 The quantum state of the logical Z memory is $\ket{0_L}=(\ket{0000}+\ket{1111})/2$. In each round of syndrome detection, we introduce three ancillary qubits to check the parities. We can compile one round of syndrome detection circuit to native CZ gates and single qubit gates, as shown in \Cref{fig:small_circuit}.

\begin{figure}[htbp]
    \centering
    \includegraphics[width=1\linewidth]{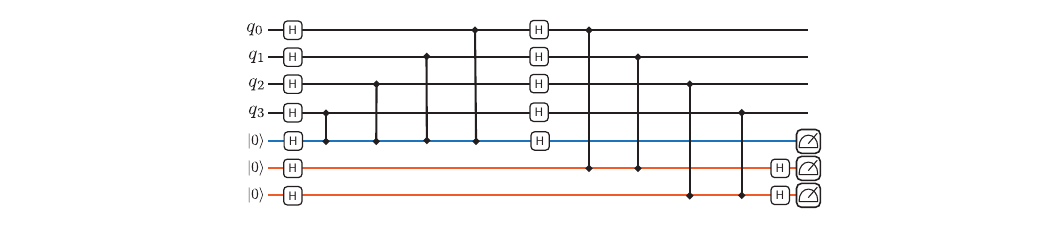}
    \caption{The quantum circuit of quantum memory experiment.}
    \label{fig:small_circuit}
\end{figure}




\subsection{Pauli twirling (randomized compiling)}

For each CZ gate, a random two-qubit Pauli $P = \sigma_a \otimes \sigma_b$ is drawn
uniformly from the 16-element set $\{I,X,Y,Z\}^{\otimes 2}$ and applied before the
gate. The correction $Q = \mathrm{CZ}\cdot P \cdot \mathrm{CZ}^\dagger$ (which is
also a Pauli up to phase since CZ is Clifford) is applied after. For an ideal CZ,
$Q^\dagger \cdot \mathrm{CZ} \cdot P = \mathrm{CZ}$, so twirling is transparent.
For a noisy gate $\tilde{U} = E \cdot \mathrm{CZ}$, the twirled channel averaged
over all Pauli choices becomes a Pauli channel, converting coherent errors into
stochastic noise. In the qutrit simulation, Pauli gates act as $\sigma_{\mathrm{qutrit}}
= \sigma_{2\times 2} \oplus (1)$ on $|r\rangle$.

\subsection{Syndrome extraction circuit}

The $[\![4,1,2]\!]$ code has stabilizers $S_1 = X_0X_1X_2X_3$, $S_2 = Z_0Z_1$,
$S_3 = Z_2Z_3$, measured using ancilla qutrits $a_0, a_1, a_2$ respectively. The
circuit consists of three phases:
\begin{enumerate}
  \item \textbf{Measure $S_1$:} Hadamard on data qubits $q_0$--$q_3$ (basis change
    $X\to Z$), Hadamard on $a_0$, four CZ gates $(q_i, a_0)$, Hadamard on $a_0$,
    Hadamard on data qubits.
  \item \textbf{Measure $S_2$:} Hadamard on $a_1$, CZ$(q_0, a_1)$, CZ$(q_1, a_1)$,
    Hadamard on $a_1$.
  \item \textbf{Measure $S_3$:} Hadamard on $a_2$, CZ$(q_2, a_2)$, CZ$(q_3, a_2)$,
    Hadamard on $a_2$.
\end{enumerate}
This yields 8 CZ gates total. Ancilla measurement outcomes
$\vec{m} = (m_X, m_{Z_1}, m_{Z_2}) \in \{0,1,2\}^3$ include leakage detection ($m=2$).

\subsection{Multi-round density matrix simulation}

For multi-round syndrome extraction, we track 27 branches of the 4-data-qutrit
density matrix ($81\times 81$), one per ancilla measurement outcome. The key
computational tool is the \emph{quantum instrument} formalism using Kraus operators.

\paragraph{Kraus operator construction.}
For a given set of (possibly noisy and twirled) CZ gates, we compute 27 Kraus
operators $K_{\vec{m}} \in \mathbb{C}^{81\times 81}$ defined by
\begin{equation}
  (K_{\vec{m}})_{d',d} = \langle d', \vec{m}\,|\, U_{\mathrm{circuit}}\,
  |d, 000\rangle,
\end{equation}
where $d, d' \in \{0,1,r\}^4$ index data qutrit basis states and $|000\rangle$ is
the ancilla reset state. These are obtained by evolving 81 basis states through the
full 2187-dimensional circuit using efficient tensor contractions (cost: $O(81 \times
2187)$ per gate, $\sim$160\,ms total).

\paragraph{Round update.}
Given the data density matrix $\rho$ entering a round, the measurement probability
and post-measurement state for outcome $\vec{m}$ are
\begin{equation}
  p(\vec{m}\,|\,\rho) = \mathrm{Tr}(K_{\vec{m}}\,\rho\,K_{\vec{m}}^\dagger),
  \qquad
  \sigma_{\vec{m}} = \frac{K_{\vec{m}}\,\rho\,K_{\vec{m}}^\dagger}{p(\vec{m}\,|\,\rho)}.
\end{equation}
For twirling, the Kraus operators are averaged over $N_{\mathrm{twirl}}$ random
instances: $\mathcal{E}_{\vec{m}}(\rho) = \frac{1}{N_{\mathrm{twirl}}}
\sum_{j=1}^{N_{\mathrm{twirl}}} K_{\vec{m}}^{(j)}\,\rho\, K_{\vec{m}}^{(j)\dagger}$.

\paragraph{Branch merging.}
After each round, branches with the same current outcome $\vec{m}^{(t)}$ but
different histories are merged:
\begin{equation}
  \tilde{p}_{\vec{m}^{(t)}} = \sum_{\vec{m}^{(t-1)}} p_{\vec{m}^{(t-1)}}
  \cdot p(\vec{m}^{(t)} | \sigma_{\vec{m}^{(t-1)}}), \qquad
  \tilde{\sigma}_{\vec{m}^{(t)}} = \frac{1}{\tilde{p}_{\vec{m}^{(t)}}}
  \sum_{\vec{m}^{(t-1)}} p_{\vec{m}^{(t-1)}} \cdot
  K_{\vec{m}^{(t)}} \sigma_{\vec{m}^{(t-1)}} K_{\vec{m}^{(t)}}^\dagger.
\end{equation}
This maintains exactly 27 branches per round regardless of the number of rounds.

\paragraph{Detection event rates.}
A detection event on detector $k \in \{X, Z_1, Z_2\}$ at round $t$ fires when the
binarized outcome changes: $\bar{m}_k^{(t)} \neq \bar{m}_k^{(t-1)}$, where
$\bar{m} = 0$ if $m = 0$ and $\bar{m} = 1$ if $m \in \{1, 2\}$. The expected
detection rate is
\begin{equation}
  DR_k^{(t)} = \sum_{\vec{m}^{(t-1)}}
  \sum_{\substack{\vec{m}^{(t)}:\\[1pt] \bar{m}_k^{(t)} \neq \bar{m}_k^{(t-1)}}}
  \tilde{p}_{\vec{m}^{(t-1)}} \cdot p(\vec{m}^{(t)} | \tilde{\sigma}_{\vec{m}^{(t-1)}}).
\end{equation}
This is computed exactly (no sampling) from the $27 \times 27$ matrix of transition
probabilities at each round, at a cost of 27 density matrix evolutions per round.

\begin{figure}
    \centering
    \includegraphics[width=1.0\linewidth]{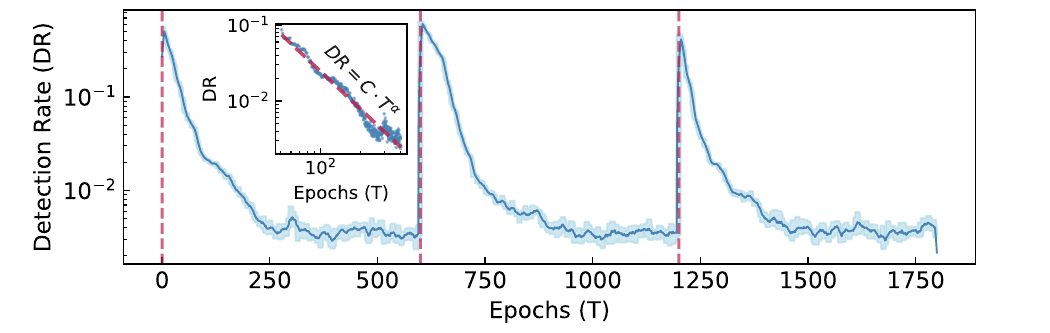}
    \caption{Adaptive $\eta=0.005/(1+T/T_{\mathrm{tot}})$}
    \label{fig:placeholder3}
\end{figure}

\begin{figure}
    \centering
    \includegraphics[width=1.0\linewidth]{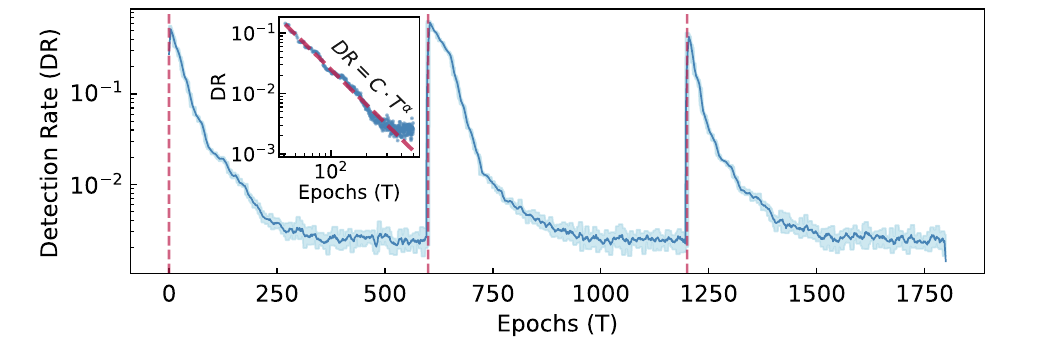}
    \caption{constant learning rate spsa with $\eta=0.005$.}
    \label{fig:placeholder4}
\end{figure}

\begin{figure}
    \centering
    \includegraphics[width=1.0\linewidth]{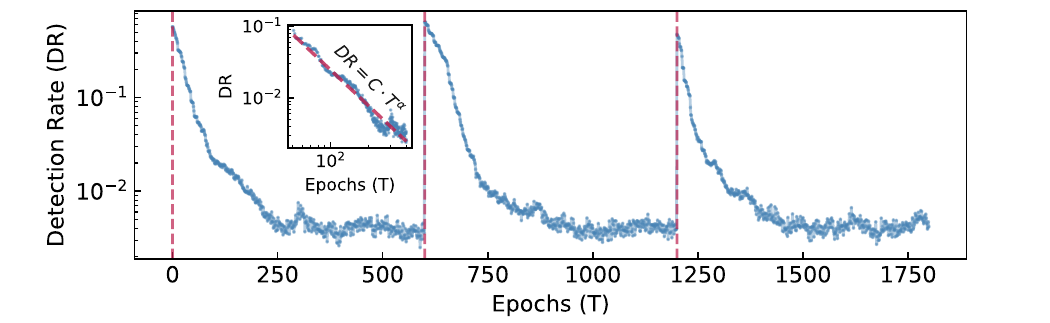}
    \caption{AGD with $\eta=0.005$ and $\theta=0.95$ for momentum parameter. The $\alpha$ are $-1.57$, $-1.9$ and $-1.23$ for the three random constant drifts.}
    \label{fig:placeholder5}
\end{figure}

\begin{figure}
    \centering
    \includegraphics[width=0.8\linewidth]{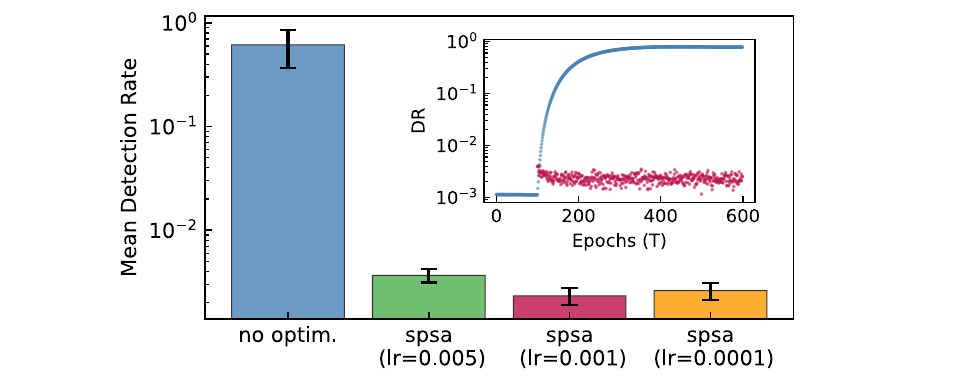}
    \caption{constant learning rate SPSA with slow linear drifts. Each control drift has the form $p^*=kT$, with $k$ uniformly sampled from [0,0.001].}
    \label{fig:placeholder6}
\end{figure}

\section{Details of the Clifford-Level Numerical Experiments}
\label{app:clifford_numerics}

This appendix collects the implementation details of the circuit-level (Clifford) simulations reported in the main text: the rotated surface code and the bivariate bicycle (BB) LDPC codes.
Both families are simulated with the \emph{same} testbed --- the same per-gate control-parameter model, the same drift protocol, the same surrogate objective, and the same optimizer --- so that the only difference between the two sets of runs is the code (and hence the Tanner-graph geometry and the detector--gate incidence structure).
All circuits are constructed and sampled with Stim~\cite{gidney2021stim}.

\subsection{Circuits}
\label{app:num_circuits}

\paragraph{Rotated surface code.}
We simulate a $Z$-memory experiment on the rotated surface code of odd distance $d \in \{3,5,7,9,11,13\}$, generated with Stim's \texttt{surface\_code:rotated\_memory\_z} template with $N_c = 2d$ syndrome rounds followed by a transversal data-qubit measurement.
The generated circuit already carries the standard circuit-level noise model~\cite{fowler2009high}: a \texttt{DEPOLARIZE1} channel after every single-qubit Clifford and on every idle data qubit, a \texttt{DEPOLARIZE2} channel after every two-qubit Clifford, and \texttt{X\_ERROR} flip channels on reset and measurement.
We then parse the repeat body and replace the \emph{uniform} depolarization probabilities by individually addressable per-slot rates, keeping the reset/measurement flip probabilities fixed at $p_{\rm base} = 10^{-3}$; these channels model readout imperfections that the control knobs considered here cannot influence.
Writing $n_{1q}$ and $n_{2q}$ for the number of one- and two-qubit depolarization slots in one syndrome round, the number of independently controllable noise locations is $n_{\rm gate} = n_{1q} + n_{2q}$, and the number of detectors is $N_D = (d^2-1) N_c$.

\paragraph{Bivariate bicycle codes.}
For the BB family, we build the syndrome-extraction circuit directly from the code definition in Ref.~\cite{bravyi2024high}, following the depth-7 CNOT schedule of Ref.~\cite{bravyi2024high}: $X$-check ancilla are prepared in $\ket{+}$ and $Z$-check ancilla in $\ket{0}$, and within a cycle, the $X$- and $Z$-check schedules are interleaved as
\begin{equation}
  \mathcal{S}_X = (\text{idle},\,B_2,\,A_2,\,A_1,\,A_3,\,B_1,\,B_3), \qquad
  \mathcal{S}_Z = (A_1^{\mathsf T},\,A_3^{\mathsf T},\,B_1^{\mathsf T},\,B_2^{\mathsf T},\,B_3^{\mathsf T},\,A_2^{\mathsf T},\,\text{idle}),
\end{equation}
so that every layer is a perfect matching on the Tanner graph.
Noise is inserted with exactly the same philosophy as above: a \texttt{DEPOLARIZE2} channel after each of the $12 \cdot (n/2)$ CNOTs per cycle, a \texttt{DEPOLARIZE1} channel on each of the $4 \cdot (n/2)$ idle slots per cycle, and fixed reset/measurement flip channels at $p_{\rm base} = 10^{-3}$.
Detectors compare consecutive $X$- and $Z$-check outcomes ($Z$-checks only in the first cycle, since the data register starts in $\ket{0}^{\otimes n}$), giving $N_D = n N_c$ detectors for $N_c$ cycles; we take $N_c = d$.
We simulate the $[\![72,12,6]\!]$, $[\![90,8,10]\!]$ and $[\![108,8,10]\!]$ instances.

Because each evaluation of the objective requires rebuilding the circuit with a fresh set of $n_{\rm gate}$ depolarization probabilities, we precompile the circuit once into a text template with one hole per noise slot; an evaluation is then a string substitution plus Stim's C\raisebox{0.2ex}{\tiny ++} parser ($\sim 7$\,ms for $[\![72,12,6]\!]$) rather than an instruction-by-instruction Python rebuild ($\sim 165$\,ms). The template is verified against the instruction-level rebuild by comparing detector error models at construction time.

\subsection{Control parameters, miscalibration, and drift}
\label{app:num_error_model}

Each noise slot $i \in \{1,\dots,n_{\rm gate}\}$ carries $P$ local control parameters (we use $P = 5$ throughout), collected in $p_i \in \mathbb{R}^P$, and the corresponding hardware setpoint drifts to $p_i^{\rm drift}$.
The physical error rate of that gate is the surrogate quadratic model of \Cref{sec:surrogate_model_convexity},
\begin{equation}
  \varepsilon_i \;=\; \varepsilon_i^{\rm base} \;+\; \delta p_i^{\mathsf T}\, \Omega_i \, \delta p_i,
  \qquad \delta p_i \;=\; p_i^{\rm drift} - p_i,
  \label{eq:num_error_model}
\end{equation}
with $\varepsilon_i^{\rm base} = 10^{-3}$ the irreducible error floor and $\Omega_i \succeq 0$ a $P \times P$ sensitivity matrix.
The $\Omega_i$ are generated once per run (seed $42$) as block-diagonal matrices with four blocks plus cross-block coupling of relative strength $0.5$ and overall scale $0.015$, then symmetrized and projected onto the PSD cone; the cross-block terms are what makes the landscape genuinely multi-parameter rather than a collection of decoupled scalars.
Control parameters live in the box $[-1,1]^{P}$, and the total dimension of the optimization problem is $n_{\rm gate} \cdot P$ (e.g.\ $4805$ for $d=13$, $2880$ for $[\![72,12,6]\!]$).

The drift protocol is a step (``calibration event'') profile: the run begins with $30$ evaluations at $\delta p = 0$ to fix the drift-free reference, after which every parameter jumps simultaneously to
\begin{equation}
  p_{i,j}^{\rm drift} \;=\; A \, s_{ij} \, u_{ij}, \qquad s_{ij} \sim \mathrm{Unif}\{-1,+1\}, \quad u_{ij} \sim \mathrm{Unif}[0.5,1],
\end{equation}
with amplitude $A = 0.35$, and is then held fixed while the optimizer runs.
This is the harshest setting for a tracking algorithm --- the entire parameter vector is displaced at once --- and it isolates the transient recovery behavior.
The code also supports smooth sinusoidal drift with per-parameter random frequency, amplitude and phase, and mixed profiles with $L>1$ successive drift events; these were used for the consistency checks quoted in the main text.

\subsection{Surrogate objective and locality-aware estimator}
\label{app:num_objective}

The optimizer never sees the logical state. Its objective is the mean detection rate
\begin{equation}
  DR(p) \;=\; \frac{1}{N_D}\sum_{k=1}^{N_D} DR_k(p), \qquad
  DR_k(p) \;=\; \Pr[\text{detector } k \text{ fires}],
\end{equation}
estimated from $m$ Monte Carlo shots of the compiled detector sampler. Its statistical error scales as $1/\sqrt{m N_D}$ independently of $d$, which is precisely why it is usable as an online signal at large code distance while the logical error rate is not.

Rather than perturbing all $n_{\rm gate} P$ coordinates simultaneously (vanilla SPSA~\cite{spall1992spsa}), we exploit the locality established in \Cref{sec:imp_local}:
\begin{enumerate}
  \item \textbf{Incidence probing.} For each noise slot $i$ we build a probe circuit with \emph{only} that depolarization channel active (flip channels off) and read its detector error model, obtaining the set $K_i$ of detectors that slot $i$ can influence. This uses the same slot indexing as the evaluator, so $K_i$ is exact rather than inferred from the code geometry.
  \item \textbf{Conflict-graph coloring.} Two slots conflict if they share a detector. Greedy coloring of the conflict graph (best of largest-first, smallest-last, and DSATUR) yields $\chi_{\rm gate}$ classes with the property that no detector sees two members of the same class; this property is asserted at runtime. A coordinate class is a pair (gate class, parameter index), so there are $\chi = \chi_{\rm gate} P$ classes.
  \item \textbf{Masked gradient.} For each class, fresh Rademacher signs $u$ are drawn and paired probes at $p \pm \lambda u$ are evaluated with per-detector resolution. The estimator for coordinate $j = (i,q)$ is
  \begin{equation}
    g_j \;=\; \frac{u_j}{2 \lambda N_D} \sum_{k \in K_i}\big[\widehat{DR}_k(p + \lambda u) - \widehat{DR}_k(p - \lambda u)\big],
  \end{equation}
  i.e.\ only the detectors that the coordinate actually moves contribute, which removes the $O(N_D)$ variance inflation of the global estimator.
  \item \textbf{Update.} The estimate is fed to Adam ($\beta_1 = 0.9$, $\beta_2 = 0.999$, learning rate $0.01$) and the result is projected back onto the shrunk box $[-1+\lambda,\, 1-\lambda]^P$ so that both probes stay feasible.
\end{enumerate}
One iteration costs $2\chi + 1$ sampler evaluations ($2\chi$ probes plus one evaluation of the current policy for monitoring).
We use perturbation radius $\lambda = 0.05$, $2000$ iterations, $2000$--$20{,}000$ shots per probe, and $20{,}000$ shots for the monitoring evaluations.
Every run is compared against two reference policies evaluated on the same drift realization: the \emph{fixed} policy $p \equiv 0$ (no recalibration) and the \emph{oracle} policy $p \equiv p^{\rm drift}$ (perfect knowledge of the drift, $\delta p = 0$), which lower-bounds the achievable detection rate at $\varepsilon^{\rm base}$.

\subsection{Reinforcement-learning baseline}
\label{app:num_rl}

The reinforcement-learning (RL) controller against which we benchmark in the scaling comparison of the main text is \emph{our implementation} of the masked policy-gradient algorithm introduced in Ref.~\cite{sivak2025reinforcement}, and the data only reflects the performance of our implementation.
We wrote it according to Algorithm~1 in its single-step, no-replay configuration.
It runs on exactly the same testbed as the locality-aware SPSA (LSPSA) of \Cref{app:num_objective}

\paragraph{Policy and gradient estimator.}
The controller does not learn a state-to-action map; the policy is a factorized Gaussian over the control vector itself,
\begin{equation}
  \pi(p \mid \mu, \sigma) \;=\; \prod_{i,q} \mathcal{N}\!\left(p_{iq} \,\middle|\, \mu_{iq},\, \sigma_{iq}^2\right),
\end{equation}
with one $(\mu_{iq}, \ln \sigma_{iq})$ pair per coordinate, so the learnable dimension is $2 n_{\rm gate} P$ and the deployed policy is $\mu$.
In each epoch a batch of $B$ candidates is sampled and projected onto the control box, each candidate is evaluated with a single detector-sampler call returning the \emph{per-detector} rates $o^{(k)} \in \mathbb{R}^{N_D}$, and the reward vector is $r^{(k)} = -o^{(k)}$.
Advantages are taken against a learned baseline vector $b \in \mathbb{R}^{N_D}$, and the key scalability ingredient of Ref.~\cite{sivak2025reinforcement}, i.e. gradient masking, restricts the advantage seen by a coordinate to the detectors that its gate can actually influence,
\begin{equation}
  A_i^{(k)} \;=\; \frac{1}{N_D}\sum_{j \in K_i} \alpha_j^{(k)},
  \qquad \alpha^{(k)} = r^{(k)} - b .
  \label{eq:rl_masked_advantage}
\end{equation}
This uses the same incidence sets $K_i$ as the LSPSA estimator of \Cref{app:num_objective}, so the two algorithms are given identical locality information and the comparison isolates the update rule rather than the amount of structure supplied to each.
Because the algorithm collects on-policy samples and takes a single optimizer step per epoch, all importance ratios equal one at the point of differentiation. Therefore, PPO clipping in Eq.~(18) of Ref.\cite{sivak2025reinforcement} is inactive, and the policy gradient reduces exactly to the masked score-function form used below.
The normalization $1/N_D$ in \Cref{eq:rl_masked_advantage} matches the corresponding factor in the LSPSA gradient so that a single set of hyperparameters is meaningful across code distances.

\begin{algorithm}[htbp]
\caption{Masked policy-gradient (RL) steering, as implemented for the baseline reported in the main text}
\label{alg:rl_baseline}
\DontPrintSemicolon
\SetKwInput{KwInput}{Input}
\SetKwInput{KwOutput}{Output}

\KwInput{Epoch budget $T$; batch size $B$; shots per candidate $m$; learning rates $\eta_\mu, \eta_\sigma, \eta_b$; entropy weight $\beta_{\rm ent}$; exploration bounds $[\sigma_{\min}, \sigma_{\max}]$; incidence sets $\{K_i\}_{i=1}^{n_{\rm gate}}$; control box $\mathsf{P} = [p_{\rm lo}, p_{\rm hi}]$.}
\KwOutput{Deployed policy $\mu_T$.}

\BlankLine
Initialize $\mu_1 \leftarrow 0$, $\sigma_1 \leftarrow \sigma_{\rm init}$, baseline $b$ uninitialized.\;
\For{$t=1,2,\dots,T$}{
  Sample $\epsilon^{(k)} \sim \mathcal N(0, I)$ and set $p^{(k)} = \Pi_{\mathsf P}\!\left(\mu_t + \sigma_t \odot \epsilon^{(k)}\right)$ for $k=1,\dots,B$.\;
  Evaluate each candidate with one sampler call of $m$ shots, recording per-detector rates $o^{(k)} \in \mathbb{R}^{N_D}$; set $r^{(k)} = -o^{(k)}$.\;
  \lIf{$b$ uninitialized}{$b \leftarrow \frac{1}{B}\sum_k r^{(k)}$}
  Advantages $\alpha^{(k)} = r^{(k)} - b$; masked per-gate advantages $A_i^{(k)} = \frac{1}{N_D}\sum_{j \in K_i} \alpha_j^{(k)}$.\;
  Standardized deviations $z^{(k)} = (p^{(k)} - \mu_t)/\sigma_t$.\;
  Masked score-function gradients\;
  \qquad $g_\mu = -\frac{1}{B}\sum_k A^{(k)} \odot z^{(k)} / \sigma_t$, \quad
         $g_{\ln\sigma} = -\frac{1}{B}\sum_k A^{(k)} \odot \left(z^{(k)2} - 1\right) - \beta_{\rm ent}$.\;
  Single Adam step: $\mu_{t+1} = \Pi_{\mathsf P}\!\left(\mu_t + \mathrm{Adam}_\mu(g_\mu)\right)$, \;
  \qquad $\ln\sigma_{t+1} = \mathrm{clip}\!\left(\ln\sigma_t + \mathrm{Adam}_\sigma(g_{\ln\sigma}),\ \ln\sigma_{\min},\ \ln\sigma_{\max}\right)$.\;
  Baseline update $b \leftarrow b + 2\eta_b\left(\frac{1}{B}\sum_k r^{(k)} - b\right)$.\;
}
\Return{$\mu_T$.}
\end{algorithm}

Here $\mathrm{Adam}$ returns the descent update $-\eta\,\hat m/(\sqrt{\hat v}+\epsilon)$ with $\beta_1 = 0.9$, $\beta_2 = 0.999$, so passing $g_\mu = -\langle A z/\sigma\rangle$ performs ascent on $\mathbb{E}[A \log \pi]$, i.e. reward maximization.  The $-\beta_{\rm ent}$ term in $g_{\ln\sigma}$ is the derivative of the Gaussian entropy and acts as a constant outward pressure on the exploration width.

\paragraph{Hyperparameters and shot accounting.}
The hyperparameters of Ref.~\cite{sivak2025reinforcement} are not published (that work reports a shallow manual optimization), so ours were fixed by our own scan and then held constant across all code distances: $B = 40$, $\sigma_{\rm init} = 0.05$ (equal to the LSPSA perturbation radius $\lambda$), $\sigma \in [10^{-3}, 0.3]$, $\eta_\mu = 0.01$ (equal to the LSPSA Adam learning rate), $\eta_\sigma = 5\times10^{-3}$, $\beta_{\rm ent} = 3\times10^{-4}$, $\eta_b = 0.1$, $m = 2\times10^{4}$ shots per candidate, and $T = 2000$ epochs after the same $30$-evaluation drift-free reference.
One RL epoch therefore consumes $B m = 8\times10^{5}$ sampler shots, against $(2\chi+1)\,m_{\rm probe}$ for one LSPSA iteration, i.e.\ $\approx 4.8\times10^{5}$ shots at $m_{\rm probe} = 2000$ and $\approx 4.8\times10^{6}$ at $m_{\rm probe} = 2\times10^{4}$.

\Cref{tab:rl_baseline} reports the outcome. The RL controller tracks the drift well at every distance, reducing the detection rate by $77$--$79\%$ relative to the uncorrected policy.

\begin{table}[t]
\centering
\begin{tabular}{lrrrrr}
\hline
$d$ & $N_D$ & $n_{\rm gate}P$ & $DR_{\rm fixed}$ & $DR_{\rm RL}$ & $DR_{\rm oracle}$ \\
\hline
$3$  &   48 &  205 & 0.0624 & 0.0145 & 0.0130 \\
$5$  &  240 &  645 & 0.0863 & 0.0182 & 0.0152 \\
$7$  &  672 & 1325 & 0.0922 & 0.0197 & 0.0163 \\
$9$  & 1440 & 2245 & 0.0930 & 0.0204 & 0.0169 \\
$11$ & 2640 & 3405 & 0.0988 & 0.0210 & 0.0173 \\
$13$ & 4368 & 4805 & 0.0988 & 0.0213 & 0.0176 \\
\hline
\end{tabular}
\caption{
Reinforcement-learning baseline on the rotated surface code, under the step-drift protocol of \Cref{app:num_error_model} and the hyperparameters listed in the text. $DR_{\rm fixed}$ and $DR_{\rm oracle}$ are the uncorrected and drift-omniscient references evaluated on the same drift realization. Each run uses $2000$ epochs at $8\times10^{5}$ sampler shots per epoch.
\label{tab:rl_baseline}
}
\end{table}

\paragraph{Post-convergence diffusion of the deployed policy.}
After the deployed policy $\mu_t$ first reaches its optimum, its detection rate rises slowly and monotonically before saturating; in a $2\times10^{4}$-epoch run at $d=5$ it plateaus around epoch $7000$ at an excess of $4.7\times10^{-3}$ over the oracle ($+30\%$), with $\|\mu_t - p^{\rm drift}\|$ saturating at the same time.
This is exploration-driven diffusion, and it is structural rather than an artifact of a particular parameter choice.
The noise source is perturbation crosstalk: each detector rate responds to the $\sigma$-sized perturbations of \emph{all} gates in its neighborhood, so for any single coordinate the others contribute noise of variance $\propto \sigma^2$ (shot noise is subdominant at $m = 2\times10^{4}$).
Near the optimum the candidates no longer differ systematically in performance, the score-function signal-to-noise ratio falls to zero, and Adam's per-coordinate normalization --- which cannot distinguish signal from noise --- keeps taking steps of full size $\eta_\mu$. The deployed policy therefore executes a random walk with drift rate $\tfrac{1}{2}\eta_\mu^2 \sum_j h_j$, $h_j$ the curvature eigenvalues, until the walk equilibrates against the weak restoring force and the control box.

We verified each link of this chain by ablation at $d=5$.
Sweeping $\beta_{\rm ent}$ over $\{10^{-3}, 10^{-2}, 10^{-1}\}$ leaves the trajectory unchanged, so the \emph{value} of the entropy weight is irrelevant.
Setting $\beta_{\rm ent} = 0$ slows the drift by $17\times$ and freezing $\sigma$ at $\sigma_{\rm init}$ slows it by $25\times$, both consistent with the $\sigma^2$ scaling expected from $(\sigma_{\max}/\sigma_{\rm init})^2 = 36$ --- but in neither case does the drift disappear: it is present for any $\sigma > 0$.
Varying $\eta_\mu \in \{0.005, 0.01, 0.02\}$ gives rate ratios $0.30 : 1 : 3.39$ against the predicted $0.25 : 1 : 4$.
Finally, the measured rate matches $\tfrac{1}{2}\eta_\mu^2\sum_j h_j$ with the curvature sum obtained independently from $2\Delta DR/\|\delta\|^2$ ($1.9\times10^{-6}$ versus $1.7\times10^{-6}$ per epoch at $d=7$), and a random-walk simulation containing no RL at all --- $\eta_\mu$-sized isotropic kicks inside the box, scored by the measured curvatures --- reproduces both the trajectory and the plateau to within a factor of two.
The entropy term is thus an amplifier rather than the cause: it ratchets $\sigma$ up to $\sigma_{\max}$ and so multiplies the diffusion constant, but the mechanism operates without it.

\end{appendix}
 
\end{document}